\newcommand\blfootnote[1]{%
  \begingroup
  \renewcommand\thefootnote{}\footnote{#1}%
  \addtocounter{footnote}{-1}%
  \endgroup
}
\newcommand{\algmargin}{\the\ALG@thistlm}   
\algnewcommand{\parState}[1]{\State%
    \parbox[t]{\dimexpr\linewidth-\algmargin}{\strut #1\strut}}
\definecolor{green}{rgb}{0.0, 0.5, 0.0}
\crefname{equation}{equation}{equations}
\crefname{lemma}{lemma}{lemmata}
\crefname{claim}{claim}{claims}
\crefname{theorem}{theorem}{theorems}
\crefname{proposition}{proposition}{propositions}
\crefname{corollary}{corollary}{corollaries}
\crefname{claim}{claim}{claims}
\crefname{remark}{remark}{remarks}
\crefname{definition}{definition}{definitions}
\crefname{fact}{fact}{facts}
\crefname{question}{question}{questions}
\crefname{condition}{condition}{conditions}
\crefname{cond}{condition}{conditions}
\crefname{algorithm}{algorithm}{algorithms}
\crefname{assumption}{assumption}{assumptions}
\crefname{setting}{setting}{settings}
\crefname{example}{Example}{Examples}
\newtheorem{theorem}{Theorem}[section]
\newtheorem{lemma}[theorem]{Lemma}
\newtheorem{corollary}[theorem]{Corollary}
\newtheorem{claim}[theorem]{Claim}
\newtheorem{definition}[theorem]{Definition}
\newtheorem{fact}[theorem]{Fact}
\theoremstyle{definition}
\newtheorem{condition}[theorem]{Condition}
\newtheorem{setting}[theorem]{Setting}
\newtheorem{remark}[theorem]{Remark}
\newcommand{\eps}{\epsilon}
\newcommand{\poly}{\mathrm{poly}}
\newcommand{\polylog}{\mathrm{polylog}}
\newcommand{\Ind}{\mathds{1}}
\newcommand{\1}{\Ind}
\newcommand{\trace}{\operatorname{tr}}
\newcommand{\Var}{\operatorname{Var}}
\def\P{\mathbb P}
\def\R{\mathbb R}
\def\N{\mathbb N}
\def\Z{\mathbb Z}
\newcommand\numberthis{\addtocounter{equation}{1}\tag{\theequation}}
\newcommand{\cC}{\mathcal{C}}
\newcommand{\cD}{\mathcal{D}}
\newcommand{\cE}{\mathcal{E}}
\newcommand{\cF}{\mathcal{F}}
\newcommand{\cN}{\mathcal{N}}
\newcommand{\cS}{\mathcal{S}}
\newcommand{\cU}{\mathcal{U}}
\newcommand{\cV}{\mathcal{V}}
\newcommand{\bA}{\vec{A}}
\newcommand{\bB}{\vec{B}}
\newcommand{\bC}{\vec{C}}
\newcommand{\bI}{\vec{I}}
\DeclareMathOperator*{\pr}{\mathbf{Pr}}
\DeclareMathOperator*{\E}{\mathbf{E}}
\newcommand\snorm[2]{\left\| #2 \right\|_{#1}}
\newcommand{\normal}{\mathcal{N}}
\def\d{\mathrm{d}}
\def\proj{\mathrm{Proj}}
\newcommand{\tr}{\mathrm{tr}}
\DeclarePairedDelimiter\abs{\lvert}{\rvert}
\let\vec\mathbf
\newcommand{\op}{\mathrm{op}}
\newcommand{\fr}{\mathrm{F}}
\def\colorful{0}
\newcommand{\inote}[1]{\footnote{{\bf [Ilias: {#1}\bf ] }}}
\newcommand{\dnote}[1]{\footnote{{\bf [Daniel: {#1}\bf ] }}}
\newcommand{\anote}[1]{\footnote{{\bf [Ankit: {#1}\bf ]}}}
\newcommand{\tnote}[1]{\footnote{{\bf [Thanasis: {#1}\bf ] }}}
\newcommand{\inote}[1]{}
\newcommand{\dnote}[1]{}
\newcommand{\anote}[1]{}
\newcommand{\tnote}[1]{}
\title{Near-Optimal Algorithms for \\
Gaussians with Huber Contamination:\\
Mean Estimation and Linear Regression
}
\author{
Ilias Diakonikolas\thanks{Supported by NSF Medium Award CCF-2107079, NSF Award CCF-1652862 (CAREER), and a DARPA Learning with Less Labels (LwLL) grant.}\\
University of Wisconsin-Madison\\
{\tt ilias@cs.wisc.edu}\\
\and
Daniel M. Kane\thanks{Supported by NSF Medium Award CCF-2107547 and NSF Award CCF-1553288 (CAREER).}\\
University of California, San Diego\\
{\tt dakane@cs.ucsd.edu}
\and
Ankit Pensia\thanks{Supported by NSF Award CCF-1652862 (CAREER), and NSF grants CCF-1841190 and CCF-2011255.  The majority of this work was done while the author was at UW-Madison.}\\
IBM Research\\
{\tt ankit@ibm.com}\\
\and
Thanasis Pittas\thanks{Supported by NSF Medium Award CCF-2107079 and NSF Award DMS-2023239 (TRIPODS).}\\
University of Wisconsin-Madison\\
{\tt pittas@wisc.edu}\\
}
\begin{document}

\maketitle

\begin{abstract}
We study the fundamental problems of Gaussian mean 
estimation and linear regression with Gaussian covariates 
in the presence of Huber contamination. Our main 
contribution is the design of the first sample near-optimal 
and almost linear-time algorithms with optimal error 
guarantees for both of these problems. Specifically, for 
Gaussian robust mean estimation on $\R^d$ with 
contamination parameter $\eps \in (0, \eps_0)$ for a small 
absolute constant $\eps_0$, we give an 
algorithm with sample complexity $n = \tilde{O}(d/\eps^2)$ 
and almost linear runtime that approximates the target 
mean within $\ell_2$-error $O(\eps)$. 
This improves on 
prior work that achieved this error guarantee with 
polynomially suboptimal sample and time complexity. 
For robust linear 
regression, we give the first algorithm with sample 
complexity $n = \tilde{O}(d/\eps^2)$ and almost linear 
runtime that approximates the target regressor within 
$\ell_2$-error $O(\eps)$. This is the first polynomial 
sample and time algorithm achieving the optimal error 
guarantee, answering an open question in the literature. 
At the technical level, we develop a methodology that 
yields almost-linear time algorithms for multi-directional 
filtering that may be of broader interest.   
\blfootnote{Authors are in alphabetical order.}
\end{abstract}

\setcounter{page}{0}
\thispagestyle{empty}

\newpage
\startcontents[appendix]
\setcounter{tocdepth}{2} %

\printcontents[appendix]{}{1}{\textbf{Table of Contents}\vskip3pt}

\setcounter{page}{0}
\thispagestyle{empty}

\newpage

\section{Introduction}

Modern machine learning systems operate with vast amounts of training data, which are difficult to carefully curate. Consequently, outliers have become a fixture of modern training datasets. This contradicts the standard i.i.d.\ assumption of classical statistical theory and has spurred the development of robust statistics, which seeks to develop algorithms that perform well in the presence of outliers~\cite{HubRon09,DiaKan22-book}. 
The standard contamination model of outliers was first formalized by Huber~\cite{Hub64}:

\begin{definition}[Huber Contamination Model]  \label{def:huber}
  Given $0<\eps<1/2$ and a distribution family $\cD$, the algorithm specifies $n \in \N$ and observes $n$ i.i.d.\ samples from a distribution $P = (1-\eps)G + \eps B$, where $G$ is an unknown distribution in $\cD$, and $B$ is an arbitrary distribution.
  We say $G$ is the distribution of inliers, $B$ is the distribution of outliers, and $P$ is an $\eps$-corrupted version of $G$. 
\end{definition}
The Huber contamination model has since served as a bedrock for the development and evaluation of robust algorithms. 
Given that estimating properties of Gaussian distributions is a prototypical task in statistics, 
Gaussian estimation under Huber contamination is, analogously, a central problem in robust statistics.
The  \emph{univariate} version of this problem is addressed in Huber's work~\cite{Hub64}.
In this paper, we focus on estimating \emph{high-dimensional} Gaussian distributions under Huber contamination.

Consider the fundamental problem of estimating the mean $\mu$ of a $d$-dimensional isotropic Gaussian $\cN(\mu,\vec I)$ given samples from an $\eps$-corrupted distribution (\Cref{def:huber}).
\cite{DKKLMS18-soda} gave an $n = \poly(d/\eps)$ sample and $\poly(n)$-time 
algorithm for this problem, achieving the information-theoretically optimal 
error of $\Theta(\eps)$. 
Despite being polynomial in sample and time complexity,
these guarantees are far from the linear sample complexity and 
linear runtime of the sample mean on uncontaminated data. 
While a number of works~\cite{CheDG19,DonHL19,diakonikolas2022streaming} 
have developed near-linear time Gaussian robust mean estimation algorithms, 
all these prior methods {\em inherently} suffer a sub-optimal error guarantee of 
$\Omega(\eps \sqrt{\log(1/\eps)})$.\footnote{
 These prior algorithms continue to work in the total variation/strong contamination model with error $O(\eps \sqrt{\log(1/\eps)})$. Still, their underlying algorithmic approaches provably 
cannot obtain better error in the Huber model --- the fundamental contamination model and the focus of our work; see \Cref{sec:related_work} for details.} Given the fundamental nature of Gaussian 
robust mean estimation (see, e.g.,~\cite{DalMin2020}), 
these contrasting sets of results 
raise the question of whether it is possible to achieve 
the best of both worlds. 
In other words: 
\begin{center}
    \emph{Can we obtain $O(\eps)$ error for Gaussian mean estimation with Huber contamination\\ in near-linear time and sample complexity?}
\end{center}

While mean estimation is the most basic unsupervised learning task,
linear regression is arguably the most basic supervised learning task. Here we study the basic case of Gaussian covariates.

\begin{definition}[Gaussian Linear Regression]\label{def:glr} 
Fix $\sigma > 0$ and $\beta \in \R^d$. Let $G$ be the joint distribution of pairs 
$(X,y)$, with $X\in \R^d$, $y \in \R$, such that $X \sim \cN(0,\vec I_d)$ and $y = \beta^\top X + \eta $, 
where $\eta \sim \cN(0,\sigma^2)$ independently of $X$.
The goal of the algorithm is to compute $\widehat{\beta}$ such that  $\|\beta- \widehat{\beta}\|_2$ is small.
\end{definition}
The information-theoretic error for robust Gaussian linear regression with Huber contamination is $\Theta(\sigma \eps)$.
However, all known polynomial time algorithms for this task 
incur a higher error of $\Omega(\sigma\eps \sqrt{\log(1/\eps)})$~\cite{DiaKS19,PenJL20,CheATJFB20,BakPra21}, raising the following open question in \cite{BakPra21}:

\begin{center}
    \emph{Can we obtain $O(\sigma\eps)$ error for Gaussian linear regression with Huber contamination\\ in polynomial time and sample complexity?}
\end{center}

Perhaps surprisingly, despite the extensive algorithmic progress on robust statistics over the past years~\cite{DiaKKLMS16-focs,LaiRV16, DK19-survey, DiaKan22-book}, these basic questions have remained open.

\subsection{Our Results}
\textbf{Robust Mean Estimation} We begin by stating our result for Gaussian robust mean estimation:
\begin{theorem}[Almost Linear-Time Algorithm for Robust Mean Estimation] \label{thm:robust_mean}
Let $\eps_0$ be a sufficiently small positive constant. There is an algorithm that, given parameters $\eps\in(0,\eps_0)$, $c \in (0,1),\delta\in(0,1)$, and $n \gg \frac{1}{\eps^2}\left(d + \log(1/\delta) \right)\mathrm{polylog}(d/\eps)$ %
$\eps$-corrupted points from $\cN(\mu,\vec I_d)$ (in the Huber contamination model per \Cref{def:huber}), 
computes an estimate $\widehat{\mu}$ such that $\snorm{2}{\widehat{\mu} - \mu} =O(\eps/c)$ with probability at least $1 -\delta$. Moreover, the algorithm runs in time $(n d + 1/\eps^{2+c})\polylog(d/\eps\delta)$.
\end{theorem}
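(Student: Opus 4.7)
The plan is a two-phase approach: a coarse estimator followed by a novel near-linear-time refinement. \textbf{Phase 1 (warm start).} I would first invoke one of the existing near-linear-time Huber-robust mean estimators (e.g.\ a streaming or power-iteration filter from the cited prior work) to obtain, in time $\tilde O(nd)$, an estimate $\mu_0$ with $\|\mu_0-\mu\|_2 \le O(\eps\sqrt{\log(1/\eps)})$. Re-centering the samples at $\mu_0$ reduces the problem to shrinking an $O(\eps\sqrt{\log(1/\eps)})$-accurate estimate to an $O(\eps/c)$-accurate one.

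\textbf{Phase 2 (iterative multi-directional filter).} I would then iterate the following loop. (a) Run $O(\log(d/\eps))$ power-iteration steps on the centered empirical covariance $\widehat{\vec\Sigma}-\vec I$ to approximately identify its top few eigen-directions; each iteration costs $\tilde O(nd)$ via one matrix--vector product. (b) If the top eigenvalue is $O(\eps)$, output the current weighted sample mean: the deterministic stability framework (holding with probability $1-\delta$ for $n = \tilde\Omega((d+\log(1/\delta))/\eps^2)$ inliers by standard concentration) certifies $\|\widehat{\mu}-\mu\|_2 = O(\eps/c)$. (c) Otherwise project the data onto the top $k = \polylog(1/\eps)$ eigen-directions, reducing to a $k$-dimensional instance on which I would run a more expensive but ambient-dimension-free multi-directional filter that downweights points in a manner certified to remove strictly more outlier mass than inlier mass each round. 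Since the total outlier mass is $O(\eps n)$ and each call strictly reduces it, the number of outer rounds is $\polylog(1/\eps)$.

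\textbf{Main obstacle.} The core difficulty is designing the filter in (c). A single-direction threshold filter is forced to cut at $\Theta(\sqrt{\log(1/\eps)})$ to avoid discarding too many inliers, which is exactly the source of the suboptimal $O(\eps\sqrt{\log(1/\eps)})$ error in all prior fast algorithms. To reach $O(\eps)$ one must consider the \emph{joint} behaviour of projections in several directions: in the Huber model outliers are i.i.d.\ from a fixed $B$, so they concentrate in a specific region of the $k$-dimensional projected space even when each univariate marginal looks Gaussian-like. I would identify them by comparing empirical small-cell densities on the $k$-dimensional sketch against the Gaussian density, removing points located in cells with excess mass $\gg \eps$. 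Running this filter on the $\tilde O(1/\eps^2)$ projected points and $k=\polylog(1/\eps)$ coordinates is where the $1/\eps^{2+c}$ additive runtime arises, while the projection/power-iteration work accounts for the $nd$ term. Combining the soundness of each round (more bad than good mass removed) with the stability-based termination criterion and a union bound over the $\polylog$ iterations yields correctness; summing costs gives the claimed running time $(nd + 1/\eps^{2+c})\polylog(d/\eps\delta)$.
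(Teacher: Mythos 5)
Your high-level skeleton (warm start, iterate a multi-directional filter, terminate via a spectral certificate) matches the paper's, but the two ingredients you defer as the ``core difficulty'' are exactly where the argument breaks, and the fixes are not routine. First, projecting onto the \emph{deterministic} top-$k$ eigendirections of the empirical covariance and filtering there does not give $\polylog$ outer rounds. The filter only guarantees that it removes more outlier mass than inlier mass; it says nothing about \emph{how much}, and outliers lying in the orthogonal complement of the current top-$k$ eigenspace are untouched by that round. The adversary can arrange the outliers so that each round removes only $\tilde{O}(\eps/d)$ of their mass, forcing $\tilde{\Omega}(d)$ rounds and hence $\tilde{\Omega}(nd^2)$ total time --- this is precisely why \cite{DKKLMS18-soda} is quadratic, as the paper emphasizes. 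The entire technical content of \Cref{sec:robust_mean} is the machinery that avoids this: random directions $v_{t,j} = \vec M_t^\perp z_{t,j}$ obtained by powering the centered covariance, a potential $\phi_t = \tr((\vec B_t^\perp)^{2p})$ that provably drops by a constant factor each round, and a second case in which a dominant eigendirection is \emph{put aside} into a growing subspace $\cV$ rather than filtered. Your claim that ``each call strictly reduces [outlier mass]'' hence ``$\polylog(1/\eps)$ rounds'' is unjustified, and false for the deterministic scheme.

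Second, the filter you propose --- comparing empirical small-cell densities of the $k$-dimensional projection against the Gaussian density and removing cells with excess mass $\gg\eps$ --- does not work. With $k=\polylog(1/\eps)$, any cell decomposition fine enough to resolve $O(\eps)$ mean shifts has $2^{\Omega(k)}$ cells, far more than $\tilde{O}(1/\eps^2)$ samples can populate; the reference Gaussian density per cell depends on the unknown $\mu$; and in Huber's model $B$ is arbitrary (e.g., $\cN(\mu+\sqrt{\log(1/\eps)}\,v,\vec I)$), so the outliers need not concentrate anywhere --- no cell has excess mass $\gg\eps$, yet the mean is shifted by $\eps\sqrt{\log(1/\eps)}$. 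The paper instead filters on the quadratic score $\|\vec U(x-\mu)\|_2^2$ thresholded at $100k$, whose soundness rests on the Hanson--Wright-based goodness condition \ref{cond:polynomials}. Moreover, filtering alone never recovers the mean inside the residual $\polylog(d/\eps)$-dimensional subspace $\cV$: that requires the separate brute-force step (a median along each direction of a $2^{O(k)}$ cover of the sphere, combined by an LP, \Cref{prop:LP}), which is simultaneously the source of the $O(\eps/c)$ error and of the $1/\eps^{2+c}$ runtime term. Your proposal omits this step entirely, and your accounting of the $1/\eps^{2+c}$ term (filtering $\tilde{O}(1/\eps^2)$ projected points in $\polylog(1/\eps)$ coordinates) would only yield $\tilde{O}(1/\eps^2)$.
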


 Taking $c$ to be a small positive constant, 
our algorithm achieves the optimal asymptotic error (see, e.g.,~\cite{CheGR15-aos})
up to a constant factor 
with near-optimal sample complexity 
and almost-linear runtime. Moreover, the runtime of our algorithm 
is near-optimal in the regime when $\eps \geq  d^{-2/c}$, 
which is the main regime of interest.\footnote{Using the expression $n=d/\eps^2$, the runtime is dominated by $nd$ when $\eps > n^{-2/(4+c)}$. Note that the sample complexity requirement directly implies that $\eps > 1/\sqrt{n}$.}
Moreover, we note that the algorithm continues to work for a wider family of distributions than Gaussians; see \Cref{rem:distributional-assumptions}.

Our techniques are also amenable to the streaming framework of 
\cite{diakonikolas2022streaming}, and \Cref{thm:robust_mean} can be extended  
to a streaming algorithm (with polynomial time and sample complexity) that uses only 
$\tilde{O}(d + \poly(1/\eps))$-memory, which is optimal up to the additive 
$\poly(1/\eps)$ term (see \Cref{sec:streaming}).

\medskip

\noindent \textbf{Robust Linear Regression} 
We next consider robust Gaussian linear regression (\Cref{def:glr}). 
As existing polynomial-time algorithms for robust Gaussian linear regression can achieve error of ${O}({\sigma}\eps \log(1/\eps))$, 
we assume without loss of generality that the true regressor satisfies $\|\beta\|_2 = O(\sigma \eps \log(1/\eps))$ (in fact, it can be ensured in nearly-linear time; see, e.g., \cite[Theorem 2.5]{CheATJFB20}).

Using a novel reduction of linear regression to robust mean estimation for Gaussians, we prove:

\begin{restatable}[Almost Linear-Time Algorithm for Robust Linear Regression]{theorem}{REGRESSION}\label{thm:regression}
Let $\eps_0$ be a sufficiently small positive constant.
{Let $G$ be the joint distribution over $(X,y)$ following the Gaussian linear regression model (\Cref{def:glr}) with unknown parameters $\sigma>0$ and $\beta \in \R^d$ satisfying $\|\beta\|_2 =O(\sigma \eps\log(1/\eps))$.} 
There is an algorithm that, given $\eps \in (0,\eps_0)$, $c\in (0,1)$ and $n \gg (d/\eps^2)\polylog(d/\eps)$ i.i.d.\ labeled examples $(x,y)$ from an $\eps$-corrupted version of $G$, 
the algorithm returns a $\widehat \beta$ such that $\| \widehat \beta - \beta \|_2 = O(\sigma \eps/c)$, with probability at least $0.9$.
Moreover, the algorithm runs in time $(n d + 1/\eps^{2+c})\polylog(d/\eps)$.
\end{restatable}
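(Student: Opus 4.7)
The approach is to reduce robust Gaussian linear regression to robust mean estimation and invoke \Cref{thm:robust_mean}. The key observation is that for a clean sample $(X,y) \sim G$ we have
\[
    \E[yX] \;=\; \E[(\beta^\top X)X] + \E[\eta]\,\E[X] \;=\; \beta,
\]
while a direct Isserlis computation yields $\cov(yX) = (\|\beta\|_2^2 + \sigma^2)\vec I_d + \beta\beta^\top$. Under the hypothesis $\|\beta\|_2 = O(\sigma\eps\log(1/\eps))$, this covariance equals $\sigma^2 \vec I_d\bigl(1+o(1)\bigr)$, so the rescaled vector $Z := yX/\sigma$ has approximately identity covariance and mean $\mu^* := \beta/\sigma$ of norm $O(\eps\log(1/\eps))$; robustly estimating the mean of $Z$ therefore recovers $\beta$ after rescaling by $\sigma$.

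\textbf{Algorithm and error analysis.} First, robustly estimate the noise scale: each inlier $y_i \sim \cN(0,\sigma^2(1+o(1)))$, so a standard one-dimensional trimmed-quantile routine applied to the $\eps$-corrupted scalars $\{y_i\}$ returns in time $O(n\polylog(1/\eps))$ an estimate $\hat\sigma$ with $|\hat\sigma - \sigma| = O(\sigma\eps)$. Next, form the $d$-dimensional samples $Z_i := y_i X_i / \hat\sigma$; since the transformation is applied sample-wise, Huber contamination at rate $\eps$ is preserved (inliers push forward to i.i.d.\ draws from the law of $Z$, while outliers remain arbitrary). We then apply the algorithm of \Cref{thm:robust_mean} (in the extended form permitted by \Cref{rem:distributional-assumptions}) to $\{Z_i\}$ to obtain $\hat\mu$ with $\|\hat\mu - \mu^*\|_2 = O(\eps/c)$ in time $(nd + 1/\eps^{2+c})\polylog(d/\eps)$, and output $\hat\beta := \hat\sigma\hat\mu$. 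Writing $\hat\beta - \beta = \hat\sigma(\hat\mu - \mu^*) + (\hat\sigma - \sigma)\mu^*$ and taking norms,
\[
    \|\hat\beta - \beta\|_2 \;\le\; \hat\sigma\,\|\hat\mu - \mu^*\|_2 + |\hat\sigma - \sigma|\,\|\mu^*\|_2 \;=\; \sigma\cdot O(\eps/c) + O(\sigma\eps)\cdot O(\eps\log(1/\eps)) \;=\; O(\sigma\eps/c),
\]
as required, and the overall runtime is dominated by the mean-estimation step.

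\textbf{Main obstacle.} The technical crux is verifying that the pushforward distribution of $Z$ satisfies the distributional hypotheses of \Cref{thm:robust_mean} strongly enough to achieve optimal $O(\eps)$ error, rather than the $O(\eps\sqrt{\log(1/\eps)})$ that a naive tail argument would give. Unlike a Gaussian, $Z$ is only sub-exponential (it is a product of correlated Gaussians), so the ``stability''/certifiably-bounded-moment conditions exploited by the near-optimal filter must be checked with care: for every unit vector $v$ and appropriate truncation level $T$, one must show that $\E[(v^\top(Z-\mu^*))^2 \1\{|v^\top(Z-\mu^*)| \le T\}]$ and the complementary tail mass agree with the isotropic-Gaussian targets up to additive $O(\eps)$. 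These estimates reduce to explicit moment computations for products of correlated Gaussians, combined with the smallness of $\mu^*$ guaranteed by the assumption $\|\beta\|_2 = O(\sigma\eps\log(1/\eps))$. Once established, the rest of the argument is a syntactic application of \Cref{thm:robust_mean} and inherits its near-linear runtime and sample complexity.
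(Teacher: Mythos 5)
There is a genuine gap, and it sits exactly where you flag the ``main obstacle'': the reduction via $\E[yX]=\beta$ cannot be pushed through to $O(\eps)$ error, and the paper explicitly rejects this route for that reason. The issue is not merely that some moment computations remain to be done. \Cref{thm:robust_mean} achieves error $O(\eps)$ only for distributions satisfying the goodness conditions of \Cref{DefModGoodnessCond} (see \Cref{rem:distributional-assumptions}: linear projections must be \emph{subgaussian} and \emph{centrally symmetric}, and quadratic projections must satisfy Hanson--Wright). Your $Z = yX/\hat\sigma$ fails these in an essential way: its one-dimensional projections are products of correlated Gaussians, hence subexponential and asymmetric about their mean. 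Consequently \Cref{cond:mean} is false for $Z$ --- deleting the worst $\alpha$-fraction of a subexponential sample moves the empirical mean by $\Theta(\alpha\log(1/\alpha))$, not $O(\alpha\sqrt{\log(1/\alpha)})$ --- and with $\alpha=\eps$ this alone caps any filter-based estimator at error $\Omega(\eps\log(1/\eps))$, i.e., exactly the suboptimal guarantee the theorem is meant to beat. \Cref{cond:tails} (median close to mean, needing symmetry) and \Cref{cond:polynomials} (Hanson--Wright for quadratics) fail as well. So the ``syntactic application'' of \Cref{thm:robust_mean} is not available, and the truncated-second-moment estimates you propose to verify are in fact false at the required precision.

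The paper's proof takes a different reduction precisely to avoid this: it conditions on the label. By \Cref{def:conditional_distr}, the distribution of $X$ given $y=a$ is an honest Gaussian $\cN(\tfrac{a}{\sigma_y^2}\beta,\ \vec I_d-\tfrac{1}{\sigma_y^2}\beta\beta^\top)$, nearly isotropic when $\|\beta\|_2\lesssim\sigma\eps\log(1/\eps)$, with mean proportional to $\beta$. The algorithm estimates $\sigma_y$ by a trimmed mean, restricts to samples with $y$ in a narrow interval $I$ of length $\widehat\sigma_y/\log(1/\eps)$ whose center is chosen \emph{uniformly at random} (so that, by \Cref{cl:ratios}, the contamination rate among the retained samples stays $O(\eps)$ in expectation --- a step with no analogue in your scheme, and one that is needed because an adversary aware of a fixed $I$ could concentrate all outlier labels inside it), verifies that the conditional distribution $G_I$ satisfies the goodness conditions (\Cref{lem:show_goodness}), and only then invokes the robust mean estimator and rescales. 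If you want to salvage your write-up, you would need to replace the $yX$ pushforward with this conditioning step (or some other device that produces genuinely subgaussian, symmetric samples whose mean encodes $\beta$).
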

For a constant $c>0$, \Cref{thm:regression} has optimal asymptotic error~\cite{CheGR16}, near-optimal sample complexity~\cite{CheGR16} for constant failure probabilities, and  near-optimal runtime (up to an additive factor of $1/\eps^{c}$). 
Thus, we provide the first polynomial time and sample  algorithm (in fact, \emph{near-optimal} time and sample complexity) for robust Gaussian linear regression with Huber contamination. 

\subsection{Our Techniques}\label{sec:techniques}
Our first major result is a  near-optimal sample and almost-linear time algorithm for 
learning a Gaussian mean to error $O(\eps)$ in the Huber contamination model. At a high 
level, our technique borrows ideas from the $O(\eps)$ error algorithm 
of \cite{DKKLMS18-soda} and the fast robust estimation techniques of \cite{diakonikolas2022streaming}. 
We emphasize that a number of challenges need to be overcome in order to achieve this result, as elaborated below.

Roughly speaking, the fast algorithm of \cite{diakonikolas2022streaming} 
works via a careful implementation of the standard filtering algorithm~\cite{DiaKKLMS16-focs, DKK+17, DK19-survey}. 
If the empirical covariance matrix has no direction of large variance, this serves as a certificate that the remaining outliers do not substantially affect the mean.
If the empirical covariance matrix has any large eigenvalue, 
the algorithm projects all the points onto a (randomly chosen) direction of large variance and removes outliers in this direction.\footnote{We note that \cite{diakonikolas2022streaming} uses a JL-sketch matrix of $\polylog(d/\epsilon)$ size instead of a single direction. However, that is a superficial difference, and a random direction also works; see, for example~\cite{DiaKan22-book,DiaKPP23-pca}.}
Moreover, if there is such a direction, a careful analysis can be used to show that the 
filtering removes more outliers than inliers  
and that it improves a carefully chosen potential function 
--- based on the trace of an appropriate power of the (translated) empirical covariance matrix $\Sigma'$, roughly $\trace((\vec \Sigma' - \vec I)^{\log d})$. 
Repeating this procedure only $\polylog(d/\eps)$ many times eventually yields a 
sample set with no directions of large variance, whose sample mean is guaranteed to 
be close to the true mean. Crucially, to achieve this fast runtime, \cite{diakonikolas2022streaming} must consider \emph{random} directions 
of large variance obtained by applying a suitable power of the (translated) empirical 
covariance matrix to a random vector. This is to ensure that the outliers cannot be 
arranged so that they will end up orthogonal to the directions considered.

 Conceptually, the $O(\eps)$-error algorithm in Huber's model by~\cite{DKKLMS18-soda} works via a more complicated filter whereby if the covariance matrix 
has $k$ (for $k = \Omega(\log(1/\eps)$) moderately large eigenvalues 
{$1 + \Omega(\eps)$}, the algorithm projects all points onto the subspace 
spanned by these $k$ directions and removes those points whose projections 
are too far from the mean. 
The usage of multiple orthogonal directions 
{permits}
better concentration bounds of Gaussians (namely, the Hanson-Wright inequality) and achieves stronger filtering that translates to a smaller final error.
In particular, the standard filter cannot reduce the leading eigenvalue 
to $1 + o(\eps\log(1/\eps))$, while this improved multi-directional filter 
can reach the threshold $1 +O(\epsilon)$ for the $k$-th largest eigenvalue. 
A brute-force approach is then used to learn the projection of the mean 
onto the subspace of the $k$ largest eigenvalues, 
while the sample mean is used as an approximation to the mean in the orthogonal directions.\footnote{ The brute-force approach takes  $2^{\Omega(k)}$ time. 
Hence, $k$ is chosen to be $\Theta(\log(1/\eps))$ to satisfy the filter requirement 
while ensuring the brute-force approach runs in polynomial time.}.
However, \cite{DKKLMS18-soda} takes quadratic time, $nd^2$, to run because the subspace used to remove points is deterministic and outliers may be arranged in a way that filtering does not remove the outliers that lie in the orthogonal subspace.
A natural idea is to use insights from \cite{diakonikolas2022streaming} to improve the algorithm in \cite{DKKLMS18-soda}.

Unfortunately, 
combining
these techniques is highly non-trivial. While \cite{diakonikolas2022streaming} filters based on a single random direction, \cite{DKKLMS18-soda}
requires a multi-directional filter with $\Omega(\log(1/\eps))$ many (nearly-) orthogonal directions.
If the technique of \cite{diakonikolas2022streaming} for producing random directions is 
called several times in order to produce the directions necessary, it may (if the sample 
covariance matrix is dominated by only a few large eigenvalues) produce essentially the 
same vectors over and over.
We deal with this by considering two complementary cases. 
In the first case, where the random directions selected by the technique of \cite{diakonikolas2022streaming} have small inner product, 
we show that we can simply take $k = \Theta(\log(d/\eps))$ 
many random directions and perform the multi-directional filter of \cite{DKKLMS18-soda} 
in these directions. When this is not the case, the covariance matrix must 
have a small number of dominant eigenvectors. We take one of these eigenvectors 
and put it aside (for now) and instead consider the projections 
of our original points on the orthogonal subspace.
Via a careful analysis, it can be shown that either of these steps will lead 
to a substantial drop in the potential function of \cite{diakonikolas2022streaming}. 
This guarantees that after a small number of filter steps, 
we are left with a matrix with small eigenvalues, 
allowing us to use the sample mean to approximate the true mean, 
at least in the directions orthogonal to those put aside.

In the directions that have been put aside, we still need to compute the sample mean. Fortunately, this reduces to the case where the dimension is $\polylog(d/\eps)$.
Although a brute-force approach will not run in polynomial time 
for such a large subspace, 
a careful analysis of \cite{DKKLMS18-soda} can be made to yield 
an appropriate runtime, because the ``effective'' dimension is now only $\polylog(d/\eps)$.

We now sketch our algorithm for robust linear regression with Gaussian covariates.
Our algorithm for robust linear regression in the Huber model achieves near-optimal error of $O(\eps \sigma)$ by leveraging our above-described robust mean estimation algorithm. Prior to our work, no polynomial time algorithm was known to achieve this optimal error guarantee. As an additional bonus, our algorithm has near-optimal sample complexity and runs in almost-linear time. 
The basic idea here is to consider the distribution of $X$ conditioned on the value of $y$ (or more precisely, after rejection sampling based on $y$ in a carefully chosen way). 
We show that this conditional distribution will be a 
nearly-spherical Gaussian whose mean is proportional to $\beta$, the true regressor. By using our robust mean estimation algorithm on this conditional distribution, we can obtain our final estimate of $\beta$.

\subsection{Related Work} \label{sec:related_work}

A systematic study of robust statistics was initiated in the 1960s (\cite{Tuk60,Hub64,AndBHHRT72}). 
We refer the reader to the book \cite{HubRon09} for a classical treatment of this topic. 
The theoretical investigation in these studies often focused on the asymptotics of sample size going to infinity.
A non-asymptotic analysis of the sample complexity can be found in, for example~\cite{CheGR15-aos,CheGR16,LugMen21-trim}.

Algorithmic high-dimensional robust statistics has been an active area of research since the publication of \cite{DiaKKLMS16-focs,LaiRV16}.
Various high-dimensional inference tasks have been investigated in the literature:
(i) Unsupervised learning tasks include mean estimation~\cite{DiaKKLMS16-focs,LaiRV16,KotSS18,CheFB19,DiaKP20,HopLZ20,CheTBJ20,HuRein21}, sparse estimation~\cite{BalDLS17,DiaKKPS19,DiaKKPP22-colt,DiaKLP22}, covariance estimation \cite{CheDGW19,LiYe20}, and  PCA \cite{XuCM13,KonSKO20,JamLT20}, and (ii)
Supervised learning tasks include linear regression~\cite{DiaKS19,KliKM18,CheATJFB20,PenJL20,Dep20,BakPra21}, and general stochastic optimization tasks~\cite{PraBR19,DKK+19-sever}.
See \cite{DiaKan22-book} for a recent book on this topic.

Within algorithmic robust statistics, our work is related to a line of work that develops novel algorithmic approaches to robust inference. For example, prior works proposed approaches based on gradient-descent and matrix multiplicative weights update ~\cite{CheDGS20,ZhuJS20,HopLZ20,CheDKGGS21}.
In particular, several recent works have developed fast algorithms, often nearly linear time (as opposed to simply polynomial time) algorithms for robust estimation in basic settings, including mean estimation~\cite{CheDG19,DepLec22,DonHL19,LeiLVZ20,DiaKKLT21-pca,diakonikolas2022streaming,CheMY20, DiaKKLT22-cluster}, linear regression~\cite{CheATJFB20}, and  PCA~\cite{JamLT20,DiaKPP23-pca}.

\paragraph{Contamination Models}

 The
\emph{strong contamination model} works as follows: a computationally unbounded adversary inspects all the samples and can replace $\eps$-fraction of samples with points of their choice~\cite{DiaKKLMS16-focs,DK19-survey}.
In contrast, the Huber model is additive and preserves independence among data points.
In the rest of this paragraph, we focus on the task of robust (isotropic) Gaussian mean estimation. Despite these differences between the contamination models, the information-theoretic error in both of these models is $\Theta(\eps)$.
However, the computational landscape of the problem changes considerably.
First, 
\cite{DKKLMS18-soda} was able to achieve this error with polynomial samples and time. 
However, under the strong contamination model, there is evidence in terms of statistical query lower bounds (and low-degree hardness using \cite{BBHLS20}) that all polynomial time
algorithms must incur a larger error of
$\Omega(\eps \sqrt{\log(1/\eps)})$~\cite{DKS17-sq}, matching the existing polynomial time algorithms~\cite{DiaKKLMS16-focs}, which necessitates that we consider Huber's model.\footnote{The lower bound from \cite{DKS17-sq} continues to hold for total variation corruption model, which is stronger than the Huber model but weaker than the strong corruption model.}
\paragraph{Gaussianity and Identity Covariance assumptions}
First, even for Gaussians, knowing the covariance matrix is crucial for computationally-efficient algorithms.\footnote{For example, all polynomial-time (statistical query and low-degree testing) algorithms for Gaussian mean estimation must use $\tilde{\Omega}(d^2)$ samples to get $o(\sqrt{\eps})$ error~\cite{DiaKS19,DiaKKPP22-colt}, as opposed to $O(\eps)$ error achievable with $O_\eps(d)$ samples using a computationally-inefficient algorithm.}
We recall that this is the case for \cite{DKKLMS18-soda} that gets $O(\eps)$ error in polynomial time for isotropic covariance, 
while the algorithm for unknown covariance matrix runs in quasi-polynomial time~\cite{DKKLMS18-soda}.\footnote{ \cite{DiaKKLMS16-focs}  gets a larger error $O(\eps \sqrt{\log\eps^{-1}})$ for  unknown covariance in polynomial time and samples.} 
Thus, we focus our attention to (nearly) isotropic distributions in this work and improve the runtime of the algorithm from \cite{DKKLMS18-soda} from a large polynomial to almost-linear.
Looking beyond the Gaussianity assumption, it is  impossible (information-theoretically) to obtain $O(\eps)$ error for arbitrary isotropic subgaussian distributions, even in a single dimension and Huber contamination model~\cite{DiaKan22-book}.
Finally, our results can be extended to a subset of symmetric isotropic subgaussian distributions; see \Cref{rem:distributional-assumptions}.

\section{Preliminaries}

\paragraph{Notation}
We denote $[n] {:=} \{1,\ldots,n\}$.
 For $w : \R^d {\to} [0,1]$ and a distribution $P$, we use $P_w$ to denote the weighted by $w$ version of $P$, i.e., the distribution with pdf $P_w(x) = w(x)P(x)/\E_{X \sim P}[w(X)]$.
We use $\mu_P,\vec \Sigma_P$ for the mean and covariance of $P$.
 When the vector $\mu$ is clear from the context,
 we use $\overline{\vec \Sigma}_{P}$ to denote the second moment matrix of $P$ centered with respect to $\mu$, i.e., $\overline{\vec \Sigma}_{P}:=\E_{X \sim P}[(X-\mu)(X-\mu)^\top]$.
We use $\|\cdot\|_2$ for $\ell_2$ norm of vectors and $\trace(\cdot)$ and  $\|\cdot\|_\op$ for trace and operator norm of matrices, respectively. For a subspace $\cV$, we use $\vec \Pi_{\cV}$ to denote the orthogonal projection matrix that projects to $\cV$, use $\proj_{\cV}(x)$ to denote the \emph{orthogonal projection} of $x$ onto $\cV$, and use $\cV^\perp$ for the subspace orthogonal to $\cV$. We use $x \lesssim y$ to denote that $x \leq C y$ for some absolute constant $C$. We use the notation $a \gg b$ to mean that $a > C b$ where $C$ is some sufficiently large constant. We use $\polylog()$ to denote a quantity that is poly-logarithmic in its arguments and use $\tilde{O}$, $\tilde{\Omega}()$, and $\tilde{\Theta}$ to hide such factors.  

 \subsection{Goodness Condition}
Our algorithm builds on the following (slightly modified) goodness condition from \cite{DKKLMS18-soda}. 
\begin{restatable}[
 $(\eps, \alpha, k)$-Goodness]{definition}{GOODNESS}
\label{DefModGoodnessCond}
 We say that a distribution $G$ on $\R^d$ is {$(\eps, \alpha, k)$}-good with respect to $\mu \in \R^d$,  if the following conditions are satisfied:
\begin{enumerate}[leftmargin=*,label=(\arabic*), topsep=0pt, itemsep=0pt]
  \item (Median) \label[cond]{cond:tails}For all $v \in \cS^{d-1}$, $\pr_{X \sim G}[|v^\top (X - \mu)| \gtrsim \epsilon ] < 1/2$.
 \item For every weight function $w$ with $\E_{X \sim G}[w(X)] \geq 1-\alpha$, the following hold: 
 \begin{enumerate}[label=(\arabic{enumi}.\alph*), topsep=0pt]
  \item (Mean) \label[cond]{cond:mean}  $\snorm{2}{\mu_{G_w} - \mu} \lesssim \alpha \sqrt{\log(1/\alpha)}$, 
  \item (Covariance) \label[cond]{cond:covariance} $\|\overline{\vec \Sigma}_{G_w} - \vec I_d\|_{\op} \lesssim \alpha \log(1/\alpha)$,
\item (Concentration along nearly-orthogonal vectors) \label[cond]{cond:polynomials}
For any $\vec U \in \R^{k \times d}$ with 
$\trace(\vec U^\top \vec U)= k$ and 
$ \| \vec U^\top \vec U \|_\op \leq 2k/\log({k}/\epsilon)$, 
the degree-2 polynomial $p(x) := \|\vec U(x - \mu)\|_2^2$ satisfies $\E_{X\sim G_w}[p(X) \1(p(x) > 100k)] \leq {\epsilon/\log(1/\eps)}$.
\end{enumerate}
\end{enumerate}
\end{restatable}

The first condition, \ref{cond:tails}, states that the median is a good estimate in each direction.
The next two conditions, \ref{cond:mean} and \ref{cond:covariance}, state that the mean and covariance of the inliers change by at most $\alpha \sqrt{\log(1/\alpha)}$ and $\alpha \log(1/\alpha)$, respectively, when $\alpha$-fraction of the dataset is deleted; these two conditions with $\alpha=\eps$ have been extensively used in the strong contamination model for (sub)-Gaussian distributions but the resulting algorithms necessarily get stuck at error $\epsilon \sqrt{\log(1/\epsilon)}$ error.
\Cref{cond:polynomials}, is crucial in obtaining $O(\epsilon)$ error because it provides stronger concentration for projections along $k$-dimensional (nearly-orthogonal) projections using Hanson-Wright inequality.
In particular, it implies that at most $O(\frac{\eps}{k\log(1/\eps)})$-fraction
of inliers have $p(x) > 10k$.
In contrast, using only \Cref{cond:mean,cond:covariance} would require the threshold to be much larger, $p(x) \gtrsim k\log(k/\eps)$, which is insufficient for our application. 
We show in \Cref{appendix:prelims} that if $G$ is a set of $n\gg (d+ \log(1/\delta))\,\polylog(d/\eps)/\eps^2$ i.i.d.\ samples from $\cN(\mu,\vec I)$, then $G$ satisfies the goodness condition mentioned above with probability $1-\delta$;
Prior work \cite{DKKLMS18-soda} had only shown $\poly(d/\eps)$ bound for the sample complexity.
\begin{remark}
\label{rem:distributional-assumptions}
Our proof for the sample complexity directly extends to all distributions satisfying the following: (i) their linear projections are subgaussian and centrally symmetric with constant density around the median, and (ii)   their quadratic projections satisfy Hanson-Wright inequality.  In particular, this extension is crucial for deriving our results for robust linear regression.
\end{remark}
Throughout the algorithm, we will assume that the inliers and the outliers satisfy the following:
\begin{setting}\label{setting:main}
    Let $\eps \in (0,\eps_0)$ for a sufficiently small constant $\eps_0$, $C>0$ be a large enough constant, $P$ be a uniform distribution on $n$ points of $\R^d$ that has the mixture form $(1-\eps)G + \eps B$ for some $G$ that is ($\eps,\alpha,k$)-good (cf.\ \Cref{DefModGoodnessCond}) with respect to $\mu \in \R^d$, where $\alpha = \eps/\log(1/\eps)$
    and $k=\Theta(\log^2(d/\eps))$. Let $w: \R^d \to [0,1]$ be a weight function with $\E_{X \sim G}[w(X)] \geq 1-\alpha$ such that $- C \eps\vec I \preceq   \vec \Sigma_{P_w} - \vec I\preceq C \eps \log^2(1/\eps) \vec I$.
\end{setting}    
 The condition on  $\vec \Sigma_{P_w}$ in \Cref{setting:main} amounts to  a ``warm start'' for our algorithm, which can be obtained in nearly-linear time by running the algorithm from \cite{diakonikolas2022streaming} first (see \Cref{appendix:prelims}). 

The following result gives a sufficient condition for the (weighted) sample mean using  \ref{cond:mean} and \ref{cond:covariance}:

\begin{lemma}[Certificate Lemma]
\label{LemCertiAlt}
  
  In  \Cref{setting:main}, 
  if  $\|\vec \Sigma_{P_w}\|_\op {\leq} 1 + \lambda$, then
  $\| \mu_{P_w} - \mu \|_2 \lesssim \eps + \sqrt{ \lambda \epsilon}$.
\end{lemma}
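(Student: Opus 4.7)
The plan is to decompose the weighted distribution $P_w$ as a two-component mixture of (weighted) inliers and outliers, write $\mu_{P_w}-\mu$ as a corresponding convex combination, and then bound each piece: the inlier displacement by \Cref{cond:mean}, and the outlier displacement by combining the operator-norm hypothesis with the mixture-variance identity.

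First I would observe that since $P=(1-\eps)G+\eps B$, the weighted distribution factors as
\[
 P_w \;=\; (1-\eps')\,G_w \;+\;\eps'\,B_w, \qquad \eps' \;=\; \frac{\eps\,\E_{X\sim B}[w(X)]}{(1-\eps)\E_{X\sim G}[w(X)]+\eps\,\E_{X\sim B}[w(X)]}\,,
\]
and a short calculation using $\E_{X\sim G}[w(X)]\ge 1-\alpha$ with $\alpha=\eps/\log(1/\eps)$ yields $\eps'\lesssim \eps$. This gives the decomposition
\[
 \mu_{P_w}-\mu \;=\; (\mu_{G_w}-\mu)\;+\;\eps'\,(\mu_{B_w}-\mu_{G_w}),
\]
so by the triangle inequality it suffices to bound the two terms separately.

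For the inlier term, \Cref{cond:mean} applied at level $\alpha=\eps/\log(1/\eps)$ gives $\|\mu_{G_w}-\mu\|_2\lesssim \alpha\sqrt{\log(1/\alpha)}\lesssim \eps/\sqrt{\log(1/\eps)}\lesssim \eps$. For the second term, I would invoke the standard decomposition of the covariance of a mixture,
\[
 \vec\Sigma_{P_w} \;=\; (1-\eps')\,\vec\Sigma_{G_w} \;+\; \eps'\,\vec\Sigma_{B_w} \;+\; (1-\eps')\eps'\,(\mu_{G_w}-\mu_{B_w})(\mu_{G_w}-\mu_{B_w})^\top .
\]
Because $\overline{\vec\Sigma}_{G_w}=\vec\Sigma_{G_w}+(\mu_{G_w}-\mu)(\mu_{G_w}-\mu)^\top$, \Cref{cond:covariance} together with the $O(\eps)$ bound on $\|\mu_{G_w}-\mu\|_2$ shows $\vec\Sigma_{G_w}\succeq (1-O(\eps))\vec I_d$, and $\vec\Sigma_{B_w}\succeq 0$ trivially. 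Testing the identity against a unit vector $v$ aligned with $\mu_{G_w}-\mu_{B_w}$ and using $\|\vec\Sigma_{P_w}\|_\op\le 1+\lambda$ then produces
\[
 (1-\eps')\eps'\,\|\mu_{G_w}-\mu_{B_w}\|_2^2 \;\le\; (1+\lambda)-(1-O(\eps)) \;\lesssim\; \lambda+\eps ,
\]
so $\eps'\|\mu_{B_w}-\mu_{G_w}\|_2\lesssim \sqrt{\eps'(\lambda+\eps)}\lesssim \sqrt{\eps\lambda}+\eps$. Combining the two bounds yields the claim.

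The calculation is largely bookkeeping; the only mildly delicate step is that \Cref{cond:covariance} controls $\overline{\vec\Sigma}_{G_w}$ (centered at the true $\mu$) rather than $\vec\Sigma_{G_w}$ (centered at $\mu_{G_w}$), so one must use \Cref{cond:mean} to pass between the two before invoking the mixture-variance identity. Once this is handled, the remaining obstacle is purely notational: verifying that the $\alpha=\eps/\log(1/\eps)$ setting from \Cref{setting:main} makes every $\alpha\sqrt{\log(1/\alpha)}$ and $\alpha\log(1/\alpha)$ factor at most $O(\eps)$, so that it can be absorbed into the $\eps$ term of the target bound.
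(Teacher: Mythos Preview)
Your proposal is correct and follows essentially the same approach as the paper: both decompose $P_w$ as a mixture $(1-\eps')G_w+\eps'B_w$ (the paper calls $\eps'$ by $\rho$), invoke the mixture-covariance identity, test it against the unit vector in the direction $\mu_{G_w}-\mu_{B_w}$, and combine with \Cref{cond:mean,cond:covariance} at level $\alpha=\eps/\log(1/\eps)$. Your explicit remark about passing from $\overline{\vec\Sigma}_{G_w}$ to $\vec\Sigma_{G_w}$ via \Cref{cond:mean} is a point the paper handles only implicitly by citing both conditions together.
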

In light of the certificate lemma, our goal will be to downweigh outliers until the top eigenvalue (in the subspace of interest) is at most $1 + O(\eps)$. That would make the bound  above  $O(\eps)$.

We now explain the filtering procedure that we use. A filtering algorithm takes weights $w(x)$ for each point and scores $\tau(x)$ (that capture how much of an outlier the point is) the procedure updates the weights to $w'(x)$ such that: (i) it removes much more mass from outliers than inliers, and (ii) $\eps \E_{X \sim B}[w'(x)\tau(x)]$ (the contribution to the weighted average of scores by the outliers) is small after filtering. By considering appropriate scores $\tau(x)$ that use multi-directional projections of the form of \Cref{cond:polynomials} of \Cref{DefModGoodnessCond}, we show the following in \Cref{appendix:prelims}.

\begin{restatable}[Multi-Directional Filtering]{lemma}{FILTERINGGG}
\label{lem:filtering_combined}

    Consider \Cref{setting:main}.
    Given a nearly-orthogonal matrix $\vec U \in \R^{k \times d}$ satisfying $\|\vec U^\top \vec U\|_\op \leq 2 \trace(\vec U^\top \vec U)/\log(1/\eps)$,   there is an algorithm that reads $\eps$, the $n$ points, their weights $w(x)$, and returns weights $w'$ in time $ndk + \polylog({d/\eps}, \|\vec U\|_\fr^2)$
    such that 
    \begin{enumerate}[label=(\roman*), topsep=0pt, itemsep=0pt]
        \item $\E_{X \sim G}[w(X) - w'(X)] < (\eps/\log(1/\eps)) \E_{X \sim B}[w(X)-w'(X)]$. \label{it:filter1}
        \item $\eps \E_{X \sim B}[w'(X)\|\vec U (X - \mu_{P_w})\|_2^2] \lesssim \eps \trace(\vec U^\top \vec U)$. \label{it:filter2}
    \end{enumerate}
\end{restatable}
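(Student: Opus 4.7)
The algorithm is a threshold filter on the scores $\tau(x) := \|\vec U(x - \mu_{P_w})\|_2^2$. It first computes $\mu_{P_w}$ in $O(nd)$ time, then evaluates all $n$ scores in $O(ndk)$ time by forming the matrix-vector products $\vec U(x - \mu_{P_w})$ and summing their squared entries. After rescaling so that $\|\vec U\|_\fr^2 = k$ (which leaves both properties scale-invariant), it returns $w'(x) := w(x)\,\mathbf{1}[\tau(x) \leq T]$ for an appropriate threshold $T = \Theta(k)$; threshold selection and bookkeeping add only $\polylog$ overhead to the runtime.

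Property (ii) is immediate: every surviving $x$ has $\tau(x) \leq T$, so $\E_{X \sim B}[w'(X)\tau(X)] \leq T \cdot \E_B[w'] \leq T = O(k)$, and multiplying by $\eps$ yields the stated bound $\eps\E_B[w'\tau] \lesssim \eps\,\trace(\vec U^\top \vec U)$.

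For property (i), the key input is \Cref{cond:polynomials}, which applied to the rescaled $\vec U$ (whose operator-to-Frobenius ratio satisfies the needed bound) gives $\E_{G_w}[\|\vec U(X-\mu)\|_2^2\,\mathbf{1}(\|\vec U(X-\mu)\|_2^2 > 100k)] \leq \eps/\log(1/\eps)$. Since our score is centered at $\mu_{P_w}$ rather than $\mu$, I would first invoke \Cref{LemCertiAlt} together with the covariance bound of \Cref{setting:main} to obtain $\|\mu_{P_w}-\mu\|_2 = O(1)$, so $\|\vec U(\mu_{P_w}-\mu)\|_2^2 \leq \|\vec U^\top\vec U\|_\op \cdot O(1) = O(k/\log(1/\eps))$, which is negligible compared to the threshold. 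The tail bound thus transfers (up to constants in $T$), and translating $G_w$ back to $G$ via $\E_G[w f] = \E_G[w]\E_{G_w}[f]$, Markov's inequality yields
\[
\E_G[w - w'] \;=\; \E_G[w\,\mathbf{1}(\tau > T)] \;\leq\; \frac{\E_G[w\tau\,\mathbf{1}(\tau > T)]}{T} \;\leq\; \frac{\eps}{T\log(1/\eps)}.
\]
On the outlier side, the covariance upper bound in \Cref{setting:main} gives $\E_P[w\tau] \leq \trace(\vec U\vec\Sigma_{P_w}\vec U^\top) \leq (1+o(1))k$; subtracting the nonnegative inlier contribution forces $\eps\E_B[w\tau] = O(k)$. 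Consequently, whenever filtering is non-trivial (i.e., whenever property (ii) does not already hold trivially for $w'=w$), the excess mass must sit above $T$, giving $\E_B[w-w'] \geq (\log(1/\eps)/\eps)\,\E_G[w-w']$ as required.

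The main technical obstacle is the boundary regime where outlier scores cluster just below the threshold; a deterministic hard cutoff may then remove too few of them to dominate the inlier removal in the ratio demanded by (i). The standard remedy is to use a randomized threshold $T$ sampled uniformly from $[T_0, 2T_0]$ (or equivalently the deterministic soft filter $w'(x) := w(x)\cdot(1 - ((\tau(x)-T_0)_+/T_0) \wedge 1)$), under which the expected outlier removal scales with $\E_B[w\cdot(\tau-T_0)_+] = \E_B[w\tau] - \Theta(k)\E_B[w]$, and the expected inlier removal remains $O(\eps/(T_0\log(1/\eps)))$; a standard derandomization picks one concrete threshold realizing at least the expected ratio, adding only $\polylog$ overhead.
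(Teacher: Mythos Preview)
Your argument for property (ii) via a hard threshold is fine, but the argument for property (i) has a genuine gap that your proposed fixes do not close. The sentence ``the excess mass must sit above $T$, giving $\E_B[w-w'] \geq (\log(1/\eps)/\eps)\,\E_G[w-w']$'' conflates score-mass with probability-mass. Knowing that $\E_B[w\tau]$ (or $\E_B[w(\tau-T_0)_+]$) is large does \emph{not} lower-bound $\E_B[w\cdot\1(\tau>T)]$: a tiny fraction of outliers with enormous scores can make the former huge while the latter stays arbitrarily small. Concretely, if a $10^{-10}$ fraction of $B$ has $\tau \approx 10^{20}k$ and the rest have $\tau=0$, then your hard filter, your soft filter, and your randomized threshold in $[T_0,2T_0]$ all remove outlier mass $\approx 10^{-10}$, while inlier removal can still be of order $\eps/(k\log(1/\eps))$; the ratio (i) then fails badly. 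The cap $\wedge 1$ in your soft filter (equivalently, the bounded support $[T_0,2T_0]$ of your random threshold) is precisely what breaks the proportionality you need.

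The paper's algorithm is not a one-shot threshold but the iterative down-weighting filter of \Cref{alg:downweighting-filter}: with the thresholded score $\tilde{\tau}(x):=g(x)\1(g(x)>100\,\tr(\vec U^\top\vec U))$, it repeatedly applies $w'(x)\gets w'(x)\,(1-\tilde{\tau}(x)/r)$ while $\E_{P}[w'\tilde{\tau}]>T\beta$, with $\beta=\log(1/\eps)$ and $T$ chosen so that $(1-\eps)\E_G[w\tilde{\tau}]<T$ (which is exactly what \Cref{cond:polynomials} plus \Cref{lem:inlier_scores_shifted} and the certificate lemma provide). The point is that per iteration the mass removed is \emph{exactly proportional to} $\tilde{\tau}$, and the stopping test guarantees that whenever an iteration runs, the outlier score contribution $\eps\E_B[w'\tilde{\tau}]\geq \E_P[w'\tilde{\tau}]-(1-\eps)\E_G[w'\tilde{\tau}]>T(\beta-1)$ exceeds the inlier score contribution by a factor $\beta-1$; hence the same holds for the mass removed, and summing over iterations gives (i). Property (ii) then follows from the stopping condition $\E_P[w'\tilde{\tau}]\leq T\beta\lesssim \eps\,\tr(\vec U^\top\vec U)$ together with the trivial bound $g\cdot\1(g\leq 100\,\tr(\vec U^\top\vec U))\leq 100\,\tr(\vec U^\top\vec U)$. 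Your score computation and the shift from $\mu$ to $\mu_{P_w}$ are handled correctly, but you need this score-proportional (uncapped) removal coupled with the adaptive stopping rule to obtain the ratio guarantee.
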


\subsection{Polynomial Time Algorithm}
\label{sec:basic-soda-alg}
We now record a (refined) guarantee of the algorithm from \cite{DKKLMS18-soda}, which will be useful later on. 
This algorithm uses the multi-directional filter from \Cref{lem:filtering} with $\vec U$ set to be the top-$k$ eigenvectors of the covariance matrix until the top-$k$ eigenvalues are $1 + O(\eps)$ for {$k = \Theta(\log(1/\eps))$}.
\Cref{LemCertiAlt} then guarantees  that the empirical mean in the subspace of the $(d-k)$ last eigenvalues is $O(\eps)$-close to $\mu$. Finally,  the algorithm employs a brute force procedure that uses median and (\Cref{cond:tails}) to estimate the mean in the remaining $k$-dimensional subspace.

\begin{restatable}[Adapted From \cite{DKKLMS18-soda}]{lemma}{SODALEMMA}
\label{lem:basic-soda-alg}
        Let $T$ be {a} set of $n$ i.i.d.\ samples from an $\epsilon$-corrupted version of  $\cN(0,\vec I_d)$ and let $r,\delta\in(0,1)$ be parameters.
        If $n {\gg} ((d + \log(1/\delta))/\eps^2)\polylog(d/\eps)$, 
        then there is an algorithm that, when having as input $T$, $\eps$, and $r$, after time $\tilde{O}(n d^2  + \frac{1}{\eps^{2 + r}})$, it outputs a vector $\widehat{\mu} \in \R^d$ such that $\|\widehat{\mu} - \mu\|_2 \lesssim \eps/r$ with probability $1 - \delta$.
\end{restatable}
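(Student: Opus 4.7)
The plan is to adapt the polynomial-time pipeline of \cite{DKKLMS18-soda} under our refined sample complexity, parameterizing the brute-force step by $r$. With probability at least $1-\delta$, the sample-complexity analysis of \Cref{appendix:prelims} shows that the $n$ inlier samples form an $(\eps,\alpha,k)$-good distribution (\Cref{DefModGoodnessCond}) with $\alpha = \eps/\log(1/\eps)$ and $k=\Theta(\log(1/\eps))$ --- this single bound replaces the $\poly(d/\eps)$ sample count used in \cite{DKKLMS18-soda}. A warm-start via \cite{diakonikolas2022streaming} then places us in \Cref{setting:main}, providing initial weights $w$ with $\E_G[w]\geq 1-\alpha$ and $\|\vec \Sigma_{P_w}-\vec I\|_\op$ already moderate.

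The main loop iteratively applies the multi-directional filter of \Cref{lem:filtering_combined}. In each iteration: (a) compute the top-$k$ eigenspace of the weighted centered second moment $\overline{\vec \Sigma}_{P_w}$ in time $O(nd^2)$; (b) if the $k$-th largest eigenvalue is already at most $1+O(\eps)$, exit; otherwise (c) let $\vec U\in\R^{k\times d}$ be an orthonormal basis of this top-$k$ eigenspace and invoke \Cref{lem:filtering_combined} with this $\vec U$ to obtain refined weights $w'$. The nearly-orthogonal hypothesis $\|\vec U^\top\vec U\|_\op = 1\leq 2k/\log(k/\eps)$ of \Cref{cond:polynomials} is satisfied by the choice $k=\Theta(\log(1/\eps))$. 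Part~(i) of \Cref{lem:filtering_combined} propagates the invariant $\E_G[w]\geq 1-\alpha$ through every iteration; part~(ii), combined with \Cref{cond:polynomials}, implies that the potential $\sum_{i=1}^{k}(\lambda_i(\overline{\vec \Sigma}_{P_w})-1)_+$ strictly decreases by a definite amount per iteration, bounding the iteration count by $\polylog(d/\eps)$. The filtering phase therefore costs $\tilde O(nd^2)$.

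Upon termination, let $\vec V$ be the orthogonal complement of the distinguished top-$k$ eigenspace. Because the $k$-th eigenvalue of $\overline{\vec \Sigma}_{P_w}$ is at most $1+O(\eps)$, eigenvalue monotonicity gives $\|\vec \Pi_{\vec V}\vec \Sigma_{P_w}\vec \Pi_{\vec V}\|_\op\leq 1+O(\eps)$, so the Certificate Lemma \ref{LemCertiAlt} applied inside $\vec V$ yields $\|\proj_{\vec V}(\mu_{P_w}-\mu)\|_2\lesssim \eps$. For the complementary $k$-dimensional component, we run the brute-force median routine of \cite{DKKLMS18-soda}: enumerate an appropriately sized net of the relevant ball in the top-$k$ subspace and certify each candidate via the one-dimensional medians guaranteed by \Cref{cond:tails}. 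Tuning the net size to $(1/\eps)^{O(r)}$ produces a $k$-dimensional estimate accurate to $O(\eps/r)$ at total brute-force cost $\tilde O(1/\eps^{2+r})$. Concatenating the two partial estimates yields $\widehat\mu$ satisfying $\|\widehat\mu-\mu\|_2\lesssim \eps/r$, with total runtime $\tilde O(nd^2+1/\eps^{2+r})$.

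The main obstacle is the brute-force step and the $r$-dependent trade-off: one must argue that the median-based certification of \Cref{cond:tails} survives the inlier reweighting effected by the filter and still delivers $O(\eps/r)$ error from a grid of only $(1/\eps)^{O(r)}$ candidates. This reduces to verifying that \Cref{cond:tails} is stable under the $O(\alpha)$ inlier loss allowed by \Cref{lem:filtering_combined}(i), and then tracking constants carefully through the one-dimensional candidate tournament so that the final error scales as $\eps/r$ rather than, say, $\eps\sqrt{k}$ or $\eps\log(1/\eps)$; this is exactly where the parameter $r$ enters both the net resolution and the runtime exponent.
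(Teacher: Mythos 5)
Your overall architecture matches the paper's: filter along the top-$k$ eigenspace until the top-$k$ eigenvalues are small, apply the certificate lemma in the orthogonal complement, and brute-force the remaining $k$-dimensional component via medians and a net. However, there are two genuine gaps. The first is in how the parameter $r$ enters. You fix $k=\Theta(\log(1/\eps))$ and propose to reach brute-force cost $(1/\eps)^{O(r)}$ by ``tuning the net resolution.'' This cannot work: any cover of the $k$-dimensional unit sphere at resolution $\eta$ bounded away from the trivial scale has cardinality $(1+2/\eta)^{k}=(1/\eps)^{\Theta(1)}$, and coarsening $\eta$ toward the trivial scale blows up the certification error as $1/(1-\eta)$ (cf.\ the bound $\snorm{2}{x-\mu}\le 4\gamma/(1-\eta)$ in \Cref{prop:LP}). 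The exponent of $1/\eps$ is therefore dictated by $k$ itself, not by the resolution. The paper instead sets $k=r\log(1/\eps)$ and relaxes the eigenvalue stopping threshold to $1+C\eps/r$; this is simultaneously where the $(1/\eps)^{O(r)}$ brute-force cost, the $O(\eps/\sqrt{r})$ certificate error in $\cV^\perp$, and hence the final $O(\eps/r)$ guarantee all come from. With your parameters the output error would be $O(\eps)$ but the additive runtime term would be $1/\eps^{2+\Theta(1)}$ with an exponent not controlled by $r$, so the stated runtime is not achieved for small $r$. (Relatedly, the paper runs the brute force on a fresh sample set of size $O(k/\eps^2)$ projected to $\cV$; reusing the full $n=\tilde{O}(d/\eps^2)$ points makes the $n\log n\cdot 2^{O(k)}$ term exceed the claimed bound in some regimes.)

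The second gap is the iteration count. You assert that the potential $\sum_{i\le k}(\lambda_i-1)_+$ decreases ``by a definite amount per iteration,'' yielding $\polylog(d/\eps)$ rounds. The filter only controls the outlier contribution along the \emph{current} deterministic top-$k$ eigenvectors; nothing prevents the adversary from hiding mass in the orthogonal complement so that new large eigenvalues surface in later rounds. Indeed, the impossibility of a $\polylog$ round bound for the deterministic filter is precisely the paper's stated motivation for the randomized directions in \Cref{alg:robust_mean_algo}. The paper's actual argument is different and cruder: after a norm-pruning preprocessing step that caps every score at $\tilde{O}(d)$, each filter invocation removes $\tilde{\Omega}(\eps/d)$ total mass, so there are at most $\tilde{O}(d)$ iterations; combined with approximate top-$k$ eigenvector computation this gives the $\tilde{O}(nd^2)$ term. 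You would need to either supply this mass-removal argument (including the score upper bound it relies on) or substantiate the potential decrease, which I do not believe holds.
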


Since the refined runtime above does not appear in \cite{DKKLMS18-soda}, we provide a proof in \Cref{appendix:prelims}. In particular, it is important that the runtime is $n d^2  + \frac{1}{\eps^{2 + r}}$. Even though this is super-linear in the size of the input, $n d$, our main algorithm in the next section will apply \Cref{lem:basic-soda-alg} only after projecting the entire dataset to a subspace of a much smaller dimension ($\polylog(d/\eps)$ in place of $d$ before).

\subsection{Linear Algebraic Facts}
\label{subsec:linear-alg-facts}
\begin{fact}
  \label{fact:psd-power}
  Let $\vec A$ be a PSD matrix.
  Then for any unit vector $x$ and $m \geq 1$, $x^\top \vec A^m x \geq (x^\top
    \vec A x)^m$.
\end{fact}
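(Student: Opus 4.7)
The plan is to reduce the claim to Jensen's inequality via the spectral decomposition of $\vec A$. Since $\vec A$ is PSD, write $\vec A = \sum_{i=1}^d \lambda_i v_i v_i^\top$ with $\lambda_i \geq 0$ and $\{v_i\}$ an orthonormal eigenbasis. Then $\vec A^m = \sum_i \lambda_i^m v_i v_i^\top$ by the standard functional calculus for symmetric matrices.

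For a unit vector $x$, set $c_i := (v_i^\top x)^2 \geq 0$. Orthonormality of $\{v_i\}$ together with $\|x\|_2 = 1$ gives $\sum_i c_i = 1$, so $(c_i)_{i=1}^d$ is a probability distribution on the eigenvalues. In this notation one has
\[
x^\top \vec A x = \sum_i \lambda_i c_i, \qquad x^\top \vec A^m x = \sum_i \lambda_i^m c_i.
\]
Since $\lambda_i \geq 0$ and the map $t \mapsto t^m$ is convex on $[0,\infty)$ for $m \geq 1$, Jensen's inequality applied to the distribution $(c_i)$ yields
\[
\sum_i \lambda_i^m c_i \;\geq\; \Bigl(\sum_i \lambda_i c_i\Bigr)^m,
\]
which is exactly $x^\top \vec A^m x \geq (x^\top \vec A x)^m$. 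There is essentially no technical obstacle here; the only points worth being careful about are that $\vec A$ being PSD ensures $\lambda_i \geq 0$ so the convex function $t^m$ is applied on its natural domain, and that $x$ is a unit vector so $(c_i)$ is genuinely a probability distribution (otherwise Jensen's inequality would have to be rescaled). If one prefers to avoid invoking Jensen's inequality by name, the same conclusion follows from the power-mean inequality applied to $(\lambda_i)$ with weights $(c_i)$.
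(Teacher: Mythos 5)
Your proof is correct and complete: the spectral decomposition turns the claim into Jensen's inequality for the convex map $t \mapsto t^m$ on $[0,\infty)$ with respect to the probability weights $c_i = (v_i^\top x)^2$, and you correctly note where PSD-ness and $\|x\|_2 = 1$ are each used. The paper states this as a fact without proof, and your argument is the standard one; it also covers the non-integer $m \geq 1$ case via functional calculus, which is more than the paper needs (it only invokes the fact with $m = 2p$ an integer).
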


\begin{fact}
  \label{fact:normReln}
  If $\vec A \in \R^{d\times d}$ is symmetric and $p \geq 1$, the Schatten norms
  of $\vec A$ satisfy the following: 
  $\|\vec A\|_{p+1} \leq \|\vec A\|_p \leq \|\vec A\|_{p+1}d^{\frac{1}{p(p+1)}}.
  $
\end{fact}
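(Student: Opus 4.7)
The plan is to reduce both inequalities to the analogous statements about the $\ell^p$-norms of the singular value sequence, which are elementary consequences of Jensen's / H\"older's inequality. Since $\vec A$ is symmetric, it admits a spectral decomposition $\vec A = \sum_{i=1}^d \lambda_i v_i v_i^\top$ with real eigenvalues $\lambda_i$, so the Schatten $p$-norm is simply $\|\vec A\|_p = \bigl(\sum_i |\lambda_i|^p\bigr)^{1/p}$. Writing $\sigma_i := |\lambda_i|$ and viewing $\sigma = (\sigma_1,\dots,\sigma_d) \in \R^d_{\geq 0}$, the claim reduces to showing
\[
\|\sigma\|_{p+1} \;\leq\; \|\sigma\|_p \;\leq\; d^{\frac{1}{p(p+1)}}\,\|\sigma\|_{p+1}.
\]

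For the first (monotonicity) inequality, I would normalize by assuming $\|\sigma\|_p = 1$, so each $\sigma_i \leq 1$, hence $\sigma_i^{p+1} \leq \sigma_i^p$. Summing and taking the $(p+1)$-st root yields $\|\sigma\|_{p+1} \leq 1 = \|\sigma\|_p$. The general case follows by homogeneity. (Equivalently, one can appeal to the fact that $p \mapsto \|\sigma\|_p$ is nonincreasing via Jensen's inequality applied to the concave function $t \mapsto t^{p/(p+1)}$.)

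For the second inequality, I would apply H\"older with conjugate exponents $\tfrac{p+1}{p}$ and $p+1$ to the trivial factorization $\sigma_i^p = \sigma_i^p \cdot 1$:
\[
\sum_{i=1}^d \sigma_i^p \;\leq\; \Bigl(\sum_{i=1}^d \sigma_i^{p+1}\Bigr)^{\!p/(p+1)} \Bigl(\sum_{i=1}^d 1\Bigr)^{\!1/(p+1)} \;=\; \|\sigma\|_{p+1}^{\,p} \cdot d^{1/(p+1)}.
\]
Raising both sides to the power $1/p$ produces exactly $\|\sigma\|_p \leq d^{1/(p(p+1))} \|\sigma\|_{p+1}$, as required.

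There is no real obstacle here: the content is precisely the monotonicity and the interpolation inequality for finite-dimensional $\ell^p$-norms, specialized to the vector of eigenvalue magnitudes. The only mild care is to record that the factor in the upper bound is exactly $d^{1/p-1/(p+1)} = d^{1/(p(p+1))}$, which is what falls out of the H\"older computation above.
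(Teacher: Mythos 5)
Your proof is correct. The paper states this as a standard fact without proof; your argument — diagonalize, reduce to the $\ell^p$-norm monotonicity and H\"older interpolation for the vector of eigenvalue magnitudes, and check that the exponent is $\frac{1}{p}-\frac{1}{p+1}=\frac{1}{p(p+1)}$ — is exactly the standard derivation one would supply.
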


\noindent In the following we let $\mathbb{C}$ denote the set of complex numbers, though throughout the paper we will only work with real matrices.

\begin{definition}[Gershgorin Discs]
    For any complex $n \times n$ matrix $\vec A$, for $i \in [n]$, let $R'_i(\vec A) = \sum_{j \neq i}|a_{ij}|$ and let $G(\bA) = \bigcup_{i=1}^n \{z \in \mathbb{C}:|z-a_{ii}|\leq R_i'(\bA) \}$. Each disc $\{z \in \mathbb{C}:|z-a_{ii}|\leq R_i'(\bA) \}$ is called Gershgorin disc and their union $G(\bA)$ is called the Gershgorin domain.
\end{definition}

\begin{fact}[Gershgorin's Disc Theorem]\label{fact:Gershgorin}
For any complex $n\times n$ matrix $\bA$, all the eigenvalues of $\bA$ belong to the Gershgorin domain $G(\bA)$.
\end{fact}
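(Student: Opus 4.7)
The plan is to exhibit, for any eigenvalue $\lambda$ of $\bA$, a specific Gershgorin disc that contains it. The strategy is the classical one: take an associated eigenvector, look at the coordinate of maximum magnitude, and expand the corresponding row of the eigenvalue equation. The triangle inequality then forces $\lambda$ into the disc indexed by that coordinate.

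Concretely, let $\lambda \in \mathbb{C}$ be any eigenvalue of $\bA$ and pick a nonzero eigenvector $x \in \mathbb{C}^n$ with $\bA x = \lambda x$. Let $i \in [n]$ be an index achieving $|x_i| = \max_{j} |x_j|$; since $x \neq 0$ we have $|x_i| > 0$. Writing out the $i$-th coordinate of $\bA x = \lambda x$ gives $\sum_{j} a_{ij} x_j = \lambda x_i$, which I rearrange as
\[
(\lambda - a_{ii}) x_i \;=\; \sum_{j \neq i} a_{ij} x_j.
\]
Taking absolute values, applying the triangle inequality, and using $|x_j| \leq |x_i|$ for every $j$, I get
\[
|\lambda - a_{ii}|\, |x_i| \;\leq\; \sum_{j \neq i} |a_{ij}|\, |x_j| \;\leq\; |x_i| \sum_{j \neq i} |a_{ij}| \;=\; |x_i|\, R'_i(\bA).
\]
Dividing through by $|x_i| > 0$ yields $|\lambda - a_{ii}| \leq R'_i(\bA)$, so $\lambda$ lies in the $i$-th Gershgorin disc, and in particular in $G(\bA)$.

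Since $\lambda$ was an arbitrary eigenvalue, every eigenvalue lies in $G(\bA)$. There is no real obstacle here beyond making sure the coordinate $i$ is chosen to maximize $|x_j|$ (otherwise the inequality $|x_j| \leq |x_i|$ that lets us factor $|x_i|$ out on the right-hand side would fail), and noting that $|x_i|$ is strictly positive so that we may divide by it. The argument works verbatim over $\mathbb{C}$, which is why the theorem is stated for complex matrices even though the paper only applies it to real ones.
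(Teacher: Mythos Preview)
Your proof is correct and is precisely the standard textbook argument for Gershgorin's theorem. The paper does not actually prove this statement; it is recorded as a known fact and simply invoked where needed (in \Cref{lem:check_conditions}), so there is no alternative approach to compare against.
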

In particular, we will need the following consequence of the fact above:
\begin{lemma}\label{lem:check_conditions}
    Let $u_i$ for $i \in [k]$ be vectors in $\R^d$ and define $\vec U = [u_1, \ldots, u_k]$. If for every $i,j \in [k]$ with $i \neq j$, it holds $|\langle u_i, u_j \rangle| \leq \|u_i\|_2 \|u_j\|_2/k^2$, and $\max_i \|u_i\|_2^2 \leq \frac{\log k}{k} \tr(\vec U^\top \vec U)$ then
        $ \| \vec U^\top \vec U \|_\op \leq 2 \tr(\vec U^\top \vec U)/(k/\log k)$.
\end{lemma}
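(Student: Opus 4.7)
The plan is to apply Gershgorin's disc theorem (\Cref{fact:Gershgorin}) to the $k \times k$ Gram matrix $\vec U^\top \vec U$, whose $(i,j)$ entry is $\langle u_i, u_j \rangle$. Every eigenvalue $\lambda$ of $\vec U^\top \vec U$ lies in some Gershgorin disc, so there exists an index $i \in [k]$ with
\[
\lambda \;\leq\; (\vec U^\top \vec U)_{ii} + \sum_{j \neq i} |(\vec U^\top \vec U)_{ij}| \;=\; \|u_i\|_2^2 + \sum_{j \neq i} |\langle u_i, u_j \rangle|.
\]
Since $\|\vec U^\top \vec U\|_\op$ is the largest eigenvalue of this PSD matrix, it suffices to upper bound the right-hand side uniformly in $i$.

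Next, I would control the off-diagonal row sum using the near-orthogonality hypothesis $|\langle u_i, u_j \rangle| \leq \|u_i\|_2 \|u_j\|_2/k^2$. Together with the trivial bound $\|u_i\|_2 \|u_j\|_2 \leq \max_\ell \|u_\ell\|_2^2$, this gives
\[
\sum_{j \neq i} |\langle u_i, u_j \rangle| \;\leq\; \frac{k-1}{k^2}\, \max_\ell \|u_\ell\|_2^2 \;\leq\; \frac{1}{k}\, \max_\ell \|u_\ell\|_2^2.
\]
Combining with the diagonal bound $\|u_i\|_2^2 \leq \max_\ell \|u_\ell\|_2^2$ yields $\lambda \leq (1 + 1/k) \max_\ell \|u_\ell\|_2^2 \leq 2 \max_\ell \|u_\ell\|_2^2$.

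Finally, I would invoke the second hypothesis, namely $\max_\ell \|u_\ell\|_2^2 \leq (\log k / k) \, \tr(\vec U^\top \vec U)$, to conclude
\[
\|\vec U^\top \vec U\|_\op \;\leq\; 2 \max_\ell \|u_\ell\|_2^2 \;\leq\; \frac{2 \log k}{k}\, \tr(\vec U^\top \vec U) \;=\; \frac{2 \tr(\vec U^\top \vec U)}{k/\log k},
\]
which is the claimed bound. There is no real obstacle here: the lemma is essentially a bookkeeping exercise showing that bounded pairwise correlations plus a cap on the largest vector norm force the Gram matrix to be close to a multiple of the identity, and Gershgorin turns this into an operator-norm statement directly. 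The only mild care needed is to keep straight that the two hypotheses together buy a factor of $\log k$ rather than $k$, which is exactly what appears on the right-hand side.
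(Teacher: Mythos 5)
Your proof is correct and follows essentially the same route as the paper's: apply Gershgorin's disc theorem to the $k\times k$ Gram matrix $\vec U^\top \vec U$, bound the off-diagonal row sum by $\tfrac{k-1}{k^2}\max_\ell\|u_\ell\|_2^2$ using the near-orthogonality hypothesis, and finish with the cap $\max_\ell\|u_\ell\|_2^2 \le \tfrac{\log k}{k}\tr(\vec U^\top\vec U)$. If anything, your write-up is slightly cleaner than the paper's displayed chain of inequalities.
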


\begin{proof}

    We note that by the Gershgorin circle theorem (\Cref{fact:Gershgorin}),
    \begin{align*}
        \| \vec U^\top \vec U\|_\op &\leq \max_{i \in [k]} \|u_i\|_2^2 + \sum_{i \neq j} |\langle u_i, u_j \rangle| \\
        &\leq  \max_{i \in [k]} \|u_i\|_2^2 +k^2 \frac{\|u_i\|_2 \|u_j\|_2}{k^2} \\
        &\leq 2 \max_{i \in [k]} \|u_i\|_2^2
        \leq 2\frac{\tr(\vec U^\top \vec U)}{k/\log(k)} \;.
    \end{align*}
    
\end{proof}

\begin{restatable}{fact}{TRACEINEQ}
  \label{fact:trace_PSD_Ineq}
  If $\bA,\bB,\bC$ are symmetric $d \times d$ matrices with $\bA \succeq 0$ and
  $\bB \preceq \bC$ then $\tr(\bA \bB) \leq \tr(\bA \bC)$.
\end{restatable}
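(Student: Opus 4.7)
The plan is to reduce the claim to the well-known fact that the trace of a product of two positive semidefinite matrices is non-negative. Set $\bD := \bC - \bB$. Since $\bB \preceq \bC$ by hypothesis, $\bD$ is symmetric PSD. The statement $\tr(\bA\bB) \le \tr(\bA\bC)$ is equivalent, by linearity of trace, to $\tr(\bA\bD) \ge 0$, so it suffices to prove that $\tr(\bA \bD) \ge 0$ whenever $\bA, \bD$ are both symmetric PSD.

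To establish this, I would take the symmetric PSD square root $\bA^{1/2}$ of $\bA$ (which exists since $\bA \succeq 0$) and use the cyclic property of trace to write
\[
\tr(\bA \bD) \;=\; \tr(\bA^{1/2}\bA^{1/2}\bD) \;=\; \tr(\bA^{1/2}\bD\bA^{1/2}).
\]
The matrix $\bM := \bA^{1/2}\bD\bA^{1/2}$ is symmetric, and for every $x \in \R^d$ we have $x^\top \bM x = (\bA^{1/2} x)^\top \bD (\bA^{1/2} x) \ge 0$ because $\bD \succeq 0$. Hence $\bM \succeq 0$, so all its eigenvalues are non-negative and $\tr(\bM) \ge 0$, completing the proof.

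There is really no obstacle here; the only small care needed is to justify that the symmetric PSD square root $\bA^{1/2}$ exists (via the spectral decomposition of $\bA$) and that cyclicity of the trace applies to the symmetric square root manipulation above. An alternative, entirely equivalent route would be to spectrally decompose $\bD = \sum_i \lambda_i v_i v_i^\top$ with $\lambda_i \ge 0$, so that $\tr(\bA\bD) = \sum_i \lambda_i\, v_i^\top \bA v_i \ge 0$ since each $v_i^\top \bA v_i \ge 0$ by $\bA \succeq 0$. Either route yields the conclusion in a couple of lines.
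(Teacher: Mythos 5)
Your proof is correct, and both routes you sketch (the symmetric square root $\bA^{1/2}$ together with cyclicity of the trace, or the spectral decomposition of $\bD = \bC - \bB$) are the standard arguments; the paper states this as a fact without supplying a proof, so there is nothing to contrast against. No gaps.
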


\noindent The following two linear algebra results follow from the Lieb-Thirring inequality. 
\begin{fact}[Lemma 7 in \cite{JamLT20}]
  \label{fact:trace_PSD_ineq2}
  Let $\vec A,\vec B$ be positive semidefinite matrices with $\vec B \preceq \vec A$ and $p \in
    \N$.
  Then,
  $ \tr\left(\vec B^p \right) \leq \tr\left( \vec A^{p-1} \vec B \right)$.
\end{fact}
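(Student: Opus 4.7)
The plan is to derive this trace inequality from the Araki-Lieb-Thirring inequality together with the elementary fact that $\vec C^p \preceq \vec C$ whenever $\vec 0 \preceq \vec C \preceq \vec I$ and $p \geq 1$. First I reduce to the case $\vec A \succ 0$ by replacing $\vec A$ with $\vec A + \delta \vec I$ and taking $\delta \to 0^+$; both sides of the inequality are continuous in the entries of $\vec A$, so no generality is lost. Given invertibility, I define $\vec C := \vec A^{-1/2} \vec B \vec A^{-1/2}$, so that $\vec B = \vec A^{1/2} \vec C \vec A^{1/2}$, and observe that the hypothesis $\vec 0 \preceq \vec B \preceq \vec A$ translates into $\vec 0 \preceq \vec C \preceq \vec I$.

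Next I invoke the Araki-Lieb-Thirring inequality, which for PSD matrices $\vec X, \vec Y$ and $r \geq 1$ states that $\tr\bigl((\vec X^{1/2} \vec Y \vec X^{1/2})^r\bigr) \leq \tr(\vec X^r \vec Y^r)$. Applied with $\vec X = \vec A$, $\vec Y = \vec C$, and $r = p$, this gives
$$\tr(\vec B^p) \;=\; \tr\bigl((\vec A^{1/2} \vec C \vec A^{1/2})^p\bigr) \;\leq\; \tr(\vec A^p \vec C^p).$$
To finish, since $\vec 0 \preceq \vec C \preceq \vec I$ and $p \geq 1$, the spectral theorem yields $\vec C^p \preceq \vec C$ (each eigenvalue $c_i \in [0,1]$ satisfies $c_i^p \leq c_i$). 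Combining this with \Cref{fact:trace_PSD_Ineq} applied with the PSD matrix $\vec A^p$ in the role of the outer factor gives $\tr(\vec A^p \vec C^p) \leq \tr(\vec A^p \vec C)$, and by cyclicity of the trace, $\tr(\vec A^p \vec C) = \tr\bigl(\vec A^{p-1} \cdot \vec A^{1/2} \vec C \vec A^{1/2}\bigr) = \tr(\vec A^{p-1} \vec B)$. Chaining these bounds yields $\tr(\vec B^p) \leq \tr(\vec A^{p-1} \vec B)$, as required.

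The main obstacle is that Araki-Lieb-Thirring is itself a nontrivial trace inequality not developed in this paper; my plan is to cite it as a black box, matching the approach implicit in the referenced \cite{JamLT20}. Avoiding it entirely appears difficult: a naive induction on $p$ works for $p \in \{1, 2\}$ (the case $p = 2$ is immediate from \Cref{fact:trace_PSD_Ineq}), but breaks already at $p = 3$ because $\vec B \preceq \vec A$ does \emph{not} imply $\vec B^{p-1} \preceq \vec A^{p-1}$ when $p-1 > 1$. It is precisely this non-monotonicity of matrix powers that Araki-Lieb-Thirring circumvents, by routing the comparison through the symmetrized factor $\vec A^{1/2} \vec C \vec A^{1/2}$ rather than trying to compare $\vec B^{p-1}$ and $\vec A^{p-1}$ directly.
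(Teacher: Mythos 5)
Your proof is correct. The paper itself offers no proof of this statement — it is imported as a black box via the citation to Lemma~7 of \cite{JamLT20} — so there is no internal argument to compare against; what you have supplied is a valid derivation that the paper omits. Each step checks out: the perturbation $\vec A \mapsto \vec A + \delta \vec I$ legitimately reduces to the invertible case by continuity of both sides; the hypothesis $\vec 0 \preceq \vec B \preceq \vec A$ does translate to $\vec 0 \preceq \vec C \preceq \vec I$ for $\vec C = \vec A^{-1/2}\vec B\vec A^{-1/2}$; the eigenvalue argument gives $\vec C^p \preceq \vec C$; and the final application of \Cref{fact:trace_PSD_Ineq} together with cyclicity of the trace closes the chain. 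One remark that strengthens your write-up: the "nontrivial external ingredient" you worry about is in fact already available inside the paper. By cyclicity, $\tr\bigl((\vec A^{1/2}\vec C\vec A^{1/2})^p\bigr) = \tr\bigl((\vec A\vec C)^p\bigr)$, and the bound $\tr\bigl((\vec A\vec C)^p\bigr) \leq \tr(\vec A^p\vec C^p)$ is exactly the Lieb--Thirring inequality as stated in \Cref{fact:LT} (both $\vec A$ and $\vec C$ are PSD, as required there). So your argument needs no citation beyond the facts the paper already records, which arguably makes it preferable to the paper's bare reference to \cite{JamLT20}. Your closing observation about why naive induction fails — operator monotonicity breaks for powers above one — is also accurate and correctly identifies why the symmetrized route is necessary.
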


\begin{fact}[Lieb-Thirring Inequality; see, e.g., Problem IX.8.1 in \cite{bhatia2013matrix}] 
\label{fact:LT}
    Let $\vec A, \vec B$ be positive semidefinite matrices. For all $k \in \N$, $\tr((\vec A \vec B)^k) \leq \tr (\vec A^k \vec B^k) $. 
\end{fact}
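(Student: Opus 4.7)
The plan is to reduce the stated inequality to the \emph{Araki--Lieb--Thirring} (ALT) inequality, which is the natural continuous extension, and then appeal to the textbook reference for ALT itself.

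First, I would symmetrize the left-hand side using cyclicity of the trace. Since $\vec A\succeq 0$ has a square root, the expansion $(\vec A^{1/2}\vec B\vec A^{1/2})^k = \vec A^{1/2}(\vec B\vec A)^{k-1}\vec B\vec A^{1/2}$ together with cycling the outer $\vec A^{1/2}$ factor under the trace yields
\[
\tr\bigl((\vec A\vec B)^k\bigr) \;=\; \tr\bigl((\vec A^{1/2}\vec B\vec A^{1/2})^k\bigr).
\]
Writing $\vec M := \vec A^{1/2}\vec B\vec A^{1/2} \succeq 0$, the task reduces to showing $\tr(\vec M^k) \leq \tr(\vec A^k\vec B^k)$, which is a cleaner statement because both sides are now traces of PSD matrices.

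Next I would invoke the ALT inequality: for PSD matrices $\vec X,\vec Y$ and any real $q\geq 1$,
\[
\tr\bigl((\vec X^{1/2}\vec Y\vec X^{1/2})^q\bigr) \;\leq\; \tr\bigl(\vec X^{q/2}\vec Y^q\vec X^{q/2}\bigr) \;=\; \tr(\vec X^q\vec Y^q).
\]
Setting $\vec X=\vec A$, $\vec Y=\vec B$, and $q=k$ gives $\tr(\vec M^k)\leq \tr(\vec A^k\vec B^k)$, as desired.

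The main obstacle is proving ALT itself. For $k=2$ the proof is elementary: expanding $0\leq \tr\bigl([\vec A,\vec B]^{*}[\vec A,\vec B]\bigr)$ gives $2\tr(\vec A^2\vec B^2) - 2\tr(\vec A\vec B\vec A\vec B)\geq 0$, which is precisely the $k=2$ case. For general $k$, the cleanest route is through log-majorization of singular values, namely Araki's result that the eigenvalues of $\vec M$ are log-majorized by those of $\vec A\vec B$; applying the convex, increasing map $t\mapsto t^k$ to the corresponding weak majorization then yields the trace inequality. Since ALT is already stated with a textbook reference (Problem IX.8.1 of Bhatia) in the paper, I would invoke it directly rather than reproduce its proof --- so the effective content of the argument is the one-line symmetrization step above followed by a citation.
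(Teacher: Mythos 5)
The paper states this as a known fact and gives no proof beyond the citation to Bhatia, so your proposal — the standard symmetrization $\tr((\vec A\vec B)^k)=\tr((\vec A^{1/2}\vec B\vec A^{1/2})^k)$ via cyclicity, followed by an appeal to the Araki--Lieb--Thirring inequality from that same reference — is correct and effectively the same resolution. (One minor imprecision in your sketch of ALT: $\vec A^{1/2}\vec B\vec A^{1/2}$ and $\vec A\vec B$ have identical eigenvalues, so the log-majorization you need is $\lambda\bigl((\vec A^{1/2}\vec B\vec A^{1/2})^{q}\bigr)\prec_{\log}\lambda\bigl(\vec A^{q/2}\vec B^{q}\vec A^{q/2}\bigr)$ rather than a comparison with the eigenvalues of $\vec A\vec B$ themselves, but this is immaterial since you defer to the textbook for ALT.)
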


\begin{fact}
\label{fact:frob-op-trace}
 Let $\vec A $ be a PSD matrix.
 Then, $\|\vec A\|_\fr \leq \sqrt{\|\vec A\|_\op \|\vec A\|_1}$.
 In particular, if $\|\vec A\|_\op \leq \|\vec A\|_1/t$, then $\|\vec A\|_\fr^2 \leq \|\vec A\|_1^2/t$.
\end{fact}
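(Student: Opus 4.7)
The plan is to diagonalize and reduce the inequality to a one-line bound on eigenvalues. Since $\vec A$ is PSD, write its spectral decomposition with eigenvalues $\lambda_1 \geq \lambda_2 \geq \cdots \geq \lambda_d \geq 0$. Then by the unitary invariance of Schatten norms, $\|\vec A\|_\op = \lambda_1$, $\|\vec A\|_1 = \sum_i \lambda_i$, and $\|\vec A\|_\fr^2 = \sum_i \lambda_i^2$. So the claim reduces to
\[
\sum_{i=1}^d \lambda_i^2 \;\leq\; \lambda_1 \sum_{i=1}^d \lambda_i,
\]
which follows immediately by factoring out one copy of $\lambda_i$ and using $\lambda_i \leq \lambda_1$ for every $i$ (valid because all $\lambda_i \geq 0$). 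Taking square roots gives $\|\vec A\|_\fr \leq \sqrt{\|\vec A\|_\op \|\vec A\|_1}$.

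For the ``in particular'' clause, I would simply substitute the hypothesis $\|\vec A\|_\op \leq \|\vec A\|_1 / t$ into the bound just established:
\[
\|\vec A\|_\fr^2 \;\leq\; \|\vec A\|_\op \cdot \|\vec A\|_1 \;\leq\; \frac{\|\vec A\|_1}{t}\cdot \|\vec A\|_1 \;=\; \frac{\|\vec A\|_1^2}{t}.
\]

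There is no real obstacle here; the only thing worth being careful about is that PSD-ness is genuinely used, both so that the $\lambda_i$ are nonnegative (so that $\lambda_i^2 \leq \lambda_1 \lambda_i$ without absolute values) and so that $\|\vec A\|_1 = \sum_i \lambda_i$ coincides with $\tr(\vec A)$. Alternatively, one could present this as a special case of Hölder's inequality for Schatten norms ($\|\vec A\|_2^2 = \tr(\vec A \cdot \vec A) \leq \|\vec A\|_\op \|\vec A\|_1$), but the eigenvalue proof is shorter and self-contained.
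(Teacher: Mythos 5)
Your proof is correct: the paper states this as a fact without proof, and the eigenvalue argument you give ($\sum_i \lambda_i^2 \leq \lambda_1 \sum_i \lambda_i$ after diagonalizing) is exactly the standard one-line justification the authors leave implicit. The substitution for the ``in particular'' clause is also fine.
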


\begin{fact}[{See, e.g., \cite[Proposition 11.10.34]{Bernstein18}}]
\label{fact:FrobeniusSubmult}
Let $\vec R$ and $\vec M$ be two square matrices, then
$\|\vec R \vec M\|_\fr \leq \min\left(\|\vec R\|_\fr \|\vec M\|_\op\,,\|\vec R\|_\op \|\vec M\|_\fr\right) $
and 
$\|\vec R \vec M\|_\op \leq \|\vec R\|_\op \|\vec M\|_\op$.
\end{fact}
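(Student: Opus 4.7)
The plan is to dispatch the three inequalities independently, each from first principles, since every one is a classical submultiplicativity statement for induced or Hilbert--Schmidt norms. I will first handle the operator norm bound (which is the easiest), then the Frobenius bound with $\|\vec M\|_\fr$ on the right, and finally use transposition to reduce the remaining Frobenius bound to the previous one.

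For $\|\vec R \vec M\|_\op \leq \|\vec R\|_\op \|\vec M\|_\op$, I would simply unfold the definition of the operator norm as a supremum over unit vectors:
\[
\|\vec R \vec M\|_\op = \sup_{\|x\|_2 = 1} \|\vec R (\vec M x)\|_2 \leq \|\vec R\|_\op \sup_{\|x\|_2 = 1} \|\vec M x\|_2 = \|\vec R\|_\op \|\vec M\|_\op,
\]
which is the textbook submultiplicativity of any induced norm.

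For $\|\vec R \vec M\|_\fr \leq \|\vec R\|_\op \|\vec M\|_\fr$, my plan is to decompose the Frobenius norm into a sum over columns. Write $\vec M = [m_1 \mid \cdots \mid m_d]$, so that $\vec R \vec M = [\vec R m_1 \mid \cdots \mid \vec R m_d]$. Then
\[
\|\vec R \vec M\|_\fr^2 = \sum_{j=1}^d \|\vec R m_j\|_2^2 \leq \sum_{j=1}^d \|\vec R\|_\op^2 \|m_j\|_2^2 = \|\vec R\|_\op^2 \|\vec M\|_\fr^2,
\]
and taking square roots gives the claim.

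Finally, for $\|\vec R \vec M\|_\fr \leq \|\vec R\|_\fr \|\vec M\|_\op$, I would invoke the fact that the Frobenius norm is invariant under transposition, write $\|\vec R \vec M\|_\fr = \|(\vec R \vec M)^\top\|_\fr = \|\vec M^\top \vec R^\top\|_\fr$, and apply the previous step with $(\vec M^\top, \vec R^\top)$ in place of $(\vec R, \vec M)$, using $\|\vec M^\top\|_\op = \|\vec M\|_\op$ and $\|\vec R^\top\|_\fr = \|\vec R\|_\fr$. Taking the minimum of the two Frobenius bounds yields the stated inequality. There is no substantive obstacle here: the entire proof is two lines of column-wise expansion plus a transpose trick, and the fact is standard (which is why the paper cites \cite{Bernstein18} rather than proving it in place).
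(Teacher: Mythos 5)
Your proof is correct and complete; the paper itself gives no proof of this fact (it only cites a textbook), and your column-wise expansion plus transpose argument is exactly the standard derivation one would find there.
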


\subsection{Concentration of Measure Facts}
\label{subsec:conc-meas-facts_main}
\begin{fact}[Power iteration]
  \label{fact:power-iter}
  For any positive semidefinite matrix $\vec A \in \R^{d \times d}$ and $\eta,\delta \in (0,1)$, if $p
    > \frac{C }{\eta}\log(d/(\eta \delta))$ for a sufficiently large constant $C$,
  and $u:= \vec A^p z$ for $z \sim \cN(0,\bI)$, then
  \begin{equation*}
    \pr \left[ u^\top \vec A u/\|u\|_2^2 \geq (1-\eta)\|\vec A\|_\op
      \right] \geq 1- \delta \;.
  \end{equation*}
\end{fact}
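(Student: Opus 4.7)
The plan is to diagonalize $\vec A$ and track how the Gaussian coefficients $\alpha_i := v_i^\top z$ are reshaped by repeated applications of $\vec A$. Write $\vec A = \sum_{i=1}^d \lambda_i v_i v_i^\top$ with $\lambda_1 \geq \cdots \geq \lambda_d \geq 0$, so that $\alpha_1,\ldots,\alpha_d$ are i.i.d.\ standard Gaussians and $u = \sum_i \alpha_i \lambda_i^p v_i$. Then the target ratio becomes a weighted average of eigenvalues:
\[
\frac{u^\top \vec A u}{\|u\|_2^2} \;=\; \frac{\sum_i \alpha_i^2 \lambda_i^{2p+1}}{\sum_i \alpha_i^2 \lambda_i^{2p}}\;.
\]
(We may assume $\lambda_1 > 0$, as otherwise the statement is trivial.)

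Next, I would split the indices into $S := \{i : \lambda_i \geq (1-\eta/2)\lambda_1\}$ and its complement $T$, and reduce the desired inequality $u^\top \vec A u \geq (1-\eta)\lambda_1 \|u\|_2^2$ to
\[
(\eta/2)\sum_{i \in S} \alpha_i^2 \lambda_i^{2p} \;\geq\; (1-\eta)\sum_{i \in T} \alpha_i^2 \lambda_i^{2p}\;,
\]
which follows by noting that every $i \in S$ contributes a surplus of at least $(\eta/2)\lambda_1 \alpha_i^2 \lambda_i^{2p}$ to $u^\top \vec A u - (1-\eta)\lambda_1 \|u\|_2^2$, while every $i \in T$ incurs a deficit of at most $(1-\eta)\lambda_1 \alpha_i^2 \lambda_i^{2p}$. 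The left-hand side is at least $(\eta/2)\alpha_1^2 \lambda_1^{2p}$ (since $1 \in S$), while the right-hand side is bounded above, using $\lambda_i \leq (1-\eta/2)\lambda_1$ for $i \in T$ and $1-\eta/2 \leq e^{-\eta/2}$, by
\[
(1-\eta)\,(1-\eta/2)^{2p}\, \lambda_1^{2p}\sum_{i \in T}\alpha_i^2 \;\leq\; e^{-\eta p}\,\lambda_1^{2p}\, \|z\|_2^2\;.
\]

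The last step is to handle the two random quantities $\alpha_1^2$ and $\|z\|_2^2$ by standard Gaussian estimates: anti-concentration of $\alpha_1 \sim \cN(0,1)$ gives $\alpha_1^2 \geq (\delta/10)^2$ with probability $\geq 1-\delta/2$, and Markov's inequality on $\|z\|_2^2$ (which has mean $d$) gives $\|z\|_2^2 \leq 2d/\delta$ with probability $\geq 1-\delta/2$. Taking a union bound, it suffices to ensure $(\eta/2)(\delta/10)^2 \geq e^{-\eta p}\cdot 2d/\delta$, i.e.\ $\eta p \gtrsim \log(d/(\eta\delta))$, which is satisfied by the hypothesized bound on $p$ for a sufficiently large absolute constant $C$. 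The only mildly delicate point is keeping the dependence on $\delta$ inside the logarithm rather than accumulating polynomial factors, which is precisely why the proof uses Gaussian anti-concentration on $\alpha_1$ (buying a $\mathrm{poly}(1/\delta)$ factor at a mere $\log(1/\delta)$ cost in $p$); there is no serious obstacle beyond this bookkeeping.
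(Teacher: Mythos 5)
Your proof is correct: the eigenbasis decomposition, the split into eigenvalues above and below $(1-\eta/2)\lambda_1$, the bound $(1-\eta/2)^{2p}\leq e^{-\eta p}$, and the Gaussian anti-concentration on $\alpha_1$ plus Markov on $\|z\|_2^2$ together give exactly the claimed $p \gtrsim \eta^{-1}\log(d/(\eta\delta))$ with the $\delta$-dependence inside the logarithm. The paper states this power-iteration bound as a standard fact without proof, and your argument is precisely the standard one, so there is nothing to compare against and no gap to report.
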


\begin{restatable}[Johnson-Lindenstrauss Sketch]{fact}{clJLCons} \label{cl:jl-cons} 
Fix a set of $n$ points  $x_1,\ldots, x_n \in \R^d$. 
Let ${g}(x):= \| \vec A (x-b)\|_2^2$ be a polynomial for some $\vec A \in \R^{d \times d}$ and $b \in \R^d$, and let 
$\vec U$ be the (random) matrix, whose  rows are the vectors $u_i = \vec A z_i/\sqrt{k}$ for $z_i \sim \cN(0, \vec I_d)$, $i=1,\ldots, k$. Define $\tilde{g}(x) := \| \vec U (x-b)\|_2^2$.
If $C$ is a sufficiently large constant and $k > C \log((n+d)/\delta)$, then with probability at least $1-\delta$,  the following holds:
\begin{enumerate}
        \item $0.8{g}(x_i) \leq \tilde{g}(x_i) \leq 1.2 {g}(x_i)$ for every $i \in [n]$,
        \item $0.8 \|\vec A\|_\fr^2 \leq \| \vec U \|_\fr^2 \leq 1.2 \|\vec A\|_\fr^2$.
    \end{enumerate}
\end{restatable}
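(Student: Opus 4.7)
The plan is to reduce both statements to standard Gaussian chi-squared concentration. Fix notation: let $M := \vec A^\top \vec A$, which is PSD with $\tr(M) = \|\vec A\|_\fr^2$, and for any point $x$ let $y(x) := \vec A^\top(x-b)$. Under the implicit assumption (satisfied in all applications of this fact in the paper) that $\vec A$ is symmetric, we have $\vec A y(x) = \vec A^\top \vec A (x-b)$ and $\|y(x)\|_2^2 = g(x)$, so the expected value computations below match $g(x)$ and $\|\vec A\|_\fr^2$ on the nose.

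For part~(1), fix $i \in [n]$. The coordinates of $\vec U(x_i - b)$ are $u_j^\top(x_i - b) = z_j^\top y(x_i)/\sqrt{k}$, and these are independent $\cN(0, \|y(x_i)\|_2^2/k)$ across $j \in [k]$. Hence $k\,\tilde g(x_i)/g(x_i) = \sum_{j=1}^k (z_j^\top y(x_i)/\|y(x_i)\|_2)^2$ is distributed as $\chi^2_k$. The Laurent--Massart tail bound gives $\Pr[\,|\tilde g(x_i)/g(x_i) - 1| \geq 0.2\,] \leq 2\exp(-c k)$ for an absolute constant $c>0$. A union bound over $i=1,\ldots,n$, together with the hypothesis $k > C \log((n+d)/\delta)$ for a sufficiently large $C$, makes this event hold simultaneously for all $i$ with probability at least $1 - \delta/2$.

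For part~(2), write
\[
\|\vec U\|_\fr^2 \;=\; \frac{1}{k}\sum_{i=1}^k \|\vec A z_i\|_2^2 \;=\; \frac{1}{k}\sum_{i=1}^k z_i^\top M z_i.
\]
Diagonalizing $M = \sum_\ell \lambda_\ell e_\ell e_\ell^\top$ and setting $Y_\ell := \sum_{i=1}^k (e_\ell^\top z_i)^2 \sim \chi^2_k$ (independent across $\ell$, since $\{e_\ell^\top z_i\}_{i,\ell}$ are i.i.d.\ $\cN(0,1)$), we obtain $k \|\vec U\|_\fr^2 = \sum_\ell \lambda_\ell Y_\ell$ with $\E[k\|\vec U\|_\fr^2] = k\tr(M) = k\|\vec A\|_\fr^2$. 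Applying the Laurent--Massart weighted chi-squared inequality with coefficient vector consisting of each $\lambda_\ell$ repeated $k$ times (so $\|a\|_2 = \sqrt{k}\,\|M\|_\fr$ and $\|a\|_\infty = \|M\|_\op$) yields
\[
\Pr\!\left[\,\big|\,k\|\vec U\|_\fr^2 - k\|\vec A\|_\fr^2\,\big| \;\geq\; 2\sqrt{k}\,\|M\|_\fr\sqrt{t} + 2\|M\|_\op\, t\,\right] \;\leq\; 2 e^{-t}.
\]
Since $M$ is PSD we have $\|M\|_\fr \leq \tr(M) = \|\vec A\|_\fr^2$ and $\|M\|_\op \leq \tr(M)$, so choosing $t = c' k$ for a small absolute constant $c'$ forces both error terms to be at most $0.1\, k\|\vec A\|_\fr^2$. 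This gives $0.8\|\vec A\|_\fr^2 \leq \|\vec U\|_\fr^2 \leq 1.2\|\vec A\|_\fr^2$ with failure probability at most $2 e^{-\Omega(k)} \leq \delta/2$ under the assumed lower bound on $k$.

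There is no serious obstacle: everything reduces to Laurent--Massart for (possibly weighted) chi-squared tails, and the $\sqrt{k}$ rescaling of the rows is precisely the normalization that makes the means match. The only mildly delicate point is verifying that $\E[\tilde g(x_i)] = g(x_i)$, which I would handle by noting that in the paper's usage $\vec A$ is symmetric (so $\vec A^\top\vec A = \vec A\vec A^\top$); a cleaner statement could equivalently take $g(x) = (x-b)^\top \vec A^\top \vec A(x-b)$ to sidestep the issue entirely. A final union bound over parts (1) and (2) gives the combined failure probability $\leq \delta$, as required.
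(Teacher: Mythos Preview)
The paper states this result as a \emph{Fact} (in the concentration-of-measure preliminaries) and does not supply its own proof, so there is nothing to compare against. Your argument is correct: both parts reduce to chi-squared concentration (Laurent--Massart / Hanson--Wright), and your identification of the symmetry caveat is apt---as stated, $\E[\tilde g(x)] = \|\vec A^\top(x-b)\|_2^2$ rather than $\|\vec A(x-b)\|_2^2$, and these coincide in the paper's applications because $\vec A = \vec M_t^\perp$ is symmetric.
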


\begin{fact}
  \label{fact:quadratics-approximation}
  For any $d \times d$ symmetric matrix $\vec A$, we have that $\Var_{z,z' \sim \cN(0,\bI_d)}[\langle \vec A z, \vec A z'  \rangle^2]= \|\vec A^2 \|_\fr^2$.
  If $\vec A$ is a PSD matrix, then for any $\beta > 0$, it holds 
  $\pr_{z \sim \cN(0,\bI)}[ z^\top \vec A z \geq \beta \tr(\vec A) ] \geq
    1-\sqrt{e \beta}$.%
\end{fact}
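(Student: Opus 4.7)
I would prove the first claim by a direct second-moment computation. Using symmetry of $\vec A$, write $\langle \vec A z, \vec A z'\rangle = z^\top \vec A^2 z'$. This is mean zero (since $z,z'$ are independent standard Gaussians), so its variance equals its second moment. Conditioning on $z'$, $z^\top(\vec A^2 z')$ is a univariate Gaussian of variance $\|\vec A^2 z'\|_2^2$, and taking the outer expectation gives
\[
\E\bigl[(z^\top \vec A^2 z')^2\bigr] \;=\; \E_{z'}\bigl[\|\vec A^2 z'\|_2^2\bigr] \;=\; \tr(\vec A^4) \;=\; \|\vec A^2\|_\fr^2,
\]
which is the claimed identity. (I read the displayed $\Var[\cdot^2]$ as a typo for $\Var[\cdot]$; the identity holds exactly for the latter.)

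\textbf{Part 2 (lower-tail bound for PSD quadratic forms).} Let $X=z^\top \vec A z$, $T=\tr(\vec A)$, and $\lambda_1,\dots,\lambda_d\ge 0$ be the eigenvalues of $\vec A$. The plan is a Chernoff argument with a \emph{negative} exponential parameter. For any $t\ge 0$, Markov's inequality applied to $e^{-tX}$ gives
\[
\pr[X<\beta T] \;\le\; e^{t\beta T}\,\E[e^{-tX}] \;=\; e^{t\beta T}\prod_{i=1}^d(1+2t\lambda_i)^{-1/2},
\]
using the standard Laplace transform of a Gaussian quadratic form. The key step is the elementary inequality $\prod_{i}(1+2t\lambda_i)\ge 1+2t\sum_i\lambda_i = 1+2tT$, valid because all $\lambda_i\ge 0$; this collapses the spectrum dependence and reduces the bound to a one-parameter optimization $\inf_{t\ge 0} e^{t\beta T}/\sqrt{1+2tT}$. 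Substituting $s=2tT$ and differentiating, the optimum is at $s=1/\beta-1$ (valid when $\beta\le 1$; the statement is vacuous otherwise), where the bound evaluates to $\sqrt{\beta\, e^{1-\beta}}\le \sqrt{e\beta}$. Taking complements yields the claim.

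\textbf{Where the only subtlety lies.} There is no deep obstacle here; the one moderately clever step is the AM-GM-style expansion $\prod(1+2t\lambda_i)\ge 1+2tT$, which isolates exactly the right dependence on $\tr(\vec A)$ and eliminates the full spectrum from the Chernoff optimization. Without this reduction, the minimization would depend on all $\lambda_i$ individually and would not produce the clean, spectrum-free factor $\sqrt{e\beta}$. Both parts of the fact then follow without further ingredients.
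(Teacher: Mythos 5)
Your proof is correct. The paper states this as a \emph{Fact} without proof, so there is nothing to compare against; your two arguments (conditioning on $z'$ to get $\E[(z^\top \vec A^2 z')^2]=\tr(\vec A^4)$, and the negative-parameter Chernoff bound with $\prod_i(1+2t\lambda_i)\ge 1+2t\tr(\vec A)$) are the standard ones and are carried out correctly, including the restriction to $\beta\le 1$ where the bound is non-vacuous. Your reading of $\Var[\langle \vec A z,\vec A z'\rangle^2]$ as a typo for the second moment of $\langle \vec A z,\vec A z'\rangle$ is also the right one: it is exactly how the fact is invoked in the proof of \Cref{cl:spectrum-dominated}, where the bound $\E_{z,z'}[\langle \vec M_t^\perp z,\vec M_t^\perp z'\rangle^2]\le \|(\vec M_t^\perp)^2\|_\fr^2$ is what is actually needed.
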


\section{Almost-Linear Time Algorithm for Robust Mean Estimation}\label{sec:robust_mean}

In this section, we describe the main ideas of the almost-linear time algorithm for Gaussian robust mean estimation that establishes \Cref{thm:robust_mean}.

As described earlier, our algorithm runs in two stages, both of which will run in almost-linear time.
In the first stage, our goal is to find a low-dimensional subspace $\cV$ (of dimension at most $\polylog(d/\eps)$) and a vector that is $O(\eps)$ close to $\mu$ when projected in subspace $\cV^\perp$. In the second stage, we deploy the basic $O(\eps)$  algorithm from \cite{DKKLMS18-soda} on the input data after projecting it onto $\cV$ and estimate $\mu$ in that subspace. By a refined analysis of their argument (\Cref{lem:basic-soda-alg}), the second stage runs fast because $\dim(V) = O(\polylog(d/\eps))$. By combining these two estimates we get an approximation of $\mu$ in the whole $\R^d$. 
Since the second stage algorithm's analysis closely follows \cite{DKKLMS18-soda}, in the reminder we focus on the first stage, below (the probability of success can be boosted by standard arguments). 

\begin{theorem}[First Stage in {Almost-Linear} Time]
\label{thm:first-stage}
    Given a set of $n$ samples,
    the uniform distribution on which has the form $(1-\eps)G+\eps B$ for some $G$ satisfying \Cref{cond:mean,cond:covariance,cond:polynomials} of \Cref{DefModGoodnessCond} with respect to $\mu \in \R^d$ and parameters $\alpha = \eps/\log(1/\eps)$ and $k =\Theta({\log^2}(n+d))$.
   \Cref{alg:robust_mean_algo} takes as input the $n$ points and $\epsilon$,
   and {with probability $0.99$}, returns a vector $\widehat{\mu}$ and a subspace $\cV$ such that  $\| \proj_{\cV^\perp}(\widehat{\mu} - \mu)\|_2 \lesssim \epsilon$ and $\dim(\cV){=}\polylog(d/\epsilon)$. The algorithm runs in time $O(n d \polylog(d/\eps))$.
\end{theorem}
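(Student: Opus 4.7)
The plan is to iteratively maintain a weight function $w$ on the samples (initialized to the warm-start weights from \Cref{setting:main}) and a low-dimensional subspace $\cV \subseteq \R^d$ (initially $\cV = \{0\}$), while tracking the potential
\[
    \Phi(w,\cV) \;:=\; \trace\!\Big( \big(\vec\Pi_{\cV^\perp}(\vec\Sigma_{P_w}-\vec I)\vec\Pi_{\cV^\perp}\big)_+^{p}\Big)
\]
with $p = \Theta(\log(d/\eps))$; by \Cref{fact:normReln,fact:psd-power}, $\Phi^{1/p}$ is within a constant of the largest eigenvalue of the matrix inside. The warm-start bound from \Cref{setting:main} gives $\Phi \leq \poly(d/\eps)$ initially. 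Once the algorithm detects $\Phi \leq (C\eps)^p$, the top eigenvalue of the $\cV^\perp$-restricted centered covariance is $1 + O(\eps)$, and \Cref{LemCertiAlt} applied in $\cV^\perp$ yields $\|\proj_{\cV^\perp}(\mu_{P_w}-\mu)\|_2 \lesssim \eps$, so the algorithm returns $(\widehat\mu,\cV) = (\mu_{P_w},\cV)$.

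Each iteration runs power iteration (\Cref{fact:power-iter}) on $\vec\Pi_{\cV^\perp}(\vec\Sigma_{P_w}-\vec I)_+\vec\Pi_{\cV^\perp}$ for $O(\log(d/\eps))$ steps, seeded by $k = \Theta(\log(d/\eps))$ independent Gaussians, producing candidate top-eigenvectors $u_1,\dots,u_k \in \cV^\perp$. We then use \Cref{lem:check_conditions} to test pairwise ``near-orthogonality'' $|\langle u_i,u_j\rangle| \leq \|u_i\|_2\|u_j\|_2/k^2$. In the \textbf{near-orthogonal case}, we assemble the normalized rows into a matrix $\vec U$ with $\|\vec U^\top\vec U\|_\op \leq 2\trace(\vec U^\top\vec U)/\log(1/\eps)$ and invoke the multi-directional filter \Cref{lem:filtering_combined}; combining guarantees (i)--(ii) with the inlier concentration \Cref{cond:polynomials} forces a significant drop in the weighted second moment along $\vec U$, which via Lieb--Thirring (\Cref{fact:LT}) and \Cref{fact:trace_PSD_ineq2} translates into a multiplicative decrease $\Phi \to (1-\Omega(1/\polylog(d/\eps)))\,\Phi$. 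In the \textbf{non-orthogonal case}, the failure of near-orthogonality together with the power-iteration guarantee forces a single unit vector $v \in \cV^\perp$ to account for a $1 - o(1)$ fraction of $\Phi$; we set $\cV \leftarrow \cV \oplus \mathrm{span}(v)$, and removing the dominant eigenvalue from the trace shrinks $\Phi$ by a constant factor. Since $\Phi \leq \poly(d/\eps)$ initially and each iteration multiplies $\Phi$ by $1 - 1/\polylog(d/\eps)$, after $T = \polylog(d/\eps)$ rounds the algorithm either terminates or has built $\dim(\cV) \leq \polylog(d/\eps)$, both acceptable.

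For the runtime, each matrix-vector product against $\vec\Pi_{\cV^\perp}(\vec\Sigma_{P_w}-\vec I)\vec\Pi_{\cV^\perp}$ reduces to a single pass over the points in time $O(nd + n\dim(\cV)) = O(nd\,\polylog(d/\eps))$ (the samples plus a basis of $\cV$ provide a rank-$(n+\dim\cV)$ representation); each power iteration performs $O(\log(d/\eps))$ such products across $k = \polylog(d/\eps)$ seeds, and the filter runs in $O(ndk + \polylog)$ by \Cref{lem:filtering_combined}, so each iteration costs $O(nd\,\polylog(d/\eps))$. Aggregating over $T = \polylog(d/\eps)$ rounds gives total time $O(nd\,\polylog(d/\eps))$. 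The failure probabilities of power iteration, the JL sketch (\Cref{cl:jl-cons}) used inside the filter, and the filter itself compound across $T$ rounds and are pushed below $0.01$ by a union bound with mild slack in constants.

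The hardest step will be certifying the potential drop in the \emph{non-orthogonal case}: no outliers are removed, so we must argue that merely shrinking the active subspace makes genuine progress on $\Phi$. The key is a quantitative gap-versus-power-iteration bound---if two seeds produce $u_i,u_j$ with large inner product after $O(\log(d/\eps))$ power-iteration steps then the leading eigenvalue of the restricted $(\vec\Sigma_{P_w}-\vec I)_+$ exceeds the $k$-th eigenvalue by a polynomial factor---combined with a Weyl-type inequality showing $\Phi \lesssim \lambda_1^p + \sum_{i\geq 2}\lambda_i^p$ is dominated by $\lambda_1^p$, so removing $\mathrm{span}(v)$ shrinks $\Phi$ by a constant factor. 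A secondary subtlety is threading the PSD-part operator $(\cdot)_+$ through all of these arguments while preserving the $O(nd)$-per-product cost; this can be handled by writing $(\vec\Sigma_{P_w}-\vec I)_+ = (\vec\Sigma_{P_w}-\vec I) + (\vec I-\vec\Sigma_{P_w})_+$ and exploiting that the warm start from \Cref{setting:main} bounds $(\vec I-\vec\Sigma_{P_w})_+ \preceq C\eps\vec I$, so its spectral decomposition is cheap to approximate.
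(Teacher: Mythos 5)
Your overall architecture (weights plus a growing ignored subspace, a $\trace((\cdot)^{p})$ potential, a dichotomy between "many near-orthogonal large directions $\Rightarrow$ multi-directional filter" and "few dominant directions $\Rightarrow$ peel one off into $\cV$") matches the paper, and your Case-1 analysis is essentially the paper's \Cref{cl:case1_restated}. The genuine gap is in the non-orthogonal case. You claim that a single observed pair of power-iteration outputs with large inner product "forces a single unit vector $v$ to account for a $1-o(1)$ fraction of $\Phi$," to be proved via a "gap-versus-power-iteration bound." This is false as stated: if $\vec B_t^\perp$ has two (or $\polylog$ many) comparable top eigenvalues, random seeds will fail your near-orthogonality test with constant probability, yet the top eigenvector accounts for only a constant (or $1/\polylog$) fraction of the trace. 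More importantly, a \emph{single} realized non-orthogonal pair carries no quantitative information about the spectrum at all — the event could have had arbitrarily small probability. The paper's mechanism (\Cref{cl:spectrum-dominated}) is a second-moment argument: it bounds the \emph{probability} $q_t$ of the non-orthogonality event by $\gamma^{-2}\|(\vec M_t^\perp)^2\|_\fr^2/(\delta^2\trace((\vec M_t^\perp)^2)^2)$ plus an anti-concentration term for $\|\vec M_t^\perp z\|_2^2$ (\Cref{fact:quadratics-approximation}), and then uses $\|\vec A\|_\fr^2\le\trace(\vec A)\|\vec A\|_\op$ to conclude $\|\vec M_t^\perp\|_\op^2\gtrsim q_t^5\gamma^2\,\trace((\vec M_t^\perp)^2)$ — only a $1/\polylog$ fraction, which is why the algorithm must \emph{estimate} $q_t$ (\Cref{line:qt}, \Cref{cond:prob}) rather than branch on the realized vectors. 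Your proposal is missing this probabilistic ingredient entirely; without it (or a union-bound patch showing that, w.h.p., the branch is only ever taken when $q_t$ is non-negligible), the potential decrease in the peeling branch is unproven. You would also need the paper's \Cref{cl:case2_claim} (Lieb--Thirring) to convert "add $u$ to $\cV$" into "$\phi$ drops by $u^\top(\vec M_t^\perp)^2u$"; projecting out a direction is not the same as deleting an eigenvalue from the trace unless $u$ is an exact eigenvector.

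A secondary but real problem is your use of the positive-part operator $(\cdot)_+$ inside the potential. Matrix--vector products against $(\vec\Sigma_{P_w}-\vec I)_+$ cannot be done in $O(nd)$ time: the decomposition $(\vec\Sigma_{P_w}-\vec I)_+=(\vec\Sigma_{P_w}-\vec I)+(\vec I-\vec\Sigma_{P_w})_+$ does not help, since $(\vec I-\vec\Sigma_{P_w})_+$ having small operator norm says nothing about the cost of applying it (it can have rank $d$ and requires an eigendecomposition). The paper avoids this entirely by defining $\vec B_t^\perp=(\E[w])^2\vec\Sigma_t^\perp-(1-C_1\eps)\vec\Pi_{\cV_t^\perp}$ and proving it stays PSD throughout (via \Cref{cond:covariance} and the bound on inlier mass removed), which is also what makes the monotonicity $\vec B_{t+1}\preceq\vec B_t$ and the Lieb--Thirring steps legitimate. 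You should adopt that shift rather than a positive part.
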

\Cref{thm:first-stage} is realized by \Cref{alg:robust_mean_algo}. Without loss of generality, we assume \Cref{setting:main} holds at the beginning of the algorithm.
Recall the high-level strategy that was outlined in \Cref{sec:techniques}: We want to iteratively downweigh points  and add directions to $\cV$ so that (i) the weight removed from inliers is at most $O(\eps/\log(1/\eps))$, (ii) the downweighted dataset along every direction in $\cV^\perp$ has variance at most $1+O(\eps)$, (iii) and $\dim(\cV) \leq \polylog(d/\eps)$. 
Having the first two, the certificate lemma (\Cref{LemCertiAlt}) implies that  the empirical (weighted) mean is $O(\eps)$-close to $\mu$ along $\cV^\perp$. 
The way to achieve (i) and (ii) in \cite{DKKLMS18-soda} was by using the matrix $\vec U$ to be the top-$k$ eigenvectors of the covariance matrix ({also, the algorithm of} \cite{DKKLMS18-soda} does not add directions to $\cV$ until the very end; see \Cref{sec:basic-soda-alg}).
As mentioned earlier, this approach runs in quadratic time. 
Our main technical insight is to (a) randomize the choice of $\vec U$ when safe to do so, and (b) when it is not safe, allow the algorithm to remove a direction by adding it to $\cV$ at any time (as opposed to waiting until the end).
We first describe the notation below for a more detailed overview.

\smallskip

\noindent \textbf{Notation for \Cref{alg:robust_mean_algo}.} 
For each round $t\in [t_{\max}]$ of our algorithm, we maintain weights $w_t :\R^d \to [0,1]$ over the dataset, capturing the confidence in the points being inliers, i.e., $w(x) = 0$ represents outliers and $w(x) = 1$ represents inliers.
We also maintain a (low-dimensional) subspace $\cV_t$ with the goal of making the covariance  of the projections of the data on $\cV_t^\perp$ be small at the end. 
Let $\mu_t^\perp,\vec \Sigma_t^\perp$ be the sample mean and covariance of the data after projected on $\cV_t^\perp$ and weighted by $w_t$, and define $\vec B_t^\perp \approx \vec \Sigma_t^\perp - \vec \Pi_{\cV_t^\perp}$ (see \Cref{line:algo1,line:algo2,line:algo3,line:algo4} for precise definitions).
We use the potential function $\phi_t :=  \trace((\vec M_t^\perp)^\top(\vec M_t^\perp)) = \|\vec M_t^\perp\|_\fr^2 $ to track the progress of our algorithm, where $\vec M_t^\perp = (\vec B_t^\perp)^p$ for $p = \log d$.
Observe that the potential function $\phi_t$ ignores the contribution from the directions in $\cV_t$.

\begin{algorithm}[h!]
	\caption{Robust Mean Estimation Under Huber Contamination (Stage 1)}
	\label{alg:robust_mean_algo}
	\begin{algorithmic}[1]
	\Statex \textbf{Input}: Parameter $\eps \in (0,1/2)$, uniform distribution over $n$ points that can be written as $P = (1-\eps) G + \eps B$ where $G$ satisfies \Cref{DefModGoodnessCond} with appropriate parameters.
    \Statex \textbf{Output}: An approximation of the mean in a subspace $\mathcal{V}^\perp$ and the orthogonal subspace $\cV$.

    \State Let $C$ be a sufficiently large constant, $k = C\log^2(n+d)$, $t_{\max} =  (\log(d/\eps))^C$.
    \State Initialize $V_1 \gets \emptyset$ and $w_1(x) = 1$ for all $x \in \R^d$.
	\For{$t= 1, \ldots , t_{\max}$} \label{line:main_loop}
        \State Let $\cV_t$ be the subspace spanned by the vectors in $V_t$, and $\cV^\perp_t$ be the perpendicular subspace. \label{line:algo1}
        \State Let $P_t$ be the distribution $P$ re-weighted by $w_t$, i.e., $P_t(x) = w_t(x)P(x)/\E_{X \sim P}[w(X)]$.\label{line:algo2}
        \State Let $\mu_t^\perp, \vec \Sigma_t^\perp$ be the mean and covariance of $\proj_{\cV_t^\perp}(X)$ when $X\sim P_t$\label{line:algo3}
        
        \parState{Define $\vec B_t^\perp = (\E_{X \sim P}[w(X)])^2 \vec \Sigma_t^\perp - (1-C_1 \eps)\vec \Pi_{\cV_t^\perp}$, where $\vec \Pi_{\cV_t^{\perp}}$ is the orthogonal projection matrix for $\cV_t^\perp$, and $\vec M_t^\perp := (\vec B_t^\perp)^p$ for $p = \log(d)$.}\label{line:algo4} 

        \State Calculate $\widehat{\lambda}_t$ such that $\widehat{\lambda}_t/\|\vec B_t^\perp \|_\op \in [0.1,10]$ using power iteration. \label{line:lambda}  \Comment{cf.\ \Cref{fact:power-iter}}
        {\State \textbf{If} $\widehat{\lambda}_t  \leq C \eps$ \textbf{then} \textbf{return} $\mu_t$ and $V_t$.}\label{line:termination}

        \State Let $q_t:=\pr_{z,z' \sim \cN(0,\vec I)}[ | \langle \vec M_t^\perp z,\vec M_t^\perp z'  \rangle | > \|\vec M_t^\perp z\|_2 \|\vec M_t^\perp z'\|_2 /k^2]$. \label{line:qt}
        \State Calculate an estimate $\widehat{q}_t$ such that $|\widehat{q}_t - q_t| \leq \frac{1}{10(k^2 t_{\max})}$. 
        
        \If{$\widehat{q}_t \leq 1/(k^2 t_{\max})$ } \label{line:check_angle}
        \Comment{Case 1 (cf.\ \Cref{sec:case1-many})}
                \For{$j \in [k]$} \label{line:for}
                    \State  $v_{t,j} \gets  \vec M_t^{\perp} z_{t,j}$ for $z_{t,j} \sim \cN(0, \vec I)$. 
                \EndFor
                \State  $\vec U_t \gets [v_{t,1}, \ldots, v_{t,k}]^\top$ i.e., the matrix with rows $v_{t,j}$ for $j \in [k]$.\label{line:U}
    
                 \State $w_{t+1} \gets \textsc{Multi-DirectionalFilter}(P,w,\eps,\vec U_t)$\label{line:filter} \Comment{cf.\ \Cref{lem:filtering_combined}}
        \Else \label{line:else}
                \Comment{Case 2 (cf.\ \Cref{sec:case2-single})}
             \State $u_t \gets  (\vec B_t^{\perp})^{p'} z/ \|(\vec B_t^{\perp})^{p'} z \|_2$ for  $p' := C\log^2(d t_{\max})$,$z \sim \cN(0,\vec I)$. \label{line:draw_vector} \Comment{Power iteration}
            \State $V_{t+1} \gets V_t \cup \{ u_t \} $. \label{line:better_power_it}

        \EndIf

    \EndFor

    \State Let $\mu_{\mathcal{V}_t}= \E_{X \sim P_t}[\proj_{\mathcal{V}_t}(X)]$ be the mean of $P_t$ after projection to $\mathcal{V}_t$.
    \State \textbf{return} $\mu_{\mathcal{V}_t}$ and $V_t$.

	\end{algorithmic}
\end{algorithm}

We will show that in every iteration, 
the potential function decreases multiplicatively, i.e.,
$\phi_{t+1} \leq (1 - \polylog(\eps/d)) \phi_t$ while removing at most $O(\eps/\log(1/\eps))$ fraction of inliers throughout the algorithm (so that we do not fall outside of \Cref{setting:main}).
Since $\phi_t$ at $t=0$ is at most $\poly({(d/\eps)^p})$ and the algorithm necessarily terminates when it reaches below $\eps^{p}$ (because this implies that $\| \vec B_t^\perp \|_\op \leq (\phi_t)^{1/p} = O(\eps)$ which would cause \Cref{line:termination} to activate), then after $t_{\max}=\polylog(d/\eps)$ iterations the algorithm yields a $\mu_{P_w}$ such that $\|\proj_{\cV^\perp}(\mu_{P_w}-\mu) \|_2=O(\eps) $ {by \Cref{LemCertiAlt}.}
Since each iteration is implementable in $\tilde{O}(nd)$ time, the whole algorithm terminates in $\tilde{O}(nd)$ time.

In the next two subsections, we  explain how the algorithm decides whether to expand the subspace $\cV_t$ or to remove outliers.
This decision is based on \Cref{line:check_angle}, which checks if two random $\vec M_t^\perp z, \vec M_t^\perp z'$ for $z,z' \sim \cN(0,\vec I)$ are nearly-orthogonal with reasonable probability.

\subsection{Deterministic Conditions for \Cref{alg:robust_mean_algo}}

Before proving correctness of our (randomized) algorithm, we note the following deterministic conditions, which hold with high probability as shown below, that will suffice for the correctness of our algorithm.

\begin{condition}[Deterministic Conditions for \Cref{alg:robust_mean_algo}] \label{cond:deterministic}
{Consider the notation in \Cref{alg:robust_mean_algo}.} For all $t \in [t_{\max}]$, the following hold:
    \begin{enumerate}[label=(\roman*)]
        \item Spectral norm of $\vec B_t^\perp$: $\widehat \lambda_t \in [0.1 \| \vec B_t^\perp \|_\op , 10  \| \vec B_t^\perp \|_\op  ]$. \label{cond:spectral}
        \item Frobenius norm $\| \vec U_t \|_\fr^2 \in [0.8k \|\vec M_t^\perp\|_\fr^2, 1.2k \|\vec M_t^\perp\|_\fr^2]$.\label{cond:fr}
        \item Scores: For all $x \in \mathrm{support}(P)$, $\| \vec U_t (x-\mu_{t}) \|_2^2 / \| \vec M_t^\perp (x-\mu_{t}) \|_2^2 \in [0.8k, 1.2k]$.\label{cond:jl}
        \item Maximum norm: $\max_{j \in [k]} \|v_{t,j} \|_2^2 \leq \frac{\log k}{k} \tr(\vec U_t^\top \vec U_t)$. \label{cond:maxnorm}
        \item Probability estimate: $|\widehat{q}_t - q_t| \leq 0.1/(k^2 t_{\max})$. \label{cond:prob}
        \item $\vec U_t$ almost orthogonal: For every pair $u_i,u_j$ of distinct rows of $\vec U_t$ it holds $|\langle u_i, u_j \rangle| \leq \|u_i\|_2 \|u_j\|_2/k^2$. \label{cond:orth}
        \item Approximate eigenvector: $u_t$ from \Cref{line:better_power_it} satisfies $u_t^\perp \vec B_t^\perp u_t \geq (1-1/p)\|\vec B_t^\perp\|_\op$.\label{cond:better_power}
    \end{enumerate}
\end{condition}

\begin{lemma}
    \Cref{cond:deterministic} is satisfied with probability at least $0.9$.
\end{lemma}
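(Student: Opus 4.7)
The plan is to show each of items (i)–(vii) at a fixed round $t \in [t_{\max}]$ holds with failure probability $\ll 1/t_{\max}$, and then union-bound over the seven items and the $t_{\max}$ rounds to obtain total failure probability at most $0.1$. The parameters $k = C\log^2(n+d)$, $p = \log d$, $p' = C\log^2(d\, t_{\max})$, together with the Monte-Carlo sample budget used to compute $\widehat q_t$, will be tuned so that every per-round per-item failure probability is below $1/(100\, t_{\max})$.

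Items (i) and (vii) are direct applications of \Cref{fact:power-iter}. For (i), $O(\log(d\, t_{\max}))$ power-iteration steps suffice to certify a constant-factor estimate of $\|\vec B_t^\perp\|_\op$. For (vii), taking $p' = C\log^2(d\, t_{\max})$ (i.e., invoking \Cref{fact:power-iter} with $\eta = 1/p$ and $\delta = 1/(100\, t_{\max})$) yields $u_t^\top \vec B_t^\perp u_t \geq (1-1/p)\|\vec B_t^\perp\|_\op$ with the needed probability. Items (ii) and (iii) are instances of the Johnson--Lindenstrauss sketch \Cref{cl:jl-cons}, applied with $\vec A = \vec M_t^\perp$, $b = \mu_t$, and the $n$ support points of $P$: since $k = C\log^2(n+d) \gg \log((n+d)\, t_{\max})$, both the Frobenius-norm concentration of $\vec U_t$ and the preservation of the scores $\|\vec U_t(x-\mu_t)\|_2^2$ hold with failure $\ll 1/t_{\max}$. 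For (v), a Monte-Carlo estimator of $q_t$ built from $\tilde O(k^4 t_{\max}^2)$ i.i.d.\ pairs $(z,z')$ achieves additive error $1/(10 k^2 t_{\max})$ with failure $\ll 1/t_{\max}$ by Hoeffding's inequality.

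Items (iv) and (vi) only need to be checked in Case~1 (when $\vec U_t$ is defined), where (v) gives $q_t \leq 2/(k^2 t_{\max})$. For (vi), the $\binom{k}{2}$ unordered pairs of rows of $\vec U_t$ are i.i.d.\ copies of the event defining $q_t$, so a union bound yields per-round failure $O(1/t_{\max})$; tightening the threshold constant in \Cref{line:check_angle} by a fixed factor (which leaves every other part of the analysis unchanged) brings this strictly below $1/(100\, t_{\max})$. For (iv), the condition $q_t \leq 2/(k^2 t_{\max})$ already forces the spectrum of $\vec M_t^\perp$ to be spread out: if $\vec M_t^\perp$ had stable rank $O(1)$, then two fresh samples $\vec M_t^\perp z$, $\vec M_t^\perp z'$ would be nearly parallel and $q_t$ would be $\Omega(1)$. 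From this one extracts $\|\vec M_t^\perp\|_\op \lesssim \|\vec M_t^\perp\|_\fr/k$, so Hanson--Wright applied to the Gaussian quadratic form $\|v_{t,j}\|_2^2 = z_{t,j}^\top (\vec M_t^\perp)^\top \vec M_t^\perp z_{t,j}$ yields $\|v_{t,j}\|_2^2 \leq O(\log k)\, \|\vec M_t^\perp\|_\fr^2$ with per-row failure $\ll 1/(k\, t_{\max})$; a union bound over $j \in [k]$, combined with (ii), then delivers (iv). The main obstacle is really the constant bookkeeping --- juggling $k$, $p$, $p'$, the Monte-Carlo budget, and the threshold constant in \Cref{line:check_angle} so that the seven-way union bound closes within the $0.1$ budget; each individual concentration inequality is standard.
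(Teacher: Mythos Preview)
Your overall strategy and the handling of items (i), (ii), (iii), (v), (vi), (vii) match the paper's proof essentially line for line. The one substantive difference is item (iv).

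For (iv), you argue that being in Case~1 (small $q_t$) forces the spectrum of $\vec M_t^\perp$ to be spread, specifically $\|\vec M_t^\perp\|_\op \lesssim \|\vec M_t^\perp\|_\fr/k$, and then feed this into Hanson--Wright. That spectral claim is true, but you do not prove it, and it is not a consequence of anything already stated in the paper; \Cref{cl:spectrum-dominated} only gives the other direction (large $q_t$ implies concentrated spectrum). A proof would go roughly: if $\lambda_1^2 > c\,\tr((\vec M_t^\perp)^2)/k^2$, then conditioning on $z_1 z_1' \geq 1$ and on the inner product of the parts of $\vec M_t^\perp z,\vec M_t^\perp z'$ orthogonal to the top eigenvector being nonnegative (joint probability $\Omega(1)$) gives $\langle \vec M_t^\perp z, \vec M_t^\perp z'\rangle \geq \lambda_1^2$, while $\|\vec M_t^\perp z\|_2\|\vec M_t^\perp z'\|_2 \lesssim \tr((\vec M_t^\perp)^2)$ with high probability, forcing $q_t = \Omega(1)$ --- a contradiction. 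So your route can be completed, but as written it has a real gap.

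The paper avoids this detour entirely. Its proof of (iv) applies Hanson--Wright directly to $\|v_{t,j}\|_2^2 = z_{t,j}^\top (\vec M_t^\perp)^2 z_{t,j}$ and uses only the trivial PSD bounds $\|(\vec M_t^\perp)^2\|_\fr \leq \tr((\vec M_t^\perp)^2)$ and $\|(\vec M_t^\perp)^2\|_\op \leq \tr((\vec M_t^\perp)^2)$ to obtain $\|v_{t,j}\|_2^2 \lesssim \log(k\,t_{\max})\,\|\vec M_t^\perp\|_\fr^2$; then (ii) and $t_{\max} = \poly(k)$ finish. No Case~1 assumption is needed for (iv) at all, so the spread-spectrum lemma is unnecessary machinery.

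Minor note on (vi): the $\binom{k}{2}$ pairs of rows are identically distributed but not independent (distinct pairs share a row), so ``i.i.d.'' is inaccurate; the union bound is of course unaffected.
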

\begin{proof}
    The spectral norm condition holds by the guarantee of the power iteration; specifically, using \Cref{fact:power-iter} with $\eta = 0.9$, $\delta = 0.001/t_{\max}$, and a union bound over the $t_{\max}$ iterations of the algorithm.

    The next two conditions hold with probability $0.999$ by using \Cref{cl:jl-cons} with $\vec A = \vec M_t^\perp \sqrt{k}$ and a similar union bound.

    Regarding condition \ref{cond:maxnorm}, with  probability $0.999$ we have the following regarding the vectors $v_{t,j}$ from \cref{line:U}
\begin{align*}
\|v_{t,j}\|_2^2& \lesssim \sqrt{\log(k t_{\mathrm{max}})} \|\vec (\vec M_t^\perp)^2 \|_\fr + \log(k t_{\mathrm{max}}) \|(\vec M_t^\perp)^2 \|_\op
\\
&\lesssim \log(k t_{\mathrm{max}})  \|\vec M_t^\perp \|_\fr^2 
\lesssim  \frac{\log(k t_{\mathrm{max}})}{k}  \|\vec U_t^\perp \|_\fr^2 \lesssim \frac{\log(k)}{k}  \|\vec U_t^\perp \|_\fr^2 \;,    
\end{align*}
 where the first step uses Hanson-Wright inequality with probability of failure $1/(1000k t_{\max})$, the second step uses the fact $\| \vec A \|_\fr^2 \leq \tr(\vec A ) \|\vec A\|_\op \leq \tr(\vec A )^2$ applied with $\vec A = (\vec M_t^\perp)^2$,  the third step is due to condition \ref{cond:fr}, and the last one is because $t_{\mathrm{max}} \lesssim \poly(k)$ for the choice of values that we have made for $t_{\mathrm{max}}$ and $k$.

    For Condition \ref{cond:prob}, if $\widehat{q}_t$ is the empirical average of $\1( | \langle \vec M_t^\perp z,\vec M_t^\perp z'  \rangle | > \|\vec M_t^\perp z\|_2 \|\vec M_t^\perp z'\|_2 /k^2)$ over $\Theta(k^4 t_{\max}^2)$ instantiations of $z,z' \sim \cN(0,\vec I)$, then the condition holds by a standard Chernoff bound. 

    Condition \ref{cond:orth} holds by the fact that the algorithm uses $\vec U_t$ only in the case that $\widehat{q_t} \leq 1/(k^2 t_{\max})$. By condition \ref{cond:prob}, in that case we have that $\pr_{z,z' \sim \cN(0,\vec I)}[ | \langle \vec M_t^\perp z,\vec M_t^\perp z'  \rangle | > \|\vec M_t^\perp z\|_2 \|\vec M_t^\perp z'\|_2 /k^2] \geq 0.9/(k^2 t_{\max})$. Thus, by a union bound, with high constant probability, for every iteration $t$, all pairs of rows of $\vec U_t$ will have angle with cosine in $[-1/k^2, 1/k^2]$.

 Regarding Condition \ref{cond:better_power}: By \Cref{fact:power-iter} with $\eta = 1/p = 1/\log(d)$ and $\delta = 1/t_{\max}$ (so that we can do a union bound over all iterations of the algorithm), we have that if $p' \gg \log^2(d t_{\max})$ then $u_t^\top \vec B_t^\perp u_t \geq (1-1/p)\|\vec B_t^\perp \|_\op$ with high constant probability throughout the algorithm.
\end{proof}

\subsection{Case 1: Many Large Eigenvalues}
\label{sec:case1-many}

By construction, ``Case 1'' corresponds to the case where the rows of $\vec U_t$ are nearly-orthogonal (with high probability); see \Cref{line:check_angle}. Thus, $\vec U_t$ will be nearly-orthogonal, permitting the use of the multi-directional filtering algorithm from \Cref{lem:filtering_combined}.

The explanation above ensures that the multi-directional filtering procedure with random $v_i$'s is correct.
The reason that it is significantly faster than \cite{DKKLMS18-soda} is that the $v_i$'s are now randomized along the top eigenvalues of $\vec B_t^\perp$ because they are of the form $\vec M_t^\perp z/\| \vec M_t^\perp z \|$; In contrast,  \cite{DKKLMS18-soda} sets $v_i$'s deterministically equal to the top-$k$ eigenvalues of $\vec B_t^\perp$.
As outlined in the introduction, the random choice of $v_i$'s prevents the ``adversary'' to place the outliers in such a way that the outliers in the orthogonal subspace are unaffected during filtering.
In particular, filtering with the random $v_i$'s reduces the contribution of outliers, not only along the exact top-$k$ eigenvectors of  $\vec B_t^\perp$, but also along \emph{all directions with variance comparable to the top eigenvector}.
We use the technical insights from \cite{diakonikolas2022streaming} 
to show that the potential function decreases multiplicatively:

\begin{restatable}{claim}{CASEONERESTATED} \label{cl:case1_restated}\label{cl:case1}
    Assuming that \Cref{cond:deterministic} is satisfied, for every iteration $t$ that the condition of \Cref{line:check_angle} is satisfied, $\phi_{t+1} \leq 0.99 \phi_t$. Moreover, $\E_{X \sim G}[w_t(X)] \geq 1-\eps/\log(1/\eps)$ and $\vec B_t^\perp \succeq 0$ throughout the algorithm's execution.
\end{restatable}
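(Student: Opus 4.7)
My first step is to check that the matrix $\vec U_t$ built in Case~1 satisfies the near-orthogonality hypothesis of \Cref{lem:filtering_combined}. Conditions \ref{cond:orth} and \ref{cond:maxnorm} of \Cref{cond:deterministic} are exactly the two hypotheses of \Cref{lem:check_conditions}, and they yield $\|\vec U_t^\top \vec U_t\|_\op \leq 2\tr(\vec U_t^\top \vec U_t)/(k/\log k)$. Since $k = C\log^2(n+d)$ implies $k/\log k \gg \log(1/\eps)$, the filter is applicable, and invoking \Cref{lem:filtering_combined} produces weights $w_{t+1}$ enjoying the inlier--outlier imbalance (i) and the outlier second-moment bound (ii).

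Next, I would maintain the two ``throughout'' invariants by induction on $t$. For the inlier-mass budget, Case~1 iterations remove inliers at most $(\eps/\log(1/\eps))$ times the outlier mass they remove (filter property (i)), while Case~2 iterations do not touch $w_t$ at all; summing over the at most $t_{\max}$ iterations and using $\E_B[w_1] \leq 1$ gives total inlier mass lost at most $\eps \cdot \eps/\log(1/\eps) \ll \alpha$. This keeps $w_t$ inside the scope of \Cref{cond:mean,cond:covariance} of \Cref{DefModGoodnessCond}, so the inlier contribution to $(\E w_t)^2 \vec \Sigma_t^\perp$ lies within $O(\alpha \log(1/\alpha)) = O(\eps)$ of $\vec \Pi_{\cV_t^\perp}$ in Loewner order; combining with the nonnegative outlier contribution and with the warm-start hypothesis of \Cref{setting:main} yields $(\E w_t)^2 \vec \Sigma_t^\perp \succeq (1-C_1\eps)\vec \Pi_{\cV_t^\perp}$ for $C_1$ chosen large enough, which is exactly $\vec B_t^\perp \succeq 0$.

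The main content is the multiplicative drop $\phi_{t+1} \leq 0.99\, \phi_t$ during a Case~1 iteration. Since the filter only downweights, the change $\vec B_t^\perp - \vec B_{t+1}^\perp$ is controlled (after absorbing $O(\eps)$ corrections from the altered $\E w$ and the inlier piece) by the outlier contribution $\E_B[(w_t-w_{t+1})(X-\mu_t)(X-\mu_t)^\top]$. Using $\vec B_{t+1}^\perp \preceq \vec B_t^\perp + O(\eps)\vec \Pi_{\cV_t^\perp}$ together with the matrix monotonicity fact $\tr((\vec B_{t+1}^\perp)^{2p}) \leq \tr((\vec B_t^\perp)^{2p-1}\vec B_{t+1}^\perp)$ (\Cref{fact:trace_PSD_ineq2}), I would reduce the problem to lower bounding $\tr((\vec B_t^\perp)^{2p-1}(\vec B_t^\perp - \vec B_{t+1}^\perp))$. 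Filter property (ii), $\eps\,\E_B[w_{t+1}\|\vec U_t(X-\mu_t)\|_2^2] \lesssim \eps \tr(\vec U_t^\top\vec U_t)$, combined with \ref{cond:jl} and \ref{cond:fr}, gives an upper bound $O(\eps)\phi_t$ on the outlier variance in the $\vec U_t$-directions \emph{after} filtering, whereas Case~1 being entered (and \Cref{line:termination} not having fired) forces $\|\vec B_t^\perp\|_\op \gtrsim \eps$, so the outlier variance \emph{before} filtering along those directions was $\Omega(\|\vec M_t^\perp\|_\fr^2)$ in expectation over $z_{t,j}$. An application of \Cref{fact:quadratics-approximation} and \Cref{fact:LT} converts this per-direction drop into the desired trace lower bound of $\Omega(\phi_t)$, completing the inequality $\phi_{t+1} \leq 0.99\,\phi_t$.

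\textbf{The main obstacle} is translating the scale-free filter bound of property (ii) into a Loewner lower bound on $\vec B_t^\perp - \vec B_{t+1}^\perp$ that is aligned with the directions dominating $\phi_t$. A deterministic choice of $\vec U_t$ (as in \cite{DKKLMS18-soda}) would let outliers hide in the near-top eigenvectors of $\vec B_t^\perp$ that are missed by $\vec U_t$; here the rows of $\vec U_t$ are \emph{random probes} $\vec M_t^\perp z_{t,j}$, so in expectation the variance captured is $\sum_j \E_{z_{t,j}}[z_{t,j}^\top \vec M_t^\perp (\vec \Sigma_t^\perp - \vec I)\vec M_t^\perp z_{t,j}] = k\tr((\vec B_t^\perp)^{2p}(\vec \Sigma_t^\perp - \vec I))$, which charges exactly the large-eigenvalue directions of $\vec B_t^\perp$. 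Making the ``charges the right directions'' step quantitative --- and arguing that the high-probability events of \Cref{cond:deterministic} are enough to replace these expectations by per-iteration deterministic bounds --- is the delicate part, and is where the power-iteration choice $p = \log d$ and the streaming-style analysis of \cite{diakonikolas2022streaming} enter.
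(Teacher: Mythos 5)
Your proposal follows essentially the same route as the paper's proof: verify the filter's near-orthogonality hypothesis via \Cref{lem:check_conditions} and Conditions \ref{cond:orth}, \ref{cond:maxnorm}; derive the two invariants from filter property (i) and \Cref{cond:covariance}; and obtain the potential drop by combining \Cref{fact:trace_PSD_ineq2} and Schatten-norm interpolation with an inlier/outlier decomposition controlled by the goodness conditions, the certificate lemma, and filter property (ii) plus the JL conditions (your "lower bound the drop" is literally the complement of the paper's upper bound on $\tr(\vec M_t^\perp \vec B_{t+1}^\perp \vec M_t^\perp)$). One simplification worth noting: you do not need the approximate domination $\vec B_{t+1}^\perp \preceq \vec B_t^\perp + O(\eps)\vec \Pi_{\cV_t^\perp}$ (which would complicate applying \Cref{fact:trace_PSD_ineq2}) --- the exact relation $\vec B_{t+1}^\perp \preceq \vec B_t^\perp$ holds because $(\E_{X\sim P}[w(X)])^2\vec\Sigma_{P_w}$ equals $\tfrac12\E_{X,Y\sim P}[w(X)w(Y)(X-Y)(X-Y)^\top]$, which can only decrease in Loewner order under downweighting.
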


    We first sketch the proof, with the formal version given in the next page.
    We first sketch how $\vec U_t$ is valid for the multi-directional filter, \Cref{lem:filtering_combined}.
    Since the check of  \Cref{line:check_angle} succeeds, with high probability, we have that for each $t \in [t_{\mathrm{max}}]$ rounds, the angles of every pair of rows of $\vec U_t$ formed in \cref{line:U} have cosine at most $1/k^2$.
    Also, Hanson-Wright inequality implies that with high probability, $\| v_{t,j} \|_2^2 \lesssim  \tr(\vec U_t^\top \vec U_t)\log(k)$.
    Combining both of these with Gershgorin Discs Theorem and the choice of $k$, we have that $\| \vec U_t^\top \vec U_t \|_\op \leq 2 \tr(\vec U_t^\top \vec U_t)/\log(1/\eps)$, satisfying the requirements in \Cref{lem:filtering_combined} and ensuring $\E_{X \sim G}[w_{t+1}(X)] \geq 1-\eps/\log(1/\eps)$.

    To show that $\phi_{t+1}$ reduces, we first use the following linear-algebraic result to relate $\phi_{t+1}$ with $\tr(\vec M_t^\perp \vec B_{t+1}^\perp \vec M_t^\perp)$ from \cite{diakonikolas2022streaming}: $\phi_{t+1} = \tr((\vec M_{t+1}^\perp)^2) \leq d^{1/2p}\left(\tr(\vec M_{t}^\perp \vec B_{t+1}^\perp \vec M_{t}^\perp)\right)^{\frac{2p}{2p+1}}$. 
    By the definition of $\vec B_t^\perp$, we have $\tr(\vec M_t^\perp \vec B_{t+1}^\perp \vec M_t^\perp) \approx \E_{P_{t+1}}[\|\vec M_t^\perp (x-\mu)\|_2^2 - \tr((\vec M_t^\perp)^2)] $. To upper bound this, we will use the  guarantees of the multi-dimensional filter along with the goodness conditions and the fact that $\|\vec M_t^\perp (x-\mu)\|_2^2 \approx \|\vec U_t (x-\mu)\|_2^2$ (since $\vec U_t$ is a Johnson-Lindenstrauss sketch of $\vec M_t^\perp$ with $k\gtrsim\log(n)$; see \Cref{cl:jl-cons}) as follows: we show {later on} that the contribution from inliers is small by \Cref{DefModGoodnessCond} and the contribution from outliers, $\eps\E_{B_{t+1}}[\|\vec U_t (x-\mu)\|_2^2$, is controlled by \Cref{lem:filtering_combined}.
    Combining the two aforementioned arguments with some algebraic manipulations we can formally show that, after filtering:
    \begin{align}\label{lem:g_scores_reduction_informal}
        \tr(\vec M_t^\perp \vec B_{t+1}^\perp \vec M_t^\perp) \leq 0.1 \|\vec B_t^\perp \|_\op \tr((\vec M_t^\perp)^2) \;.
    \end{align}
    
    Combining \Cref{lem:g_scores_reduction_informal} with $\phi_{t+1} \leq d^{1/2p}\left(\tr(\vec M_{t}^\perp \vec B_{t+1}^\perp \vec M_{t}^\perp)\right)^{\frac{2p}{2p+1}}$ from earlier, and noting that $d^{1/2p} =d^{1/{2\log d}} \leq 3$ and  $\|\vec B_t^\perp\|_\op^{2p} \leq \tr((\vec M_t^\perp)^2) = \phi_t$,
we have $\phi_{t+1} \leq 0.99 \phi_t$, as desired.\qedhere

The remainder of the subsection is dedicated to the formal proof of \Cref{cl:case1_restated}. 

\begin{proof}[Proof of \Cref{cl:case1_restated}]

We focus on the claim $\phi_{t+1} \leq 0.99 \phi_t$, since the invariant $\E_{X \sim G}[w_t(X)] \geq 1-\eps/\log(1/\eps)$ {follows by the guarantee of the filter (\Cref{lem:filtering_combined}) rather directly: 
First, we note that by \Cref{lem:check_conditions}, the  deterministic conditions \ref{cond:orth} and \ref{cond:maxnorm} from \Cref{cond:deterministic} imply that the goodness condition \ref{cond:polynomials} is applicable to the polynomial $\| \vec U_t (x - \mu) \|_2^2$ where $\vec U_t$ is the matrix used in  \Cref{line:U}; thus, \Cref{lem:filtering_combined} is applicable. Second, by that lemma, we have that every time that $w_t(x)$ changes to $w_{t+1}(x)$ it holds that $\E_{X \sim G}[w_{t}(X) - w_{t+1}(X)] < (\eps/\log(1/\eps))\E_{X \sim B}[w_{t}(X) - w_{t+1}(X)]$. If $t^*$ denotes the final iteration, this means that $\E_{X \sim G}[w_{0}(X) - w_{t^*}(X)] < (\eps/\log(1/\eps))\E_{X \sim B}[w_{0}(X) - w_{t^*}(X)]\leq \eps/\log(1/\eps)$, thus $\E_{X \sim G}[w_t(X)] \geq 1-\eps/\log(1/\eps)$ throughout the execution.}

{We also note, that $\E_{X \sim G}[w_t(X)] \geq 1-\eps/\log(1/\eps)$ implies that $\vec B_t \succeq 0$, and $\vec B_t^\perp \succeq 0$. This is because, by condition \ref{cond:covariance} of the goodness condition, it holds $\vec \Sigma_t \succeq (1-O(\eps))\vec I$. Also, we have already shown that $\E_{X \sim G}[w_t(X)] \geq 1-\eps/\log(1/\eps)$. Thus $\vec B_t := (\E_{X \sim G}[w_t(X)])^2\vec \Sigma_t - (1-C_1 \eps)\vec I \succeq (1-O(\eps))^2 (1-O(\eps))\vec I - (1-C_1 \eps)\vec I  \succeq 0$, if $C_1$ is large enough. }

Regarding the potential function $\phi_t$, we will establish the following series of inequalities, for $c$ being some constant of the form $c=O(1/C)$, where $C$ is the constant used in \Cref{alg:robust_mean_algo} (we can assume $c<0.0001$ by selecting $C$ appropriately large in the algorithm):

\begin{align*}
    \phi_{t+1} &= \tr((\vec M_{t+1}^\perp)^2) = \| \vec M_{t+1}^\perp \|_{2p}^{2p} \\
    &\leq \left( d^{\frac{1}{2p(2p+1)}}  \| \vec M_{t+1}^\perp \|_{2p+1} \right)^{2p} \tag{by \Cref{fact:normReln}}\\
    &= d^{\frac{1}{2p+1}} \left( \| \vec M_{t+1}^\perp \|_{2p+1}^{2p+1} \right)^{\frac{2p}{2p+1}}\\
    &= d^{\frac{1}{2p+1}} \left( \tr(  (\vec B_{t+1}^\perp )^{2p+1})  \right)^{\frac{2p}{2p+1}} \\
    &\leq d^{\frac{1}{2p}} \left( \tr(\vec M_{t}^\perp \vec B_{t+1}^\perp \vec M_{t}^\perp)  \right)^{\frac{2p}{2p+1}} \tag{by \Cref{fact:trace_PSD_ineq2}}\\
    &\leq d^{\frac{1}{2p}} ( c \|\vec B_t^\perp \|_\op \|\vec B_t^\perp \|_{2p}^{2p} )^{\frac{2p}{2p+1}}
    \numberthis \label{eq:lem:g_scores_reduction_informal}
    \\
    &\leq d^{\frac{1}{2p}} c^{\frac{2p}{2p+1}} (   \|\vec B_t^\perp \|_{2p} \|\vec B_t^\perp \|_{2p}^{2p} )^{\frac{2p}{2p+1}} \tag{by \Cref{fact:normReln}}\\
    &\leq 0.99   \|\vec B_t^\perp \|_{2p}^{2p} = 0.99 \phi_t \;. \tag{since $p \gg\log (d)$ and $c<0.0001$}
\end{align*}
We now explain a couple of key steps used above. First, 
the application of \Cref{fact:trace_PSD_ineq2} uses that $\vec B_{t+1} \preceq \vec B_{t}$: this is by definition of the covariance matrix as $$\frac{1}{2(\E_{X \sim P}[w_t(X)])^2}\E_{X,Y \sim P}[w_t(X)w_t(Y)(X-Y)(X-Y)^\top]$$ which shows that downweighting points can only make $\vec B_t$ the matrix smaller in PSD order (note that $\vec B_t$ has the covariance matrix multiplied by $(\E_{X \sim P}[w_t(X)])^2$).
Second, \Cref{eq:lem:g_scores_reduction_informal} holds by \Cref{lem:g_scores_reduction} below.
This concludes the proof of \Cref{cl:case1_restated}.
\end{proof}

It remains to establish the main inequality (cf.\ \Cref{eq:lem:g_scores_reduction_informal}) that we used above:

\begin{lemma}[Filtering Implication] \label{lem:g_scores_reduction}
    In the context of \Cref{alg:robust_mean_algo}, and assuming that \Cref{cond:deterministic} holds, for every round $t \in [t_{\max}]$ that the algorithm enters \Cref{line:for}, it holds $\tr(\vec M_t^\perp \vec B_{t+1}^\perp \vec M_t^\perp) \leq c' \|\vec B_t^\perp \|_\op \|\vec M_t^\perp \|_\fr^2$ for some $c' < O(1/C)$ (where $C$ is the constant appearing in the algorithm's pseudocode).
\end{lemma}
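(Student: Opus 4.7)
The plan is to rewrite the trace as an expectation under the filtered distribution $P_{t+1}$, split it into inlier and outlier contributions, and bound each using the goodness condition for the inliers and the filtering guarantee combined with the JL approximation of $\vec U_t$ for the outliers. Since Case~1 does not modify the subspace, $\vec \Pi_{\cV_{t+1}^\perp} = \vec \Pi_{\cV_t^\perp}$, and unfolding the definition of $\vec B_{t+1}^\perp$ gives
\[
\tr(\vec M_t^\perp \vec B_{t+1}^\perp \vec M_t^\perp) = (\E_P[w_{t+1}])^2\,\tr((\vec M_t^\perp)^2 \vec \Sigma_{t+1}^\perp) \;-\; (1-C_1\eps)\,\|\vec M_t^\perp\|_\fr^2.
\]
Because $(\vec M_t^\perp)^2$ is PSD, $\mu_{t+1}^\perp$ minimizes $c \mapsto \E_{P_{t+1}}[\|\vec M_t^\perp(X-c)\|_2^2]$, so replacing it by $\mu$ can only increase the expectation; combined with $\E_P[w_{t+1}] \leq 1$ this yields $(\E_P[w_{t+1}])^2 \tr((\vec M_t^\perp)^2 \vec \Sigma_{t+1}^\perp) \leq \E_P[w_{t+1}(X)\|\vec M_t^\perp(X-\mu)\|_2^2]$, which I would then split into $(1-\eps)\E_G[\cdots]$ and $\eps\E_B[\cdots]$.

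For the inlier piece, I would use $\E_G[w_{t+1}(X)(X-\mu)(X-\mu)^\top] = \E_G[w_{t+1}]\,\overline{\vec \Sigma}_{G_{w_{t+1}}} \preceq \overline{\vec \Sigma}_{G_{w_{t+1}}}$ and invoke \Cref{cond:covariance} of \Cref{DefModGoodnessCond} with $\alpha = \eps/\log(1/\eps)$; this is valid since the first half of \Cref{cl:case1_restated} establishes $\E_G[w_{t+1}] \geq 1-\alpha$. The covariance guarantee gives $\overline{\vec \Sigma}_{G_{w_{t+1}}} \preceq (1+O(\eps))\vec I$, so the inlier piece is at most $(1+O(\eps))\|\vec M_t^\perp\|_\fr^2$.

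For the outlier piece, I would recenter via $\|\vec M_t^\perp(X-\mu)\|_2^2 \leq 2\|\vec M_t^\perp(X-\mu_t)\|_2^2 + 2\|\vec M_t^\perp(\mu_t-\mu)\|_2^2$ with $\mu_t = \mu_{P_{w_t}}$. For the first half I would apply the JL approximation (\Cref{cond:deterministic}(iii)) to pointwise replace $\vec M_t^\perp$ by $\vec U_t/\sqrt{k}$ on the support of $P$, then the filtering guarantee \ref{it:filter2} of \Cref{lem:filtering_combined} to get $\eps\E_B[w_{t+1}(X)\|\vec U_t(X-\mu_t)\|_2^2] \lesssim \eps\,\tr(\vec U_t^\top \vec U_t)$, and \Cref{cond:deterministic}(ii) to upper bound $\tr(\vec U_t^\top \vec U_t) \lesssim k\|\vec M_t^\perp\|_\fr^2$; the factors of $k$ cancel, contributing $O(\eps)\|\vec M_t^\perp\|_\fr^2$. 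For the constant shift, $\|\vec M_t^\perp(\mu_t-\mu)\|_2^2 \leq \|\vec M_t^\perp\|_\op^2\|\mu_t-\mu\|_2^2$; applying \Cref{LemCertiAlt} with $\lambda = O(\eps\log^2(1/\eps))$ from the warm start in \Cref{setting:main} gives $\|\mu_t-\mu\|_2 \lesssim \eps\log(1/\eps)$, and using $\|\vec M_t^\perp\|_\op^2 \leq \|\vec M_t^\perp\|_\fr^2$ yields $O(\eps^3\log^2(1/\eps))\|\vec M_t^\perp\|_\fr^2 = O(\eps)\|\vec M_t^\perp\|_\fr^2$ after multiplying by the outer $\eps$.

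Combining the two pieces, the first summand in the initial decomposition is at most $(1+O(\eps))\|\vec M_t^\perp\|_\fr^2$, so subtracting $(1-C_1\eps)\|\vec M_t^\perp\|_\fr^2$ leaves $\tr(\vec M_t^\perp \vec B_{t+1}^\perp \vec M_t^\perp) \leq O((1+C_1)\eps)\|\vec M_t^\perp\|_\fr^2$. Since the algorithm did not return at \Cref{line:termination} during this iteration, $\widehat\lambda_t > C\eps$, which by \Cref{cond:deterministic}(i) upgrades to $\|\vec B_t^\perp\|_\op \geq C\eps/10$; therefore $O(\eps) \leq (O(1)/C)\|\vec B_t^\perp\|_\op$, giving the claimed bound with $c' = O(1/C)$. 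The main obstacle I expect is the constant-shift term $\|\vec M_t^\perp(\mu_t-\mu)\|_2^2$: controlling it cleanly requires simultaneously chaining the warm-start covariance bound, the certificate lemma, and the non-termination lower bound $\|\vec B_t^\perp\|_\op = \Omega(\eps)$, and weakening any of these inputs would spoil the $c' = O(1/C)$ conclusion.
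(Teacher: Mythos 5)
Your proposal is correct and follows essentially the same route as the paper's proof: unfold $\vec B_{t+1}^\perp$, use the variational characterization of the mean to recenter, split into inlier and outlier contributions, control the inliers via \Cref{cond:covariance}, control the outliers via the JL condition together with part \ref{it:filter2} of \Cref{lem:filtering_combined}, and convert the resulting $O(\eps)$ into $O(1/C)\|\vec B_t^\perp\|_\op$ using non-termination at \Cref{line:termination}. The only (harmless) deviations are that you center the inlier term at the true mean $\mu$ rather than at $\mu_t^\perp$ (thereby bypassing \Cref{cl:triangle} and its cross terms), and you bound the shift $\|\mu_t-\mu\|_2$ via the warm-start value $\lambda=O(\eps\log^2(1/\eps))$ instead of $\lambda=\|\vec B_t^\perp\|_\op$ as the paper does; both choices yield the same conclusion.
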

\begin{proof}

First, we note that
\begin{align}
        \vec \Sigma_{t+1}^\perp &= \E_{X \sim P_{t+1}}\left[(\proj_{\cV_{t+1}^\perp}(X)-\mu_{t+1}^\perp)(\proj_{\cV_{t+1}^\perp}(X)-\mu_{t+1}^\perp)^\top \right] \nonumber\\
        \nonumber    &\preceq \E_{X \sim P_{t+1}}\left[(\proj_{\cV_{t+1}^\perp}(X)-\mu_{t}^\perp)(\proj_{\cV_{t+1}^\perp}(X)-\mu_{t}^\perp)^\top \right] \\
        \label{eq:uppBdSigma}    &= \frac{1}{\E_{X \sim P}[w_{t+1}(X)]}\E_{X \sim P}[w_{t+1}(X)(\proj_{\cV_{t+1}^\perp}(X)-\mu_{t}^\perp)(\proj_{\cV_{t+1}^\perp}(X)-\mu_{t}^\perp)^\top ] \;.
    \end{align} 
    
    {To avoid the repetitive notation of $\proj_{\cV_{t+1}^\perp}(X)$ in the remainder of the proof, we introduce the shorthand notation $X^\perp$ to denote the projection of the random variable $X$ to $\cV_{t+1}^\perp$. To avoid confusion for later on, we also note that since we are in the case that the algorithm enters \Cref{line:for}, $\cV_{t+1}^\perp = \cV_{t}^\perp$, i.e., no change is done to the subspace and in this subsection it does not matter which of the two subspaces we use in our notation.}
    
    We decompose $\tr(\vec M_t^\perp \vec B_{t+1}^\perp \vec M_t^\perp)$ into the contribution from inliers and outliers: Using the definition of $\vec B_{t+1}^\perp = \E_{X \sim P}[w_{{t+1}}(X)]^2 \vec \Sigma_{t+1}^\perp - (1-C_1 \eps)  \vec \Pi_{\cV_{t}^\perp}$ and \Cref{eq:uppBdSigma}, we have that

    \begin{align} 
        \tr&(\vec M_t^\perp \vec B_{t+1}^\perp \vec M_t^\perp)\\
        &\leq \E_{X \sim P}[w_{t+1}(X)]\E_{X \sim P}[w_{t+1}(X) \| \vec M_t^\perp (X^\perp - \mu_t^\perp) \|_2^2] - (1-C_1 \eps) \tr(\vec M_t^\perp  \vec \Pi_{\cV_{t}^\perp}  \vec M_t^\perp) \label{eq:q1}\\
        &\leq (1-\eps)\E_{X \sim G}[w_{t+1}(X) \| \vec M_t^\perp (X^\perp - \mu_t^\perp) \|_2^2] + \eps \E_{X \sim B}[w_{t+1}(X) \| \vec M_t^\perp(X^\perp - \mu_t^\perp) \|_2^2]   \label{eq:step_before}\\
        &\qquad - (1-C_1 \eps)\tr((\vec M_t^\perp)^2)  \notag \\      
        &\leq \E_{X \sim G}[w_{t+1}(X) \| \vec U_t (X^\perp - \mu_t^\perp) \|_2^2] + \frac{1.25 \eps}{k} \E_{X \sim B}[w_{t+1}(X) \| \vec U_t(X^\perp - \mu_t^\perp)\|_2^2]  \label{eq:decompose}\\
        &\qquad - (1-C_1 \eps)\|\vec M_t^\perp \|_\fr^2 \;,
    \end{align}
    {where the first inequality uses the definition of $\vec B_t$, \Cref{fact:trace_PSD_Ineq} and \Cref{eq:uppBdSigma}}, for the last term in \Cref{eq:q1} {we used that $\vec \Pi_{\cV_t^\perp} \vec M_t^\perp = \vec M_t^\perp$ which can be shown as follows: recall that using our definitions $\vec M_t^\perp = (\vec B_t^\perp)^p = \left( \vec \Pi_{\cV_t^\perp} \vec B_t \vec \Pi_{\cV_t^\perp} \right)^p$, then since $\vec \Pi_{\cV_t^\perp}$ is idempotent, multiplying that expression from any side does not change it.}
    For \Cref{eq:decompose} we used $1-\eps < 1$ and the Johnson-Lindenstrauss sketch guarantee (condition \ref{cond:jl} of \Cref{cond:deterministic}).

    We start by upper bounding the combined contribution of the first and last term.  We will need our assumption that $G$ (the uniform distribution on inlier samples) satisfies conditions \ref{cond:covariance} and \ref{cond:mean} from \Cref{DefModGoodnessCond} with  $\alpha = \eps/\log(1/\eps)$, in order to apply \Cref{cl:triangle} with $\vec U = \vec M_t^\perp $ and $b=\mu_t^\perp$: \footnote{{Since we are working with the projection of the data to $\cV_{t}^\perp$, this lemma requires that the goodness conditions hold for the projected data, which as we have noted before holds (see \Cref{remark:proj})}. }
    \begin{align}
         &\E_{X \sim G}[w_{t+1}(X) \| \vec M_t^\perp (X^\perp - \mu_t^\perp) \|_2^2] - (1-C_1 \eps)\|\vec M_t^\perp \|_\fr^2 \label{eq:a1}\\
        &\leq   (1+O(\eps)) \|\vec M_t^\perp \|_\fr^2 + \|\vec M_t^\perp(\mu^\perp-\mu_t^\perp) \|_2^2 +   O(\eps) \|\vec M_t^\perp\|_\fr^2 \|\mu^\perp - \mu_t^\perp\|_2- (1-C_1 \eps)\|\vec M_t^\perp \|_\fr^2 \label{eq:a2}\\
        &\leq (1+O(\eps)) \|\vec M_t \|_\fr^2 + \|\vec M_t \|_\fr^2 O(\|\vec B_t^\perp \|_\op \eps) +  O(\eps) \|\vec M_t\|_\fr^2 O\left(\sqrt{\|\vec B_t^\perp \|_\op \eps}\right)- (1-C_1 \eps)\|\vec M_t^\perp \|_\fr^2 \label{eq:a3}\\
        &\leq O(1/C) \| \vec B_t^\perp \|_\op\|\vec M_t \|_\fr^2  \;,\label{eq:a4}
    \end{align}
    where \Cref{eq:a2} uses \Cref{cl:triangle}, \Cref{eq:a3} uses 
    the fact that $\|\vec M_t (x^\perp- \mu_t^\perp)\|_2^2 = \tr(\vec M_t^\top \vec M_t (x^\perp- \mu_t^\perp)(x- \mu_t^\perp)^\top)$ and then the fact   $\tr(\vec A \vec B) \leq \tr(\vec A) \|\vec B\|_\op$ for $\vec A = \vec M_t^\top \vec M_t, \vec B = (x^\perp- \mu_t^\perp)(x- \mu_t^\perp)^\top$ as well as  the certificate lemma (\Cref{LemCertiAlt}) which gives a bound where it can be checked that the dominant term is $O(\sqrt{\eps \|\vec B_t^\perp \|_\op})$, and \Cref{eq:a4} uses that $\eps < O(1/C)\|\vec B_t^\perp \|_\op$ due to \Cref{line:termination} of \Cref{alg:robust_mean_algo} and condition \ref{cond:spectral} of \Cref{cond:deterministic}.

    We now focus on upper bounding the middle term of \Cref{eq:decompose}, where we will use the guarantee of filtering, in particular the second part of \Cref{lem:filtering_combined}.
    By \Cref{lem:check_conditions}, the  deterministic conditions \ref{cond:orth} (pairwise angles close to 90 degrees) and \ref{cond:maxnorm} from \Cref{cond:deterministic} imply that the goodness condition \ref{cond:polynomials} is applicable to the polynomial $\| \vec U_t (x - \mu) \|_2^2$ where $\vec U_t$ is the matrix used in  \Cref{line:U}. In other words, the requirement of \Cref{lem:filtering_combined} is satisfied and the lemma is applicable to our case. We thus obtain that
    \begin{align*}
        \frac{\eps}{k} \E_{X \sim B}[w_{t+1}(X) \| \vec U_t(X^\perp - \mu^\perp) \|_2^2 ] \lesssim \frac{\eps}{k} \|\vec U_t\|_\fr^2
        \lesssim \eps  \|\vec M_t\|_\fr^2
        \lesssim \frac{1}{C}\|\vec B_t^\perp \|_\op \|\vec M_t\|_\fr^2 \;,
    \end{align*}
    where the second inequality uses the Johnson-Lindenstrauss guarantee (condition \ref{cond:jl} of \Cref{cond:deterministic}) and the last inequality uses that 
    $\eps < O(1/C)\|\vec B_t^\perp \|_\op$, which is due to \Cref{line:termination} of \Cref{alg:robust_mean_algo} and Condition \ref{cond:spectral} of \Cref{cond:deterministic}.

\end{proof}

\subsection{Case 2: A Few Large Eigenvalues}
\label{sec:case2-single}

Consider now the alternate case where $q_t$ from \Cref{line:qt} is large.
Then the approach from the previous case  is inapplicable because $\vec M_t^\perp z$'s  will be highly correlated vectors. 
The following result formally proves that this happens only when the top eigenvalue of $\vec M_t^\perp$ contributes significantly to its spectrum.
\begin{restatable}{claim}{SPECTRUM}
\label{cl:spectrum-dominated}
\label{eq:important}
\label{cl:spectrum-dominated-formal}
If $\pr_{z,z' \sim \cN(0,\vec I)}\left[ \frac{| \langle \vec M_t^\perp z,\vec M_t^\perp z'  \rangle | }{\|\vec M_t^\perp z\|_2 \|\vec M_t^\perp z'\|_2} >  \gamma\right] \geq \alpha$, then
$\frac{\|\vec M_t^\perp\|_\op^2}{\tr((\vec M_t^\perp)^2)} \gtrsim  \gamma^2\alpha^5$.
\end{restatable}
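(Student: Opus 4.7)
The plan is to reduce the claim to a scalar calculation via the spectral decomposition of $\vec M := \vec M_t^\perp$, which is symmetric PSD since it is a positive power of the PSD matrix $\vec B_t^\perp$. Writing $\vec M = \sum_i \sigma_i e_i e_i^\top$ with $\sigma_1 \geq \sigma_2 \geq \cdots \geq 0$, and setting $g_i = \langle e_i, z\rangle$, $g_i' = \langle e_i, z'\rangle$ (i.i.d.\ standard Gaussians), we obtain $Y := \|\vec M z\|_2^2 = \sum_i \sigma_i^2 g_i^2$, similarly $Y'$, and $Z := \langle \vec M z, \vec M z'\rangle = \sum_i \sigma_i^2 g_i g_i'$. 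Let $S := \tr(\vec M^2) = \sum_i \sigma_i^2$ and $r := \|\vec M\|_\op^2/S = \sigma_1^2/S$; the hypothesis then reads $\pr[Z^2 > \gamma^2 YY'] \geq \alpha$, and the goal becomes $r \gtrsim \gamma^2\alpha^5$.

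The key step will be to show that $Y$ concentrates around $S$ whenever $r$ is small. Indeed $\E[Y] = S$ and, using $\sigma_i^4 \leq \sigma_1^2 \sigma_i^2$, one has $\Var[Y] = 2\sum_i\sigma_i^4 \leq 2\sigma_1^2 S = 2rS^2$. Chebyshev's inequality then gives $\pr[Y < S/2] \leq 8r$, and likewise for $Y'$, so the event $\cE := \{Y \geq S/2\} \cap \{Y' \geq S/2\}$ satisfies $\pr[\cE] \geq 1 - 16r$. If $r \geq \alpha/32$ then we are done, since $\alpha/32 \gtrsim \gamma^2\alpha^5$ whenever $\gamma, \alpha \in (0,1]$. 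Otherwise $16r < \alpha/2$, so $\pr[\cE] > 1 - \alpha/2$, and intersecting $\cE$ with the hypothesis event $\{Z^2 > \gamma^2 YY'\}$ yields a set of probability at least $\alpha/2$ on which $Z^2 > \gamma^2 YY' \geq \gamma^2 S^2/4$.

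Finally, Markov's inequality applied to $Z^2$, combined with $\E[Z^2] = \sum_i \sigma_i^4 \leq r S^2$, gives $\alpha/2 \leq \pr[Z^2 > \gamma^2 S^2/4] \leq 4 r/\gamma^2$, hence $r \geq \gamma^2\alpha/8$, which is in fact stronger than the claimed bound $\gamma^2\alpha^5$. The main subtlety, and the step worth emphasising, is the use of the tight variance bound $\Var[Y] \leq 2\sigma_1^2 S$ to control $Y$ on an event where $YY'$ is of order $S^2$; the cruder per-coordinate lower bound $Y \geq \sigma_1^2 g_1^2$ would produce a denominator of order $\sigma_1^4$ after squaring, leading to a circular dependence on $r$ under Markov that yields no useful lower bound.
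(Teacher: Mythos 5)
Your proof is correct, and it actually establishes the stronger bound $\|\vec M_t^\perp\|_\op^2/\tr((\vec M_t^\perp)^2) \gtrsim \gamma^2\alpha$ rather than $\gamma^2\alpha^5$. The overall skeleton matches the paper's: both arguments truncate the denominators $\|\vec M_t^\perp z\|_2^2$, $\|\vec M_t^\perp z'\|_2^2$ from below on a high-probability event, then apply Markov to $\langle \vec M_t^\perp z, \vec M_t^\perp z'\rangle^2$ using $\E[\langle \vec M_t^\perp z,\vec M_t^\perp z'\rangle^2] = \|(\vec M_t^\perp)^2\|_\fr^2 \leq \|\vec M_t^\perp\|_\op^2 \tr((\vec M_t^\perp)^2)$. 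The difference lies in how the truncation event is controlled. The paper uses the universal anti-concentration bound $\pr[z^\top \vec A z \geq \beta\tr(\vec A)] \geq 1-\sqrt{e\beta}$ (\Cref{fact:quadratics-approximation}), which forces the threshold down to $\delta\tr((\vec M_t^\perp)^2)$ with $\delta = \Theta(\alpha^2)$ so that the failure probability $2\sqrt{e\delta}$ is dominated by $\alpha$; the $\delta^2$ appearing in the Markov denominator is the source of the extra $\alpha^4$. You instead bound $\Var[\|\vec M_t^\perp z\|_2^2] \leq 2\|\vec M_t^\perp\|_\op^2\tr((\vec M_t^\perp)^2)$ and apply Chebyshev, which is self-referential in the target ratio $r$; the case split ($r \geq \alpha/32$ trivially suffices, otherwise Chebyshev gives concentration at threshold $S/2$) resolves the circularity and keeps the truncation threshold at a constant fraction of the trace. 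Since the claim is only invoked with $\alpha,\gamma = 1/\polylog(d/\eps)$, both bounds suffice for the application, but yours is polynomially sharper. One cosmetic point: you should note that $\gamma \leq 1$ may be assumed without loss of generality (by Cauchy--Schwarz the hypothesis is vacuous otherwise), which you implicitly use in the first case.
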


 Since $\alpha,\gamma$ are $\polylog(d/\eps)$  in our setting (\Cref{line:check_angle}), \Cref{eq:important} reveals that the top eigenvalue of $(\vec M_t^\perp)^2$ is at least $\phi_t/\polylog(d/\eps)$. 
Thus, our algorithm can take a top eigenvector $u$ of $\vec M_t^\perp$ and add it to the subspace of directions to ignore, $\cV_t$, i.e., project all data on the subspace perpendicular to $u$ in the future iterations. Formally, we show that the potential decreases if $u$ has large projection along $\vec M_t^\perp$ in \Cref{cl:case2_claim}.

We begin with a proof sketch for \Cref{cl:spectrum-dominated}:
 Let $q_t$ be the probability from the statement. Pretend for simplicity that $\|\vec M_t^\perp z \|_2^2$ is equal to its expectation $\tr((\vec M_t^\perp)^2)$. Applying Markov's inequality, we see that $\alpha < q_t \leq \gamma^{-2} \E[\langle \vec M_t^\perp z,\vec M_t^\perp z'  \rangle^2]/\tr((\vec M_t^\perp)^2) = \gamma^{-2} \| (\vec M_t^\perp)^2\|_\fr^2/\tr((\vec M_t^\perp)^2)$. Further using the standard fact that $\| \vec A \|_\fr^2 \leq \tr(\vec A) \| \vec A \|_\op$ for a PSD matrix $\vec A$ applied to $\vec M_t^\perp$ completes the proof.
We now provide a formal proof of \Cref{cl:spectrum-dominated} below.
\begin{proof}[Proof of \Cref{cl:spectrum-dominated}]
    For some parameter $\delta>0$, define the following events for random variables $z,z'$: $\cE_1 = \{z: \| \vec M_t^\perp z\|_2^2 > \tr((\vec M_t^\perp)^2) \delta \}$, and $\cE_2 = \{z: \| \vec M_t^\perp z'\|_2^2 > \tr((\vec M_t^\perp)^2) \delta \}$. We have the following series of inequalities:
\begin{align}
    \alpha &\leq \pr_{z,z' \sim \cN(0,\vec I)}\left[ \left\lvert \frac{\langle \vec M_t^\perp z,\vec M_t^\perp z'  \rangle}{\|\vec M_t^\perp z\|_2 \|\vec M_t^\perp z'\|_2} \right\rvert > \gamma \right] \notag \\
    &\leq \pr_{z,z' \sim \cN(0,\vec I)}\left[ \left\lvert \frac{\langle \vec M_t^\perp z,\vec M_t^\perp z'  \rangle}{\|\vec M_t^\perp z\|_2 \|\vec M_t^\perp z'\|_2} \right\rvert > \gamma \wedge \cE_1 \wedge \cE_2\right] + \pr[\overline{\cE_1} \vee \overline{\cE_2}] \notag \\
    &\leq \pr_{z,z' \sim \cN(0,\vec I)}\left[ \left\lvert \frac{\langle \vec M_t^\perp z,\vec M_t^\perp z'  \rangle}{ \tr((\vec M_t^\perp)^2)\delta } \right\rvert > \gamma\wedge \cE_1 \wedge \cE_2\right] + 2 \sqrt{e \delta} \label{eq:unionb}\\
    &\leq \pr_{z,z' \sim \cN(0,\vec I)}\left[ \left\lvert \frac{\langle \vec M_t^\perp z,\vec M_t^\perp z'  \rangle}{ \tr((\vec M_t^\perp)^2)\delta} \right\rvert > \gamma\right] + 2 \sqrt{e \delta} \notag \\
    &\leq \frac{1}{\gamma^2 }\frac{\E_{z,z' \sim \cN(0,\vec I)}[\langle \vec M_t^\perp z,\vec M_t^\perp z'  \rangle^2]}{ \tr((\vec M_t^\perp)^2)^2\delta^2 } + 2 \sqrt{e \delta} \label{eq:markovs}\\
    &\leq \frac{1}{\gamma^2 } \frac{\|(\vec M_t^\perp)^2\|_\fr^2}{ \tr((\vec M_t^\perp)^2)^2 \delta^2}  + 2 \sqrt{e \delta} \;, \label{eq:boundvariance}
\end{align}
where \Cref{eq:unionb} is obtained by a union bound along with the second part of \Cref{fact:quadratics-approximation}, \Cref{eq:markovs} is obtained by squaring and using Markov's inequality, and \Cref{eq:boundvariance} follows by using the first part of \Cref{fact:quadratics-approximation}. 

Picking the value $\delta =  \alpha^2/1000$ and reorganizing \Cref{eq:boundvariance} implies the lower bound $\|(\vec M_t^\perp)^2\|_\fr/\tr((\vec M_t^\perp)^2) \gtrsim \alpha^5 \gamma^2$.
 Combining that with the upper bound $\|\vec A\|_\fr^2 \leq \tr(\vec A) \|\vec A\|_\op$ {for any PSD matrix $\vec A$ (\Cref{fact:frob-op-trace}),} we obtain:
\begin{align}
   \alpha^5 \gamma^2 \lesssim  \frac{\|  (\vec M_t^\perp)^2 \|_\fr}{\tr(  (\vec M_t^\perp)^2)} \leq \sqrt{\frac{\|  (\vec M_t^\perp)^2\|_\op}{\tr( (\vec M_t^\perp)^2)}} \;. \label{eq:sandwich}
\end{align}
\end{proof}

Finally, we now show that the potential decreases a lot when  $u_t$ aligns with $\vec M_t$.
In the following statement, the requirement that $\vec B_t$ is positive semi-definite (PSD) has already been established. As shown in \Cref{cl:case1_restated}, whenever the weights are changed, we have proven that $\vec B_t$ remains PSD.
\begin{restatable}{claim}{LINEARALGEBRA}\label{cl:case2_claim}
\label{cl:case2_claim_restated}
{Assume that $\vec B_t$ is PSD.} Let $V_{t+1} = V_t \cup \{u_t\}$ for some unit vector $u_t \in V_t^\perp$, then
$\phi_{t+1} \leq \phi_t - u_t^\top (\vec M_t^\perp)^2 u_t $.
\end{restatable}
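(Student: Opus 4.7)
The plan is to identify $\vec B_{t+1}^\perp$ with the two-sided conjugation $\vec Q\,\vec B_t^\perp\,\vec Q$, where $\vec Q := \vec I - u_t u_t^\top$ is the orthogonal projection onto $u_t^\perp$, and then use the Lieb-Thirring inequality (\Cref{fact:LT}) together with the identity $\vec Q^{2p} = \vec Q$ to compare the traces of the $2p$-th powers. The key observation that makes this work is that Case~2 of \Cref{alg:robust_mean_algo} only enlarges $V_t$; it does not touch the weights. So $\E_{X \sim P}[w_{t+1}(X)] = \E_{X \sim P}[w_t(X)]$ and $\vec \Sigma_{P_{t+1}} = \vec \Sigma_{P_t}$, and the ``ambient'' matrix $\vec B_t := (\E_{X \sim P}[w_t(X)])^2 \vec \Sigma_{P_t} - (1-C_1\eps)\vec I$ is unchanged in passing from $t$ to $t+1$.

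First I would verify the conjugation identity. Since $u_t \in \cV_t^\perp$ is a unit vector, $\vec \Pi_{\cV_{t+1}^\perp} = \vec \Pi_{\cV_t^\perp} - u_t u_t^\top$, and because $\vec \Pi_{\cV_t^\perp} u_t = u_t$ we can factor this as $\vec \Pi_{\cV_{t+1}^\perp} = \vec Q\,\vec \Pi_{\cV_t^\perp} = \vec \Pi_{\cV_t^\perp}\,\vec Q$. Using also $\vec \Pi_{\cV_{t+1}^\perp}\,\vec I\,\vec \Pi_{\cV_{t+1}^\perp} = \vec \Pi_{\cV_{t+1}^\perp}$, this yields
\[
\vec B_{t+1}^\perp \;=\; \vec \Pi_{\cV_{t+1}^\perp}\,\vec B_t\,\vec \Pi_{\cV_{t+1}^\perp} \;=\; \vec Q\,\vec B_t^\perp\,\vec Q.
\]
Writing $\vec A := \vec B_t^\perp \succeq 0$ (PSD by hypothesis), the fact that $\vec Q$ is an orthogonal projection ($\vec Q^2 = \vec Q$) implies, by induction, $(\vec Q\vec A\vec Q)^{2p} = \vec Q\,(\vec A\vec Q)^{2p}$, and cyclic invariance of the trace gives $\tr((\vec Q\vec A\vec Q)^{2p}) = \tr((\vec A \vec Q)^{2p})$.

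Next I would apply Lieb-Thirring (\Cref{fact:LT}) to the two PSD matrices $\vec A$ and $\vec Q$ to obtain $\tr((\vec A\vec Q)^{2p}) \leq \tr(\vec A^{2p}\vec Q^{2p}) = \tr(\vec A^{2p}\vec Q)$, using $\vec Q^{2p} = \vec Q$ once more. Expanding $\vec Q = \vec I - u_t u_t^\top$ and recalling $\vec A^{2p} = (\vec B_t^\perp)^{2p} = (\vec M_t^\perp)^2$ yields
\[
\tr(\vec A^{2p}\vec Q) \;=\; \tr\bigl((\vec M_t^\perp)^2\bigr) - u_t^\top (\vec M_t^\perp)^2 u_t \;=\; \phi_t - u_t^\top (\vec M_t^\perp)^2 u_t.
\]
Chaining the above equalities and inequalities gives $\phi_{t+1} = \tr((\vec B_{t+1}^\perp)^{2p}) \leq \phi_t - u_t^\top (\vec M_t^\perp)^2 u_t$, as claimed.

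The only step that requires any care is Step~1 (the conjugation identity), which hinges on the two structural facts that Case~2 leaves the weights alone and that $u_t$ already lives in $\cV_t^\perp$, so that $\vec \Pi_{\cV_{t+1}^\perp}$ factors through $\vec \Pi_{\cV_t^\perp}$. After that identity is in hand, the rest is essentially a textbook application of Lieb-Thirring combined with the idempotence of the rank-one projection $\vec Q$.
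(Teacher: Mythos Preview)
Your proof is correct and follows essentially the same route as the paper: both establish the conjugation identity $\vec B_{t+1}^\perp = \vec Q\,\vec B_t^\perp\,\vec Q$ (the paper writes $\vec \Delta_t$ for your $\vec Q$), then use idempotence of $\vec Q$ and cyclicity of trace to reduce to $\tr((\vec B_t^\perp \vec Q)^{2p})$, apply Lieb--Thirring, and expand $\vec Q = \vec I - u_t u_t^\top$. The only cosmetic difference is that you derive the conjugation identity by factoring through an ``ambient'' matrix $\vec B_t$ and the projector factorization $\vec \Pi_{\cV_{t+1}^\perp} = \vec Q\,\vec \Pi_{\cV_t^\perp}$, whereas the paper verifies it separately for $\vec \Pi_{t}^\perp$ and $\vec \Sigma_t^\perp$ and combines by linearity.
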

We first provide a brief proof sketch:
    Going from step $t$ to step $(t+1)$, the effect of adding $u$ in the subspace 
    $V_{t+1}$ is that $\vec B_{t+1}^\perp = \vec \Delta \vec B_t^\perp \vec \Delta$ for the projection matrix $\vec \Delta = \vec I - u u^\top$. 
    Then by properties of the trace operator and Lieb-Thirring inequality:
    $\phi_{t+1} = \tr((\vec B_{t+1}^\perp)^{2p}) = \tr(( \vec \Delta \vec B_{t }^\perp \vec \Delta)^{2p}) = \tr( (\vec B_{t }^\perp \vec \Delta)^{2p}  ) \leq \tr( (\vec B_{t }^\perp)^{2p} \vec \Delta^{2p}  )  = \tr((\vec B_{t }^\perp)^{2p}\vec \Delta) = \tr((\vec B_{t }^\perp)^{2p})  - u^\top (\vec B_t^\perp)^{2p} u = \phi_t - u^\top (\vec M_t^\perp)^2 u$.
We now present the formal proof below.

\begin{proof}
    We first examine the effect that projecting everything to the space perpendicular to $u_t$ has on the matrix $\vec B_t^\perp$.

 \begin{claim}\label{cl:linear_algebra}
 Define $\vec \Delta_t := \vec I -u_t u_t^\top$. We have the following:
 \begin{enumerate}[label=(\roman*)]
    \item $\vec \Pi_{t+1}^\perp = \vec \Delta_t \vec \Pi_t^\perp \vec \Delta_t $. \label{it:la1}
    \item $\vec \Sigma_{t+1}^\perp = \vec \Delta_t \vec \Sigma_t^\perp \vec \Delta_t $.\label{it:la2}
     \item  $\vec B_{t+1}^\perp = \vec \Delta_t \vec B_t^\perp \vec \Delta_t$.\label{it:la3}
 \end{enumerate}
     \end{claim}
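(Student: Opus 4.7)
The plan is to prove the three parts in order, exploiting two simple observations: (a) in Case 2 the weights are not updated, so $w_{t+1} = w_t$ and therefore $P_{t+1} = P_t$ (only the subspace changes); and (b) since $u_t$ is chosen in $\cV_t^\perp$ (it is obtained by a power iteration on $\vec B_t^\perp = \vec \Pi_t^\perp \vec B_t \vec \Pi_t^\perp$ and then returned as a unit vector), we have $\vec \Pi_t^\perp u_t = u_t$, hence $u_t u_t^\top \vec \Pi_t^\perp = \vec \Pi_t^\perp u_t u_t^\top = u_t u_t^\top$.

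For part \ref{it:la1}, I will combine observation (b) with the orthogonal decomposition $\cV_{t+1} = \cV_t \oplus \mathrm{span}(u_t)$, which gives $\vec \Pi_{t+1}^\perp = \vec \Pi_t^\perp - u_t u_t^\top$. Expanding $\vec \Delta_t \vec \Pi_t^\perp \vec \Delta_t = (\vec I - u_t u_t^\top)\vec \Pi_t^\perp(\vec I - u_t u_t^\top)$ and using (b) twice shows this equals $\vec \Pi_t^\perp - u_t u_t^\top$, matching the formula for $\vec \Pi_{t+1}^\perp$. As a useful by-product, I will record the single-sided identity $\vec \Delta_t \vec \Pi_t^\perp = \vec \Pi_t^\perp \vec \Delta_t = \vec \Pi_{t+1}^\perp$, which streamlines the next step.

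For part \ref{it:la2}, I will write $\vec \Sigma_t^\perp = \vec \Pi_t^\perp \vec C \vec \Pi_t^\perp$ where $\vec C := \E_{X \sim P_t}[(X-\mu_{P_t})(X-\mu_{P_t})^\top]$ is the (un-projected) covariance of $P_t$, using that $\proj_{\cV_t^\perp}(X) - \mu_t^\perp = \vec \Pi_t^\perp(X - \mu_{P_t})$. By observation (a), $P_{t+1} = P_t$, so $\vec \Sigma_{t+1}^\perp = \vec \Pi_{t+1}^\perp \vec C \vec \Pi_{t+1}^\perp$ with the \emph{same} $\vec C$. Using the single-sided identity from part (i) then gives $\vec \Pi_{t+1}^\perp \vec C \vec \Pi_{t+1}^\perp = \vec \Delta_t \vec \Pi_t^\perp \vec C \vec \Pi_t^\perp \vec \Delta_t = \vec \Delta_t \vec \Sigma_t^\perp \vec \Delta_t$, as required.

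Part \ref{it:la3} is then immediate: by definition $\vec B_t^\perp = (\E_{X \sim P}[w_t(X)])^2 \vec \Sigma_t^\perp - (1-C_1\eps) \vec \Pi_t^\perp$, and since the scalar prefactor does not change in Case 2, parts (i) and (ii) let me factor $\vec \Delta_t$ on both sides linearly. The only subtlety worth highlighting is making sure $\mu_{t+1}^\perp$ equals $\vec \Pi_{t+1}^\perp \mu_{P_t}$ (and not the re-centered mean with respect to a different distribution), which is where the ``weights unchanged'' fact is essential; beyond that the argument is purely algebraic and there is no real obstacle.
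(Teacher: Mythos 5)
Your proof is correct and follows essentially the same route as the paper's: verify the projector identity, propagate it through the covariance using that the weights (and hence $P_t$) are unchanged in Case~2, and conclude for $\vec B_{t+1}^\perp$ by linearity. The only cosmetic difference is that you use the rank-one update $\vec \Pi_{t+1}^\perp = \vec \Pi_t^\perp - u_t u_t^\top$ together with $\vec \Pi_t^\perp u_t = u_t$, whereas the paper constructs an explicit orthonormal basis of $\cV_t^\perp$ starting with $u_t$; your version is slightly more streamlined but equivalent.
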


 \begin{proof}
We prove each of these separately.

\paragraph{Proof of \ref{it:la1}} We first recall the definition of the orthogonal projection matrix: Given any basis $v_{t,1},\ldots,v_{t,r} \in \R^d$ of the subspace $\cV_{t}^\perp$ ($r$ is the dimension of the subspace), if $\vec A_t$ is the $d \times r$ matrix that has the $v_{t,i}$'s as its columns, then the matrix that performs the orthogonal projection to $\cV_{t}^\perp$ is $\vec \Pi_{t}^\perp := \vec A_t (\vec A_t^\top \vec A_t)^{-1} \vec A_t^\top$. Consider the basis that is orthonormal and starts with $v_{t,1}=u_t$ (the vector mentioned in the claim's statement; {such a basis exists since $u_t \in \cV_t^\perp$}). Then, $\vec \Pi_{t}^\perp = \vec A_t \vec A_t^\top = u_t u_t^\top + \sum_{i=2}^r v_{t,i} v_{t,i}^\top$. Now, let $\vec \Pi_{t+1}^\perp$ be the matrix that projects to the subspace $\cV_{t+1}^\perp := \{ x \in \cV_{t}^\perp  : x \perp u_t\}$. Since $u_t$ itself belongs to $\cV_{t}^\perp$, an orthonormal basis of $\cV_{t+1}^\perp$ is $v_{t,2},\ldots, v_{t,r}$, and thus if we define $\vec A_{t+1} := [v_{t,2},\ldots, v_{t,r}]$ then the orthogonal projection matrix for $\cV_{t+1}^\perp$ is $\vec \Pi_{t+1}^\perp = \vec A_{t+1} \vec A_{t+1}^\top = \sum_{i=2}^r v_{t,i} v_{t,i}^\top$. We claim that this is the same as $\vec \Delta_t \vec \Pi_t^\perp \vec \Delta_t $: Indeed, $\vec \Delta_t \vec \Pi_t^\perp \vec \Delta_t = (\vec \Delta_t \vec A_t )(\vec \Delta_t \vec A_t )^\top$, and $\vec \Delta_t \vec A_t  = (\vec I -u_t u_t^\top) \vec A_t$ is the matrix with zeroes in the first column and $v_{t,2},\ldots, v_{t,r}$ in the rest of them, thus $(\vec \Delta_t \vec A_t )(\vec \Delta_t \vec A_t )^\top = \sum_{i=2}^r v_{t,i} v_{t,i}^\top$.

\paragraph{Proof of \ref{it:la2}}
Let $\vec A_t$ be exactly as in the previous paragraph. For the specific orthonormal basis $v_{t,1},\ldots,v_{t,r} \in \R^d$ of $\cV_{t}^\perp$ (where $r$ is the dimension of that subspace) where the first vector has been chosen to be $v_{t,1} = u_t$, we define the $d \times r$ matrix $\vec A_t = [v_{t,1}, v_{t,2} , \ldots , v_{t,r}] = [u_t, v_{t,2},\ldots , v_{t,r}]$. Also, let $\mu_t$ be the mean of $P_t$. Then, 
\begin{align}
    \vec \Delta_t \vec \Sigma_t^\perp \vec \Delta_t &= \E_{X \sim P_t}\left[ \vec \Delta_t (\proj_{\cV_{t}^\perp }(X-\mu_t)) (\proj_{\cV_{t}^\perp }(X-\mu_t))^\top \vec \Delta_t \right] \notag \\
    &= \E_{X \sim P_t}\left[ \vec \Delta_t \vec \Pi_{t}^\perp (X-\mu_t) (X-\mu_t)^\top \vec \Pi_{t}^\perp \vec \Delta_t \right] \notag \\
    &= \E_{X \sim P_t}\left[ \vec \Delta_t \vec A_t \vec A_t ^\top (X-\mu_t) (X-\mu_t)^\top\vec A_t \vec A_t ^\top \vec \Delta_t \right] \\
    &= \E_{X \sim P_t}\left[   \vec A_{t+1} \vec A_{t+1}^\top (X-\mu_t) (X-\mu_t)^\top \vec A_{t+1} \vec A_{t+1}^\top   \right] \notag \\
    &= \E_{X \sim P_t}\left[  (\proj_{\cV_{t+1}^\perp }(X-\mu_t)) (\proj_{\cV_{t+1}^\perp }(X-\mu_t))^\top \right] \label{eq:matrixineq}
\end{align}
for the last line we used the expressions for $\vec A_t, \vec A_{t+1}$ that we gave in the previous paragraph to conclude that $ \vec A_{t+1} \vec A_{t+1}^\top = \vec \Pi_{\cV_{t+1}^\perp} $. 

\paragraph{Proof of \ref{it:la3}} The proof follows by the definition of $\vec B_{t+1}^\perp = \left(\E_{X \sim P}[w_t(X)]\right)^2 \vec \Sigma_{t+1}^\perp - (1-C_1\eps) \vec \Pi_{\mathcal{V}_{t+1}^\perp}$, the previous two claims,  and linearity.

 \end{proof}

We now show how to complete the proof given \Cref{cl:linear_algebra}.

\begin{align*}
    \phi_{t+1} &:= \tr\left( (\vec B_{t+1}^\perp)^{2p}  \right) \\
    &= \tr\left( (\vec \Delta_t \vec B_{t}^\perp  \vec \Delta_t )^{2p}  \right) \tag{by \Cref{cl:linear_algebra}}\\
    &= \tr\left( (\vec B_{t}^\perp  \vec \Delta_t )^{2p}  \right) \tag{by properties of $\vec \Delta_t$ and trace; see below}\\
    &\leq \tr\left( (\vec B_{t}^\perp)^{2p}  \vec \Delta_t ^{2p}  \right) \tag{by Lieb-Thirring inequality (\Cref{fact:LT}} \\
    &= \tr\left( (\vec B_{t}^\perp)^{2p}  \vec \Delta_t  \right) \tag{$\vec \Delta_t:=\vec I- u_t u_t^\top$ is idempotent} \\
&= \tr\left( (\vec B_{t}^\perp)^{2p}  \left( \vec I -u_t u_t^\top \right)   \right) \tag{by definition of $\vec \Delta_t$}\\
&= \tr\left( (\vec B_{t}^\perp)^{2p} \right) - \tr\left( (\vec B_{t}^\perp)^{2p} u_t u_t^\top  \right) \\
&= \tr\left( (\vec B_{t}^\perp)^{2p} \right) - u_t^\top (\vec B_{t}^\perp)^{2p} u_t \,,\tag{by cyclic property of trace}\\
 \end{align*}
 where for the third step one can expand out $(\vec \Delta_t \vec B_{t}^\perp  \vec \Delta_t )^{2p} =\vec \Delta_t \vec B_{t}^\perp  \vec \Delta_t^2 \vec B_{t}^\perp  \vec \Delta_t^2 \cdots \vec \Delta_t^2 \vec B_{t}^\perp  \vec \Delta_t$ and observe that: (i) $ \vec \Delta_t^2 = \vec \Delta_t$ since $\vec \Delta_t := \vec I -u_t u_t^\top$, and (ii)   using the cyclic property of trace, after applying trace to both sides we can move the first $ \vec \Delta_t$ to the end.
\end{proof}

 \Cref{eq:important,cl:case2_claim} imply that if we use $u$ equal to the top eigenvector of $\vec M_t^\perp$, then we get the desired result that the potential function decreases multiplicatively. However, the algorithm cannot compute exact eigenvectors in almost-linear time. Fortunately, power iteration in \Cref{line:draw_vector} computes a fine enough approximation such that $u^\top (\vec M_t^\perp)^2 u $ is a constant fraction of $\|\vec M_t^\perp\|_\op^2$. 
Thus, we obtain $\phi_{t+1} \leq (1-1/\polylog(d/\eps))\phi_t$ as desired.
Formally, we have the following series of inequalities:
for every iteration $t$ that \Cref{line:else} of the algorithm is executed, the following holds:
\begin{align*}
    \phi_{t} - \phi_{t+1} &\geq  u_t^\top (\vec B_t^\perp)^{2p} u_t  \tag{by \Cref{cl:case2_claim_restated}}\\
    &\geq ( u_t^\top \vec B_t^\perp u_t  )^{2p} \tag{by \Cref{fact:psd-power}}\\
    &= \left( \left(1-\frac{1}{p}\right)\| \vec B_t^\perp \|_\op \right)^{2p} \tag{by condition \ref{cond:better_power} of \Cref{cond:deterministic}}\\
    &\gtrsim \| \vec B_t^\perp \|_\op^{2p} = \| \vec M_t^\perp \|_\op^2 \\
    &\gtrsim \frac{\tr( (\vec M_t^\perp)^2 )}{\polylog(d/\eps)} = \frac{\phi_t}{\polylog(d/\eps)}\,. \tag{by \Cref{cl:spectrum-dominated} with $\alpha,\gamma = \polylog(d/\eps)$ }
\end{align*}

\subsection{Combining Everything Together}\label{appendix:put_together}

We now describe in more detail the algorithm that achieves \Cref{thm:robust_mean} and put together the previous parts to derive the guarantee and runtime of the main theorem.

The final algorithm is outlined in \Cref{alg:robust_mean_algo_full}.

\begin{algorithm}[h!]
	\caption{Full Algorithm}
	\label{alg:robust_mean_algo_full}
	\begin{algorithmic}[1]
	\State \textbf{Input}: $\eps \in (0,1/2)$, sets $S_1,S_2 \subset \R^d$ datasets, where for each $i=1,2$ the uniform distribution on $S_i$ is of the form $(1-\eps) G_i + \eps B_i$ for some $G_i$ satisfying \Cref{DefModGoodnessCond} with appropriate parameters.
    \State \textbf{Output}: A vector in $\R^d$.
    \vspace{10pt}

    \State Run the algorithm from \Cref{lem:preprocessing} on $S_1$ and let $w(x)$ for $x \in S_1$ be the weights that it outputs. Let $P$ be the distribution assigning mass $w(x)/\sum_{x \in S_1} w(x)$ to every $x \in S_1$. \label{line:preprocess}
    \State Run \Cref{alg:robust_mean_algo} on input $P$. Let $\cV_1$ and $\mu_1$ be the subspace and the vector that it outputs.  \label{line:mainalg}
    
    \State Project all points from $S_2$ to $\cV_1$, i.e., $S_2 \gets \{ \proj_{\cV_1}(x) : x \in S_2 \}$.

    \State Run \Cref{alg:robust_mean_algo_inefficient} on $S_2$. Let $\mu_2$ be the vector that it outputs.

    \State \textbf{return} $\mu_1 + \mu_2$.
  
	\end{algorithmic}
\end{algorithm}

\begin{proof}[Proof of \Cref{thm:robust_mean}]

    The set $S_1$ is comprised of i.i.d.\ points from the $\eps$-corrupted version of $\cN(\mu,\vec I)$ and its size is chosen to be a large enough multiple of $\eps^{-2}(d + \log(1/\delta))\polylog(d/\eps)$, so that by \Cref{LemFiniteSample} the uniform distribution on $S_1$ is $(\eps,\alpha,k)$-good with $\alpha=\eps/\log(1/\eps)$ and $k=\polylog(d/\eps)$.
    The preprocessing step of \Cref{line:preprocess} ensures that the eigenvalues of the covariance matrix are at most $1+\eps \polylog(1/\eps)$, which is part of \Cref{setting:main} and utilized in the analysis of the next steps. The preprocessing step of \Cref{line:preprocess} runs in nearly-linear time $O(n d \polylog(d/\eps))$.

    The step of \Cref{line:mainalg} has been analyzed in \Cref{sec:case1-many} and \Cref{sec:case2-single} using the potential argument. We have demonstrated that in each iteration of the algorithm, the potential function $\phi_t := \tr((\vec M_t^\perp)^2)$ undergoes multiplicative reduction: $\phi_{t+1} \leq (1-\polylog(\eps/d)) \phi_t$. Initially, the potential is na\"ively bounded by $\phi_0 = \poly(d/\eps)^{\log d}$. With just $\polylog(d/\eps)$ iterations, the potential decreases to $\phi_t \leq \eps^{\log d}$, resulting in $\|\vec B_t^\perp \|_\op = O(\eps)$ and leading to the termination of \Cref{alg:robust_mean_algo} with $\| \proj_{\cV_1^\perp}(\mu_1 - \mu) \|_2 = O(\eps)$.
    
    Each step of the algorithm can be implemented in $O(n d \polylog(d/\eps))$ time, for example, computing $(\vec B^\perp)^{\log d} z$ by iteratively multiplying $z$ by $\vec B^\perp$. Utilizing the special form of covariance matrices enables each multiplication  to be calculated efficiently in $O(n d)$ time.

    Since \Cref{alg:robust_mean_algo} only adds one direction per iteration to the subspace $\cV_1$, the dimension of $\cV_1$ will be $d' = O(\polylog(d/\eps))$. We now use a smaller sized second dataset $S_2$ that consists of $\eps^{-2}(d'+\log(1/\delta))\polylog(d/\eps)$ samples and run \Cref{alg:robust_mean_algo_inefficient} on $S_2$. The runtime of that algorithm is given in \Cref{lem:basic-soda-alg}, and since we are using $d' = O(\polylog(d/\eps))$ for the dimension, the runtime of that algorithm is $O(\eps^{-2-r}\polylog(d/\eps))$. Finally, since the output $\mu_2$ approximates $\mu$ in the subspace $\cV_1$ within error $O(\eps)$, combining $\mu_1$ and $\mu_2$ yields an estimate with $\| (\mu_1 + \mu_2) - \mu \|_2 \leq \| \proj_{\cV_1^\perp}(\mu_1 - \mu) \|_2 + \| \proj_{\cV_1}(\mu_2 - \mu) \|_2  = O(\eps)$.

\end{proof}

\section{Robust Linear Regression: Optimal Error In Almost-Linear Time}\label{sec:regression}
In this section, we focus our attention to linear regression in Huber contamination model and sketch the  proof of \Cref{thm:regression}.
Recall that each inlier sample $(X,y)$ 
is distributed according to the Gaussian linear regression model of \Cref{def:glr}.

\begin{algorithm}
  \caption{Robust Linear Regression Under Huber Contamination} 
    \label{alg:robust_regression}  
  \begin{algorithmic}[1]  
    \Statex  
    \Statex \textbf{Input}: $\eps > 0$, multiset $S$ of $\R^{d+1}$ containing pairs of the form $(x,y)$ with $x \in \R^d$, $y \in \R$.
    \Statex \textbf{Output}: A vector in $\R^d$.

    \State Set $\widehat{\sigma}_y^2 \gets \textsc{TrimmedMean}(\{y: (x,y){\in}S\})$.
    \Comment{$\widehat{\sigma}_y^2= \sigma_y^2(1 \pm O(\eps \log(1/\eps)))$ }
    \State Draw $a \in \R$ uniformly at random from $[- \widehat{\sigma}_y,   \widehat{\sigma}_y]$. \label{line:c_selection}
    \State Define the interval $I := [a-\ell, a+ \ell]$ for $\ell:=\widehat{\sigma}_y/\log(1/\eps)$. \label{line:chooseI} \Comment{Random choice of I}
    \State $S' \gets \{ x \; : \; (x,y) \in S, y \in I \}$.\label{line:reject}
    \Comment{Simulating $G_I$}
            \State $\widehat{\beta}_I \gets \textsc{RobustMean} (S', 10\eps)$  \Comment{Algorithm from \Cref{thm:robust_mean}} \label{line:mean_estimation}
    \State \Return $\widehat{\beta} = (\widehat{\sigma}_y^2/a)\widehat{\beta}_I$.\label{line:run_mean} \Comment{Rescaling}

  \end{algorithmic}  
\end{algorithm}

Our main technical insight is a novel reduction to robust mean estimation.
Existing reductions in the literature (see, e.g., \cite{BalDLS17}) rely on the fact that $\E[yX]= \E[XX^\top \beta + X\eta]= \beta$.
However, it is unclear how to estimate the mean of $yX$ up to error $O(\epsilon)$ under Huber contamination because $yX$ is very far from Gaussian; e.g., $yX$ is subexponential as opposed to (sub)-Gaussian, and it is not even symmetric around $\beta$, implying median might not even work along a direction.
Thus, existing approaches using this methodology have error $\Omega(\sigma \eps \log(1/\eps))$.
In contrast, our reduction reduces to robust (almost)-Gaussian mean estimation by using the conditional distribution of $X$ given  $y = a$ under  \Cref{def:glr}.

\begin{restatable}[Conditional Distribution]{claim}{CONDITIONALFORMULA} \label{def:conditional_distr}
Let $(X, y) \in \R^{d+1}$ follow \Cref{def:glr}.
 Given $a \in \R$, denote by $G_a$ the distribution of $X$ given $y = a$. Similarly, given an interval $I \subset \mathbb{R}$, let $G_I$ represent the conditional distribution of $X$ given $y \in I$.
 Define $\sigma_y^2:= \sigma^2 + \|\beta\|_2^2$.
 Then,
     $G_a = \normal\big(\frac{a}{\sigma_y^2}\beta, \vec I_d - \frac{1}{\sigma_y^2}\beta\beta^\top  \big)$ and 
      $G_I  = \frac{1}{\pr[y \in I]} \int_{I} \phi(a' ; 0, \sigma_y^2)  \normal\big(\frac{a'}{\sigma_y^2}\beta, I_d - \frac{1}{\sigma_y^2}\beta\beta^\top  \big)  \d a'$,
 where $\phi(z ; 0, \nu^2)$ denotes the pdf of the $\cN(0,\nu^2)$ at $z \in \R$.
\end{restatable}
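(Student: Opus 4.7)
The plan is to compute the joint distribution of $(X,y)$, observe that it is Gaussian, and then invoke the standard formula for conditional Gaussians to obtain the formula for $G_a$. The identity for $G_I$ then follows by integrating $G_a$ against the conditional density of $y$ given $y \in I$.

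More concretely, I would first note that since $X \sim \cN(0,\vec I_d)$ and $\eta \sim \cN(0,\sigma^2)$ are independent, the vector $(X^\top, y)^\top \in \R^{d+1}$ is a linear function of the Gaussian vector $(X^\top, \eta)^\top$, hence jointly Gaussian with zero mean. Computing cross-moments gives $\E[X X^\top] = \vec I_d$, $\E[X y] = \E[X(\beta^\top X + \eta)] = \beta$, and $\E[y^2] = \beta^\top \beta + \sigma^2 = \sigma_y^2$. Thus the joint covariance is the $(d+1) \times (d+1)$ block matrix $\begin{pmatrix} \vec I_d & \beta \\ \beta^\top & \sigma_y^2 \end{pmatrix}$, and in particular $y \sim \cN(0,\sigma_y^2)$ marginally.

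Next I would apply the standard conditional Gaussian formula: for jointly Gaussian $(X,y)$ with the above covariance, the distribution of $X$ conditional on $y = a$ is Gaussian with mean $\E[X] + \mathrm{Cov}(X,y)\,\Var(y)^{-1}(a - \E[y]) = (1/\sigma_y^2)\,a\,\beta$ and covariance $\vec I_d - \mathrm{Cov}(X,y)\Var(y)^{-1}\mathrm{Cov}(y,X) = \vec I_d - (1/\sigma_y^2)\beta\beta^\top$. This yields the stated formula for $G_a$. The formula for $G_I$ then follows from the law of total probability: for any measurable set $A \subseteq \R^d$,
\begin{equation*}
\pr[X \in A \mid y \in I] = \frac{\pr[X \in A,\, y \in I]}{\pr[y \in I]} = \frac{1}{\pr[y \in I]}\int_I \phi(a';0,\sigma_y^2)\, \pr[X \in A \mid y = a']\, \d a',
\end{equation*}
which is exactly the claimed mixture representation after substituting the formula for $G_{a'}$.

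There is no serious obstacle here: this is a textbook manipulation of Gaussians. The only mild care needed is in citing or briefly justifying the conditional Gaussian formula (which requires $\Var(y) > 0$; this holds since $\sigma_y^2 \geq \sigma^2 > 0$ by assumption) and in verifying that the degenerate covariance $\vec I_d - (1/\sigma_y^2)\beta\beta^\top$ is PSD, which is immediate since $\|\beta\|_2^2 \leq \sigma_y^2$. The expression for $G_I$ is just an application of Fubini/Tonelli to decompose the joint density of $(X,y)$ along the $y$-coordinate.
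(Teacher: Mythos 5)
Your proposal is correct and matches the paper's proof: both compute the joint Gaussian covariance $\begin{pmatrix} \vec I_d & \beta \\ \beta^\top & \sigma_y^2 \end{pmatrix}$, apply the standard conditional Gaussian formula to get $G_a$, and obtain $G_I$ from the law of total probability. Your additional remarks on positive definiteness of $\Var(y)$ and PSD-ness of the conditional covariance are fine but not needed beyond what the paper states.
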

\begin{proof}
    Using \Cref{fact:conditional} with $ y_1 = X, y_2 = y, \mu_1 = \mu_2 = 0, \vec \Sigma_{11} = \vec I_d, \vec \Sigma_{12} = \beta, \vec \Sigma_{21} = \beta^\top, \vec \Sigma_{22} = \sigma_y^2 =\sigma^2 + \|\beta \|_2^2$.
    \begin{fact} \label{fact:conditional}
    If $\left[\begin{matrix} y_1 \\ y_2 \end{matrix} \right] \sim \cN\left(\left[\begin{matrix} \mu_1 \\ \mu_2 \end{matrix} \right] , \left[ \begin{matrix} \vec \Sigma_{11} &\vec \Sigma_{12} \\ \vec \Sigma_{21} &\vec \Sigma_{22} \end{matrix} \right] \right)$, then $y_1 | y_2 \sim \cN(\bar{\mu}, \bar{\vec \Sigma})$, with $\bar{\mu} = \mu_1 + \vec \Sigma_{12}\vec \Sigma_{22}^{-1}(y_2 - \mu_2)$ and $\vec \Sigma_{11} - \vec \Sigma_{12} \vec \Sigma_{22}^{-1} \vec \Sigma_{21}$.

  \end{fact}
  For the distribution $G_I$, the claim follows by definition of conditional probability and law of total probability
    \begin{align*}
        p_{X|y \in I}(x)    
        =\frac{ \int_{a' \in I}p_{X | y = a'}(x)p_y(a') \d a'}{\pr[y \in I]} \;,
    \end{align*}
    and then use the expression for the pdf of $G_{a'}$ that we showed earlier.
\end{proof}
Since the conditional distribution $G_a$ above is just a Gaussian whose mean is scaled version of $\beta$ (and roughly isotropic covariance), 
one would ideally like to get ($O(\eps)$-corrupted) samples from $G_a$, then use the robust mean estimator from \Cref{thm:robust_mean}, and finally scale the result back appropriately.

Obviously, it is impossible to simulate $G_a$ using $G$ since there is zero probability over the inliers that $y$ is exactly equal to $a$ in our dataset. 
Instead, we can simulate the conditional distribution given $y \in I$ for some interval $I$ centered around $a$. 
The length $\ell$ of $I$ needs to be carefully selected. 
On the one hand, $I$ should be sufficiently narrow to ensure that the mean of $G_I$ closely approximates the mean of $G_a$.
On the other hand, it should not be excessively narrow to avoid rejecting a significant portion of our samples.
As we show later, an interval of length $\Theta(\sigma/\log(1/\eps))$ meets both of these criteria as long as $\|\beta\|_2 \lesssim \sigma \eps\log(1/\eps)$.
Also, a set of i.i.d.\ samples from $G_I$ for such a small interval satisfy the goodness condition with the correct parameters because $G_I$ will be roughly isotropic.

Finally, it is critical to ensure that the fraction of outliers in the simulated samples from $G_a$ remains $O(\eps)$.
In fact, this may not be true depending on the choice of the interval $I$.
For example, if the adversary knows $I$, then outliers could all happen to have their labels inside $I$, which would cause the contamination rate to blow up.
To overcome this, we randomize the selection of $I$: Since the distributions of outliers is independent of $I$, choosing the center of $I$ to be random means that the bad event of the previous sentence is now a small probability event.

\subsection{Proof Sketch of \Cref{thm:regression}} \label{sec:proof_regression}

    The algorithm is given in \Cref{alg:robust_regression}. 
    We now give the proof sketch of \Cref{thm:regression}:
    We claim that the following three events hold simultaneously with probability at least $0.9$ ($\widehat{\beta}_0, \widehat{\beta}_1$ and $\widehat{\sigma}_y^2$ are defined in \Cref{alg:robust_regression}):
    \begin{align}
    \label{eq:threeevents}
        \text{(i) $|\widehat{\sigma}_y^2 - \sigma_y^2| \lesssim  \sigma_y^2 \eps \log(1/\eps)$} \quad\quad
        \text{(ii) $|a| >  0.0001 \widehat{\sigma}_y$} \quad\quad
        \text{(iii) $\| \widehat{\beta}_I - \mu_{G_I} \|_2 \lesssim \eps$.}
    \end{align}
     Given the above, we first show that $\widehat{\beta}$ is $O(\sigma \eps)$-close to $\beta$.
    For simplicity, let us assume momentarily that $\widehat{\sigma}_y^2 = \sigma_y^2$. Then by triangle inequality $\|\widehat \beta - \beta \|_2 \leq \frac{\sigma_y^2}{|a|}(\| \widehat{\beta}_I - \mu_{G_I} \|_2 + \| \mu_{G_I} -  \frac{a}{\sigma_y^2} \beta \|_2) \lesssim \sigma_y \eps + \sigma_y\| \mu_{G_I} -  \frac{a}{\sigma_y^2} \beta \|_2$. The second term arises from the fact that we use $G_I$ instead of $G_a$ in the algorithm (since simulating samples from $G_a$ is algorithmically impossible).
    Using the expression for $G_I$ from \Cref{def:conditional_distr}, we can upper bound this term by $\sigma_y \| \mu_{G_I} -  \frac{a}{\sigma_y^2} \beta \|_2 \leq \frac{\ell \| \beta\|_2}{\sigma_y}$.
    Since $\|\beta\|_2 \lesssim \sigma \eps \log(1/\eps)$ and $\ell := \sigma_y/\log(1/\eps)$, this expression  is overall  $O(\sigma \eps)$.

     Now, we show why the events in \eqref{eq:threeevents} above hold.  The first is simply the guarantee of the trimmed mean on a subexponential distribution (cf.\ \Cref{appendix:prelims}) and the second holds because $a \sim \cU([-\widehat{\sigma}_y , \widehat{\sigma}_y])$.
    
    The third is significantly more involved.
        We first claim that \Cref{line:reject} approximately preserves (within log factors) the size of the dataset and maintains the ratio of inliers to outliers (within constants).
    Recall that we use $G$ for the inlier distribution
    and $B$ for the distribution of outliers.
    
    \begin{restatable}{lemma}{CLRATIOS}\label{cl:ratios}
    Consider the context of \Cref{thm:regression} and \Cref{alg:robust_regression}. First, for every possible choice of $I$ that can be made in \Cref{line:chooseI}, $\pr_{(X,y) \sim G }[y \in I] \gtrsim \ell/\sigma_y$.
     Second, $\E_{I} \left[\pr_{ (X,y) \sim B  }[y \in I|I] \right] \lesssim \ell/ \sigma_y$, where the outer expectation is taken with respect to the random choice of the center of $I$. 
    \end{restatable}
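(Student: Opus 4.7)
\textbf{Proof Proposal for \Cref{cl:ratios}.} The plan is to treat the two parts separately, both via elementary density/Fubini arguments. Throughout, we use that $\widehat{\sigma}_y$ concentrates around $\sigma_y$: since the trimmed mean on a subexponential random variable yields $|\widehat{\sigma}_y^2 - \sigma_y^2| \lesssim \sigma_y^2 \eps \log(1/\eps)$ (event (i) from \eqref{eq:threeevents}), for $\eps$ sufficiently small we have $\widehat{\sigma}_y \in [0.99\,\sigma_y,\,1.01\,\sigma_y]$. Also recall $\ell = \widehat{\sigma}_y/\log(1/\eps) \ll \sigma_y$.

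For the first claim, under $G$ the marginal of $y$ is $\cN(0,\sigma_y^2)$ (since $y = \beta^\top X + \eta$ with $X \sim \cN(0,\vec I_d)$ and independent $\eta \sim \cN(0,\sigma^2)$, so $y \sim \cN(0, \|\beta\|_2^2 + \sigma^2) = \cN(0,\sigma_y^2)$). Any interval $I = [a-\ell, a+\ell]$ produced in \Cref{line:chooseI} is contained in $[-\widehat{\sigma}_y - \ell,\ \widehat{\sigma}_y + \ell] \subset [-2\sigma_y, 2\sigma_y]$. The density $\phi(y'; 0, \sigma_y^2)$ of $\cN(0,\sigma_y^2)$ on this range is bounded below by $\phi(2\sigma_y; 0, \sigma_y^2) = \frac{1}{\sqrt{2\pi}\,\sigma_y}e^{-2} \gtrsim 1/\sigma_y$. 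Integrating this lower bound over the length-$2\ell$ interval $I$ gives $\pr_{(X,y)\sim G}[y \in I] \gtrsim \ell/\sigma_y$, as required.

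For the second claim, the key observation is that the center $a$ is drawn uniformly from $[-\widehat{\sigma}_y, \widehat{\sigma}_y]$, \emph{independently} of the (adversarially chosen but fixed) outlier distribution $B$. Applying Fubini's theorem to swap the two expectations yields
\[
\E_I\bigl[\pr_{(X,y)\sim B}[y \in I \mid I]\bigr] = \E_{(X,y)\sim B}\bigl[\pr_a\bigl[a \in [y-\ell,\,y+\ell]\bigr]\bigr].
\]
For any fixed $y \in \R$, the interval $[y-\ell, y+\ell]$ has length $2\ell$, so its intersection with $[-\widehat{\sigma}_y, \widehat{\sigma}_y]$ has length at most $\min(2\ell, 2\widehat{\sigma}_y) \le 2\ell$, giving the uniform bound $\pr_a[\,\cdot\,] \le \ell/\widehat{\sigma}_y$. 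Taking expectation over $B$ preserves this bound, and using $\widehat{\sigma}_y \asymp \sigma_y$ gives the desired $\lesssim \ell/\sigma_y$.

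Neither step is technically hard; the only subtlety is to be explicit that the randomness in $a$ is independent of $B$ (since $B$ is specified by the Huber adversary before the algorithm draws $a$), which is exactly what allows the Fubini swap in the second part. The first part just needs the observation that $I$ is contained in a bounded multiple of $\sigma_y$, where the Gaussian density is $\Theta(1/\sigma_y)$.
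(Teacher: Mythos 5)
Your proposal is correct and follows essentially the same route as the paper: a pointwise Gaussian density lower bound of $\Theta(1/\sigma_y)$ on the interval for the inlier claim, and a Fubini swap exploiting the independence of the random center $a$ from $B$ for the outlier claim (the paper adds an intermediate conditioning on $y \in [-\widehat{\sigma}_y-\ell, \widehat{\sigma}_y+\ell]$ before swapping, which your direct bound $\pr_a[a \in [y-\ell,y+\ell]] \le \ell/\widehat{\sigma}_y$ renders unnecessary). No gaps.
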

        We first give a brief proof sketch:     For inliers, $y \sim \cN(0,\sigma_y^2)$ and thus the fraction of inliers in $S$ that belong to $S'$ is at least ${\Omega}(\ell/\sigma_y)$ since $I \subset [-2 \sigma_y, 2 \sigma_y]$ with length $\ell$.
         For outliers, we may assume without loss of generality that $y \in [-  \sigma_y-\ell,  \sigma_y+\ell]$ always (since otherwise  $y \not\in I$).
         Since $I$ is independent of $(X,y)$, we may treat $(X,y)$ as fixed and only $I$ as random. Thus, the probability of $y \in I$ is at most the ratio of the length of $I$ to the length of $[-  \sigma_y-\ell,  \sigma_y+\ell]$, which is at most $O(\ell/\sigma_y)$ as claimed.

\begin{proof}
We start with the first claim.
    \begin{align*}
        \Pr_{(X,y) \sim G}[y \in I] &=  \Pr_{y  \sim \cN(0,\sigma_y^2) }[y \in I] \\
        &= \int_{y \in \R} \1(y \in I) \phi(y; 0,\sigma_y^2) \d y \tag{$\phi(y ; 0,\sigma_y^2)$ denotes the pdf of $\cN(0,\sigma_y^2)$}\\
        &\geq \int_{y \in [-1.1 \sigma_y,  1.1 \sigma_y]} \1(y \in I) \phi(y; 0,\sigma_y^2)  \d y \\
        &\geq (0.2/ \sigma_y) \int_{y \in [-1.1 \sigma_y,  1.1 \sigma_y]} \1(y \in I)   \d y \\
        &\geq 0.2 \ell/ \sigma_y \;,
    \end{align*}
where we  the second line uses that  $I \subset [-1.1 \sigma_y,  1.1 \sigma_y]$ because its center belongs in $[-\widehat \sigma_y, \widehat \sigma_y]$ and its length is $\ell = \widehat \sigma_y/\log(1/\eps)$, and  $\widehat \sigma_y$ is very close to $\sigma_y$ (\Cref{it:varest}), and $\eps \ll 1$. In the third line we used that the pdf of $\cN(0,\sigma_y^2)$ is pointwise bigger than a small multiple of $1/\sigma_y$ in the range $[-1.1 \sigma_y^2,  1.1 \sigma_y^2]$.

    We now prove the second claim. First,
    \begin{align}
        \pr_{(X,y) \sim B}[y \in I] &\leq \Pr_{(X,y) \sim B}\left[y \in I, y \in [- \widehat{\sigma}_y- \ell,  \widehat{\sigma}_y + \ell]  \right] + \Pr_{(X,y) \sim B}\left[y \in I, y \not\in [-\widehat{\sigma}_y- \ell,  \widehat{\sigma}_y + \ell] \right] \\
        &\leq \Pr_{(X,y) \sim B}\left[y \in I, y \in [- \widehat{\sigma}_y - \ell,  \widehat{\sigma}_y +\ell]\right] \\
        &\leq \Pr_{(X,y) \sim B}\left[y \in I \; | \;  y \in [- \widehat{\sigma}_y - \ell,  \widehat{\sigma}_y +\ell]\right]\;. \label{eq:somthing}
    \end{align}
    Now let $B'$ be the conditional distribution $B$ given $y \in [- \widehat{\sigma}_y - \ell,  \widehat{\sigma}_y +\ell]$ for saving space. Recall that the interval $I$ is itself a random interval. If we take expectation with respect to $I$ of both sides of \Cref{eq:somthing} we have that
    \begin{align*}
        \pr_{I, (X,y)}[y \in I] &\leq   \E_I \left[ \E_{(x,y) \sim B'}[ \1(y \in I) ] \right]
        =      \E_{(x,y) \sim B'}\left[ \E_I[\1(y \in I)] \right]  \\
        &\leq    \E_{(x,y) \sim B'}\left[ \frac{\ell}{2.2 \widehat{\sigma}_y} \right] 
        \leq \frac{\ell}{2.2 \widehat{\sigma}_y} \lesssim \frac{\ell}{\sigma_y} \;.
    \end{align*}
    where we changed the order of expectations, and used that given a fixed point $y \in [- \widehat{\sigma}_y - \ell,  \widehat{\sigma}_y +\ell]$, the probability that this $y$ is being hit by our random interval $I$ is at most the length $\ell$ of $I$ divided by the length of $[- \widehat{\sigma}_y - \ell,  \widehat{\sigma}_y +\ell]$ (which as we have shown before is subset of $[-1.1 \sigma_y,  1.1 \sigma_y ]$. In the last inequality we relate $\widehat{\sigma}_y$ and $\sigma_y$ once more using \Cref{it:varest}.
\end{proof}

    Observe that  the set $S' = \{ (x,y) : (x,y) \in S, y \in I \}$ in \Cref{line:reject} consists of i.i.d.\ points from $(1-\eps_I) G_I + \eps_I B_I$, where $G_I$ is the conditional distribution from \Cref{def:conditional_distr}, $B_I$ is the conditional distribution of the outliers, and
        $\eps_I := \frac{\eps \Pr_{(X,y) \sim B}[y \in I]}{(1-\eps) \Pr_{(X,y) \sim D}[y \in I] + \eps \Pr_{(X,y) \sim B}[y \in I] }
        \leq \frac{\eps \Pr_{(X,y) \sim B}[y \in I]}{(1-\eps) \Pr_{(X,y) \sim D}[y \in I] } \;.$
    By \Cref{cl:ratios},  $\E_{I}[\eps_I] = O(\eps)$ and thus by Markov's inequality, with high constant probability we will have that $\eps_I = O(\eps)$. Finally, it remains to show that $S'$ satisfies the goodness conditions as dictated by \Cref{thm:robust_mean}. This requires technical effort and is  deferred to \Cref{subsec:regression-show_goodness}.
    \qedhere

\section{Discussion}

In this paper, we provided fast algorithms for mean estimation and linear regression that achieve optimal error under Huber contamination model for isotropic Gaussian inlier distributions.
Several open problems and avenues for improvement remain.
First, the sample complexity of our linear regression algorithm is multiplicative in $\log(1/\delta)$, where $\delta$ is the failure probability, as opposed to additive in the information-theoretic sample complexity~\cite{CheGR16}. 
More broadly, it is an open problem to design algorithms that achieve similar optimal guarantees for other fundamental tasks: robust principal component analysis, sparse mean estimation, and covariance estimation.
In particular, the best known algorithm for the covariance estimation (or mean estimation with unknown covariance) achieving the optimal error runs in quasi-polynomial time~\cite{DKKLMS18-soda}.

\newpage
\printbibliography

\newpage
\appendix
\section*{Appendix}

\section{Additional Preliminaries}\label{appendix:prelims}

{In this section, we state the results that will be used in the proofs later.
\Cref{subsec:linear-alg-facts,subsec:conc-meas-facts} contain facts pertaining to linear algebra and concentration of measure, respectively.
\Cref{subsec:robust-stats-old} records some results from both folklore and prior works in robust statistics that are needed in this paper.
\Cref{subsec:goodness-cond} focuses on the goodness condition and proves the samples complexity and key properties of good sets.
\Cref{subsec:filtering} formally states the guarantee of the multi-directional filtering procedure. Finally, \Cref{subsec:soda-alg} presents the improved runtime guarantee of the algorithm in \cite{DKKLMS18-soda}.
}

\subsection{Concentration of Measure Facts}
\label{subsec:conc-meas-facts}

\begin{definition}[Sub-Gaussian and Sub-gamma Random Variables]\label{def:orlich}
  A one-dimensional random variable $Y$ is sub-Gaussian if $\snorm{\psi_2}{Y} := \sup_{p\geq 1} p^{-1/2}\E\left[|Y|^p\right]$ is finite. We say that $\snorm{\psi_2}{Y}$ is the sub-Gaussian norm of $Y$. A random vector $X$ in $\R^d$ is sub-Gaussian if for every $v \in \cS^{d-1}$, $\snorm{\psi_2}{v^\top X}$ is finite. The sub-Gaussian norm of the vector is defined to be 
  \begin{align*}
    \snorm{\psi_2}{X} := \sup_{v \in \cS^{d-1}} \snorm{\psi_2}{v^\top X}.
  \end{align*}
    We call a centered one-dimensional random variable $Y$ a $(\nu,\alpha)_+$ sub-gamma if $\log\left(\E[\exp(\lambda Y)]\right) \leq \nu^2\lambda^2/2$ for all $0 < \lambda\leq 1/\alpha $.

\end{definition}

\begin{lemma}[Properties of Sub-gamma Random Variables~\cite{Wainwright19,BouLM13,ZhaChe21}]
\label{lem:sub-exp}
The class of sub-gamma random variables satisfy the following:
\begin{enumerate}
    \item If $Y$ is a centered $(\nu,\alpha)_+$ sub-gamma random variable, then with probability $1-\delta$,
    $Y \lesssim \nu \sqrt{\log(1/\delta)} + \alpha \log(1/\delta)$.
    \item    If $Y$ is a centered random variable satisfying that for all $\delta\in(0,1)$, $Y \leq \nu \sqrt{\log(1/\delta)} + \alpha \log(1/\delta)$, then $Y$ is $(\nu',\alpha')_+$ sub-gamma with $\nu' \lesssim \nu + \alpha$ and $\alpha' \lesssim \alpha$.
\item    Let $Y_1,\dots,Y_k$ be $k$ centered independent $(\nu,\alpha)_+$ sub-gamma random variables.
    Then $\sum_{i=1}^k Y_i $ is a $(\nu \sqrt{k},\alpha)_+$ sub-gamma random variable.

\end{enumerate}

\end{lemma}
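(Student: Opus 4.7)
For part (1), I would use the standard Chernoff/Cram\'er--Chernoff argument. Since $\log \E[\exp(\lambda Y)] \leq \nu^2\lambda^2/2$ for $\lambda \in (0, 1/\alpha]$, Markov's inequality gives $\Pr[Y \geq t] \leq \exp(-\lambda t + \nu^2\lambda^2/2)$ for any such $\lambda$. I would then optimize over $\lambda$ by splitting into two regimes: if $t/\nu^2 \leq 1/\alpha$, the unconstrained optimizer $\lambda = t/\nu^2$ is admissible and gives the sub-Gaussian tail $\exp(-t^2/(2\nu^2))$; otherwise I take the boundary value $\lambda = 1/\alpha$, which yields the exponential tail $\exp(-t/(2\alpha))$ (after absorbing the $\nu^2/(2\alpha^2)$ term into a constant using that in this regime $t \geq \nu^2/\alpha$). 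Setting $t = c(\nu\sqrt{\log(1/\delta)} + \alpha \log(1/\delta))$ and tuning $c$ gives the claimed bound.

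For part (3), I would apply the MGF definition directly together with independence. For any $\lambda \in (0,1/\alpha]$,
\begin{equation*}
\log \E\!\left[\exp\!\bigl(\lambda \textstyle\sum_{i=1}^k Y_i\bigr)\right] \;=\; \sum_{i=1}^k \log \E[\exp(\lambda Y_i)] \;\leq\; k\cdot \frac{\nu^2\lambda^2}{2} \;=\; \frac{(\sqrt{k}\,\nu)^2 \lambda^2}{2},
\end{equation*}
which is exactly the $(\nu\sqrt{k}, \alpha)_+$ sub-gamma MGF bound.

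For part (2), the converse, I plan to go through moments. Setting $\delta = e^{-s}$ in the hypothesis yields $\Pr[Y \geq \nu\sqrt{s} + \alpha s] \leq e^{-s}$, which (after inverting and adjusting constants) gives $\Pr[Y_+ \geq t] \leq 2\exp(-\min(t^2/(c\nu^2),\, t/(c\alpha)))$ for an absolute $c$. Integrating this against $pt^{p-1}\,dt$ produces $\E[Y_+^p] \lesssim (\nu\sqrt{p})^p + (\alpha p)^p$. I would then expand $\E[\exp(\lambda Y)] = 1 + \sum_{k\geq 2} \lambda^k \E[Y^k]/k!$ (using centering to kill the linear term) and split $Y^k = Y_+^k + (-1)^k Y_-^k$; for the positive part the moment bound above is in force, while for the negative part one uses that $\E[Y] = 0$ combined with the positive tail bound to control $\E[Y_-^k]$ (for instance via $\E[Y_-] = \E[Y_+]$ and H\"older/Jensen to bootstrap from $\E[Y_+^k]$). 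Summing the resulting series for $\lambda \leq 1/(c'\alpha)$ yields $\log \E[\exp(\lambda Y)] \leq (\nu')^2 \lambda^2/2$ with $\nu' \lesssim \nu + \alpha$, completing the sub-gamma certification.

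The main obstacle will be part (2): the tail bound is one-sided, so I cannot directly invoke symmetry. The cleanest workaround is what I described above -- extract moment bounds on $Y_+$, then use centering $\E[Y] = 0$ to transfer information to $Y_-$ and thereby bound $\E[e^{\lambda Y}]$ via its Taylor series. An alternative, which may give cleaner constants, is the representation $\E[e^{\lambda Y}] - 1 - \lambda\E[Y] = \lambda^2 \int_{-\infty}^{\infty} e^{\lambda t}\,\min(\Pr[Y \geq t], \Pr[Y \leq t])\,dt$ type identity; but since all three parts are standard facts whose proofs appear in the cited references \cite{Wainwright19,BouLM13}, I would simply refer to those works for the detailed constants and only sketch the argument as above.
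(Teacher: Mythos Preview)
The paper does not prove this lemma; it simply cites \cite{Wainwright19,BouLM13,ZhaChe21}. So there is no in-paper proof to compare against, and your closing remark---that one would ultimately defer to those references---is precisely what the paper does.

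Your arguments for parts (1) and (3) are correct and standard.

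For part (2), however, your sketch has a real gap. You propose to control $\E[Y_-^k]$ for $k\ge 2$ from $\E[Y_-]=\E[Y_+]$ via ``H\"older/Jensen,'' but neither inequality manufactures higher moments from a first-moment bound: Jensen goes the wrong direction ($\E[Y_-]^k\le \E[Y_-^k]$), and H\"older would require an auxiliary moment you do not have. In fact, part (2) as literally stated is false without a further hypothesis. Take $Y=M$ with probability $1-p$ and $Y=-n$ with probability $c/n^3$ for $n\ge 1$, with $M$ chosen so that $\E[Y]=0$. Then $Y\le M$ almost surely, so the upper-tail hypothesis holds with $\nu=0$ and $\alpha=M/\log\!\bigl(1/(1-p)\bigr)$, yet $\E[Y_-^2]=\infty$ and a short computation gives $\E[e^{\lambda Y}]-1\asymp \lambda^2\log(1/\lambda)$ as $\lambda\downarrow 0$, which cannot be dominated by $(\nu')^2\lambda^2/2$ for any fixed $\nu'$. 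The paper's sole use of part (2) (inside the proof of \Cref{LemFiniteSample}) avoids this because there $Y=\tau_{\vec U}(X)-\E[\tau_{\vec U}(X)]$ with $\tau_{\vec U}\ge 0$, so $Y$ is bounded below; under $Y\ge -L$ one has $\E[(e^{\lambda Y}-1-\lambda Y)\1(Y\le 0)]\le \lambda^2 L^2/2$ and the rest of your outline goes through. If you want a self-contained argument, either add a bounded-below assumption explicitly, or point to the specific theorem in the cited references rather than sketch a bootstrap that cannot close.
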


\begin{fact}[Hanson-Wright Inequality]\label{fact:hw}
    Let $X \sim \cN(0, \vec I)$
    Then, for every $t\geq 0$:
    \begin{align*}
        \pr[|X^\top \vec A X - \E[X^\top \vec A X]| > t] \leq 2 \exp \left( -0.1  \min\left( \frac{t^2}{K^4 \|\vec A \|_\fr^2},  \frac{t}{K^2 \|\vec A \|_\op} \right)\right) \;.
    \end{align*}
\end{fact}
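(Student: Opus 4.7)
The plan is to reduce $X^\top \vec A X$ to a weighted sum of independent centered $\chi^2_1$ variables via the spectral decomposition of $\vec A$, and then apply the Chernoff method with a sharp moment generating function estimate.

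Since $X^\top \vec A X = X^\top \bigl(\tfrac{\vec A + \vec A^\top}{2}\bigr) X$, I may assume $\vec A$ is symmetric. Diagonalizing $\vec A = \vec U \vec D \vec U^\top$ with $\vec D = \mathrm{diag}(\lambda_1,\dots,\lambda_d)$ and using the rotational invariance of $\cN(0,\vec I)$, I have $X^\top \vec A X \stackrel{d}{=} \sum_i \lambda_i Y_i^2$ for $Y_i \sim \cN(0,1)$ i.i.d. Since $\E[X^\top \vec A X] = \tr(\vec A) = \sum_i \lambda_i$, it suffices to control $Z := \sum_i \lambda_i (Y_i^2-1)$, noting the key identifications $\sum_i \lambda_i^2 = \|\vec A\|_\fr^2$ and $\max_i |\lambda_i| = \|\vec A\|_\op$.

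Next I would bound the MGF. For $|u|\leq 1/4$, the identity $\E[e^{u(Y^2-1)}] = e^{-u}/\sqrt{1-2u}$ together with an elementary Taylor expansion gives $\log \E[\exp(u(Y^2-1))] \leq u^2$. Hence for $|s| \leq 1/(4\|\vec A\|_\op)$ every $|s\lambda_i| \leq 1/4$, and by independence $\log \E[\exp(sZ)] \leq s^2 \|\vec A\|_\fr^2$. The Chernoff inequality yields $\pr[Z > t] \leq \exp(-st + s^2 \|\vec A\|_\fr^2)$, and choosing $s = \min\bigl(t/(2\|\vec A\|_\fr^2),\,1/(4\|\vec A\|_\op)\bigr)$ produces the upper-tail bound $\exp(-c\min(t^2/\|\vec A\|_\fr^2,\,t/\|\vec A\|_\op))$ after splitting into the two cases according to which term of the minimum is active. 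The lower tail follows identically, since the MGF bound holds for $u$ of either sign, and a union bound over the two tails produces the leading factor of $2$.

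The only subtle point is the appearance of the sub-Gaussian constant $K$ in the inequality: for $X \sim \cN(0,\vec I)$ the sub-Gaussian norm is an absolute constant, so $K$ is absorbed into the numerical constant $0.1$ in the exponent, and the Gaussian argument above gives the claim as stated. If one wanted the general sub-Gaussian version of Hanson-Wright (which is likely the reason the statement is written with explicit $K$'s), the main obstacle is that off-diagonal terms $\sum_{i\neq j} a_{ij} X_i X_j$ no longer decouple under diagonalization; the standard remedy is a decoupling step that replaces one copy of $X$ by an independent $X'$ and then conditions, after which the diagonal contribution is handled by Bernstein for sub-exponential variables and the off-diagonal contribution by a conditional sub-Gaussian estimate in the spirit of Rudelson-Vershynin.
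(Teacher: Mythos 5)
The paper states this Hanson--Wright inequality as a citation-level \emph{fact} and gives no proof, so there is no in-paper argument to compare against; your job is therefore to supply a self-contained derivation, and the one you give is correct for the Gaussian case as stated. The route — symmetrize $\vec A$, diagonalize, use rotational invariance to reduce to $Z=\sum_i\lambda_i(Y_i^2-1)$, identify $\sum_i\lambda_i^2=\|\vec A\|_\fr^2$ and $\max_i|\lambda_i|=\|\vec A\|_\op$, bound the MGF, and run Chernoff with the truncated optimizer — is the standard sub-exponential/Bernstein argument and it goes through. You are also right that the $K$'s in the statement are vestigial: for $\cN(0,\vec I)$ the sub-Gaussian norm is an absolute constant, so they are absorbed into the $0.1$, and no decoupling is needed because the quadratic form genuinely does diagonalize into independent terms.

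One small quantitative slip: from $\log\E[e^{u(Y^2-1)}]=-u-\tfrac12\log(1-2u)=\sum_{k\ge2}(2u)^k/(2k)$, the bound on $|u|\le 1/4$ is $\le 2u^2$ rather than $u^2$ (the claim $\le u^2$ holds only on a smaller interval). This propagates to $\log\E[e^{sZ}]\le 2s^2\|\vec A\|_\fr^2$ and, after optimizing $s$, to a tail constant of $1/8$ in the exponent, which still dominates the $0.1$ in the statement — so the conclusion is unaffected, but you should either carry the factor $2$ or shrink the interval to, say, $|u|\le 1/8$ to make the chain of inequalities literally correct.
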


\begin{fact}[VC inequality]\label{thm:vc}
  Let $\cF$ be a class of Boolean functions with finite VC dimension $\mathrm{VC}(\cF)$ and let a probability distribution $D$ over the domain of these functions. For a set $S$ of $n$ independent samples from $D$
  \begin{align*}
    \sup_{f \in \cF} \abs[\Big]{\pr_{X \sim S}[f(X)] - \pr_{X \sim D}[f(X)]} \lesssim \sqrt{\frac{\mathrm{VC}(\cF)}{n}} + \sqrt{\frac{\log(1/\tau)}{n} } \;,
  \end{align*}
  with probability at least $1-\tau$.
\end{fact}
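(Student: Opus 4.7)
The statement is the classical Vapnik-Chervonenkis uniform convergence inequality, and the plan is to follow the standard two-step route: bound the expectation of the supremum via symmetrization and the Sauer-Shelah lemma, then concentrate the supremum around its expectation via a bounded-differences argument.

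First I would set up notation. Let $\Phi(S) := \sup_{f \in \cF} |\pr_{X \sim S}[f(X)] - \pr_{X \sim D}[f(X)]|$, where $S = (X_1,\ldots,X_n)$ are i.i.d.\ from $D$. The goal is to show $\Phi(S) \lesssim \sqrt{\mathrm{VC}(\cF)/n} + \sqrt{\log(1/\tau)/n}$ with probability $1-\tau$. I would split this into (i) bounding $\E[\Phi(S)] \lesssim \sqrt{\mathrm{VC}(\cF)/n}$ and (ii) showing $\Phi(S) - \E[\Phi(S)] \lesssim \sqrt{\log(1/\tau)/n}$ with probability $1-\tau$.

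For step (i), I would use the symmetrization trick. Introduce an independent ghost sample $S' = (X_1',\ldots,X_n')$ of size $n$ from $D$. Then $\pr_D[f] = \E_{S'}[\pr_{S'}[f]]$, so by Jensen,
\begin{equation*}
\E[\Phi(S)] \leq \E_{S,S'}\Big[\sup_f \tfrac{1}{n}\big|\textstyle\sum_i (f(X_i) - f(X_i'))\big|\Big].
\end{equation*}
Since $X_i, X_i'$ are exchangeable, introducing i.i.d.\ Rademacher signs $\sigma_i \in \{\pm 1\}$ independent of everything, the right-hand side equals $\E_{S,S',\sigma}[\sup_f \tfrac{1}{n}|\sum_i \sigma_i(f(X_i) - f(X_i'))|]$, which is at most $2\,\E_{S,\sigma}[\sup_f \tfrac{1}{n}|\sum_i \sigma_i f(X_i)|]$. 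Condition on $S \cup S'$: the restriction of $\cF$ to these $2n$ points is a finite family of vectors in $\{0,1\}^{2n}$ of size at most $(2en/\mathrm{VC}(\cF))^{\mathrm{VC}(\cF)}$ by the Sauer-Shelah lemma. A Massart-style finite class bound (a standard consequence of Hoeffding applied to each element in the finite class and a union bound via the sub-Gaussian maximal inequality) then gives that this conditional Rademacher average is $O(\sqrt{\mathrm{VC}(\cF)\log(en/\mathrm{VC}(\cF))/n})$; a slightly more careful chaining argument (Dudley's integral) removes the logarithmic factor and yields the clean bound $O(\sqrt{\mathrm{VC}(\cF)/n})$.

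For step (ii), I would apply McDiarmid's bounded-differences inequality to the function $(X_1,\ldots,X_n)\mapsto\Phi(S)$. Swapping a single $X_i$ for a fresh $X_i'$ changes $\pr_{X\sim S}[f]$ by at most $1/n$ for every $f$, hence changes the supremum by at most $1/n$. McDiarmid then gives $\pr[\Phi(S) - \E[\Phi(S)] > t] \leq \exp(-2nt^2)$, so with probability $1-\tau$ we have $\Phi(S) \leq \E[\Phi(S)] + \sqrt{\log(1/\tau)/(2n)}$. Combining with step (i) proves the stated inequality.

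The main obstacle is the chaining refinement needed to remove the $\sqrt{\log(en/\mathrm{VC}(\cF))}$ factor from the naive Massart bound; if one is content with that logarithmic loss, the proof is essentially Sauer-Shelah plus Hoeffding plus McDiarmid, but as written the fact asks for the sharp $\sqrt{\mathrm{VC}/n}$ rate, which requires Dudley's entropy integral (or an equivalent $L_2$-bracketing argument). Since this is a textbook result, I would simply cite the standard reference (e.g., \cite{Wainwright19} or the classical Vapnik-Chervonenkis papers) for this step rather than reproduce the chaining.
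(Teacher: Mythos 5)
The paper states this as a standard \emph{fact} and does not prove it, so there is no in-paper argument to compare against; your outline is the correct textbook proof (symmetrization with a ghost sample and Rademacher signs, Sauer--Shelah to control the restricted class, a finite-class/chaining bound for the expected supremum, and McDiarmid for the deviation term). Your one caveat is accurately placed: the clean $\sqrt{\mathrm{VC}(\cF)/n}$ rate without the $\sqrt{\log(en/\mathrm{VC}(\cF))}$ factor does require Dudley's entropy integral combined with Haussler's uniform packing bound for VC classes (the plain Massart finite-class bound does not suffice), and citing a standard reference for that step is the appropriate resolution.
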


\subsection{Existing Robust Algorithms: Trimmed Mean and \cite{diakonikolas2022streaming}}
\label{subsec:robust-stats-old}
{In this section, we state the guarantees of the trimmed mean and the fast robust mean estimation algorithm from \cite{diakonikolas2022streaming}.
}

We begin with the guarantee of the trimmed mean. The following holds even in the strong contamination model where an adversary can edit $\eps$-fraction of the samples arbitrarily. {We will need it to estimate the (unknown) variance of the labels in our linear regression algorithm.}

\begin{fact}[{Univariate Trimmed Mean}, see, e.g., \cite{LugMen21-trim}] \label{fact:trimmed_mean_sub_exp}
  Let $\eps_0$ be a sufficiently small absolute constant. 
  There is an algorithm ({trimmed mean}) that, for every $\eps \in (0,\eps_0)$ and a univariate distribution $D$ that has $\psi_1$-norm at most $\sigma^2$ (cf.\ \Cref{def:orlich}), given a set of {$n  \gg \log(1/\delta)/(\eps\log^2(1/\eps))$ samples} from $D$ with corruption at rate $\eps$, outputs a $\widehat{\mu}$ such that $|\widehat{\mu} - \E_{X \sim D}[X]| \lesssim\sigma^2\eps\log(1/\eps)$ with probability at least $1-\delta$.
\end{fact}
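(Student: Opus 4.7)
The algorithm sorts the $n$ (possibly corrupted) samples, discards the smallest $2\eps n$ and the largest $2\eps n$ of them, and outputs the empirical mean of the remaining $(1-4\eps)n$ points. I will argue correctness in four coupled steps: (i) empirical CDF concentration on the \emph{inliers}; (ii) the trimmed window is contained in a narrow population quantile band; (iii) the bias of the population truncated mean is $O(\sigma^2\eps\log(1/\eps))$; (iv) averaging inside this bounded band concentrates at the required rate.

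\textbf{Step (i) and (ii): locating the trimmed window.} Since the class of half-lines has VC dimension $1$, \Cref{thm:vc} gives that with probability $\geq 1-\delta$, simultaneously for every threshold $t\in\R$, the empirical CDF of the clean samples differs from the CDF of $D$ by at most $O(\sqrt{\log(1/\delta)/n}) \ll \eps$, where the last inequality uses $n \gg \log(1/\delta)/(\eps\log^2(1/\eps))$. Because at most $\eps n$ of the $n$ samples are corrupted, the order statistic at rank $2\eps n$ (resp.\ $(1-2\eps)n$) corresponds to an inlier whose rank among inliers is in the range $[(\eps \pm o(\eps))n,\, (3\eps \pm o(\eps))n]$. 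Combined with the DKW-style bound, this inlier lies between the $(\eps/2)$- and $(4\eps)$-quantiles of $D$. Under the $\psi_1$-norm bound $\|X-\E X\|_{\psi_1}\lesssim \sigma^2$, Markov's inequality applied to $\exp(\lambda|X-\E X|)$ with $\lambda \asymp 1/\sigma^2$ shows that the $\eps/2$ and $1-\eps/2$ quantiles of $D$ lie within $R:=C\sigma^2\log(1/\eps)$ of $\mu := \E_{X\sim D}[X]$. Hence, deterministically on this event, the surviving points all lie in the window $[\mu-R,\mu+R]$.

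\textbf{Step (iii): bias of the population truncated mean.} Let $\mu_{\mathrm{trim}}$ be the expectation of $X$ restricted to the surviving empirical quantile range; I will compare it to $\mu$. The bias equals the signed contribution of the trimmed tails, which by sub-exponential tail decay satisfies
\begin{align*}
|\mu_{\mathrm{trim}} - \mu| \;\lesssim\; \int_{R}^{\infty} \Pr[|X-\mu|>t]\,\d t \;+\; R\cdot \eps \;\lesssim\; \sigma^2 e^{-R/\sigma^2} + \sigma^2\eps\log(1/\eps) \;\lesssim\; \sigma^2\eps\log(1/\eps),
\end{align*}
where the $R\cdot\eps$ term accounts for the mass between the $\eps/2$ and $4\eps$ quantiles. (One also checks that the asymmetric choice of the two quantiles costs at most $O(R\eps)=O(\sigma^2\eps\log(1/\eps))$.)

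\textbf{Step (iv): empirical concentration and outlier contribution.} On the good event from Step~(i), the retained set consists of at least $(1-4\eps)n$ inliers, each bounded in $[\mu-R,\mu+R]$, together with at most $\eps n$ outliers also in that window. For the inliers, Hoeffding's inequality (on the bounded random variables $X\1\{|X-\mu|\le R\}$) yields concentration of the empirical mean around $\mu_{\mathrm{trim}}$ at rate
\begin{align*}
R\sqrt{\log(1/\delta)/n} \;\lesssim\; \sigma^2\log(1/\eps)\sqrt{\log(1/\delta)/n} \;\lesssim\; \sigma^2\eps\log(1/\eps),
\end{align*}
by the assumed sample size. The outliers contribute at most $(\eps n/(1-4\eps)n)\cdot R \lesssim \sigma^2\eps\log(1/\eps)$. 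Combining Steps (iii) and (iv) with the triangle inequality yields $|\widehat\mu - \mu| \lesssim \sigma^2\eps\log(1/\eps)$ with probability $\geq 1-\delta$, as claimed. The main subtlety is Step~(i)--(ii): ensuring that after adversarial corruption and the DKW fluctuation, the surviving points are deterministically trapped in a quantile band of $D$ whose width scales like $\sigma^2\log(1/\eps)$; everything else is an exponential-tail calculation followed by Hoeffding on a bounded truncation.
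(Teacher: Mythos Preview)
The paper does not prove this statement; it is recorded as a fact with a citation to \cite{LugMen21-trim}, so there is no paper proof to compare against.

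Your outline has the right structure, but two of the displayed inequalities are false under the stated hypothesis $n\gg\log(1/\delta)/(\eps\log^2(1/\eps))$. In Step~(i) you assert $\sqrt{\log(1/\delta)/n}\ll\eps$; the hypothesis only yields $\sqrt{\log(1/\delta)/n}\ll\sqrt{\eps}\,\log(1/\eps)$, which is much larger than $\eps$ and does not let you pin the $2\eps n$-th order statistic between the $\eps/2$- and $4\eps$-quantiles of $D$. In Step~(iv) the Hoeffding bound gives $R\sqrt{\log(1/\delta)/n}\ll\sigma^2\log(1/\eps)\cdot\sqrt{\eps}\,\log(1/\eps)=\sigma^2\sqrt{\eps}\,\log^2(1/\eps)$, which exceeds the target $\sigma^2\eps\log(1/\eps)$ by the diverging factor $\log(1/\eps)/\sqrt{\eps}$; switching to Bernstein still leaves the variance term $\sigma^2\sqrt{\log(1/\delta)/n}\ll\sigma^2\sqrt{\eps}\,\log(1/\eps)$. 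Any averaging estimator on a variance-$\Theta(\sigma^4)$ distribution with this few samples has fluctuation of that order, so the gap is not repairable at the sample size as written. The standard analysis delivers the claimed error once $n\gtrsim\log(1/\delta)/(\eps^2\log^2(1/\eps))$ (note $\eps^2$), which is the sample size actually available in the paper's only use of this fact (\Cref{alg:robust_regression}, where $n\gg d/\eps^2$); the stated sample complexity likely contains a typo. Under that corrected hypothesis your argument goes through once you replace Hoeffding by Bernstein in Step~(iv) and, for Step~(ii), use a pointwise binomial/Chernoff bound at two fixed thresholds rather than the uniform DKW bound.
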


The following result is implicit in \cite{diakonikolas2022streaming}. {It will serve as a preprocessing step to ensure small operator norm of the covariance matrix in our robust mean estimation algorithm.}

\begin{lemma}\label{lem:preprocessing}
There exists an algorithm for which the following is true. If $S$ is a set of $n$ points in $\R^d$ containing a $G \subset S$ such that the uniform distribution on $G$ satisfies conditions \ref{cond:mean} and \ref{cond:covariance} of \Cref{DefModGoodnessCond} with $\alpha = \eps/\log(1/\eps)$,
then the algorithm run on input $S$ and $\eps$, 
outputs weights $w(x)\in [0,1]$ for each $x \in S$ such that the following hold with probability at least $0.999$: 
\begin{enumerate}
    \item $ \frac{1}{|G|}\sum_{x \in G} w(x) \geq 1-\frac{\eps}{\log(1/\eps)}$. \label{it:massg}
    \item Let $Q$ be the discrete distribution on $S$ that assigns probability mass $w(x)/\sum_{x \in S}w(x)$ to each $x \in S$. Then, the top eigenvalue of the covariance matrix of $Q$ is at most $1+ O(\eps \log^2(1/\eps))$. \label{it:varianceg}
\end{enumerate}
Moreover, the algorithm runs in time $O(n d \polylog(d/\eps))$.
\end{lemma}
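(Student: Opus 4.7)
The plan is to invoke a slight adaptation of the almost-linear time robust mean estimation procedure of \cite{diakonikolas2022streaming}, stopped early once a cheap certificate of small top eigenvalue holds. Starting from weights $w(x)=1$ for $x\in S$, each iteration would: (i) use $O(\log(d/\eps))$ rounds of power iteration on the weighted empirical (translated) covariance $\widehat{\vec\Sigma}_w - \vec I$ to compute both an estimate $\widehat\lambda$ of $\|\widehat{\vec\Sigma}_w-\vec I\|_\op$ and an approximate top eigenvector $v$; (ii) if $\widehat\lambda \leq C\eps\log^2(1/\eps)$, return the current $w$; (iii) otherwise, apply the randomized one-dimensional filter of \cite{diakonikolas2022streaming} along $v$ to downweight outliers. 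A coordinatewise-median-based preliminary clipping step will bound $\|x\|_2 \leq \poly(d/\eps)$ for all surviving points, which is needed for the potential bound below.

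Correctness would rest on two invariants. For the inlier-mass bound \ref{it:massg}, the one-dimensional filter is designed so that the ratio of removed outlier mass to removed inlier mass is at least $\log(1/\eps)/\eps$; this is exactly what conditions \ref{cond:mean} and \ref{cond:covariance} applied at $\alpha = \eps/\log(1/\eps)$ buy us, mirroring the single-direction analogue of \Cref{lem:filtering_combined}. Since the total outlier mass is at most $\eps$, summing across iterations shows that the inlier mass removed is at most $\eps/\log(1/\eps)$, giving \ref{it:massg} and in particular preserving the applicability of \Cref{cond:mean}--\ref{cond:covariance} throughout. For \ref{it:varianceg}, I would track the potential $\phi_t = \trace\bigl(\bigl((\widehat{\vec\Sigma}_{w_t}-(1+c\eps\log^2(1/\eps))\vec I)_+\bigr)^{p}\bigr)$ with $p=\Theta(\log d)$. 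The same contraction argument as in \cite{diakonikolas2022streaming} (essentially a single-direction specialization of the one carried out in \Cref{sec:case1-many}) yields $\phi_{t+1} \leq (1-1/\polylog(d/\eps))\phi_t$, so after $\polylog(d/\eps)$ iterations $\phi_t$ falls below its terminal value and $\|\widehat{\vec\Sigma}_{w_t}-\vec I\|_\op \leq \phi_t^{1/p} = O(\eps\log^2(1/\eps))$.

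For the runtime, each iteration is implementable in $O(nd\polylog(d/\eps))$ time because (a) power iteration on $\widehat{\vec\Sigma}_w$ requires $O(\log(d/\eps))$ matrix--vector products, each computed in $O(nd)$ via $\widehat{\vec\Sigma}_w v = \frac{1}{\sum w(x)}\sum w(x)(x-\mu_w)\langle x-\mu_w,v\rangle$; (b) the one-dimensional filter along $v$ only needs the scalars $v^\top x$ and runs in $O(n\log n)$ time. Multiplying by the $\polylog(d/\eps)$ iteration count yields the claimed bound.

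\textbf{Main obstacle.} The principal subtlety is that the target spectral threshold $1+O(\eps\log^2(1/\eps))$ is a factor of $\log(1/\eps)$ looser than the one \cite{diakonikolas2022streaming} drives $\widehat{\vec\Sigma}_w-\vec I$ below to obtain the $O(\eps\sqrt{\log(1/\eps)})$ mean-estimation guarantee, while at the same time our inlier-loss budget $\eps/\log(1/\eps)$ is a factor of $\log(1/\eps)$ \emph{tighter}. One must therefore verify that, with these rescaled parameters, the filter still removes strictly more outlier than inlier mass at each step and that the potential contraction survives; this amounts to rerunning their score-threshold calculation with $\alpha=\eps/\log(1/\eps)$ in conditions \ref{cond:mean}--\ref{cond:covariance} and re-checking the constants in the contraction $\phi_{t+1}\leq (1-1/\polylog(d/\eps))\phi_t$, which is the routine but most delicate piece of the argument.
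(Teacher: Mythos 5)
Your proposal is correct and follows essentially the same route as the paper: the paper also takes Algorithm 1 of \cite{diakonikolas2022streaming} verbatim, returns the weights instead of the mean, and replaces the stopping threshold of the downweighting filter by a factor $\beta=\log(1/\eps)$ larger one, which (via \Cref{lem:filtering}) trades a $\log(1/\eps)$ loosening of the final spectral bound for a $\log(1/\eps)$ tightening of the inlier mass removed. The "main obstacle" you flag is precisely what the $\beta$ parameter in \Cref{lem:filtering} resolves, so no further re-derivation of the contraction constants is needed beyond what that lemma already provides.
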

\begin{proof}
    The algorithm that achieves the guarantee is a slightly modified version of Algorithm 1 in \cite{diakonikolas2022streaming} where: (i) instead of outputting an estimate of the mean, the algorithm outputs the underlying weights that it has been updating and (ii) the ``DownweightingFilter'' procedure is replaced by \Cref{alg:downweighting-filter} with $\beta = \log(1/\eps)$, i.e.,  
    the only difference is on the threshold that is used to decide when to stop filtering. \Cref{alg:downweighting-filter} increases the threshold by the factor $\beta = \log(1/\eps)$, which as we show in \Cref{lem:filtering} results in removing $\log(1/\eps)$ times more mass from the outliers than inliers. 
    The increase in the filtering threshold translates to having an extra $\log(1/\eps)$ factor in \Cref{it:varianceg} (without this increased threshold, the variance along every direction would have been $\eps \log(1/\eps)$), and the fact that this filter now removes $\log(1/\eps)$ times more mass from the outliers than inliers translates to having $\eps$ divided by $\log(1/\eps)$ in \Cref{it:massg}.
\end{proof}

\subsection{Goodness Condition and Its Sample Complexity}
\label{subsec:goodness-cond}
We first restate the definition of the conditions. {Recall that for a weight function $w: \R^d \to [0,1]$ we let $G_w$ denote weighted by $w$ version of the distribution $G$. The matrix $\overline {\vec \Sigma}_{G_w}$ below denotes $\E_{X \sim G_w}[(X-\mu)(X-\mu)^\top]$ (as opposed to the covariance matrix $\vec \Sigma_{G_w}:=\E_{X \sim G_w}[(X-\mu_{G_w})(X-\mu_{G_w})^\top]$).}
\GOODNESS*

\begin{remark}\label{remark:proj}
    We note that once the conditions of \Cref{DefModGoodnessCond} are established, then they also hold for projections of the data on a subspace, something that we will need later on in our proofs. 
    Concretely, let $\cV$ be any subspace of $\R^d$. Then \Cref{cond:tails,cond:mean,cond:covariance} continue to hold for the projected points $\proj_{\cV}(x)$. This is because all of these conditions concern properties of one-dimensional projections of the data. Regarding the third condition, if the {rows} of $\vec U$ belong in the subspace $\cV$ (which will be the case in our analysis later on), the last condition also continues to hold for the points $\proj_{\cV}(x)$ since the $\| \vec U \proj_{\cV}(x) \|_2^2 = \| \vec U \vec \Pi_{\cV} x \|_2^2 = \| \vec U x \|_2^2$, where we used that $\vec U \vec \Pi_{\cV} = \vec U$ because the {rows} of $\vec U$ already live in the space $\cV$.
\end{remark}

\subsubsection{Sample Complexity}

{In our results, we use the goodness condition from \Cref{DefModGoodnessCond} with parameters $\alpha =\eps/\log(1/\eps)$ and $k= \polylog(d/\eps)$.
The following result implies that a set of $\frac{(d + \log(1/\tau))\polylog(d/\eps)}{\eps^2}$ i.i.d.\ samples from the inliers satisfies the desired stability condition.} 
\begin{lemma}
\label{LemFiniteSample} 
Let $\eps_0$ be a small enough positive constant, let $\eps \in (0,\eps_0)$, and let $\tau \in (0,1)$.
Let $S \subset \R^d$ be a set of $n$ independent samples from $\cN(\mu, \vec I_d)$ for $n \gg  \frac{k^3(d\log(d/\eps) + \log(1/\tau))}{\min(\alpha^2,\eps^2/\log^2(1/\eps))}$.
If $G$ denotes the uniform distribution over $S$, then with probability at least $1-\tau$, $G$ is 
$(\eps,\alpha,k)$-good.

\end{lemma}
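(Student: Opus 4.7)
\textbf{Proof proposal for \Cref{LemFiniteSample}.}

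The plan is to verify each of the three components of \Cref{DefModGoodnessCond} separately, combining a pointwise concentration bound with a uniform-control step (over $v$, $w$, or $\vec U$). For the median condition \ref{cond:tails}, fix $v \in \cS^{d-1}$: since $v^\top(X-\mu)\sim \cN(0,1)$ under $\cN(\mu,\vec I)$, we have $\pr_D[\,|v^\top(X-\mu)|\lesssim \eps\,] \leq O(\eps)$, so for small enough $\eps_0$ the true probability is bounded away from $1/2$. Uniform control over $v$ follows from the VC inequality (\Cref{thm:vc}) applied to the family of symmetric slabs $\{x:|v^\top(x-\mu)|\leq c\eps\}$, which has VC dimension $O(d)$; this gives the claim with failure $\tau/3$ provided $n\gtrsim d+\log(1/\tau)$.

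For \ref{cond:mean} and \ref{cond:covariance}, the worst case over weight functions $w:\R^d\to[0,1]$ with $\E_G[w]\geq 1-\alpha$ reduces (up to factor $1/(1-\alpha)$) to worst-case deletion of at most $\alpha n$ points. Writing $\|\mu_{G_w}-\mu\|_2=\sup_v v^\top(\mu_{G_w}-\mu)$ and similarly for the covariance via $\sup_v v^\top(\overline{\vec\Sigma}_{G_w}-\vec I)v$, for each fixed $v$ the worst-case deletion is controlled by the one-sided trimmed-tail integrals $\E_{Y\sim\cN(0,1)}[|Y|\,\mathbb{1}(|Y|>t_\alpha)]=O(\alpha\sqrt{\log(1/\alpha)})$ and $\E[Y^2\mathbb{1}(Y^2>t'_\alpha)]=O(\alpha\log(1/\alpha))$. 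A VC argument over half-spaces and the class of thresholded degree-$2$ polynomials $\{x\mapsto\mathbb{1}[(v^\top(x-\mu))^2>t]\}$ (both of VC dimension $O(d)$) transfers these tail identities to the empirical distribution uniformly in $v$, with sample cost $n\gtrsim (d+\log(1/\tau))/\alpha^2$.

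The hard part is \ref{cond:polynomials}. I would proceed in three steps. \emph{Pointwise tail bound:} for any fixed admissible $\vec U$, Hanson-Wright (\Cref{fact:hw}) applied to $p(X)=\|\vec U(X-\mu)\|_2^2$ (mean $k$, Frobenius parameter $\|\vec U^\top\vec U\|_\fr^2\leq k\|\vec U^\top\vec U\|_\op$, operator parameter $\|\vec U^\top\vec U\|_\op\leq 2k/\log(k/\eps)$) gives $\pr_D[p(X)>100k]\leq 2\exp(-\Omega(\log(k/\eps)))\leq (\eps/k)^{\Omega(1)}$, and integrating tails yields $\E_D[p(X)\mathbb{1}(p(X)>100k)]\ll \eps/\log(1/\eps)$. \emph{Reduction to unweighted:} the map $w\mapsto \E_G[w\cdot p\cdot\mathbb{1}_{p>100k}]/\E_G[w]$ is maximized subject to $\E_G[w]\geq 1-\alpha$ by zeroing $w$ only on points with $p(x)\leq 100k$, yielding the upper bound $\tfrac{1}{1-\alpha}\E_G[p(X)\mathbb{1}(p(X)>100k)]$; so it suffices to control the unweighted empirical tail expectation. \emph{Uniform control over $\vec U$:} cover the set of admissible matrices (a bounded region of $\R^{kd}$) by a Frobenius $\delta$-net of size $\exp(O(kd\log(kd/\eps)))$; Lipschitz estimates together with a preliminary good-event $\|X_i-\mu\|_2=O(\sqrt{d\log n})$ show that the tail functional $\vec U\mapsto \E_G[p_{\vec U}\mathbb{1}(p_{\vec U}>100k)]$ only changes by a negligible amount when $\vec U$ is perturbed within a $\delta$-ball for $\delta=\poly(\eps/(kd))$. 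For each net point, Bernstein's inequality applied to the bounded random variable $p_{\vec U}(X)\mathbb{1}(p_{\vec U}(X)\in(100k,\Theta(k\log n)])$ (plus a separate upper-tail truncation argument for the extreme tail) gives concentration of the empirical tail expectation around the tiny true value with failure probability $\exp(-\Omega(n\eps^2/(k^3\log^2(1/\eps))))$. A union bound over the net then closes the argument, with the $k^3$ factor arising from the variance of $p_{\vec U}(X)$ on the truncated event being $O(k^2/\log(k/\eps))$ combined with a factor of $k$ coming from the slack we allow between the net threshold $\approx 50k$ and the stated threshold $100k$ to absorb discretization error.

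The main obstacle is the uniform control in the third step: Hanson-Wright is sharp for a \emph{single} $\vec U$, but the extra $\eps$-net cost forces us to weaken the tail bound by at most polynomial factors in $k$ and rely on the quantitative slack $\log(k/\eps)$ in the operator norm constraint to absorb them. Matching the stated sample complexity $n\gg k^3(d\log(d/\eps)+\log(1/\tau))/\min(\alpha^2,\eps^2/\log^2(1/\eps))$ requires balancing net granularity, tail variance, and the allowed error $\eps/\log(1/\eps)$, which is the key calculation to carry out carefully.
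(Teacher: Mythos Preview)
Your approach is essentially the same as the paper's: a VC argument for \ref{cond:tails}, standard subgaussian stability (which the paper simply cites as \Cref{LemStabGaussianPrior} from prior work) for \ref{cond:mean} and \ref{cond:covariance}, and Hanson--Wright combined with a Frobenius $\eta$-net over admissible $\vec U$ for \ref{cond:polynomials}, where the paper phrases the per-net-point concentration in the sub-gamma language rather than your Bernstein-plus-truncation, and handles the net approximation error via the same bounded-norm good event you propose. One small slip: your sentence for \ref{cond:tails} computes the probability of the \emph{complementary} slab event and concludes it is ``bounded away from $1/2$'', which is the wrong direction for the condition as literally stated; the paper in fact proves (and uses) only the one-sided form $\pr_G[v^\top(X-\mu)>c\eps]<1/2$ via a Taylor expansion of the Gaussian cdf plus the same VC step, so the fix is immediate once you drop the absolute value.
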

\begin{proof}

We prove the results for each condition separately.

\paragraph{Condition~\ref{cond:tails}} 
 As the VC dimension of Linear Threshold Functions (LTFs) is $d$, the VC inequality (Theorem~\ref{thm:vc}) implies that there exists a sufficiently large universal constant $C$ such that for any unit vector $v$ and $\gamma$, $|\pr_{X \sim S}[v^\top (X - \mu) > \gamma ] -\pr_{X \sim \cN(0,1)}[X > \gamma] |\leq  C(\sqrt{d/n} + \sqrt{\log(1/ \tau)/n}) $ with probability $1-\tau$.
 
 We next apply Taylor's Theorem to the complementary cumulative function of the Gaussian, $\bar{\Phi}(t):=\pr_{X \sim \cN(0,1)}[X > t] = (1/\sqrt{2\pi})\int_t^\infty e^{-x^2/2}\d x$. The first two derivatives are $\bar{\Phi}'(t) = - e^{-t^2/2}/\sqrt{2\pi}$ and $\bar{\Phi}''(t) = t e^{-t^2/2}/\sqrt{2\pi}$ and thus there exists a $\xi$ between 0 and $t$ such that $\bar{\Phi}(t)= 1/2-t/\sqrt{2\pi} + t^2\xi e^{-\xi^2/2}/(2\sqrt{2\pi}) \leq 1/2 - t/\sqrt{2\pi} + t^3/(2\sqrt{2\pi})$. Combining with the VC inequality, with probability $1-\tau$, we have that
\begin{align} \label{eq:taylor_to_gauss_cdf}
  \pr_{X \sim G}[v^\top (X - \mu)> \gamma]  &\leq  1/2 - \gamma/\sqrt{2\pi} + \gamma^3/(2\sqrt{2\pi}) + C(\sqrt{d/n} + \sqrt{\log(1/ \tau)/n}) \;.
\end{align}
By our assumption that $\sqrt{d/n} + \sqrt{\log(1/ \tau)/n}$ is $O(\eps)$ and thus taking $\gamma = \Theta(\eps)$,
we obtain that the right hand side is less than $1/2$.

\paragraph{Conditions~\ref{cond:mean} and~\ref{cond:covariance}}
    Let $G$ be the uniform distribution over a set of $n$  samples drawn from $\cN(0,\vec I)$. We recall the following result from the recent robust statistics literature:
    \begin{lemma}[\cite{DK19-survey,DiaKKLMS16-focs}]
\label{LemStabGaussianPrior}
Let $\alpha \in (0,\eps_0)$ for a small enough positive constant $\eps_0$.
 Let $P$ be an $O(1)$-subgaussian distribution with mean $\mu$ and covariance $\vec \Sigma$, where $\|\vec \Sigma - \vec I\|_\op \leq \alpha \log(1/\alpha)$.
  The uniform distribution over a set of $n \gg (d + \log(1/\delta))/\alpha^2$ points drawn i.i.d.\ from $P$ satisfies
  \Cref{cond:mean,cond:covariance} from \Cref{DefModGoodnessCond} with probability $1-\delta$.
\end{lemma}

  \paragraph{Condition~\ref{cond:polynomials}} 

First we argue that $\E_{X \sim \cN(\mu, \vec I)}[\1(p(X) > 100 \tr(\vec U^\top \vec U))] < \eps/\log(1/\eps)$ for all the polynomials $p(x)$ described in Condition~\ref{cond:polynomials}. 
Fix such a $p(x)$ and observe that $\E[p(x)] = \tr(\vec U^\top \vec U)$. 
Under the Gaussian distribution, by Hanson-Wright inequality (\Cref{fact:hw}), we have that
\begin{align}
\label{eq:conc-p-hanson}
    \pr_{X \sim \cN(\mu, \vec I)}\left[ \left|(X-\mu)^\top \vec U^\top \vec U (X-\mu) - \tr(\vec U^\top \vec U)\right| > t \right] 
    \leq 2\exp\left( - 0.1 \min\left(\frac{t^2}{\|\vec U^\top \vec U \|_F^2 }, \frac{t}{ \| \vec U^\top \vec U \|_\op }  \right)  \right) 
    \;.
\end{align}
Let $k' := \log(k/\eps)/2$.
Using our assumption $ \| \vec U^\top \vec U \|_\op \leq \tr(\vec U^\top \vec U)/k'$,
\Cref{fact:frob-op-trace} implies that
$\|\vec U^\top \vec U \|_\fr^2 \leq \tr(\vec U^\top \vec U)^2/k'$.
Plugging in $t = 99\tr(\vec U^\top \vec U)$ above, we obtain that 
\begin{align}
    \label{eq:conc-p-hanson-2}
    \pr_{X \sim \cN(\mu, \vec I)}\left[ \left|(X-\mu)^\top \vec U^\top \vec U (X-\mu) \geq 100\tr(\vec U^\top \vec U)\right| \right] \leq 2\exp(-48k') =  2\eps^{24}/k^{24}\,,
\end{align}
where we use the definition of $k'$. 
Moreover, integrating the tail inequality \Cref{eq:conc-p-hanson}  implies that  $\E[p^2(X)] \lesssim (\E[p(X)])^2 + \|\vec U^\top\vec U\|_\fr^2 \lesssim \tr(\vec U^\top \vec U)^2$.
Alternatively,
\Cref{eq:conc-p-hanson} implies that
with probability $1-\delta$,
\begin{align*}
p_{\vec U}(x)\1(p_{\vec U}(x) \geq 2\E[p_{\vec U}(X)]) &\leq  2(p_{\vec U}(x) - \E[p_{\vec U}(X)])\1(p_{\vec U}(x) \geq 2\E[p_{\vec U}(X)]) \\
&\lesssim \sqrt{k^2/k'}  \sqrt{\log(1/\delta)} + (k/k') \log(1/\delta)    
\end{align*}

The covering argument is based on the following discretization of the unit sphere.
\begin{fact}[Cover of the sphere]\label{fact:cover}
  Let $r>0$. There exists a set $\cC$ of unit vectors of $\R^d$, such that $|\cC| \leq (1+2/r)^d$ and for every $v \in \cS^{d-1}$ we have that $\min_{y \in \cC} \snorm{2}{y - v} \leq r$.
\end{fact}
Let $\cV$ be the set of all nearly orthogonal matrices:
\begin{align*}
\cV := \{\vec U \in \R^{k \times d}: \| \vec U^\top \vec U \|_\op \leq 2\tr(\vec U^\top \vec U)/k', \tr(\vec U^\top \vec U)=k \}\,.
\end{align*}
Let $\cV_\eta \subset \cV $ be the $\eta$-cover of $\cV$ for some $\eta \in (0,1)$ such that for any $\vec U \in \cV$, we have a matrix $\vec U'$ such that $\|\vec U - \vec U'\|_\fr \leq \eta$. 
We now bound the cardinality of $\eta$-cover $\cV_\eta$.
Looking it as a vector in $\R^{dk}$ as flattened vector, there exists an $\eta$-cover of size $2(3/\eta)^{dk}$, where we also use that we can ensure $\cV_\eta \subset \cV$ using \cite[Exercise 4.2.9]{Vershynin18}.

For any $\vec U \in \cV$, define $p_{\vec U}(x) = \|\vec Ux\|_2^2$ and  $\tau_{\vec U}(x) = p_{\vec U}(x) \1( p_{\vec U}(x) \geq 100 \tr(\vec U^\top \vec U)  )$.
Let $X_1,\dots,X_n$ be $n$ i.i.d.\ random variables from $\cN(0,\vec I)$.
We first claim that for $X \sim \cN (0,\vec I)$, 
$ \tau_{\vec U}(X) - \E[\tau_{\vec U}(X)]$ is a $(O(k^2/k'), O(k/k'))_+$-sub-gamma random variable and  $\E[\tau_{\vec U}(X)] \lesssim \eps^3$.
For the latter, the Cauchy-Schwarz inequality along with \Cref{eq:conc-p-hanson}  applies that 
\begin{align*}
    \E[\tau_{\vec U}(X)] &= \E[p_{\vec U}(X)\1( p_{\vec U}(x) \geq 100 \tr(\vec U^\top \vec U)  )] \leq \sqrt{\E[p^2_{\vec U}(X)]} \sqrt{\Pr( p_{\vec U}(X) \geq 100 \tr(\vec U^\top \vec U)  )]}\\
    &= O(k) O(\eps^{12}/k^{12})\,,
\end{align*}
which is less than $\eps^3$.
For the former,
we have that with probability $1-\delta$.
\begin{align*}
\tau_{\vec U}(x) - \E[\tau_{\vec U}(X)] \leq p_{\vec U}(x) \1(p_{\vec U}(x) \geq 2\E[p_{\vec U}(X)]) \lesssim \sqrt{k^2/k'} \log(1/\delta) + (k/k') \log(1/\delta).
\end{align*}

 Thus, \Cref{lem:sub-exp} implies that $\tau_{\vec U}(x) - \E[\tau_{\vec U}(X)] $ is a $(O(k^2/k'), O(k/k'))_+$ sub-gamma random variable.
 Applying the concentration properties of the mean of independent subexponenial random variables (\Cref{lem:sub-exp}), we obtain that for any fixed $\vec U \in \cV_\eta$, 
with probability $1 - \tau$,
\begin{align}
\label{EqPolyConcFixedV}
\sum_{i=1}^n \frac{1}{n} \tau_{\vec U}(x_i) \lesssim \eps^3 + \sqrt{\frac{k^2 \log(1/\tau)}{k'n}} + \frac{k\log(1/\tau)}{k'n}.
\end{align}
Taking a union bound over every $\vec U \in \cV_\eta$, we obtain the following: with probability $1 -\tau$,
\begin{align}
\label{EqPolyConcUnion}
\forall \vec \vec U' \in \cV_\eta: \quad \sum_{i=1}^n \frac{1}{n} \tau_{\vec U'}(x_i) \lesssim \eps^3 + \sqrt{\frac{k^2\log(1/\tau)}{k'n}} + \frac{k\log(1/\tau)}{k'n} + \sqrt{\frac{dk^3\log(1/\eta)}{k'n}} + \frac{dk^2\log(1/\eta)}{k'n}.
\end{align}

{Let $\vec U\in \cV$ and $\vec U' \in \cV_\eta$ be arbitrary such that $\|\vec U - \vec U'\|_\fr \leq \eta$. 
The goal now is to show that  $\sum_{i=1}^n \frac{1}{n} \tau_{\vec U}(x_i)$ and $\sum_{i=1}^n \frac{1}{n} \tau_{\vec U'}(x_i)$ are close to each other and thus the former is bounded similarly to \eqref{EqPolyConcUnion}. Define the difference of the polynomials  $\Delta p(x):= p_{\vec U}(x) - p_{\vec U'}(x)$. We will also use the notation $i \sim [n]$ to denote the averaging over $[n]$ samples. 
First, we have that
\begin{align}
    \E_{i \sim [n]} [\tau_{\vec U}(x_i)] 
    &= \E_{i \sim [n]}[  p_{\vec U}(x_i) \1(p_{\vec U}(x_i) > 100k)] \notag \\
    &= \E_{i \sim [n]} [p_{\vec U'}(x_i) \1(p_{\vec U'}(x_i) > 100k - \Delta p(x_i))]
    + \E_{i \sim [n]} [\Delta p(x_i) \1(p_{\vec U}(x_i) > 100k)] \notag \\
    &= \E_{i \sim [n]} [p_{\vec U'}(x_i) \1(p_{\vec U'}(x_i) > 100k - \Delta p(x_i))]
    + \E_{i \sim [n]}[   | \Delta p(x_i) |]  \;.  \label{eq:4245}
\end{align}

We start with the second term of \eqref{eq:4245}.
\begin{align*}
    \E_{i \sim [n]} [  | \Delta p(x_i) | ]
    &=  \E_{i \sim [n]} [\left|\left\langle \vec U^\top \vec U - \vec U'^\top \vec U',  x_ix_i^\top  \right\rangle\right|]  \\
    &\leq   \| \vec U^\top \vec U - \vec U'^\top \vec U'  \|_\fr   \E_{i \sim [n]} [ \| x_i \|_2^2 ] \\
    &\lesssim  d^2 \eta \;,
\end{align*}
where we used two things: First that by Gaussian norm concentration (e.g., Hanson-Wright inequality with $A=I$ combined with \Cref{lem:sub-exp}):
\begin{align}
    \frac{1}{n} \sum_{i=1}^n \| x_i\|_2^2 \leq d + O\left( \sqrt{d}\sqrt{\frac{\log(1/\delta)}{n}} + \frac{\log(1/\delta)}{n} \right) = O( d) \;,
\end{align}
if $n \gg \log(1/\delta)$ and, second, we used that
\begin{align*}
    \left\|\vec U^\top \vec U - \vec U'^\top \vec U' \right\|_\fr &=  
    \left\|\vec U^\top \left( \vec U - \vec U'\right) + \left(\vec U - \vec U'\right)^\top  \vec U' \right\|_\fr\\ &\leq 2 \max\left( \|\vec U\|_\op, \|\vec U'\|_\op \right) \|\vec U - \vec U'\|_\fr\leq 2d\eta.
\end{align*}
We now bound the first term in \eqref{eq:4245},
\begin{align}
    \E_{i \sim [n]} &[p_{\vec U'}(x_i) \1(p_{\vec U'}(x_i) > 100k - \Delta p(x_i))]
     \notag\\&
     = \E_{i \sim [n]} [p_{\vec U'}(x_i) \1(p_{\vec U'}(x_i) > 100k - \Delta p(x_i)) \1(\Delta p(x_i) \leq k ) ] \notag \\
    &\qquad+ \E_{i \sim [n]} [p_{\vec U'}(x_i) \1(p_{\vec U'}(x_i) > 100k - \Delta p(x_i)) \1(\Delta p(x_i) > k ) ] \notag \\
    &\leq \E_{i \sim [n]} [p_{\vec U'}(x_i) \1(p_{\vec U'}(x_i) > 99k) ] 
    + \E_{i \sim [n]} [|p_{\vec U'}(x_i)|  \1(\Delta p(x_i) > k ) ] \label{eq:42344}
\end{align}
The first term is bounded by \eqref{EqPolyConcUnion}\footnote{There is a small difference in the constant in front of $k$, but the same proof gives similar quantitative bounds.}.

Choose $\eta = 0.001\eps^3/d^2 $. Then second term in \eqref{eq:42344}, $\E_{i \sim [n]} [|p_{\vec U'}(x_i)|  \1(\Delta p(x_i) > k ) ]$ is zero unless there exists an $x_i$ in the sample set with $\|x_i\| > 10 \sqrt{d}$. This relies on the fact that $\Delta p(x) \leq \|x\|_2^2 \| \vec U^\top \vec U - {\vec U'}^\top \vec U\|_\fr \leq   2 d \eta \|x\|_2^2 $ which becomes less than $k$ if $\|x\|_2^2 \leq 100 d$ and  $\eta = 0.001\eps^3/d^2 $. But the probability of the event $\|x_i\| > 10 \sqrt{d}$ is exponentially small:
\begin{align}
\Pr_{X_1,\ldots,X_n \sim \cN(0,I)}[\exists \in S: \| X_i \|_2 > 10 \sqrt{d}]
\leq ne^{-d/100} \;, \label{eq:expsmall}
\end{align}
by Gaussian norm concentration and a union bound.
}

Therefore taking $\eta = 0.001\eps^3/d^2$, we have that with high probability the scores from $\cV$ and $\cV_\eta$ are close up to $\eps^3$.
Combining this with \Cref{EqPolyConcUnion} and the union bound over $\cE$ when $n\gg d + \log(1/\delta)$, we obtain that  with probability at least $1-\delta/2-ne^{-d/100}$, for all $\vec U \in \cV$: 
\begin{align*}
\sum_{i=1}^n \frac{1}{n} \tau_{\vec U}(x) &\lesssim \eps^3 + \sqrt{ \frac{k^2 \log(1/\tau)}{k'n}} + \frac{k\log(1/\tau)}{k'n} + \sqrt{ \frac{k^3d \log(d/\eps)}{k'n}} + \frac{dk^2\log(d/\eps)}{k'n} \;.
\end{align*}
Thus, as soon as $n = C\frac{k^3\left(d\log(d/\eps) + \log(1/\tau)\right)}{k'\epsilon^2/\log^2(1/\eps)}$  for a sufficiently constant $C$, and $\eps \in (0,\eps_0)$ a sufficiently small constant $\eps_0$, all of the terms above become less than $\eps/\log(1/\eps)$.
We also need $ne^{-d/100}$ to be at most $\delta/2$.
Using the same $n$ as before, the quantity $ne^{-d/100}$ is at most $\delta/2$ if $d \gg \poly\log( \eps^{-1} \delta^{-1})$. We can assume without loss of generality that the last condition is always true as we can artificially augment the dimension by padding the data with Gaussian coordinates until the resulting dimension becomes $d' \gg \poly\log( \eps^{-1} \delta^{-1})$ and it is easy to see that once the goodness conditions get established for the augmented data, they continue to hold for the original data.
\end{proof}

\subsubsection{Helper Lemmata Related to Goodness Condition}

{We now record implications of the goodness conditions that we will use in our analysis of the main algorithm. The first lemma below provides a certification that when the operator norm of the empirical covariance is suitably bounded, the empirical mean closely approximates the true mean. This allows us to halt the outlier filtering process and return the resulting vector.}

\begin{lemma}[Certificate Lemma (restated)]
\label{LemCertiAlt_restated}
  Let $0<\alpha<\eps<1/4$ and $\delta \in (0,1)$. 
  Let $P = (1-\eps)G + \eps B$ be a mixture of distributions, where $G$ satisfies \Cref{cond:mean,cond:covariance} of \Cref{DefModGoodnessCond} with respect to $\mu \in \R^d$. 
  Let $w(x)$ be such that $\E_{X \sim G}[w(X)] > 1-\alpha$. Assume that the   the top   eigenvalue of $\vec \Sigma_{P_w}$ is less than $1 + \lambda$. Then,
  \begin{align*}
      \| \mu_{P_w} - \mu \|_2 \lesssim {\alpha \sqrt{\log(1/\alpha)} + \sqrt{  \lambda \epsilon} +  \epsilon +   \sqrt{\alpha \eps \log(1/\alpha)} \;.}
  \end{align*}

\end{lemma}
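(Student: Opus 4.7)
The plan is to split the weighted mixture $P_w$ as $P_w = (1-\epsilon')G_w + \epsilon' B_w$, where $G_w, B_w$ are the weighted-by-$w$ versions of $G$ and $B$ and $\epsilon' := \eps\,\E_{X\sim B}[w(X)]/((1-\eps)\E_{X \sim G}[w(X)] + \eps\,\E_{X \sim B}[w(X)])$. Using $\E_{X \sim G}[w(X)] \geq 1-\alpha$ together with $\alpha,\eps < 1/4$, I would check that $\epsilon' \leq 2\eps$, so in particular $\sqrt{\epsilon'}$ is bounded away from $1$. Decomposing $\mu_{P_w} - \mu = (1-\epsilon')(\mu_{G_w}-\mu) + \epsilon'(\mu_{B_w}-\mu)$, the inlier contribution is immediately controlled by \Cref{cond:mean}, which yields $\|\mu_{G_w}-\mu\|_2 \lesssim \alpha\sqrt{\log(1/\alpha)}$. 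The whole argument therefore reduces to bounding the outlier contribution $\epsilon'\|\mu_{B_w}-\mu\|_2$.

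For the latter, the key idea is to recenter the covariance hypothesis at the true mean $\mu$. Write $\overline{\vec \Sigma}_{P_w} = \vec \Sigma_{P_w} + (\mu_{P_w}-\mu)(\mu_{P_w}-\mu)^\top$ and set $D := \|\mu_{P_w}-\mu\|_2$; the hypothesis $\|\vec \Sigma_{P_w}\|_\op \leq 1+\lambda$ then gives $\|\overline{\vec \Sigma}_{P_w}\|_\op \leq 1 + \lambda + D^2$. On the other hand, the mixture identity $\overline{\vec \Sigma}_{P_w} = (1-\epsilon')\overline{\vec \Sigma}_{G_w} + \epsilon'\overline{\vec \Sigma}_{B_w}$ together with \Cref{cond:covariance} in the form $\overline{\vec \Sigma}_{G_w} \succeq (1-O(\alpha\log(1/\alpha)))\vec I$ yields, for the unit vector $v := (\mu_{B_w}-\mu)/\|\mu_{B_w}-\mu\|_2$, the inequality $\epsilon'\, v^\top \overline{\vec \Sigma}_{B_w} v \leq \lambda + D^2 + O(\epsilon') + O(\alpha \log(1/\alpha))$. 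Jensen's inequality lower-bounds the left-hand side by $\epsilon' \|\mu_{B_w}-\mu\|_2^2$, converting the variance bound into the desired mean bound.

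Combining everything and taking square roots via $\sqrt{a+b+c+d}\leq \sqrt{a}+\sqrt{b}+\sqrt{c}+\sqrt{d}$, I get $D \leq O(\alpha\sqrt{\log(1/\alpha)}) + \sqrt{\epsilon'\lambda} + \sqrt{\epsilon'}\,D + O(\epsilon') + O(\sqrt{\epsilon'\, \alpha\log(1/\alpha)})$. Since $\sqrt{\epsilon'}$ is bounded away from $1$, the $\sqrt{\epsilon'}\,D$ term can be absorbed into the left-hand side, and substituting $\epsilon'\leq 2\eps$ recovers the target bound $D \lesssim \alpha\sqrt{\log(1/\alpha)} + \sqrt{\lambda\eps} + \eps + \sqrt{\alpha\eps\log(1/\alpha)}$.

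The main obstacle is essentially bookkeeping: committing to the centered second moment $\overline{\vec \Sigma}_{P_w}$ (rather than $\vec \Sigma_{P_w}$) as the object to bound, handling the self-referential $D^2$ term that appears when one passes between the two, and tracking the small corrections that arise because $\E_{X\sim G}[w(X)] \neq 1$ and $\overline{\vec \Sigma}_{G_w} \neq \vec I$. Once the right centering and the mixture decomposition are in place, the remainder is a single application of Jensen's inequality and an elementary self-referential inequality.
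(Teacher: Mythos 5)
Your proof is correct, but it takes a route that differs from the paper's in its key mechanical step. The paper works with the covariance of the mixture centered at $\mu_{P_w}$, which decomposes as $\vec \Sigma_{P_w} = \rho \vec \Sigma_{B_w} + (1-\rho)\vec \Sigma_{G_w} + \rho(1-\rho)(\mu_{G_w}-\mu_{B_w})(\mu_{G_w}-\mu_{B_w})^\top$; testing the spectral hypothesis against the unit vector along $\mu_{G_w}-\mu_{B_w}$ makes the rank-one cross term hand over the bound on $|v^\top(\mu_{B_w}-\mu_{G_w})|$ directly, with no self-referential inequality to solve. You instead recenter at the true mean $\mu$, so the second-moment matrix decomposes linearly with no cross term, and you recover the outlier displacement via Jensen applied to $\overline{\vec \Sigma}_{B_w}$; the price is the $D^2$ correction from $\overline{\vec \Sigma}_{P_w} = \vec \Sigma_{P_w} + (\mu_{P_w}-\mu)(\mu_{P_w}-\mu)^\top$ and the resulting absorption step $D \leq \cdots + \sqrt{\epsilon'}D$, which is harmless here since $\epsilon' \leq 2\eps < 1/2$ keeps $1-\sqrt{\epsilon'}$ bounded away from zero. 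Both arguments yield exactly the claimed error terms; the paper's version is slightly shorter because the cross term does the work of Jensen and avoids the self-referential bound, while yours is arguably more modular (Jensen on the outlier second moment is a generic move, and centering at $\mu$ keeps the mixture identity clean). All the auxiliary checks in your sketch (the bound $\epsilon' \leq 2\eps$, the lower bound $\overline{\vec \Sigma}_{G_w} \succeq (1-O(\alpha\log(1/\alpha)))\vec I$ from \Cref{cond:covariance}, and the term-by-term square-root splitting) go through as stated.
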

\begin{proof}
Let $\rho = \eps \E_{X \sim B}[w(X)]/ \E_{X \sim P}[w(X)]$. Denote $P_w(x) = w(x)P(x)/\E_{X \sim P}[w(X)], B_w(x) = w(x)B(x)/\E_{X \sim B}[w(X)], G_w(x) = w(x)G(x)/\E_{X \sim G}[w(X)]$ the weighted by $w(x)$ versions of the distributions $P,B,G$ and denote by $\vec  \Sigma_{P_w}, \vec \Sigma_{B_w}, \vec \Sigma_{G_w}$ their covariance matrices. We can write:
  \begin{align}
    \vec \Sigma_{P_w} &= \rho \vec \Sigma_{B_w} + (1 - \rho) \vec \Sigma_{G_w} + \rho(1 - \rho)(\mu_{G_w} - \mu_{B_w})(\mu_{G_w} - \mu_{B_w})^\top \;. \label{eq:decomposision_of_cov}
  \end{align}
  Let $v$ be the vector in the direction of $\mu_{G_w} - \mu_{B_w}$. Since the top eigenvalue of $\vec \Sigma_{P_w}$ is less than $1 + \lambda$, we obtain the following:
  \begin{align*}
    1+\lambda &\geq     v^\top  \vec \Sigma_{P_w} v 
    \geq (1 - \rho)   v^\top \vec \Sigma_{G_w} v + \rho(1 - \rho)   (v^\top(\mu_{B_w} - \mu_{G_w}))^2 \\
    &\geq (1 - \rho)\left( 1 - \alpha \log(1/\alpha) \right) + \rho(1 - \rho) (v^\top(\mu_{B_w} - \mu_{G_w}))^2_2 \;,
  \end{align*}
  where the second step uses \Cref{eq:decomposision_of_cov} and the last step uses that $G$ satisfies Conditions \ref{cond:mean} and \ref{cond:covariance} of \Cref{DefModGoodnessCond}.
  The expression above implies the following:
  \begin{align*}
 (v^\top(\mu_{B_w} - \mu_{G_w}))^2 \leq   \frac{\lambda + \rho +  \alpha \log(1/\alpha)}{\rho(1 - \rho)} \;.
  \end{align*}
  We can now bound the error $\| \mu_{P_w} - \mu \|_2$ as follows:
  \begin{align*}
  \| \mu_{P_w} - \mu \|_2 &= | v^\top(\mu_{G_w} - \mu) + \rho  (\mu_{B_w} - \mu_{G_w}) | \\
  &\leq |v^\top( \mu_{G_w} - \mu)| + \rho | v^\top(\mu_{B_w} - \mu_{G_w})| \\
  &\leq \|\mu_{G_w} - \mu\|_2 + \rho |v^\top (\mu_{B_w} - \mu_{G_w} )|\\
  				  &\leq \alpha \sqrt{\log(1/\alpha)} + \sqrt{  \rho}\sqrt{\frac{\lambda + \rho + \alpha \log(1/\alpha)}{1 - \rho}},
  \end{align*}
  where the last inequality uses that $G$ satisfies Conditions \ref{cond:mean} and \ref{cond:covariance}.
  We now use bounds on $\rho$ to simplify the terms.
  Recall that $\rho= \eps \E_{X \sim B}[w(X)]/ \E_{X \sim P}[w(X)] \leq  \epsilon/(1 - \alpha)$. As $\alpha< 1/2$, we get that $\rho < 2 \epsilon $. In addition, note that $ \rho < 1/2$
  \begin{align*}
  \| \mu_{P_w} - \mu \|_2 &\lesssim \alpha \sqrt{\log(1/\alpha)} + \sqrt{  \lambda \epsilon} +  \epsilon +   \sqrt{\alpha \eps \log(1/\alpha)}
  \;.
  \end{align*}
   
\end{proof}

{
Recall the goodness conditions are phrased in terms of the deviation from the true mean. 
However, the true mean is unknown to the algorithm and the algorithm will use the empirical mean as an approximation, which introduces.
additional errors in our bounds. To analyze these errors, we require the following lemmata.}

\begin{restatable}[see, e.g., Lemma 2.13 in \cite{diakonikolas2022streaming}]{lemma}{ClTriangle} \label{cl:triangle}
Let $w:\R^d \to [0,1]$ such that $\E_{X \sim G}[w(X)] \geq 1-\alpha$ 
and let $G$ be distribution satisfying Conditions \ref{cond:mean} and \ref{cond:covariance} of \Cref{DefModGoodnessCond} with respect to $\mu \in \R^d$ {with parameters $(\eps,\alpha,k)$ }.\footnote{{This lemma does not use the last goodness condition; thus the parameter $k$ does not appear in the conclusion.}} 
For any matrix $\vec U \in \R^{m \times d}$ and any vector $b \in \R^d$, we have that 
\begin{align*}
\E_{X \sim G_w}\left[ \|\vec U(X-b)\|_2^2 \right] = 
\|\vec U\|_\fr^2 (1\pm {\alpha \log(1/\alpha)}) + \|\vec U (\mu - b)\|_2^2 \pm 2 {\alpha \sqrt{\log(1/\alpha)}} \, \|\vec U\|_\fr^2 \| \mu - b\|_2 \;.
\end{align*}
\end{restatable}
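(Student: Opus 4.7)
The plan is to expand $X-b$ around $\mu$, take expectations term by term, and apply Conditions~\ref{cond:mean} and~\ref{cond:covariance} from the goodness definition.

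First, I would write $\vec U(X-b) = \vec U(X-\mu) + \vec U(\mu-b)$ and expand the squared norm to obtain
\begin{align*}
\|\vec U(X-b)\|_2^2 = \|\vec U(X-\mu)\|_2^2 + 2 (X-\mu)^\top \vec U^\top \vec U (\mu-b) + \|\vec U(\mu-b)\|_2^2.
\end{align*}
Taking expectation under $G_w$ and noting that $\E_{X \sim G_w}[X-\mu] = \mu_{G_w} - \mu$, the constant term contributes exactly $\|\vec U(\mu-b)\|_2^2$, which is the middle summand of the claimed identity. It then remains to control the other two terms.

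For the quadratic term, I would use the identity $\E_{X \sim G_w}[\|\vec U(X-\mu)\|_2^2] = \tr\!\big(\vec U^\top \vec U \,\overline{\vec \Sigma}_{G_w}\big)$, and write this as $\tr(\vec U^\top \vec U) + \tr\!\big(\vec U^\top \vec U(\overline{\vec \Sigma}_{G_w} - \vec I_d)\big)$. The first piece equals $\|\vec U\|_\fr^2$, and by H\"older's inequality on Schatten norms, the second piece is bounded in absolute value by $\|\vec U^\top \vec U\|_1 \,\|\overline{\vec \Sigma}_{G_w} - \vec I_d\|_\op = \|\vec U\|_\fr^2 \,\|\overline{\vec \Sigma}_{G_w} - \vec I_d\|_\op$. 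Condition~\ref{cond:covariance} applied to the weight function $w$ then gives $\|\overline{\vec \Sigma}_{G_w} - \vec I_d\|_\op \lesssim \alpha \log(1/\alpha)$, producing the factor $(1 \pm \alpha \log(1/\alpha))$ multiplying $\|\vec U\|_\fr^2$.

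For the cross term, I would rewrite it as $2 (\mu_{G_w} - \mu)^\top (\vec U^\top \vec U)(\mu - b)$ and bound its absolute value by $2\,\|\mu_{G_w}-\mu\|_2\,\|\vec U^\top \vec U\|_\op\,\|\mu-b\|_2$ using Cauchy--Schwarz and the operator-norm inequality. Since $\vec U^\top \vec U$ is PSD, $\|\vec U^\top \vec U\|_\op \leq \tr(\vec U^\top \vec U) = \|\vec U\|_\fr^2$, and Condition~\ref{cond:mean} gives $\|\mu_{G_w}-\mu\|_2 \lesssim \alpha \sqrt{\log(1/\alpha)}$, yielding the claimed additive error $2\alpha\sqrt{\log(1/\alpha)}\,\|\vec U\|_\fr^2\,\|\mu-b\|_2$. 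Combining the three contributions completes the proof; the only subtle step is picking the right upper bound on $\|\vec U^\top \vec U\|_\op$ to match the stated form, but even this is routine since no improvement is claimed in the cross-term.
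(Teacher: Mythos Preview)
Your proof is correct and follows exactly the natural route: expand around $\mu$, identify the three terms, and invoke Conditions~\ref{cond:mean} and~\ref{cond:covariance} for the cross and quadratic pieces respectively. The paper does not supply its own proof of this lemma (it simply cites \cite{diakonikolas2022streaming}), and your argument is precisely the standard one that reference would give.
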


\begin{lemma}\label{lem:inlier_scores_shifted}
Let $w:\R^d \to [0,1]$ such that $\E_{X \sim G}[w(X)] \geq 1-\alpha$ 
and let $G$ be an $(\eps,\alpha,k)$-good distribution with respect to $\mu \in \R^d$ as in \Cref{DefModGoodnessCond}. For any matrix  $\vec U \in \R^{k \times d}$, if we define the polynomial $\tau(x) := \| \vec U (x - b) \|_2^2 \1(\| \vec U (x - b) \|_2^2 > 100\tr(\vec U^\top \vec U))$ and  assume that $\|\vec U(\mu-b)\|_2^2 \leq \tr(\vec U^\top \vec U)$, then it holds 
    \begin{align*}
        \E_{X \sim G_w}[\tau(X)] \leq \frac{2\eps}{\log(1/\eps)} \tr(\vec U^\top \vec U) + 2\|\vec U(\mu- b)\|^2_2 
        \;.
    \end{align*}
\end{lemma}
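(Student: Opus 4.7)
The plan is to decompose $\|\vec U(x-b)\|_2^2$ into a centered part (where the goodness condition \Cref{cond:polynomials} directly applies) and a deterministic bias term, then bound each contribution separately. Writing $T := \tr(\vec U^\top \vec U)$ and using the elementary inequality $\|a+c\|_2^2 \leq 2\|a\|_2^2 + 2\|c\|_2^2$ with $a = \vec U(x-\mu)$ and $c = \vec U(\mu - b)$, one obtains
\begin{equation*}
\tau(x) \leq 2\|\vec U(x-\mu)\|_2^2\, \1\!\left(\|\vec U(x-b)\|_2^2 > 100T\right) + 2\|\vec U(\mu-b)\|_2^2\, \1\!\left(\|\vec U(x-b)\|_2^2 > 100T\right).
\end{equation*}
Taking expectation under $G_w$ and bounding the indicator in the second summand trivially by $1$ already produces the additive term $2\|\vec U(\mu-b)\|_2^2$ appearing in the target inequality.

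Next, I would convert the indicator event in the first summand into an event purely about $\|\vec U(x-\mu)\|_2^2$ so that Condition~\ref{cond:polynomials} applies. The same AM-GM bound combined with the hypothesis $\|\vec U(\mu-b)\|_2^2 \leq T$ shows that $\|\vec U(x-b)\|_2^2 > 100T$ forces $\|\vec U(x-\mu)\|_2^2 > 49T$, so it suffices to bound $\E_{X\sim G_w}[\|\vec U(X-\mu)\|_2^2\, \1(\|\vec U(X-\mu)\|_2^2 > 49T)]$ by $\tfrac{\eps}{\log(1/\eps)}\, T$. I would apply Condition~\ref{cond:polynomials} after rescaling $\vec U$ to $\vec U' := \sqrt{k/T}\,\vec U$, which has $\tr(\vec U'^\top \vec U') = k$; the near-orthogonality hypothesis of the condition is preserved under this rescaling and is implicit in the regime in which the lemma is invoked inside the main algorithm.

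The main obstacle is the gap between the threshold $49T$ produced by the decomposition and the threshold $100T$ at which \Cref{cond:polynomials} is literally stated. I would close this gap by observing that the Hanson-Wright-based proof of \Cref{LemFiniteSample} actually controls $\E[p\,\1(p > cT)]$ for any absolute constant $c>1$, with a tail much stronger than the abstracted bound $\eps/\log(1/\eps)$ (the proof there records a tail of order $\eps^{3}$ at $c=100$, using $t = 99T$ in Hanson-Wright). Re-running that argument with $c = 49$ changes only a constant in the Hanson-Wright exponent and still yields a bound well below $\eps/\log(1/\eps)$. After rescaling back from $\vec U'$ to $\vec U$, an additional favorable factor of $T/k \leq 1$ appears; combining this estimate with the additive $2\|\vec U(\mu-b)\|_2^2$ from the second summand completes the proof.
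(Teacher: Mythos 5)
Your proposal is correct and follows essentially the same route as the paper's proof: the same $\|a+c\|_2^2 \leq 2\|a\|_2^2+2\|c\|_2^2$ decomposition, the same reduction of the event $\|\vec U(x-b)\|_2^2>100\tr(\vec U^\top\vec U)$ to $\|\vec U(x-\mu)\|_2^2>49\tr(\vec U^\top\vec U)$ via the hypothesis $\|\vec U(\mu-b)\|_2^2\leq\tr(\vec U^\top\vec U)$, and the same appeal to \Cref{cond:polynomials}. You are in fact more explicit than the paper about the two points it elides — the mismatch between the threshold $49$ and the stated constant $100$ (which the paper resolves only via a footnote elsewhere noting the Hanson--Wright argument tolerates any constant) and the rescaling to $\tr(\vec U^\top\vec U)=k$ with the resulting factor of $\tr(\vec U^\top\vec U)/k$.
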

    \begin{proof}
        We have that
        \begin{align*}
            \|\vec U(x-b)\|_2^2 &\leq 2\|\vec U(x-\mu)\|_2^2 +2 \|\vec U(\mu-b)\|_2^2\\
            &\leq  2\|\vec U(x-\mu)\|_2^2 + 2\tr(\vec U^\top \vec U) \;,
        \end{align*}
        where the first step uses the triangle inequality combined with the inequality $(a+b)^2 \leq 2a^2 + 2b^2$.
        Using the above, we can write
        
        \begin{align*}
             \E_{X \sim G_w}[  \tau(X)] &\leq 
              2\E_{X \sim G_w}[ \|\vec U(x-\mu)\|_2^2 \1(\| \vec U (x - b) \|_2^2 > {49} \tr(\vec U^\top \vec U)]
              +2 \|\vec U(\mu-b)\|_2^2 \\
              &\leq  2\eps/\log(1/\eps) +2 \|\vec U(\mu-b)\|_2^2 
              \;,
        \end{align*}
        where the last line uses the goodness condition for $G$.
    \end{proof}

\subsection{Filtering Procedure}
\label{subsec:filtering}
{In this section, we formally describe the filtering procedure (\Cref{alg:downweighting-filter}) and prove its formal guarantee (\Cref{lem:filtering_combined}).
}
\begin{algorithm}[H]
	\caption{Down-weighting Filter}
	\label{alg:downweighting-filter}
	\begin{algorithmic}[1]
	\State \textbf{Input}: Distribution $P$ on $n$ points, weight $w(x)$ for each point $x$, score $\tilde{\tau}(x)$ for every point $x$ with the guarantee $\tilde{\tau}(x)<r$, threshold $T>0$, parameters $\beta>0$, $\ell_{\max} \in \Z_+$.
    \State \textbf{Output}: New weights $w'(x)$.
 \vspace{10pt}

    \State Initialize $w'(x) = w(x)$.
    \State $\ell_{\max} \gets r/(eT)$.
    \For{ $i=1,\ldots,\ell_{\max}$}\label{line:for_loop_filter} \Comment{Can be implemented in $O(\log \ell_{\max})$ with Binary Search}
    \If{$\E_{X \sim P}[w'(X) \tilde{\tau}(X)] >  T \beta$  }  
        \State $w'(x) \gets w(x)(1-\tilde{\tau}(x)/r)$. \label{line:for_loop_filter}
    \Else{\; \textbf{return} $w'$.}
    \EndIf
    \EndFor

    \State \textbf{return} $w'$.
	\end{algorithmic}
\end{algorithm}

\begin{lemma}[Filtering Guarantee]\label{lem:filtering}
    Let $P = (1-\eps)G + \eps B$ be a mixture of distributions supported on $n$ points and $\beta>1$. If $(1-\eps) \E_{X \sim G}[w(X) \tilde{\tau}(X)] < T$, $\| \tilde{\tau} \|_\infty \leq r$ and $\ell_{\max} > r/T$ then the new weights $w'(x)$ satisfy:
    \begin{enumerate}
        \item $(1-\eps) \E_{X \sim G}[w(X) - w'(X) ] < \frac{\eps}{\beta-1}  \E_{X \sim B}[w(X) - w'(X) ] $
        \item $\E_{X \sim P}[w(X) \tilde{\tau}(X)] \leq T\beta$.
    \end{enumerate}
    {The algorithm can be implemented in $O(n \log (r/T)))$ time.}
\end{lemma}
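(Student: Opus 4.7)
The plan is to track the iterates $w_0 := w, w_1, \ldots, w_t$ of the loop, interpreting the update as compounding, so that after $i$ triggering iterations $w_i(x) = w(x)(1-\tilde{\tau}(x)/r)^i$ (which is what the binary-search implementation cited in the pseudocode is really computing — searching for the correct exponent). Since $\tilde{\tau}(x) \in [0, r]$, the map $i \mapsto \E_{X\sim P}[w_i(X)\tilde{\tau}(X)]$ is monotone non-increasing in $i$, which both justifies the binary-search runtime and lets me identify the output as $w' = w_t$ where $t \leq \ell_{\max}$ is the first index at which $\E_P[w_t \tilde{\tau}] \leq T\beta$ (or $t = \ell_{\max}$ if no such index exists).

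For Part (1), I would do a per-iteration accounting of inlier versus outlier mass removed. At any triggering index $i < t$, the condition $\E_P[w_i \tilde{\tau}] > T\beta$, expanded as $(1-\eps)\E_G[w_i \tilde{\tau}] + \eps \E_B[w_i \tilde{\tau}] > T\beta$, combined with the monotonicity bound $(1-\eps)\E_G[w_i \tilde{\tau}] \leq (1-\eps)\E_G[w \tilde{\tau}] < T$ (from the first hypothesis), forces $\eps \E_B[w_i \tilde{\tau}] > T(\beta - 1)$, strictly. Since the pointwise mass removed in that iteration is exactly $w_i(x) - w_{i+1}(x) = w_i(x) \tilde{\tau}(x)/r$, dividing the two inequalities by $r$ gives
\[
(1-\eps)\E_{X \sim G}[w_i - w_{i+1}] < \frac{1}{\beta - 1}\, \eps \E_{X \sim B}[w_i - w_{i+1}]
\]
at every triggering step. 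Summing this telescopically from $i = 0$ to $i = t - 1$ yields $(1-\eps)\E_G[w - w'] < \frac{\eps}{\beta - 1}\E_B[w - w']$, which is the first conclusion.

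For Part (2) (reading the stated $w$ as a typo for $w'$, as otherwise the claim is vacuous), if the loop exits naturally at some $t < \ell_{\max}$ then the failed test directly gives $\E_P[w' \tilde{\tau}] \leq T\beta$. Otherwise $t = \ell_{\max}$ and $w'(x) = w(x)(1 - \tilde{\tau}(x)/r)^{\ell_{\max}}$; using the elementary inequality $(1 - s/r)^{\ell_{\max}} s \leq s\, e^{-\ell_{\max} s / r}$ for $s \in [0, r]$, whose right-hand side is maximized at $s = r/\ell_{\max}$ with value $r/(e\, \ell_{\max})$, together with the hypothesis $\ell_{\max} > r/T$ and $w \leq 1$, I get $\E_P[w' \tilde{\tau}] \leq r/(e\, \ell_{\max}) < T/e < T\beta$. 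The runtime claim then follows from the monotonicity of $\E_P[w_i \tilde{\tau}]$ in $i$: binary-searching for the threshold $t$ uses $O(\log \ell_{\max}) = O(\log(r/T))$ trials, each an $O(n)$ pass over the sample points. No step is a real obstacle — the only bookkeeping to watch is that the strict inequality in the hypothesis must propagate unchanged into each iteration via $w_i \leq w$.
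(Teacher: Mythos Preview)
Your proposal is correct and follows essentially the same approach as the paper: a per-iteration comparison of inlier versus outlier mass removed (using the hypothesis $(1-\eps)\E_G[w\tilde\tau]<T$ together with the triggering condition $\E_P[w_i\tilde\tau]>T\beta$), and the bound $s\,e^{-\ell s/r}\le r/(e\ell)$ to handle the $t=\ell_{\max}$ termination case. The paper's write-up only spells out a single iteration and leaves the telescoping implicit, so your version is in fact the more complete one; your observation that the compounding interpretation $w_i(x)=w(x)(1-\tilde\tau(x)/r)^i$ is what the binary-search comment refers to, and that Part~(2) must be read with $w'$ in place of $w$, are both right.
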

\begin{proof}
    The weight removed from inliers is: 
    \begin{align*}
        (1-\eps) \E_{X \sim G}[w(X) - w'(X) ]  = \frac{1-\eps}{r} \E_{X \sim G}[w(X) \tilde{\tau}(X)] \leq T/r
    \end{align*}
    The weight removed from outliers is:
    \begin{align*}
       \eps \E_{X \sim B}[w(X) - w'(X) ]  &= \frac{\eps}{r} \E_{X \sim B}[w(X) \tilde{\tau}(X)] \\
        &= \frac{1}{r} \left(  \E_{X \sim P}[w(X) \tilde{\tau}(X)] - (1-\eps)\E_{X \sim G}[w(X) \tilde{\tau}(X)] \right) \\
        &\geq \frac{T(\beta-1)}{r}
    \end{align*}
    Regarding runtime, for any $\ell > r/(eT)$ we have that $\E_{X \sim P}[w_\ell(X) \tilde{\tau}(X)] \leq \E_{X \sim P}[w_\ell(X) \tilde{\tau}(X)\exp(-\ell \tilde{\tau}(X)/r)] \leq \frac{r}{e \ell} \leq T$, where we used the inequality $x e^{-\alpha x} \leq 1/(e \cdot \alpha)$.
    
\end{proof}

\FILTERINGGG*
\begin{proof}
    Let $g(x) := \|\vec U(x-\mu_{P_w})\|_2^2$ and $\tau(x) := g(x) \1(p(x) > 100 \tr(\vec U^\top \vec U))$. Denote $\vec B_{P_w} := \vec \Sigma_{P_w} - \vec I$. Using the goodness conditions and the certificate lemma:
    \begin{align}
        \E_{X \sim G}[w(X) \tau(X)] &\leq   \frac{2\eps}{\log(1/\eps)}\|\vec U \|_\fr^2 + \|\vec U(\mu - \mu_{P_w}) \|_2^2 \\
        &\leq  \left(    \frac{2\eps}{\log(1/\eps)} + \|\mu - \mu_{P_w} \|_2^2  \right)\|\vec U \|_\fr^2 \tag{by \Cref{lem:inlier_scores_shifted}}\\
        &\leq  \left(    \frac{2\eps}{\log(1/\eps)} + \eps^2 + O\left(\| \vec B_{P_w} \|_\op \eps \right) \right)\|\vec U \|_\fr^2 \\
        &\lesssim \left(\frac{\eps}{\log(1/\eps)} + \eps \| \vec B_{P_w} \|_\op \right) \|\vec U \|_\fr^2 \\
        &\lesssim \left(\frac{\eps}{\log(1/\eps)} + \eps^2 \polylog(1/\eps) \right) \|\vec U \|_\fr^2 \tag{By \Cref{setting:main}}\\
        &\lesssim \left(\frac{\eps}{\log(1/\eps)} \right) \|\vec U \|_\fr^2 \;, \label{eq:rhs1}
    \end{align}
    where the first inequality uses \Cref{lem:inlier_scores_shifted} (this uses that $\|\vec U(\mu - \mu_{P_w}) \|_2^2 \leq \tr(\vec U^\top \vec U)$ which follows by the certificate lemma and that $\|\vec B_{P_w}\| <1$), the second inequality uses $\|\vec U (x- \mu)\|_2^2 = \tr(\vec U^\top \vec U (x- \mu)(x- \mu)^\top)$ and then the fact   $\tr(\vec A \vec B) \leq \tr(\vec A) \|\vec B\|_\op$ for $\vec A = \vec U^\top \vec U, \vec B = (x- \mu)(x- \mu)^\top$,  the third line uses the certificate lemma (\Cref{LemCertiAlt}).

    We can thus apply the filtering lemma (\Cref{lem:filtering}) with $T$ equal to the right hand side of \Cref{eq:rhs1},  and $\beta = \log(1/\eps)$ to obtain that $\E_{X \sim P}[w'(X)  \tau(X) ] \leq  T\beta \lesssim \eps \|\vec U \|_\fr^2$. In more detail,  
    \begin{align*}
        \eps \E_{X \sim B}[w'(X) g_t(X)] &\leq \eps \left(\E_{X \sim B}[w'(X) \tau_t(X)] + 100\tr(\vec U^\top \vec U) \right)\\
        &\lesssim  \eps \E_{X \sim B}[w'(X) \tau_t(X)] + \eps \tr(\vec U^\top \vec U)\\
        &\lesssim T \log(1/\eps) + \eps \tr(\vec U^\top \vec U)  \tag{by \Cref{lem:filtering}}\\
        &\lesssim \eps\tr(\vec U^\top \vec U)
    \end{align*}

    This completes the proof of part \ref{it:filter2}. Part \ref{it:filter1} follows directly from \Cref{lem:filtering}.
\end{proof}

\subsection{Algorithm from \cite{DKKLMS18-soda}}
We now describe  (a simplified version of) the  algorithm from \cite{DKKLMS18-soda}
and state its improved runtime.

\label{subsec:soda-alg}

\begin{algorithm}[H]
	\caption{Optimal Error Robust Mean Estimation Under Huber Contamination}
	\label{alg:robust_mean_algo_inefficient}
	\begin{algorithmic}[1]
	\State \textbf{Input}: Parameters $\eps,r$, {and sets $S_1,S_2 \subset \R^d$, where for each $i=1,2$ the uniform distribution on $S_i$ is of the form $(1-\eps) G_i + \eps B_i$ for some $G_i$ satisfying \Cref{DefModGoodnessCond} with appropriate parameters.}
    \State \textbf{Output}: A vector in $\R^d$.
    \vspace{10pt}

    \State Let $k:= r \log(1/\eps)$, and $P$ be the uniform distribution on $S_1$.
    \State {Calculate na\"ive estimate $\widehat{\mu}$ s.t. $\|\widehat{\mu} - \mu \|_2 = R$ for $R=O(\sqrt{d\log(1/\eps)})$.} \label{line:preprocess} \Comment{e.g., geometric mean}
    \State Initialize $t \gets 0$ and $w_t(x) = \1( \| x - \widehat{\mu} \|_2 \leq 2R ) $ for all $x \in \R^d$
    \State Denote by $P_t$ the weighted by $w$ version of $P$, by $\vec \Sigma_t$ the covariance of $P_t.$
    \State Let $v_{t,1}, \ldots, v_{t,d}$ be the eigenvector decomposition of $\vec \Sigma_t$ and $\lambda_{t,1} \geq \ldots \geq \lambda_{t,d}$ the corresponding eigenvalues.
    \While{$\sum_{i=1}^k \frac{1}{k}\lambda_{t,k} \geq 1 + \lambda$ for $\lambda = \frac{C}{r} \eps$} \label{line:while}
        \State Define the polynomial $g_t(x) = \|\vec U_t (x-\mu_t) \|_2^2$, where $\vec U_t = [v_{t,1}/\sqrt{k}, \ldots, v_{t,k}/\sqrt{k}]^\top$.
        \State Define the score function $\tau_t(x) = g_t(x) \1(g_t(x) > {100/r})$
        
        \State $w_{t+1} \gets  \mathrm{Filter}(P,\eps,w_t,\tau_t)$ \label{line:filtering} \Comment{\Cref{alg:downweighting-filter}}
        \State $t \gets t+1$, {and update vectors $v_{t,1},\ldots, v_{t,k}$.} \label{line:eignv}
    \EndWhile 
    \State Let $\cV = \mathrm{span}\{ v_1,\ldots, v_k \}$ and $\cV^\perp = \mathrm{span}\{ v_{k+1},\ldots, v_d \}$.
    \State $\mu_1 \gets \proj_{\cV^\perp} (\mu_t )$.
    \State $\mu_2 \gets \mathrm{BruteForce}(S_2,\eps,\cV)$.
    \State \textbf{return} $\mu_1 + \mu_2$.
	\end{algorithmic}
\end{algorithm}

\SODALEMMA*
\begin{proof}[Proof Sketch]
    {The algorithm consists of two parts. The first part (while loop of \Cref{line:while}) finds weights for the points $w_t(x)$ and subspace $\cV = \mathrm{span}\{ v_1,\ldots, v_k \}$  such that the weighted empirical mean  approximates the true mean in that subspace within $\ell_2$-distance $O(\eps{/r})$, and the second part runs a brute force procedure (\Cref{prop:LP}) to approximate the true mean in the remaining $k$-dimensional subspace. {The brute force} procedure runs in time exponential to the dimension of that subspace, but since $k=r\log(1/\eps)$, it becomes polynomial in $1/\eps$. 
    Concretely, the runtime of the brute force step is given in \Cref{prop:LP}. The  dataset $S_2$ that we use for this brute force step is of smaller cardinality: since the dimension of the subspace is $k$,  $n' = O(k/\eps^2)$ samples in $S_2$ are enough to satisfy the goodness conditions. Thus, the runtime from \Cref{prop:LP} with $n$ replaced by $n'$  is $O((d+ \eps^{-2-r})\polylog(d/\eps))$.
    
    In the remainder, we sketch the runtime bound for the first stage of the algorithm, by arguing that the number of iterations is $\tilde{O}(d)$. If we let our algorithm compute exact eigenvalues and eigenvectors, each iteration needs $n d^3$ time, resulting in a total of $\tilde{O}(n d^4 )$ for the runtime of the while loop of \Cref{line:while}. This can improved to $\tilde{O}(n d^2)$ by calculating approximate top $k$-eigenvectors (see, e.g., \cite{musco2015randomized}).}

    To show that the number of iterations is $\tilde{O}(d)$, we will show that every time that the filtering procedure of  \Cref{line:filtering} is used, it removes at least $\tilde{\Omega}(\eps/d)$ mass from the points' weights. Then, given that the outliers themselves have mass at most $O(\eps)$, the algorithm would stop after $\tilde{O}(d)$ iterations. The mass removed per iteration is (by definition of the filtering procedure)
    \begin{align*}
        \E_{X \sim P}[w_{t}(X) - w_{t+1}(X)] = \frac{\E_{X \sim P}[w_t(X)\tau_t(X)]}{\max_{x : w_t(x)>0} \tau_t(x)} \;.
    \end{align*}
    The denominator is upper bounded by $\max_{x : w_t(x)>0} \tau_t(x) \leq \tilde{O}(d)$ by the prepossessing
     of \Cref{line:preprocess}.\footnote{{
     Since the norm of the samples from the Gaussian distribution is tightly concentrated, the filtering procedure removes $o(\eps/\log(1/\eps))$ fraction of inliers.}} For the numerator,
     \begin{align*}
         \E_{X \sim P}[w_t(X)\tau_t(X)] = \E_{X \sim P}[w_t(X)g_t(X)] - \E_{X \sim P}[w_t(X)g_t(X)\1(g_t(X) \leq 100/r )] \;.
     \end{align*}
     The first term is at least $(1+\lambda)$ because of the condition in \Cref{line:while}. We also claim that the second term is at most $(1 + \lambda/2)$. These two would imply that 
     $\E_{X \sim P}[w_{t}(X) - w_{t+1}(X)] = \Omega(\lambda)= \Omega(\eps)$, which implies that each iteration removes $\tilde{\Omega}(\eps/d)$ fraction of points, completing the proof.
     The remaining claim for bounding $\E_{X \sim P}[w_t(X)g_t(X)\1(g_t(X) \leq 100/r )]$ is similar to \cite[Lemma 2.14 (iv)]{DiaKPP23-pca}: We first note that $\E_{X \sim G}[w_t(X) \1(g_t(X) \leq 100/r )] = 1 - O(\eps/\log(1/\eps))$, because at most $O(\eps/\log(1/\eps))$-fraction of inliers have $g_t(x) > 100/r$ (cf. goodness condition \ref{cond:polynomials} in \Cref{DefModGoodnessCond}). Then,
     \begin{align*}
         \left| \E_{X \sim P}[w_t(X) \1(g_t(X) \leq 100/r )] - 1 \right|
         &\leq (1-\eps)  \left| \E_{X \sim G}[w_t(X) g_t(X) \1(g_t(X) \leq 100/r )] - 1 \right| \\
         &+ \eps  \left| \E_{X \sim B}[w_t(X) g_t(X) \1(g_t(X) \leq 100/r )] - 1 \right| \;.
     \end{align*}
     The first term is upper bounded by $\lambda/4$ by the goodness condition \ref{cond:covariance} along with the certificate lemma
 (\Cref{LemCertiAlt_restated}). For the second term we can use the bound 
 \begin{align*}
     \eps  \left| \E_{X \sim B}[w_t(X) g_t(X) \1(g_t(X) \leq 100/r )] - 1 \right| \leq \eps \left( 100/r + 1 \right) = O(\eps/r) < \lambda/4 \;.
 \end{align*}

\end{proof}

{We will use the following brute force algorithm to estimate the mean in a low-dimensional subspace.
}

\begin{lemma}\label{prop:LP}
There exists an algorithm, $\textsc{BruteForce}$, that given a set of $n$ samples in $\R^d$ from an $\eps$-corrupted version of a distribution $G$ satisfying the first condition in \Cref{DefModGoodnessCond} and a subspace $\cV$ of dimension $k$, runs in time $O\left(dnk + (n\log n+\poly(k))2^{O(k)}\right)$ and finds a vector $\tilde{\mu}\in \R^d$ such that $\snorm{2}{\proj_{\cV}(\mu - \tilde{\mu})} = O(\eps)$. 
\end{lemma}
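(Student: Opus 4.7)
The plan is to reduce the problem to a Tukey-median-style estimation inside the low-dimensional subspace $\cV$: since $\dim(\cV) = k$, a constant-resolution net of the unit sphere in $\cV$ has size only $2^{O(k)}$, which makes a brute-force sweep over directions tractable. First I would project all $n$ input points onto $\cV$, which costs $O(dnk)$ time, after which the problem reduces to mean estimation of a contaminated distribution supported on the $k$-dimensional space $\cV$.

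With the data projected, construct a $c$-net $\cT$ of the unit sphere of $\cV$ for a small absolute constant $c$ (e.g.\ $c = 1/4$); standard volume bounds give $|\cT| \leq 2^{O(k)}$. For each $v \in \cT$, compute the empirical median $m_v$ of the scalar projections $\{v^\top x_i\}_{i=1}^n$ by linear-time selection (or sorting) in $O(n\log n)$ time, for a total of $O((n\log n)\,2^{O(k)})$ across the net. Then solve the feasibility linear program: find $\tilde\mu \in \cV$ (i.e.\ $k$ real variables) satisfying
\[
|v^\top \tilde\mu - m_v| \leq C_1 \eps \quad \text{for every } v \in \cT,
\]
for a sufficiently large constant $C_1$; this is an LP with $2^{O(k)}$ linear constraints and can be solved in $\poly(k)\,2^{O(k)}$ time. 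Aggregating costs gives the claimed $O(dnk + (n\log n + \poly(k))\,2^{O(k)})$ runtime.

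For correctness, I would first establish that $\mu_\cV := \proj_\cV(\mu)$ is feasible for the LP by showing $|v^\top \mu - m_v| \leq C_1 \eps$ for each $v \in \cT$. \Cref{cond:tails} of \Cref{DefModGoodnessCond} guarantees that strictly more than half the mass of $G$ lies within $O(\eps)$ of $v^\top \mu$ in direction $v$; combined with the (mild) anti-concentration of $v^\top X$ near its median that is implicit in the distributional hypotheses of the paper (cf.\ \Cref{rem:distributional-assumptions}), this ensures that the population median of $v^\top X$ under $(1-\eps)G+\eps B$ stays $O(\eps)$-close to $v^\top \mu$ even after Huber contamination. A Chernoff plus union bound over the $2^{O(k)}$ directions in $\cT$ then upgrades this to the empirical median $m_v$ whenever $n \gtrsim (k+\log(1/\delta))/\eps^2$, establishing feasibility. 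Given feasibility, any LP solution $\tilde\mu$ satisfies $\|\proj_\cV(\mu-\tilde\mu)\|_2 = O(\eps)$ by a standard net argument: setting $w = \proj_\cV(\mu-\tilde\mu)/\|\proj_\cV(\mu-\tilde\mu)\|_2$ and picking $v \in \cT$ with $\|v-w\|_2 \leq c$,
\[
\|\proj_\cV(\mu-\tilde\mu)\|_2 = w^\top(\mu-\tilde\mu) \leq v^\top(\mu-\tilde\mu) + c\|\proj_\cV(\mu-\tilde\mu)\|_2 \leq 2C_1\eps + c\|\proj_\cV(\mu-\tilde\mu)\|_2,
\]
which rearranges to $\|\proj_\cV(\mu-\tilde\mu)\|_2 \leq 2C_1\eps/(1-c) = O(\eps)$.

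The main obstacle will be the median-concentration step: establishing, using \emph{only} the median condition (plus the anti-concentration tacit in \Cref{rem:distributional-assumptions}), that the empirical median of the $\eps$-corrupted projected samples in each net direction is $O(\eps)$-close to $v^\top\mu$. Once this quantitative bound is in hand uniformly over $\cT$, the remaining pieces -- projection to $\cV$, LP feasibility, and the net-to-$\ell_2$ passage -- are routine.
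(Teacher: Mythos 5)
Your proposal matches the paper's proof essentially step for step: project onto $\cV$ in $O(dnk)$ time, take a constant-resolution net of the unit sphere of $\cV$ of size $2^{O(k)}$, use the directional median (justified by \Cref{cond:tails}) as an $O(\eps)$-accurate one-dimensional estimate in each net direction, solve the resulting feasibility LP, and finish with the standard net-to-$\ell_2$ argument. The "obstacle" you flag (that the empirical median of the corrupted projections is $O(\eps)$-close to $v^\top\mu$) is handled in the paper exactly as you anticipate — the quantitative margin $1/2-\Omega(\eps)$ built into the goodness condition via the Taylor expansion of the Gaussian CDF plus the VC-uniform bound — so there is no real gap.
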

\begin{proof}
  The algorithm first projects the points to the $k$-dimensional subspace. It then computes an $\eta$-cover $\cC$ of the $k$-dimensional unit sphere, which by~\Cref{fact:cover} consists of at most $(4/\eta)^k$ vectors, and for every vector $u \in \cC$, it finds an estimate $\mu_u$ for the $u$-projection of the mean satisfying $|u^\top \mu - \mu_u|\leq \gamma$ for $\gamma = O(\eps)$ being the same as in the right hand side of the first goodness condition in \Cref{DefModGoodnessCond} (because of that goodness condition, letting $\mu_u$ be just the median in the $u$-direction achieves $\gamma$-error). It then solves the linear program 
  \begin{align*}
    \text{find $x$ s.t.} \; |u^\top x - \mu_u|\leq 2\gamma \; \text{ $\forall u\in \cC$.}
  \end{align*}
  Let $v\ \in \cS^{d-1}$ be in the direction of $x-\mu$ and $u = \arg\min_{y \in \cC}\snorm{2}{v-y}$. Then,
  \begin{align}
    \snorm{2}{x-\mu} &= |v^\top (x-\mu)| \leq |u^\top (x-\mu)| + |(v-u)^\top (x-\mu)| \notag \\
    &\leq |u^\top (x-\mu_u )| + |u^\top (\mu_u - \mu)| + \snorm{2}{v-u} \snorm{2}{x-\mu} \notag \\
    &\leq 4\gamma + \eta\snorm{2}{x - \mu} \;, \label{eq:bound_radius}
  \end{align}
  which establishes that $\snorm{2}{x-\mu} \leq 4\gamma/(1-\eta)$. 

  Regarding runtime, the projection of the data takes $O(dnk)$ time, solving the median in each direction takes $O((4/\eta)^k n \log n)$ time. For the LP, the separation oracle that checks all the constraints exhaustively needs time $ (4/\eta)^k k$ and the LP can be solved by $O(\poly(k) \log(R/r))$ calls to that separation oracle, where $R$, $r$ are bounds on the volume of the feasible set $\eta\leq \mathrm{vol}(\{x \in \R^k \; : |u^\top x - \mu_u|\leq \gamma_1 \; \  \forall u\in \cC  \}) \leq R$. For the lower bound we have that the volume of the feasible set is at least the volume of a $k$-dimensional ball of radius  $\gamma$ because $\mu$ is a feasible solution and $\mu+\gamma v$ remains feasible for any unit vector $v$. For the upper bound, we have that the feasible set is included in a $k$-dimensional ball of radius $4\gamma/(1-\eta)$ by Equation~\eqref{eq:bound_radius}. Thus the runtime for the LP is $O(\poly(k) \log(R/r)) = O(\poly(k))$. Finally, choosing $\eta = 1/4$, simplifies the total runtime to $O\left(dnk + (n\log n+\poly(k))2^{O(k)}\right)$.
\end{proof}

\section{Omitted Proofs from \Cref{sec:regression}}\label{appendix:regression}

{In this section, we provide the details omitted from \Cref{sec:regression}.
\Cref{subsec:cond-dist} proves the properties of the conditional distribution of the covariates conditioned on the responses being in an interval.
\Cref{subsec:regression-show_goodness} then uses these properties to show that the samples from this distribution satisfy the goodness condition.
Finally, we provide the details missing from the proof sketch in the main body in \Cref{subsec:regresion-final-proof}.
}

\subsection{Conditional Distribution of Covariates}
\label{subsec:cond-dist}

In \Cref{sec:robust_mean}, we mainly focused on the case where the underlying distribution is $\cN(\mu,\vec I_d)$. For the purposes of our robust regression algorithm, we show in this section that the goodness conditions are satisfied by the distribution $G_I$ and get as a corollary a robust mean estimator for $\mu_{G_I}$.

\begin{lemma}[Properties of $G_I$] \label{lem:subgaussianity}
Let $I = [a,b]$ be an interval of length $\ell$. Further suppose that $\max(|a|,|b|) \leq 2 \sigma_y$.
Let $X \sim G_I$ and let $\mu$ and $\vec \Sigma$ be the mean and covariance of $X$. Then:
\begin{enumerate}
    \item $\left\|\mu - a \beta/\sigma_y^2 \right\|_2 \leq \ell \|\beta\|_2/\sigma_y^2   $ and $\|\mu\|_2 \leq 4 \|\beta\|_2/\sigma_y $
    \item $\left\|\vec \Sigma - \vec I \right\|_\op \leq 9\|\beta\|_2^2/\sigma_y^2$.
    \item $\|X -\mu \|_{\psi_2} \lesssim 1 + \|\beta\|_2/\sigma_y$.
\end{enumerate}
  \end{lemma}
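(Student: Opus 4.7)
The plan is to exploit the mixture structure of $G_I$ provided by \Cref{def:conditional_distr}: writing $G_I = \int_I q(a')\, G_{a'}\, da'$, where $q(a')$ is the density of $y$ conditioned on $y \in I$ (supported on $I$ of length $\ell$) and each $G_{a'} = \cN(a'\beta/\sigma_y^2,\ \vec I_d - \beta\beta^\top/\sigma_y^2)$. All three claims then reduce to computations with this mixture and the observation that the component covariance $\vec I - \beta\beta^\top/\sigma_y^2$ has operator norm at most $1$ (and is PSD), while all component means lie in $\mathrm{span}(\beta)$.

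For the first claim, I would write $\mu = \E_{a' \sim q}[a'\beta/\sigma_y^2] = (\E[a']/\sigma_y^2)\beta$ by linearity. Since $a' \in I = [a,b]$ with $|a'-a| \leq \ell$, we get $\|\mu - a\beta/\sigma_y^2\|_2 = |\E[a']-a| \cdot \|\beta\|_2/\sigma_y^2 \leq \ell\|\beta\|_2/\sigma_y^2$. For the absolute bound, $|\E[a']| \leq \max(|a|,|b|) \leq 2\sigma_y$ yields $\|\mu\|_2 \leq 2\|\beta\|_2/\sigma_y \leq 4\|\beta\|_2/\sigma_y$.

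For the second claim, I would apply the law of total covariance: $\vec \Sigma = \E_{a'}[\cov(X \mid y = a')] + \cov_{a'}(\E[X \mid y=a']) = (\vec I - \beta\beta^\top/\sigma_y^2) + (\Var(a')/\sigma_y^4)\beta\beta^\top$. Hence $\vec \Sigma - \vec I = (\Var(a')/\sigma_y^4 - 1/\sigma_y^2)\beta\beta^\top$, a rank-one matrix with operator norm $\bigl|\Var(a')/\sigma_y^4 - 1/\sigma_y^2\bigr|\,\|\beta\|_2^2 \leq (1 + \Var(a')/\sigma_y^2)\|\beta\|_2^2/\sigma_y^2$. Using $\Var(a') \leq \E[(a')^2] \leq (2\sigma_y)^2 = 4\sigma_y^2$ under the hypothesis $\max(|a|,|b|) \leq 2\sigma_y$, the bound becomes at most $5\|\beta\|_2^2/\sigma_y^2 \leq 9\|\beta\|_2^2/\sigma_y^2$.

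For the third claim (sub-Gaussian norm), the natural decomposition is $X - \mu = (X - a'\beta/\sigma_y^2) + (a'\beta/\sigma_y^2 - \mu)$, where I think of $a'$ as the value of $y$ for the sample $X$. Conditioned on $a'$, the first term is a centered Gaussian with covariance $\vec I - \beta\beta^\top/\sigma_y^2$ of operator norm at most $1$, so its sub-Gaussian norm is $O(1)$. The second term is $((a' - \E[a'])/\sigma_y^2)\beta$, and since $a'$ lies in an interval of length $\ell \leq O(\sigma_y)$, it is a bounded random variable with magnitude $\lesssim \ell\|\beta\|_2/\sigma_y^2 \lesssim \|\beta\|_2/\sigma_y$, hence sub-Gaussian with norm $\lesssim \|\beta\|_2/\sigma_y$. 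Taking the sum bounds the total sub-Gaussian norm by $O(1 + \|\beta\|_2/\sigma_y)$.

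The main obstacle I anticipate is the bookkeeping for the sub-Gaussian norm step, where one needs a careful argument that summing conditional sub-Gaussian guarantees (over $a'$) yields an unconditional sub-Gaussian bound; this is standard but needs to take into account that the ``noise'' and ``mean shift'' components are not independent when we do not condition on $a'$. This is handled by bounding the MGF in each direction $v \in \cS^{d-1}$: $\E[\exp(\lambda v^\top(X - \mu))] = \E_{a'}[\exp(\lambda v^\top(a'\beta/\sigma_y^2 - \mu)) \cdot \exp(\tfrac{1}{2}\lambda^2 v^\top(\vec I - \beta\beta^\top/\sigma_y^2)v)]$, and since the inner exponential is uniformly $\leq e^{\lambda^2/2}$ and the outer factor is bounded because $a' - \E[a']$ is bounded, the product is controlled by $\exp(C\lambda^2(1 + \|\beta\|_2^2/\sigma_y^2))$, giving the claimed $\psi_2$ bound.
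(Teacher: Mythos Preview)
Your proposal is correct and follows essentially the same strategy as the paper: exploit the mixture representation $G_I = \int_I q(a')\,G_{a'}\,da'$ with Gaussian components $G_{a'}$ whose means lie on the segment $\{z\beta/\sigma_y^2 : z\in I\}$, and read off the mean, covariance, and sub-Gaussian bounds from this structure. The only cosmetic differences are that the paper bounds $\|\vec\Sigma-\vec I\|_\op$ by computing the second moment and applying the triangle inequality (rather than your cleaner law-of-total-covariance identity, which directly exposes the rank-one form), and for the sub-Gaussian norm the paper works with the $L^p$-moment definition and bounds $\E_{G_I}[|v^\top(X-\mu)|^p]^{1/p}$ by $\max_{a'\in I}\E_{G_{a'}}[|v^\top(X-\mu_{G_{a'}})|^p]^{1/p}+\|\mu_{G_{a'}}-\mu\|_2$, whereas you use the MGF; both routes are standard and yield the same bound.
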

\begin{proof}
  For any $z \in I$, we know that the mean of $G_z$ is equal to $z \beta/\sigma_y^2$ by \Cref{def:conditional_distr}.
Since $G_I$ is a convex combination of $G_z$ for $z \in I$, it follows that $\mu$ is a convex combination of $a \beta/\sigma_y^2$ and $b \beta/\sigma_y^2$.
    Therefore, the mean $\mu$ satisfies $\|\mu - a \beta/\sigma_y^2\| \leq |b-a|\|\beta\|_2/\sigma_y^2$.

  Similarly, for any $z \in I$, $\E_{W \sim G_z}[WW^\top]= \vec I_d - \frac{1}{\sigma_y^2} \beta \beta^\top + \frac{z^2}{\sigma_y^4} \beta\beta^\top$ by \Cref{def:conditional_distr}.
Thus the second moment of $X \sim G_I$ is equal to  $\E_{X \sim G_I}[XX^\top]= \vec I_d - \frac{1}{\sigma_y^2} \beta \beta^\top + \frac{\alpha a^2 + (1 - \alpha) b^2}{\sigma_y^4} \beta\beta^\top$ for $\alpha \in [0,1]$.
Thus, we have that 
\begin{align*}
\|\vec \Sigma - \vec I_d\|_\op &= \left\|\E_{X \sim G_I}[XX^\top] - \mu\mu^\top - \vec I_d\right\|_\op \\
&\leq \|\mu\|_2^2 + \frac{1}{\sigma_y^2} \|\beta\|_2^2 + \max(a^2,b^2) \|\beta\|_2^2/\sigma_y^4 \\
&\leq 2\max(a^2,b^2)\|\beta\|_2^2/\sigma_y^4 + \frac{\|\beta\|_2^2}{\sigma_y^2} \leq 9 \|\beta\|_2^2/\sigma_y^2. 
\end{align*}
  The proof {of the last claim} is based on  the fact that each component of the mixture $G_I$ is Gaussian (and thus sub-Gaussian). For a single one-dimensional Gaussian $Y \sim \cN(0,\sigma^2)$, the sub-Gaussian norm is constant, in particular $\snorm{\psi_2}{Y} = \sqrt{2/\pi}$.
  \begin{align*}
    \E_{X \sim G_I}\left[|v^\top (x-\mu_{G_I})|^p\right]^{1/p} &= \left( \E_{a \sim I} \E_{X \sim G_a}\left[|v^\top (x-\mu_{G_I})|^p\right]  \right)^{1/p} \\
    &= \max_{a \in I} \E_{X \sim G_a}\left[|v^\top (x-\mu_{G_I})|^p\right]^{1/p} \\
    &\leq \max_{a \in I} \E_{X \sim G_a}\left[|v^\top (x-\mu_{G_a})|^p\right]^{1/p} + \snorm{2}{\mu_{G_a}-\mu_{G_I}} \\
    &\leq \sqrt{2p/\pi} + \ell\snorm{2}{\beta}/\sigma_y^2\;.
  \end{align*}
  Hence, $\snorm{\psi_2}{v^\top (X-\mu_{G_I})}\leq \sqrt{2/\pi} + \ell\snorm{2}{\beta}/\sigma_y^2$ for every direction $v \in \cS^{d-1}$. 
\end{proof}

\subsection{Robustness of Goodness Conditions to Spectral Noise} \label{subsec:regression-show_goodness}

{We now show that a large set of i.i.d.\ samples from the conditional distribution $G_I$ satisfies the goodness condition provided that the interval is close to the origin and the norm of $\beta$ is small.}
  
\begin{lemma}\label{lem:show_goodness}
Let $I = [a,b]$ be an interval of length $\ell$.
Let $\eps_0$ be a small enough constant.
Let $\eps \in (0,\eps_0)$.
Further suppose that $\max(|a|,|b|) \leq 2\sigma_y$, $\|\beta\|_2 \lesssim \sigma \eps \log(1/\eps)$, and $\alpha \geq \eps^2$.

Let $S$ be a set of $n$ i.i.d.\ points from the distribution $G_I$.
If $n\gg \frac{k^3(d\log(k/\eps) + \log(1/\tau))}{\min(\alpha^2,\eps^2/\log^2(1/\eps))}$ then with probability $1-\tau$,
the uniform distribution on $S$ satisfies 
  $(\eps,\alpha,k)$-goodness condition (\Cref{DefModGoodnessCond}) with respect to $\mu_{G_I}$.
\end{lemma}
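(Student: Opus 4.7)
The strategy is to mirror the proof of \Cref{LemFiniteSample} for standard Gaussians, substituting the structural properties of $G_I$ established in \Cref{lem:subgaussianity}. The key structural observation is that, under the assumption $\|\beta\|_2 \lesssim \sigma \eps \log(1/\eps)$, we have $\|\beta\|_2/\sigma_y = O(\eps \log(1/\eps))$, so by \Cref{lem:subgaussianity} the distribution $G_I$ is $O(1)$-sub-Gaussian, has $\|\vec\Sigma_{G_I} - \vec I\|_{\op} \leq 9\|\beta\|_2^2/\sigma_y^2 = O(\eps^2 \log^2(1/\eps))$, and can be written as a mixture of Gaussians $G_{a'}$ ($a' \in I$) each with covariance $\vec I - \beta\beta^\top/\sigma_y^2$ and mean at most $\ell \|\beta\|_2/\sigma_y^2 = O(\eps)$ away from $\mu_{G_I}$. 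I will verify the four conditions of \Cref{DefModGoodnessCond} one at a time, keeping the parameter $\alpha \geq \eps^2$ so that $\alpha \log(1/\alpha) \gg \eps^2 \log^2(1/\eps)$.

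For \Cref{cond:mean} and \Cref{cond:covariance} (the reweighting-stability conditions) I plan to appeal directly to \Cref{LemStabGaussianPrior}: since $G_I$ is $O(1)$-sub-Gaussian and $\|\vec\Sigma_{G_I} - \vec I\|_{\op} \leq \alpha \log(1/\alpha)$, that lemma immediately yields both conditions from $n \gg (d + \log(1/\tau))/\alpha^2$ i.i.d.\ samples. For \Cref{cond:tails}, the VC-inequality argument of \Cref{LemFiniteSample} carries over verbatim; all that is required is a lower bound on the density of $v^\top(X - \mu_{G_I})$ near the origin uniformly in $v \in \cS^{d-1}$. Because this projection is a one-dimensional mixture of Gaussians whose means differ by at most $\ell\|\beta\|_2/\sigma_y^2 \ll 1$ and whose variances lie in $[1 - O(\eps^2), 1]$, its cdf admits the Taylor expansion $\bar\Phi(t) = 1/2 - \Theta(t) + O(t^3)$ used in \Cref{eq:taylor_to_gauss_cdf}, so the argument goes through unchanged with sample complexity $O((d + \log(1/\tau))/\eps^2)$.

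The substantive work is \Cref{cond:polynomials}. I plan to reuse the covering argument from \Cref{LemFiniteSample} over the class $\cV$ of nearly-orthogonal $k \times d$ matrices, together with its discretization-to-cover reduction and the union bound; this part is purely combinatorial and never used Gaussianity except through two ingredients. The first ingredient is a Hanson-Wright-type tail $\Pr[|(X-\mu)^\top \vec U^\top \vec U (X-\mu) - \tr(\vec U^\top \vec U)| > t] \leq 2\exp(-c \min(t^2/\|\vec U^\top\vec U\|_{\fr}^2, t/\|\vec U^\top\vec U\|_{\op}))$ for $X \sim G_I$. I would obtain this by conditioning on $y = a'$, whitening by $(\vec I - \beta\beta^\top/\sigma_y^2)^{-1/2}$ so that the centered vector becomes standard Gaussian, applying \Cref{fact:hw}, then integrating $a'$ over $I$; because the whitening rescales $\vec U^\top \vec U$ by a factor $1 + O(\eps^2\log^2(1/\eps))$ and the bias $\mu_{G_{a'}} - \mu_{G_I}$ adds at most an $O(\eps)$ term to $\|\vec U x\|_2$, the resulting constants match those used to derive \Cref{eq:conc-p-hanson-2} up to absolute factors. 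The second ingredient is the sub-gamma bound on $\tau_{\vec U}(X) - \E[\tau_{\vec U}(X)]$ together with $\E[\tau_{\vec U}(X)] \lesssim \eps^3$; both follow from the same tail bound via the Cauchy--Schwarz step of \Cref{LemFiniteSample}, so they transfer to $G_I$ without modification.

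The main obstacle I anticipate is simply bookkeeping the bias term $\|\beta\|_2/\sigma_y$: one must verify that every place where the Gaussian proof used independence or exact isotropy can absorb a multiplicative slack of $1 + O(\eps\log(1/\eps))$ without inflating the right-hand-side of \Cref{cond:polynomials} beyond $\eps/\log(1/\eps)$. Since all such slacks enter quadratically (through $\vec\Sigma_{G_I} - \vec I$ or through $(\mu_{G_{a'}}-\mu_{G_I})(\mu_{G_{a'}}-\mu_{G_I})^\top$), the worst contribution is $O(\eps^2 \log^2(1/\eps))$, which is dominated by $\eps/\log(1/\eps)$ for small enough $\eps$. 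After this check, the covering size $(3/\eta)^{dk}$ with $\eta = \Theta(\eps^3/d^2)$ and the truncation-on-norm step of \Cref{eq:expsmall} combine to yield exactly the stated sample complexity (with $\log(k/\eps)$ in place of $\log(d/\eps)$, because $p_{\vec U}$ depends on $X$ only through the $k$-dimensional image $\vec U X$, so one can apply the cover after projecting).
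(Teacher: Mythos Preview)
Your proposal is correct and matches the paper's proof: for each goodness condition you reduce to the corresponding step of \Cref{LemFiniteSample} via the structural properties in \Cref{lem:subgaussianity}, and for \Cref{cond:polynomials} you condition on $y=a'$ and apply Hanson--Wright to the resulting Gaussian, exactly as the paper does. Two minor remarks: the paper exploits $\vec\Sigma_{G_{a'}} \preceq \vec I$ to bound $\|\vec\Sigma_{a'}^{1/2}\vec U^\top\vec U\vec\Sigma_{a'}^{1/2}\|_{\fr},\|\cdot\|_{\op}$ directly by $\|\vec U^\top\vec U\|_{\fr},\|\cdot\|_{\op}$ rather than tracking a $1+O(\eps^2\log^2(1/\eps))$ whitening factor, and your closing parenthetical about projecting the data to $k$ dimensions to replace $\log(d/\eps)$ by $\log(k/\eps)$ is not sound as stated (the $\eta$-cover is over the $k\times d$ matrices $\vec U$, not over the samples), though the paper's own proof does not explicitly justify that logarithmic discrepancy either.
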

\begin{proof}
We show this for each one of the conditions separately.
\paragraph{Condition~\ref{cond:tails}}

We first calculate the true probability of $v^\top(X - \mu_{G_I})$ being larger than $\gamma$.
The distribution of $v^\top (X-\mu_{G_a})$ is $\cN(0,\tilde{\sigma}_v)$ where $\tilde{\sigma}_v:=1 - (v^\top \beta)^2/\sigma^2_y$.
Using  Taylor approximation for the cdf of the projected points,
\begin{align*}
     \pr_{X \sim G_a}[v^\top (X-\mu_{G_a}) > \tilde{\sigma}_v t] \leq \frac{1}{2} -\frac{t}{ \sqrt{2\pi}} + \frac{t^3}{2\sqrt{2\pi}} \;.
\end{align*}
Thus,
\begin{align*}
    &\max_{v \in \cS^{d-1}} \pr_{X \sim G_I}[v^\top (X-\mu_{G_I}) > t\tilde{\sigma}_v + \max_{a \in I}\|\mu_{G_a} - \mu_{G_I} \|_2] \\
  &\qquad \leq \max_{a \in I} \pr_{X \sim G_a}[v^\top (X-\mu_{G_I})> t\tilde{\sigma} + \max_{a \in I}\|\mu_{G_a} - \mu_{G_I}\|_2]  \\
  &\qquad\leq \max_{a \in I} \pr_{X \sim G_a}[v^\top (X-\mu_{G_a})> t\tilde{\sigma}] \\
  &\qquad\leq \frac{1}{2} - \frac{t}{\sqrt{2 \pi}} + \frac{t^3}{2 \sqrt{2 \pi}} \,.
\end{align*}
Applying VC inequality as in \Cref{LemFiniteSample},
we obtain that the empirical probability over a set of size $n \gtrsim (d + \log(1/\tau))/\eps^2$ samples will
satisfy that with probability $1 -\tau$,
\begin{align*}
    \max_{v \in \cS^{d-1}}  \pr_{X \sim S}[v^\top (X-\mu_{G_I})> t\max_{v\in \cS^{d-1}}\tilde{\sigma}_v + \max_{a \in I}\|\mu_{G_a} - \mu_{G_I} \|] \leq \frac{1}{2} - \frac{t}{\sqrt{2 \pi}} + \frac{t^3}{2 \sqrt{2 \pi}} + \epsilon \,.
\end{align*}
The expression on the right is less than $1/2$ for $t\gtrsim \eps$.
Thus, we obtain that Condition~\ref{cond:tails} is satisfied with $\gamma_1 \lesssim \epsilon  + \max_{a \in I}\|\mu_{G_a} - \mu_{G_I}\|_2 \lesssim \epsilon + \ell \beta/\sigma_y^2$.

\paragraph{Conditions~\ref{cond:mean} and~\ref{cond:covariance}} The proof of these two conditions for the Gaussian case only uses its sub-Gaussian concentration. 
From Lemma~\ref{lem:subgaussianity}, we have that $\snorm{\psi_2}{X-\mu_{G_I}} \leq \sqrt{2/\pi} + \ell\snorm{2}{\beta}/\sigma_y^2$ which is less than a constant when  $\ell \lesssim \sigma$ and $\snorm{2}{\beta} \lesssim \sigma$. 

Hence, identically to \Cref{LemStabGaussianPrior},
we have that if $n \gg (d + \log(1/\tau))/\alpha^2$,
for any set $S' \subset S $ with $|S'| \geq (1-\alpha)|S|$,
the sample mean $\mu_{S'}$ and the sample second moment $\overline{\vec \Sigma}_{S'}$
satisfy that $\snorm{2}{\mu_{S'} - \mu_{G_I}} \lesssim \alpha \sqrt{\log(1/\alpha)}$ and $\|\overline{\vec \Sigma}_{S'}- \vec \Sigma_{G_I}\|_\op \lesssim \alpha \log(1/\alpha)$.

Using the triangle inequality, we have that $\|\overline{\vec \Sigma}_{S'}-\vec I\|_\op \leq  \|\overline{\vec \Sigma}_{S'}-\vec I\|_\op +\|\overline{\vec \Sigma}_{S'} - \vec \Sigma_{G_I}\|_\op \lesssim \alpha \log(1/\alpha) + \eps^2 \log^2(1/\eps) \lesssim  \alpha \log(1/\alpha)$, where we used \Cref{lem:subgaussianity} and the assumptions on $\|\beta\|_2$ and $\alpha$.
Thus, \Cref{cond:mean,cond:covariance} are satisfied. 

\paragraph{Condition~\ref{cond:polynomials}} 
We will follow the same notation as in \Cref{LemFiniteSample,def:conditional_distr}.
Observe that the proof of \Cref{LemFiniteSample} relied on \Cref{eq:conc-p-hanson-2}.
Our goal will be to show that a similar right tail concentration inequality is satisfied for the distribution $G_I$.
Since $G_I$ is a convex combination of $G_z$ for $z \in [a,b]$, it suffices to show that each $G_z$ satisfies \Cref{eq:conc-p-hanson-2} after small changes.

Let $X \sim G_z$. Then $x = \vec \Sigma_z^{1/2}y + \mu_z$ for $y \sim \cN(0,\vec I)$ and $\mu_z = z\beta/\sigma_y^2 $ and $\vec \Sigma_z = \vec I - \beta \beta^\top/\sigma_y^2$.
We will show that the polynomial $p_{\vec U}(x) = \|\vec U (x - \mu_I)\|_2^2$ satisfies similar right tail concentration inequality as $\|\vec U y\|_2^2$ in \Cref{LemFiniteSample}.
Applying triangle inequality,
\begin{align*}
    p_{\vec U}(x) &= \|\vec U (x - \mu_I)\|_2^2 \leq 2\|\vec U \vec \Sigma^{1/2}_z y\|_2^2 + 2\|\vec U( \mu_z -  \mu_I)\|_2^2 \\
    &\leq  2 \|\vec U \vec \Sigma^{1/2}_z y\|_2^2 + 2 \|\vec U\|_\op^2 \|\mu_z - \mu_I\|_2^2\,.
\end{align*}
By \Cref{lem:subgaussianity}, $\|\mu_z - \mu_I\|_2^2$, is at most $\ell^2\| \beta\|_2^2/\sigma_y^4 \leq 1$ under the assumptions.
Thus, we obtain the following deterministic expression:
\begin{align}
\label{eq:poly-regression}
    p_{\vec U}(x) \leq   2 \|\vec U \vec \Sigma^{1/2}_z y\|_2^2 + 2 \tr(\vec U^\top \vec U).
\end{align}
The expectation of $\|\vec U \vec \Sigma^{1/2}_z y\|_2^2$ is $\tr(\vec \Sigma^{1/2}_z \vec U^\top \vec U \vec \Sigma_z^{1/2}) = \tr(\vec U^\top \vec U \vec \Sigma_z)$, which is less than $\tr(\vec U^\top \vec U)$ since $\vec \Sigma \preceq \vec I$.
Combining this with \Cref{eq:poly-regression}, we obtain
\begin{align}
\label{eq:poly-regression-2}
    \P(p_{\vec U}(x) \geq 100 \tr(\vec U^\top \vec U)) \leq      \P( \|\vec U\vec \Sigma_z^{1/2}y\|_2^2 -\E[\|\vec U\vec \Sigma_z^{1/2}y\|_2^2] \geq 48 \tr(\vec U^\top \vec U))\,.
\end{align}
Finally, the right tail of $\|\vec U \vec \Sigma^{1/2}_z  y\|_2^2 - \E[\|\vec U \vec \Sigma^{1/2}_z y\|_2^2]$ depends on $\|\vec \Sigma^{1/2}_z \vec U^\top \vec U \vec \Sigma_z^{1/2} \|_\fr$ and $\|\vec \Sigma^{1/2}_z \vec U^\top \vec U \vec \Sigma_z^{1/2}\|_\op$ by the Hanson Wright inequality.
Both of these expressions are again bounded by $\|\vec U^\top \vec U \|_\fr$ and $\|\vec U^\top \vec U\ \|_\op$, respectively as follows: 
$$\|\vec \Sigma^{1/2}_z \vec U^\top \vec U \vec \Sigma_z^{1/2} \|_\fr \leq \|\vec \Sigma^{1/2}_z\|_\op \|\vec U^\top \vec U \|_\fr\|\vec \Sigma_z^{1/2}\|_\op \leq \|\vec U^\top \vec U \|_\fr$$
and 
$$\|\vec \Sigma^{1/2}_z \vec U^\top \vec U \vec \Sigma_z^{1/2}\|_\op \leq \|\vec \Sigma^{1/2}_z\|_\op\|\vec U^\top \vec U \|_\op\|\vec \Sigma_z^{1/2}\|_\op\leq \|\vec U^\top \vec U\|_\op\,,$$
where we use $\|\vec \Sigma_z\|_\op \leq 1$ and \Cref{fact:FrobeniusSubmult}.
Hence, applying Hanson-Wright inequality to $\|\vec U \vec \Sigma_z^{1/2}y\|_2^2$ along with the above relations, \Cref{eq:poly-regression-2} implies that $p_{\vec U}(x)$ satisfies \Cref{eq:conc-p-hanson-2} with  $(\eps/k)^{5}$ in the right hand side.  
The rest of the proof goes through similar to \Cref{LemFiniteSample}.

\end{proof}

Thus, \Cref{thm:first-stage} is applicable and we obtain the following algorithm. 
\begin{corollary} \label{cor:mu_G_I_estimator}
Let $\eps_0$ be a small enough constant.
Let $\eps \in (0,\eps_0), c \in (0,1), \delta\in(0,1)$.
  Let $S\in \R^d$ be an $\eps$-corrupted set of $n \gg  \frac{d+\log(1/\tau)}{\eps^2}\polylog(d/\eps)  $ points from the distribution $G_I$ of \Cref{def:conditional_distr} with $\ell \leq \sigma/\log(1/\eps)$ and $\snorm{2}{\beta} \leq \sigma \eps \log(1/\eps)$.
  Then, the \Cref{alg:robust_mean_algo_full} produces an estimate $\widehat{\mu}$ such that $\snorm{2}{\widehat{\mu} - \mu_{G_I}} \leq \eps$ with probability at least $1-\tau$ in time $(nd + 1/\eps^{2+c})\polylog(n,d,1/\eps,1/\tau)$.
\end{corollary}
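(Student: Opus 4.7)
The plan is to essentially reduce this corollary to the guarantees of \Cref{thm:robust_mean} (realized by \Cref{alg:robust_mean_algo_full}) after verifying that its required preconditions, which are phrased in terms of the goodness condition (\Cref{DefModGoodnessCond}) rather than Gaussianity, hold for the inlier distribution $G_I$. This is exactly the situation foreshadowed in \Cref{rem:distributional-assumptions}, where it is noted that the mean estimation algorithm's analysis extends to any distribution whose linear projections are subgaussian and centrally symmetric with constant density around the median, and whose quadratic projections satisfy a Hanson-Wright-type inequality.

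The first step is to invoke \Cref{lem:show_goodness} with $\alpha = \eps/\log(1/\eps)$ and $k = \polylog(d/\eps)$ to conclude that the uniform distribution on the $n$ inliers from $G_I$ is $(\eps, \alpha, k)$-good with respect to $\mu_{G_I}$. The preconditions of \Cref{lem:show_goodness} are all met under the corollary's hypotheses: the assumption $\ell \leq \sigma/\log(1/\eps)$ together with the random choice of $I$ centered in $[-\widehat\sigma_y, \widehat\sigma_y]$ (as in \Cref{alg:robust_regression}) ensures $\max(|a|,|b|) \leq 2\sigma_y$; the hypothesis $\|\beta\|_2 \leq \sigma\eps\log(1/\eps)$ is directly given; and $\alpha = \eps/\log(1/\eps) \geq \eps^2$. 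The required sample complexity $n \gg k^3(d\log(k/\eps) + \log(1/\tau))/\min(\alpha^2, \eps^2/\log^2(1/\eps))$ simplifies, for the stated $\alpha$ and $k$, to $n \gg (d + \log(1/\tau)) \polylog(d/\eps)/\eps^2$, which matches the hypothesis of the corollary.

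The second step is to run \Cref{alg:robust_mean_algo_full} on the corrupted sample. The preprocessing step (\Cref{lem:preprocessing}) requires only conditions \ref{cond:mean} and \ref{cond:covariance}, which hold by the goodness guarantee, and produces the warm-start spectral bound required by \Cref{setting:main}. The remainder of the algorithm's analysis in \Cref{sec:robust_mean}---namely the certificate lemma (\Cref{LemCertiAlt_restated}), the filtering guarantee (\Cref{lem:filtering_combined}), and the potential-function decrease in both Case 1 (\Cref{cl:case1_restated}) and Case 2 (\Cref{cl:case2_claim_restated}, \Cref{cl:spectrum-dominated})---is carried out entirely inside the goodness framework and makes no further use of Gaussianity. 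Consequently, the algorithm outputs $\widehat\mu$ with $\|\widehat\mu - \mu_{G_I}\|_2 = O(\eps/c)$; absorbing the constant $c$, this is $O(\eps)$ as stated, and the runtime guarantee is inherited verbatim from \Cref{thm:robust_mean}.

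The main subtlety to be careful about is that some intermediate bounds in \Cref{sec:robust_mean}, such as those inside \Cref{lem:g_scores_reduction}, implicitly use that the inlier covariance is close to $\vec I$. For $G_I$, \Cref{lem:subgaussianity} gives $\|\vec\Sigma_{G_I} - \vec I\|_\op \leq 9\|\beta\|_2^2/\sigma_y^2 \lesssim \eps^2\log^2(1/\eps)$, which is negligible compared to the $\alpha\log(1/\alpha) = O(\eps)$ slack inherent in condition \ref{cond:covariance} and is already absorbed into the goodness-based arguments (this is exactly how \Cref{lem:show_goodness} concludes condition \ref{cond:covariance} via the triangle inequality). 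Once this bookkeeping is checked, the corollary follows by a direct application of \Cref{thm:robust_mean} to the data drawn from $G_I$.
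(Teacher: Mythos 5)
Your proposal is correct and follows essentially the same route as the paper: the paper's own justification for this corollary is precisely that \Cref{lem:show_goodness} establishes the $(\eps,\alpha,k)$-goodness of samples from $G_I$, after which the mean-estimation guarantee (whose analysis is carried out entirely in the goodness framework, as \Cref{rem:distributional-assumptions} notes) applies directly. Your additional bookkeeping on the sample complexity and the near-identity covariance of $G_I$ is consistent with the paper's argument.
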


\subsection{Proof of \Cref{thm:regression}}
\label{subsec:regresion-final-proof}
We now provide the details missing from \Cref{sec:regression}.
The proof sketch there relied heavily on the events defined in \Cref{eq:threeevents}, restated below:
    \begin{align}
    \label{eq:threeevents-2}
        \text{(i) $|\widehat{\sigma}_y^2 - \sigma_y^2| \lesssim  \sigma_y^2 \eps \log(1/\eps)$} \quad\quad
        \text{(ii) $|a| >  0.0001 \widehat{\sigma}_y$} \quad\quad
        \text{(iii) $\| \widehat{\beta}_I - \mu_{G_I} \|_2 \lesssim \eps/c$.}
    \end{align}
    The details missing from the proof sketch in \Cref{sec:regression} are: (i) the events in \Cref{eq:threeevents-2} hold, the events in \Cref{eq:threeevents-2} suffice for the algorithm's output to be correct, and the proof of \Cref{cl:ratios}.
We already explained why (i) and (ii) hold. The third holds by \Cref{cor:mu_G_I_estimator}.

We begin by formally showing that if the events in \Cref{eq:threeevents} hold, then the algorithm's output is correct. 
\begin{claim}
    Consider the setting and notation of \Cref{thm:regression} and \Cref{sec:regression}. 
    Let $c \in (0,1)$.
    Assume that the following three events hold
    \begin{enumerate}[label=(\roman*)]
        \item $|\widehat{\sigma}_y^2 - \sigma_y^2| \lesssim  \sigma_y^2 \eps \log(1/\eps)$.  \label{it:varest}
        \item $|a| >  0.0001 \widehat{\sigma}_y$.  \label{it:z}
        \item $\| \widehat{\beta}_I - \mu_{G_I} \|_2 \lesssim \eps/c$. \label{it:meanest}
    \end{enumerate}
    Then we have  $\| \widehat{\beta} - \beta \|_2 \lesssim \sigma \eps/c.$
\end{claim}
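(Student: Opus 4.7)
The plan is to decompose $\widehat\beta - \beta$ into three pieces, each of which corresponds to one of the three sources of error in the algorithm, and then bound each piece using the assumed events together with the structural results on $G_I$ from the previous subsection. Concretely, since $\widehat\beta = (\widehat\sigma_y^2/a)\widehat\beta_I$, I will add and subtract $(\widehat\sigma_y^2/a)\mu_{G_I}$ and $(\widehat\sigma_y^2/\sigma_y^2)\beta$, writing
\begin{align*}
\widehat\beta - \beta
&= \frac{\widehat\sigma_y^2}{a}\bigl(\widehat\beta_I - \mu_{G_I}\bigr)
+ \frac{\widehat\sigma_y^2}{a}\Bigl(\mu_{G_I} - \tfrac{a}{\sigma_y^2}\beta\Bigr)
+ \Bigl(\tfrac{\widehat\sigma_y^2}{\sigma_y^2} - 1\Bigr)\beta \;,
\end{align*}
and I will show that each of the three terms is $O(\sigma\eps/c)$.

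For the first term, event \ref{it:meanest} gives $\|\widehat\beta_I-\mu_{G_I}\|_2 \lesssim \eps/c$, while events \ref{it:varest} and \ref{it:z} give $\widehat\sigma_y^2 \lesssim \sigma_y^2$ and $|a| \gtrsim \widehat\sigma_y \gtrsim \sigma_y$, so the prefactor $\widehat\sigma_y^2/|a|$ is at most $O(\sigma_y) = O(\sigma)$ (using $\sigma_y^2 = \sigma^2 + \|\beta\|_2^2$ and the assumption $\|\beta\|_2 \lesssim \sigma\eps\log(1/\eps)$), yielding a bound $O(\sigma\eps/c)$. For the second term, the first part of \Cref{lem:subgaussianity} gives $\|\mu_{G_I} - (a/\sigma_y^2)\beta\|_2 \leq \ell\|\beta\|_2/\sigma_y^2$; combined with the same bound on $\widehat\sigma_y^2/|a|$ and with $\ell = \widehat\sigma_y/\log(1/\eps)$, this contributes at most $O(\|\beta\|_2/\log(1/\eps)) = O(\sigma\eps)$. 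For the third term, event \ref{it:varest} rewritten as $|\widehat\sigma_y^2/\sigma_y^2 - 1| \lesssim \eps\log(1/\eps)$ together with $\|\beta\|_2 \lesssim \sigma\eps\log(1/\eps)$ gives $O(\sigma\eps^2\log^2(1/\eps)) = O(\sigma\eps)$ since $\eps$ is below a small absolute constant.

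Summing the three bounds via the triangle inequality gives $\|\widehat\beta - \beta\|_2 \lesssim \sigma\eps/c$, as claimed.

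I do not expect any serious obstacle here: the only mild subtlety is keeping careful track of the relation $\sigma_y \asymp \sigma$ (which follows from the assumption $\|\beta\|_2 \lesssim \sigma\eps\log(1/\eps)$) so that the estimates $\widehat\sigma_y \asymp \sigma_y \asymp \sigma$ and $|a| \gtrsim \sigma$ can be used uniformly when simplifying the $\sigma_y$-dependent prefactors to $O(\sigma)$.
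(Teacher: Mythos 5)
Your proof is correct and follows essentially the same route as the paper's: a triangle-inequality decomposition into the mean-estimation error, the $G_I$-versus-$G_a$ conditioning error (bounded via \Cref{lem:subgaussianity}), and the variance-estimation error, each controlled by the corresponding assumed event together with $\|\beta\|_2\lesssim\sigma\eps\log(1/\eps)$ and $\sigma_y\asymp\sigma$. The only cosmetic difference is that you route the $\widehat\sigma_y^2$-versus-$\sigma_y^2$ discrepancy through the term $(\widehat\sigma_y^2/\sigma_y^2-1)\beta$, while the paper bounds $\frac{|\widehat\sigma_y^2-\sigma_y^2|}{|a|}\|\mu_{G_I}\|_2$ using the bound on $\|\mu_{G_I}\|_2$ from the same lemma; the two are equivalent.
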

\begin{proof}
    We want to show that given \Cref{it:varest,it:z,it:meanest}, we have that $\| \widehat \beta - \beta \|_2 \lesssim \sigma \eps/c$, where $\widehat \beta := (\widehat{\sigma}_y^2/a)\widehat{\beta}_I $ is the algorithm's output vector. Using the triangle inequality,
  \begin{align}
    \snorm{2}{ \widehat\beta - \beta} = \snorm{2}{\frac{\widehat{\sigma}_y^2}{a}\widehat{\beta}_I -\beta} &\leq \snorm{2}{\frac{\widehat{\sigma}_y^2}{a}\widehat{\beta}_I - \frac{\widehat{\sigma}_y^2}{a}\mu_{G_I}} + \snorm{2}{\frac{\widehat{\sigma}_y^2}{a} \mu_{G_I} - \beta} \;.  \label{eq:analyze_error}%
  \end{align}
  We upper bound each term by $O(\sigma\eps/c)$ separately. For the first term we have that
  \begin{align*}
     \snorm{2}{\frac{\widehat{\sigma}_y^2}{a}\widehat{\beta}_I - \frac{\widehat{\sigma}_y^2}{a}\mu_{G_I}} 
    &= \frac{\widehat{\sigma}_y^2}{|a|} \snorm{2}{\widehat{\beta}_I - \mu_{G_I}} \\
    &\lesssim \widehat{\sigma}_y  \snorm{2}{\widehat{\beta}_I - \mu_{G_I}}  \tag{by \Cref{it:z} above} \\
    &\lesssim \widehat{\sigma}_y  \eps/c \tag{by \Cref{it:meanest}} \\
    &\lesssim  \sigma_y  \eps/c \tag{by \Cref{it:varest} and $\eps<1/2$}\\
    &\leq \left( \sigma + \| \beta  \|_2  \right)  \eps/c   \tag{$\sigma_y^2 := \sigma^2 + \|\beta \|_2^2$}\\
    &\lesssim  \left( \sigma + \sigma \eps \log(1/\eps)  \right)  \eps/c \tag{by assumption $\|\beta\|_2 \lesssim \sigma \eps \log(1/\eps)$} \\
    &\lesssim \sigma \eps/c \;.  \tag{$\eps <1/2$}
  \end{align*}

For the second term of \Cref{eq:analyze_error}, the triangle inequality implies that 
\begin{align*}
  \snorm{2}{\frac{\widehat{\sigma}_y^2}{a} \mu_{G_I} - \beta}  
  &\leq \snorm{2}{\frac{\sigma_y^2}{a} \mu_{G_I} - \beta} 
  +  \frac{| \widehat{\sigma}_y^2 -  \sigma_y^2| }{|a|}\| \mu_{G_I} \|_2 \,.
  \end{align*}
  We bound the first term above as follows:
\begin{align*}
   \snorm{2}{\frac{\sigma_y^2}{a} \mu_{G_I} - \beta}  &= \frac{\sigma_y^2}{|a|}\snorm{2}{  \mu_{G_I} - \frac{a}{\sigma_y^2}\beta} \lesssim \sigma_y \snorm{2}{  \mu_{G_I} - \frac{a}{\sigma_y^2}\beta} 
   \tag{by \Cref{it:varest}} \\
   &\leq \sigma_y\frac{\ell \| \beta \|_2  }{\sigma_y^2} \tag{by \Cref{lem:subgaussianity}} \\
   &\leq \frac{\ell \|\beta\|_2}{\sigma}\lesssim \sigma \eps\,. \tag{by definition of $\ell$, \Cref{it:varest}, and assumption $\|\beta\|_2 \lesssim \sigma \eps \log(1/\eps)$}
\end{align*}
  The second term can be controlled as follows:
  \begin{align*}
  \frac{| \widehat{\sigma}_y^2 -  \sigma_y^2| }{|a|}\| \mu_{G_I} \|_2
  &\lesssim  \sigma_y\eps \log(1/\eps)\| \mu_{G_I} \|_2 \tag{\Cref{it:varest,it:z}}\\
  &\lesssim \sigma_y  \eps \log(1/\eps)\frac{(a+\ell) \| \beta \|_2}{\sigma_y^2} \tag{by \Cref{lem:subgaussianity}}\\
  &\lesssim  \frac{\sigma_y  \eps \log(1/\eps)\widehat{\sigma}_y \| \beta \|_2}{\sigma_y^2} \tag{$a+ \ell \lesssim \widehat{\sigma}_y$ by definition of $a,\ell$} \\
   &\lesssim \eps \log(1/\eps)\| \beta \|_2 \tag{by \Cref{it:varest}} \\
  &\lesssim  \sigma \eps \tag{by assumption $\|\beta\|_2 \lesssim \sigma \eps \log(1/\eps)$} \,.
\end{align*}
\end{proof}

\section{Adaptation of \Cref{alg:robust_mean_algo} to the Streaming Setting}
\label{sec:streaming}

In this section, we give a brief sketch of how the robust mean estimation algorithm of \Cref{thm:robust_mean} can be adapted to the streaming setting using the techniques of \cite{diakonikolas2022streaming}. The adaptation follows closely \cite{diakonikolas2022streaming} and we thus omit the details.

We recall the data access model for the streaming setting:
\begin{definition}[Single-Pass Streaming Model] \label{def:streaming} Let $S$ be
		a fixed set.
		In the one-pass streaming model, the elements
		of $S$ are revealed one at a time to the algorithm, and the
		algorithm is allowed a single pass over these points.
\end{definition}
The samples that form the set $S$ are still samples from an $\eps$-contaminated version of $\cN(\mu, \vec I)$ as before. The adaptation of our main result to that model is the following:

\begin{theorem}[Almost Linear-Time and Streaming Algorithm for Robust Mean Estimation] \label{thm:robust_mean_streaming}
Let $\eps_0$ be a sufficiently small positive constant. There is an algorithm that, given parameters $\eps\in(0,\eps_0)$, $c \in (0,1),\delta\in(0,1)$, reads a stream of $\poly(d/\eps)$ %
{points from an $\eps$-corrupted version of $\cN(\mu,\vec I_d)$ (cf.\ \Cref{def:huber})}, 
computes an estimate $\widehat{\mu}$ such that $\snorm{2}{\widehat{\mu} - \mu} =O({\eps/c})$ with probability at least $1 -\delta$. Moreover, the algorithm runs in time $(n d + 1/\eps^{2+c})\polylog(d/\eps\delta)$, and uses additional memory $\tilde{O}(d  + \poly(1/\eps))$.
\end{theorem}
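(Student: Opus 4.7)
The plan is to adapt Algorithm \ref{alg:robust_mean_algo} (and its stage-two companion \Cref{alg:robust_mean_algo_inefficient}) to the streaming model by following the sketch-based framework of \cite{diakonikolas2022streaming}. The starting observation is that our algorithm never stores or uses the full empirical covariance matrix explicitly: the only quantities it needs are (i) iterated matrix-vector products of the form $(\vec B_t^{\perp})^p z$ (used in \Cref{line:lambda,line:qt,line:draw_vector} and to generate the rows of $\vec U_t$), (ii) inner products $\langle \vec M_t^{\perp} z, \vec M_t^{\perp} z'\rangle$ for the angle test in \Cref{line:check_angle}, and (iii) the filter scores $\|\vec U_t(x-\mu_t)\|_2^2$ for points $x$. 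Since $\vec \Sigma_t^{\perp}-\vec I$ can always be written as a (weighted) sum of outer products $(x-\mu_t)(x-\mu_t)^\top$ minus identity-like terms, each such matrix-vector product against a fixed vector $z$ is a simple linear sketch over the stream of points, computable in a single pass with $O(d)$ memory.

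Concretely, I would partition the stream into $T_{\max}=\polylog(d/\eps)$ blocks, one per iteration of \Cref{alg:robust_mean_algo}, each of size $\poly(d/\eps)$. In iteration $t$, I would precommit to the random vectors $z_{t,j}$ for $j\in[k]$ (used to form $\vec U_t$) and to the power-iteration vector used for $u_t$. All of $\widehat\lambda_t$, $\widehat q_t$, the matrix $\vec U_t$, and the candidate eigenvector $u_t$ can then be computed by passing these precommitted vectors through the powering routine, which boils down to repeatedly applying $\vec B_t^{\perp}$ (an operation that only needs a single streaming pass over the current block, plus $\tilde O(d)$ workspace to hold the partially computed vectors). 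Once $\vec U_t$ is available, a second sweep over the block computes the scores and applies the multi-directional filter of \Cref{lem:filtering_combined}, producing reduced weights on the \emph{current block} only; by the sample-complexity slack $n=\poly(d/\eps)\gg \tilde O(d/\eps^2)$, a fresh block still satisfies the goodness condition. The subspace $\cV_t$ has dimension at most $T_{\max}=\polylog(d/\eps)$ and is stored explicitly, consuming only $\tilde O(d)$ memory. The potential-function analysis of Sections \ref{sec:case1-many} and \ref{sec:case2-single} is unchanged because it only depends on the guarantees of \Cref{lem:filtering_combined} and on the approximate-eigenvector bound of \Cref{cond:better_power}, both of which are preserved by the sketched implementation (with $\polylog(d/\eps)$ sketch dimension and a union bound over the $T_{\max}$ iterations).

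For the second stage, after \Cref{alg:robust_mean_algo} returns, I would consume one more block of the stream, project each arriving point on the fly onto the stored subspace $\cV$ (requiring only $\dim(\cV)=\polylog(d/\eps)$ coordinates per point), and buffer the projections. This buffer fits in $\poly(1/\eps)$ memory since the ambient dimension is now $\polylog(d/\eps)$ and \Cref{lem:basic-soda-alg} needs only $\tilde O(1/\eps^2)$ samples in that dimension. Running \Cref{alg:robust_mean_algo_inefficient} on the buffer yields $\mu_2$ with $\|\proj_{\cV}(\mu_2-\mu)\|_2=O(\eps/c)$, and the final estimate is $\mu_1+\mu_2$ as in \Cref{alg:robust_mean_algo_full}.

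The main technical obstacle is ensuring that all sketch-based estimates (power-iteration accuracy in \Cref{cond:spectral} and \Cref{cond:better_power}, JL correctness in \Cref{cond:fr} and \Cref{cond:jl}, and the near-orthogonality in \Cref{cond:orth}) remain valid simultaneously for all $T_{\max}$ iterations while using only $\tilde O(d)+\poly(1/\eps)$ memory; this is handled by taking fresh Gaussian randomness per iteration (each seed of $O(\log(d/\eps))$ bits), enlarging the sketch dimension by a $\polylog(d/\eps)$ factor to absorb the union bound over iterations, and using the standard fact that the JL sketch of a covariance matrix can be maintained incrementally as samples arrive. With these modifications, the deterministic conditions of \Cref{cond:deterministic} hold throughout, the proofs of \Cref{cl:case1,cl:case2_claim} go through verbatim, and the claimed memory/time/error guarantees follow.
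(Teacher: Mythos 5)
Your high-level architecture (block the stream, replace explicit covariance matrices by matrix--vector products, store only the $\polylog(d/\eps)$-dimensional subspace $\cV_t$, and run the second stage on buffered projections) matches the paper's adaptation, but there are two genuine gaps in how you handle the core of the argument. First, the weights. You say the filter produces ``reduced weights on the current block only'' and justify moving to a fresh block by the goodness of the inliers --- but goodness of the inliers is not the issue; the issue is that the entire potential argument (in particular the step $\phi_{t+1}\leq d^{1/2p}\bigl(\tr(\vec M_t^\perp \vec B_{t+1}^\perp \vec M_t^\perp)\bigr)^{2p/(2p+1)}$ via \Cref{fact:trace_PSD_ineq2}) requires $\vec B_{t+1}^\perp \preceq \vec B_t^\perp$, which holds only because the weights decrease monotonically on the \emph{same} underlying distribution. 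If iteration $t+1$ starts from a fresh, unweighted block, its outliers have never been downweighted, $\vec B_{t+1}^\perp$ is not dominated by $\vec B_t^\perp$, and the potential does not decrease. The correct fix (and what the paper does) is to treat $w_t$ as a function on all of $\R^d$, determined by the stored matrices $\vec U_s$, the centers $\mu_s$, and the filter iteration counts for $s<t$, and to evaluate $w_t(x)$ on every fresh sample as it arrives; then all quantities $\vec B_t^\perp$, the scores, and the filter stopping rule become population-level quantities that must be \emph{estimated} from fresh samples, which is a further step you do not address (the paper's Condition \ref{cond:filter} in \Cref{cond:deterministic_streaming}).

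Second, your claim that $(\vec B_t^\perp)^p z$ is ``a simple linear sketch \ldots computable in a single pass with $O(d)$ memory'' is false for $p\geq 2$: you must know $\vec B_t^\perp z$ before you can apply $\vec B_t^\perp$ again, so the $p=\log d$ applications are inherently sequential and, under the single-pass model of \Cref{def:streaming}, each must consume a fresh batch of samples. The paper replaces $\vec M_t^\perp=(\vec B_t^\perp)^p$ by a product $\widehat{\vec M}_t^\perp=\prod_{\ell=1}^{\log d}\widehat{\vec B}_{t,j,\ell}^\perp$ of independent sample-based estimators, and the fact that this product still certifies the conditions in \Cref{cond:deterministic_streaming} (approximate top eigenvector, near-orthogonality, and the score approximation, the last of which now requires a cover argument over $\R^d$ rather than a union bound over a fixed dataset, cf.\ Condition \ref{cond:cover}) is precisely the nontrivial content imported from \cite{diakonikolas2022streaming}. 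Your proposal treats these as routine union bounds over a fixed point set, which no longer exists in the streaming model.
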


We also restrict our attention to \Cref{alg:robust_mean_algo}, which is the first stage of the overall algorithm. This is because the second stage, \Cref{alg:robust_mean_algo_inefficient}, is only utilized when the dimension has been reduced to just a $\polylog(d/\eps)$, thus one can simulate the offline algorithm in the streaming setting by storing the whole dataset, which has size $O(\polylog(d/\eps)/\eps^2)$ (by ``size'' we mean the number of scalar numbers that the dataset consists of).

\textbf{Main Differences from \cite{diakonikolas2022streaming}} For the streaming setting, we change our viewpoint and instead of denoting by $P = (1-\eps)G + \eps B$ the uniform distribution over a fixed dataset, we now denote by $P$ the underlying data distribution itself (in our case $G = \cN(\mu,\vec I)$ or more generally any distribution satisfying the goodness conditions of \Cref{DefModGoodnessCond}). Consequently, the quantities $\vec \Sigma_t, \vec B_t,\vec M_t, \vec \Sigma_t^\perp, \vec B_t^\perp,\vec M_t^\perp$, mentioned in the pseudocode and in our proofs, are all population-level quantities. Instead of having direct access to them (as in \Cref{sec:robust_mean}), the algorithm is free to use the stream of fresh i.i.d. samples from $P$ and build estimators to approximate them. The estimators needed for this are the same as in \cite{diakonikolas2022streaming}, for which we will give an overview later on. Apart from this difference, the algorithm from \Cref{sec:robust_mean} is easy to implement in the streaming setting with low memory: The operation of multiplying a vector $z$ by an empirical covariance matrix  $\sum_x (x-\mu)(x-\mu)^\top$ (which was used frequently in  \Cref{sec:robust_mean}) can be implemented in the streaming setting without storing the entire covariance matrix, if one calculates first the inner product $(x-\mu)^\top z$ in the sum and then multiply that scalar result with $(x-\mu)$. Second, the weight function $w_t(x)$ that the algorithm maintains for re-weighting points is now defined over the entire $\R^d$ (and not just over a fixed dataset, since there is no such dataset anymore). The algorithm can store a representation of $w_t$ by storing the matrices $\vec U_t$ and how many times \Cref{line:for_loop_filter} of the filter has been executed (having these in memory then given any point $x \in \R^d$ the algorithm can calculate $w_t(x)$). Since the number of iterations $t_{\max} = O(\polylog(d/\eps))$ and the size of each $\vec U_t$ is $O(d\, \polylog(d/\eps))$, the memory needed to store the weights is $O(d\, \polylog(d/\eps))$ overall.

The proof of correctness of the resulting streaming algorithm follows the same arguments as before, but needs to account for extra errors due to the approximations taking place. Instead of giving a complete formal proof, we state the new deterministic conditions that are needed in \Cref{cond:deterministic_streaming}. Then, we discuss why each of them is needed and how it can be obtained.

\begin{condition}[Deterministic Conditions for \Cref{alg:robust_mean_algo_streaming}] \label{cond:deterministic_streaming}
Let $S_{\mathrm{cover}}$ be the cover set of \cite[Lemma 4.9]{diakonikolas2022streaming}
For all $t \in [t_{\max}]$, the following hold:
    \begin{enumerate}[label=(\roman*)]
    \item Scores: For every $x \in S_{\mathrm{cover}}$, it holds $\|\vec M_t^\perp( x - \mu_t ) \|_2^2 \gtrsim \|\vec U_t( x - \mu_t ) \|_2^2 - \|\vec M_t^\perp \|_\fr^2$.\label{cond:cover}
    \item Let $T := \widehat{\lambda_t} \|\vec U_t\|_\fr^2$. For every $w: \R^d \to [0,1]$, the algorithm has access to an estimator $f(w)$ for the quantity $\E_{X \sim P}[w(X) \tilde{\tau}_t(X)]$, 
    where $\tilde{\tau}_t(X) = \|\vec U_t(x-\mu_t)\|_2^2 \1( \|\vec U_t(x-\mu_t)\|_2^2 > 100\tr(\vec U_t^\top \vec U_t))$ such that $f(w)>T_t/2$ whenever $\E_{X \sim P}[w(X) \tau_t(X)]>T_t$.\label{cond:filter}
        \item Spectral norm of $\vec B_t^\perp$: $\widehat \lambda_t \in [0.1 \| \vec B_t^\perp \|_\op , 10  \| \vec B_t^\perp \|_\op  ]$. \label{cond:spectral2}
        \item Frobenius norm $\| \vec U_t \|_\fr^2 \in [0.8k \|\vec M_t^\perp\|_\fr^2, 1.2k \|\vec M_t^\perp\|_\fr^2]$.\label{cond:fr2}
        \item Maximum norm: $\max_{j \in [k]} \|v_{t,j} \|_2^2 \leq \frac{\log k}{k} \tr(\vec U_t^\top \vec U_t)$. \label{cond:maxnorm2}
        \item Probability estimate: $|\widehat{q}_t - q_t| \leq 0.1/(k^2 t_{\max})$. \label{cond:prob2}
        \item $\vec U_t$ almost orthogonal: For every pair $u_i,u_j$ of distinct rows of $\vec U_t$ it holds $|\langle u_i, u_j \rangle| \leq \|u_i\|_2 \|u_j\|_2/k^2$. \label{cond:orth2}
        \item Approximate eigenvector: The vector $u_t$ from \Cref{line:better_power_it} satisfies $u_t^\top \vec B_t^\perp u_t \geq (1-1/p)\|\vec B_t^\perp\|_\op$.\label{cond:better_power2}
    \end{enumerate}
\end{condition}

\paragraph{Condition \ref{cond:cover}} This condition is used instead of Condition \ref{cond:jl} in \Cref{cond:deterministic} to go from \Cref{eq:step_before} to \Cref{eq:decompose}. The issue with the old condition condition is that in the streaming setting there is no fixed dataset for which we can require the approximation to hold. Instead, we can use the technique from \cite{diakonikolas2022streaming} and define an appropriate cover $S_{\mathrm{cover}}$ for which $\E_{X \sim B}[w_{t+1}(X) \| \vec M_t^\perp(X^\perp - \mu^\perp)] \approx \E_{X \sim S}[w_{t+1}(X) \| \vec M_t^\perp(X^\perp - \mu^\perp)]$ (see Section 4.2.1 of \cite{diakonikolas2022streaming} for more details on the cover argument). Finally, we remark that the additional term $\|\vec M_t^\perp \|_\fr^2$ is due to the fact that the matrix $\vec U_t$ is calculated in \Cref{line:U2} based on approximations of $\vec M_t$ and not $\vec M_t$ itself (see \cite[Lemma 4.15]{diakonikolas2022streaming}). That error term can be tolerated in the steps of the correctness proof that we provided in \Cref{sec:robust_mean}.

\paragraph{Condition \ref{cond:filter}} Since the stopping condition of the filtering procedure (\Cref{alg:downweighting-filter}) depends on population-level quantities, it needs to be estimated within sufficient accuracy. The estimator for that and the correctness of the resulting filter has been analyzed in \cite{diakonikolas2022streaming}.

\paragraph{Condition \ref{cond:spectral2}} An approximate top eigenvector for Condition \ref{cond:spectral2} is obtained by multiplying a random Gaussian vector by $\log d$ fresh sample-based estimators of $\vec B_t^\perp$ (i.e., multiply $z \sim \cN(0,\vec I)$ by $\widehat{\vec M}_t^\perp$, where $\widehat{\vec M}_t^\perp$ as in \Cref{line:M}). This estimator has been analyzed in \cite[page 24]{diakonikolas2022streaming}.

\paragraph{Conditions \ref{cond:maxnorm2} and \ref{cond:prob2}} These follow similarly to the corresponding conditions of \Cref{sec:robust_mean}. For Condition \ref{cond:prob2}, the estimator now will use fresh estimates $\widehat{\vec M}_t^\perp$ of the form of \Cref{line:M}, instead of the population-level $\vec M_t^\perp$.

\paragraph{Condition \ref{cond:better_power2}} This condition can be obtained by \cite[Lemma 4.3]{DiaKPP23-pca} and \cite[Lemma 4.15]{diakonikolas2022streaming}.

\begin{algorithm}[h!]
	\caption{Robust Mean Estimation Under Huber Contamination In Streaming Model (Stage 1)}
	\label{alg:robust_mean_algo_streaming}
	\begin{algorithmic}[1]
	\Statex \textbf{Input}: Parameter $\eps \in (0,1/2)$, stream of i.i.d. points from the distribution $P = (1-\eps) G + \eps B$, where $G$ satisfies \Cref{DefModGoodnessCond} with appropriate parameters.
    \Statex \textbf{Output}: An approximation of the mean in a subspace $\mathcal{V}^\perp$ and the orthogonal subspace $\cV$.

    \State Let $C$ be a sufficiently large constant, $k = C\log^2(n+d)$, $t_{\max} =  (\log(d/\eps))^C$.
    \State Initialize $V_1 \gets \emptyset$ and $w_1(x) = 1$ for all $x \in \R^d$.
	\For{$t= 1, \ldots , t_{\max}$} \label{line:main_loop2}
        \State Let $\cV_t$ be the subspace spanned by the vectors in $V_t$, and $\cV^\perp_t$ be the perpendicular subspace. \label{line:algo1}
        \State Let $P_t$ be the distribution $P$ re-weighted by $w_t$, i.e., $P_t(x) = w_t(x)P(x)/\E_{X \sim P}[w(X)]$.\label{line:algo22}
        \State Let $\mu_t^\perp, \vec \Sigma_t^\perp$ be the mean and covariance of $\proj_{\cV_t}(X)$ when $X\sim P_t$. \label{line:algo32}
        
        \parState{Define $\vec B_t^\perp = (\E_{X \sim P}[w_t(X)])^2 \vec \Sigma_t^\perp - (1-C_1 \eps)\vec \Pi_{\cV_t^\perp}$, where $\vec \Pi_{\cV_t^{\perp}}$ is the orthogonal projection matrix for $\cV_t^\perp$, and $\vec M_t^\perp := (\vec B_t^\perp)^p$ for $p = \log(d)$.}\label{line:algo42} 

        \State Calculate $\widehat{\lambda}_t$ such that $\widehat{\lambda}_t/\|\vec B_t^\perp \|_\op \in [0.1/10]$ using power iteration. \label{line:lambda}  \Comment{cf.\ \Cref{appendix:prelims}}
        {\State \textbf{If} $\widehat{\lambda}_t  \leq C \eps$ \textbf{then} \textbf{return} $\mu_t$ and $V_t$.}\label{line:termination2}

        \State Let $q_t:=\pr_{z,z' \sim \cN(0,\vec I)}[ | \langle \vec M_t^\perp z,\vec M_t^\perp z'  \rangle | > \|\vec M_t^\perp z\|_2 \|\vec M_t^\perp z'\|_2 /k^2]$. \label{line:qt2}
        \State Calculate an estimate $\widehat{q}_t$ such that $|\widehat{q}_t - q_t| \leq \frac{1}{10(k^2 t_{\max})}$. 
        
        \If{$\widehat{q}_t \leq 1/(k^2 t_{\max})$ } \label{line:check_angle}
        \Comment{Case 1 (cf.\ \Cref{sec:case1-many})}
                \For{$j \in [k]$} \label{line:for2}
                \State Let $\widehat{\vec B}_{t,j,\ell}^\perp$ for $\ell \in [\log d]$  sample-based versions of $\vec B_t^\perp$. \label{line:M}
                \State Let $\widehat{\vec M}_t^\perp := \prod_{\ell=1}^{\log d} \widehat{\vec B}_{t,j,\ell}^\perp$.
                    \State  $v_{t,j} \gets   \widehat{\vec M}_t^\perp z_{t,j}$ for $z_{t,j} \sim \cN(0, \vec I)$. 
                \EndFor
                \State  $\vec U_t \gets [v_{t,1}, \ldots, v_{t,k}]^\top$ i.e., the matrix with rows $v_{t,j}$ for $j \in [k]$.\label{line:U2}
    
                 \State $w_{t+1} \gets \textsc{Multi-DirectionalFilter}(P,w,\eps,\vec U_t)$\label{line:filter2} \Comment{cf.\ \cite[Algorithm 4]{diakonikolas2022streaming}}
        \Else \label{line:else}
                \Comment{Case 2 (cf.\ \Cref{sec:case2-single})}
             \State Find a vector $u_t$ such that $u_t^\top \vec B_t^\perp u_t \geq (1-1/p) \|\vec B_t^\perp \|_\op$  \label{line:draw_vector} \Comment{Power iteration}
            \State $V_{t+1} \gets V_t \cup \{ u_t \} $. \label{line:better_power_it2}

        \EndIf

    \EndFor

    \State Let $\mu_{\mathcal{V}_t}= \E_{X \sim P_t}[\proj_{\mathcal{V}_t}(X)]$ be the mean of $P_t$ after projection to $\mathcal{V}_t$.
    \State \textbf{return} $\mu_{\mathcal{V}_t}$ and $V_t$.

	\end{algorithmic}
\end{algorithm}

\end{document}